\begin{document}

\title{Flow Metrics on Graphs
\footnote{This paper is the MSc Thesis of Lior Kalman under the supervision of Prof. Robert Krauthgamer at the Weizmann Institute of Science.}}
\author{{Lior Kalman \qquad Robert Krauthgamer}
\\ Weizmann Institute of Science
\\ \texttt{\{lior.kalman,robert.krauthgamer\}@weizmann.ac.il} 
}

\maketitle

\begin{abstract}
    Given a graph with non-negative edge weights,
    there are various ways
    to interpret the edge weights and induce a metric on the vertices of the graph.
    A few examples are shortest-path,
    when interpreting the weights as lengths;
    resistance distance,
    when thinking of the graph as an electrical network
    and the weights are resistances;
    and the inverse of minimum $st$-cut,
    when thinking of the weights as capacities.
    
    It is known that the 3 above-mentioned
    metrics can all be derived from flows,
    when formalizing them as convex optimization problems.
    This key observation
    led us to studying a family of metrics that are derived from flows,
    which we call \textit{flow metrics},
    that gives a natural  interpolation
    between the above metrics using a parameter $p$.
    
    We make the first steps in studying  the flow metrics,
    and mainly focus on two aspects:
    (a) understanding basic properties of the flow metrics,
    either as an
    optimization problem 
    (e.g. finding relations between the flow problem
    and the dual potential problem)
    and as a metric function
    (e.g. understanding their structure and geometry);
    and (b) considering 
    methods for reducing the size of graphs,
    either by removing vertices or edges
    while approximating the 
    flow metrics, and thus attaining
    a smaller instance that
    can be used to accelerate running time
    of algorithms and  reduce their storage
    requirements.
    
    Our main result is a lower bound for the number of edges
    required for a resistance sparsifier in the worst case.
    Furthermore, we present a  method
    for reducing the number of edges in a graph
    while approximating the flow metrics,
    by utilizing a method of \cite{cohen2015lp}
    for reducing the size of matrices.
    In addition, we show that the flow metrics 
    satisfy a stronger version of the triangle inequality,
    which gives some information about their structure
    and geometry.
\end{abstract}

\newpage 

\tableofcontents

\newpage 

\section{Introduction}\label{chapter - introduction}

Given a  graph with non-negative edge weights,
there are various ways 
to interpret the weights and derive a metric on the
vertices.
Two  famous examples
are 
to interpret the weights as lengths
or as capacities,
and the derived metric
on the graph
would be the shortest-path metric
or the  inverse of  minimum cut
(known to be an ultrametric),
respectively.
Another example  is to
think of the graph as an electrical network
of resistors,
and interpret the weights
as conductance (inverse resistance), 
which yields the
effective resistance
(also called resistance distance).

It turns out that one can express all  the above scenarios
via flows.
A flow on an edge-weighted graph $G=(V,E,w)$
is a real function $f$ over the edges,
such that at each vertex, the incoming flow
 equals the outgoing flow,
except for some set of boundary vertices,
usually referred to as sources and targets.
We fix an arbitrary orientation
for the edges, and then the sign of $f(e)$
determines the direction of the flow on the edge $e$
(so informally $f(-e)=-f(e)$ by convention).
Let us examine how the abovementioned metrics
are derived using flows.

\paragraph{Shortest Path.}
This is attained by considering a flow
that ships one unit from a single
source $s\in V$ to a single target $t\in V$,
while using the  edges with the minimum total length.
This  is  formulated  as
\begin{equation}\label{eq: shortest path distance - convex minimization problem}
    d(s,t)=\min\left\{\sum_{e\in E}\abs{f(e)}\cdot w(e)
    \;:\;f\text{ ships 1 unit of flow from }s\text{ to }t \right\}.
\end{equation}
We remark that this is a generalization of the 
shortest path, as
shipping flow through multiple shortest paths is allowed.

\paragraph{Minimum \textit{st}-Cuts.}
Using the well known mincut-maxflow theorem,
the inverse of minimum cut can be viewed as a  problem
of minimizing congestion
(the maximum ``load" on an edge),
formulated by
\begin{equation}\label{eq: min-st-cut - via flows definition}
    \frac{1}{\mincut(s,t)}=\min\left\{\max_{e\in E}\frac{\abs{f(e)}}{w(e)}
    \;:\;f\text{ ships 1 unit of flow from }s\text{ to }t \right\}.
\end{equation}

\paragraph{Effective Resistance.}
There are a couple of equivalent ways to define
effective resistance, denoted $\Reff$.
One way to define it 
is as a flow that obeys some physical rules
(called an \textit{electrical flow}).
Another way is via energy minimization, as
\begin{equation}\label{eq: effective-resistance - energy minimization definition}
    \Reff(s,t)=\min\left\{\sum_{e\in E}\frac{\abs{f(e)}^2}{w(e)}
    \;:\;f\text{ ships 1 unit of flow from }s\text{ to }t \right\}.
\end{equation}
Essentially, each term $\frac{\abs{f(e)}^2}{w(e)}$
is the energy (heat dissipation) 
of an edge $e$ with electrical resistance $\frac{1}{w(e)}$,
and we look for a flow that minimizes the total energy.
The effective resistance is known to
capture a lot of properties of the underlying 
graph,
such as commute time and random spanning trees;
it also has a strong connection to
the Laplacian of the graph.

\paragraph{Flow Metrics.}
A natural question that arises is
how to generalize all the metrics seen above.
Can we define a metric  derived from
flows that captures all three cases,
and perhaps find more metrics in this family?

The definition  we study,
which we call the family of \textit{flow metrics},
is the following.
Given a weighted graph $G=(V,E,w)$, and a parameter $1\leq p < \infty$,
define the $d_p$-distance between $s,t\in V$ to be
\begin{equation}\label{eq: p-metric}
    d_p(s,t)=\min\left\{\prs{\sum_{e\in E}\abs{\frac{f(e)}{w(e)}}^p}^{1/p}
    \;:\;
    f\text{ ships 1 unit of flow from }s\text{ to }t
     \right\}.
\end{equation}
To define $d_\infty$,
we take the limit as $p\rightarrow\infty$,
or equivalently  change
the objective
$\prs{\sum_{e\in E}\abs{\frac{f(e)}{w(e)}}^p}^{1/p}$
in \eqref{eq: p-metric}
to $\max_{e\in E}\abs{\frac{f(e)}{w(e)}}$.
Various works give almost linear time algorithms
for computing the $d_p$-distance,
both in the weighted and unweighted case,
see e.g. \cite{adil2021almost, adil2020faster}
where they refer to it as $p$-norm flows.
It is  immediate  that this definition
yields a metric on $V$.
Moreover, it is easy to see that the case  $p=1$
is in fact the ordinary shortest path metric
(on a graph with edge lengths $\frac{1}{w(e)}$),
and  
$\lim_{p\rightarrow\infty} d_p(s,t)=\frac{1}{\mincut(s,t)}$.
Moreover, in the case  $p=2$, $d_2(s,t)^2$
is just the effective resistance
between $s$ and $t$
in a graph $G'$ with squared edge weights.

Thus,
the family of $d_p$-metrics captures the shortest-path metric
($p=1$),
the effective resistance ($p=2$), and  minimum cuts
($p=\infty$).
Our goal is
to better understand this family
and its properties,
and we make the first steps in this direction.

\paragraph{Connection between different values of $p$.}
The two extreme cases $d_1$ (shortest path) and $d_\infty$ 
(minimum cuts), are informally,  not so well behaved
compared to the resistance distance ($p=2$),
and one might be able to interpolate naturally
between them by using other values of $p$
(e.g. see  \cite{LeeNaor2004embedding,CohenMadrySankowskiVladu2017negative} for such applications).

\paragraph{Capturing properties of the underlying graphs.}
As mentioned earlier, the
effective resistance ($p=2$)
captures key properties of the underlying graph, and a 
natural direction  is to extend this  characterizations
to other values of $p$,
or to find other properties of the underlying graphs
captured by them.

\paragraph{Understanding the geometry of the flow metrics.}
An important tool for understanding the structure
of the $d_p$-metrics is embeddings, 
i.e. mapping $G$ into a normed space
while preserving the $d_p$ metric -
in which case the mapping is called
an isometry,
or up to some error
 - in which case we say that the mapping has
distortion $>1$,
see e.g. \cite{MatousekDiscreteGeometry,matouvsek1997embedding, matouvsek2013lectureEmbeddings}.
Some results are known regarding the three special cases,
for example shortest-path ($d_1$)
embeds isometrically into $\ell_\infty$;
effective resistance
($d_2^2$) embeds isometrically  into $\ell_2^2$;
and $d_\infty$ embeds isometrically into $\ell_1$.
We would like to find and compute such embeddings
for other values of $p$,
and furthermore,
we would like to find the best trade-off between
dimension and approximation 
of such embeddings
(e.g. the Johnson-Lindenstrauss Lemma
\cite{johnsonLindenstrauss1984extensions}).

\paragraph{Small sketches.}
Once we understand which metric spaces the
flow metrics embed into,
and reduce the dimension, 
we can easily design small sketches and 
exploit them  to improve running time and storage requirements
of algorithms.
Other examples for graph problems that are naturally solved by 
such embedding techniques
are multicommodity flows problems, cut sparsifiers,
as well as spectral sparsifiers.

\paragraph{Computing all-pairs distances.}
Another important line of research 
is to compute the distance between all pairs of
vertices simultaneously, or to
construct a data structure
that given a query
of a pair of vertices, returns the exact $d_p$-distance
(or an approximation to it) between them.
Such constructions are known for 
the three special cases,
e.g. Gomory-Hu tree
\cite{GomoryHu1961multi}
for $p=\infty$;
distance oracles
\cite{awerbach1993NearLinearDO,thorupZwick2005approximate, chechik2015ApproximateDO},
All-Pairs Shortest-Path
\cite{chan2010more, seidel1995all},
and spanners
\cite{peleg1989graph, althofer1993sparse}
for $p=1$;
and \cite{spielmanSrivastava2011graphSparsificationReff, jambulapati2018efficient}
for $p=2$,
and it is an interesting direction to 
extend these approaches for other values of $p$.

\paragraph{Reducing the size of the graph.}
Techniques for reducing the size
of the graph while preserving exactly
or approximately
a given metric, are an important tool
that could 
improve the running time
and memory usage of algorithms.
One such technique
is the well known Delta-Wye transform \cite{kennelly1899equivalence},
which removes a vertex of degree 3 from the graph
and forms a triangle from its neighbors.
It is known that for each  of the special cases
$p=1,2,\infty$
(shortest path, effective resistance, and minimum cuts)
there exists such a transform
that preserve $d_p$ and depends only on the 3 edges
incident to the vertex being removed
(i.e. oblivious to the rest of the graph).
Thus, it is interesting 
to examine whether this could hold in general
for other values of $p$.

Another method for reducing the size of the graph is via
edge sparsification. 
This topic is very well studied in the literature
and has various  applications and results.
The most noticeable ones are 
spanners \cite{peleg1989graph, althofer1993sparse}
($d_1$-sparsifiers),
resistance sparsifiers
\cite{dinitz2015towards,jambulapati2018efficient, chu2020shortCycleLongPaper}
($d_2$-sparsifiers),
and cut sparsifiers 
\cite{karger1993global, benczur1996approximating}
($d_\infty$-sparsifiers).
There is also a stronger notion of spectral sparsifiers 
\cite{spielman2004nearly, spielmanSrivastava2011graphSparsificationReff, batsonSpielmanSrivastava2012twice},
which preserve the quadratic form of the Laplacian
of the graph, and in particular
preserve both effective resistance and cuts.
Hence, it is natural to try
to achieve such results for other values of $p$,
as well as give lower bounds to this problem.

\subsection{Results for Graph-Size Reduction}

\paragraph{Lower Bound on Resistance Sparsifiers.}
For $p=1$ and $p=\infty$ there are known upper bounds
and matching lower bounds, but
for $p=2$ there is only an upper bound
and no known lower bound.
Our main result is
the first lower bound for resistance sparsifiers ($p=2$).
Formally,
an $\varepsilon$-resistance sparsifier of a graph
$G$, is a graph $G'$ on the same vertex set $V$ 
such that
\begin{equation}
    \forall s,t\in V,\quad
    \Reff_{,G'}(s,t)\in (1\pm \varepsilon)\Reff_{,G}(s,t).
\end{equation}
We remark that $G'$ does not have to be a subgraph of $G$,
and moreover it can have edge weights.
Chu et al \cite{chu2020shortCycleLongPaper}
show that every graph $G$ with $n$ vertices
admits an  $\varepsilon$-resistance sparsifier
with $\widetilde{O}(n/\varepsilon)$ edges.
We conjecture that this is tight (up to 
the polylog factors), and prove a weaker lower bound.
\begin{conjecture}\label{intro: conjecture: lower bound on resistance sparsifiers}
For every $n\geq 2$ and every $\varepsilon>\frac{1}{n}$,
there exists a graph $G$ with $n$ vertices, 
such that every $\varepsilon$-resistance sparsifier of $G$
has $\Omega\prs{n/\varepsilon}$ edges.
\end{conjecture}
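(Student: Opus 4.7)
My plan is to reduce the conjecture to a ``single gadget'' lower bound and amplify via disjoint copies. The key observation is that if $G$ is the vertex-disjoint union of connected graphs $H_1,\ldots,H_t$, then any $\varepsilon$-resistance sparsifier $G'$ of $G$ must itself decompose as a disjoint union of sparsifiers, one per $H_i$: otherwise some pair across components of $G$ with $R_{\text{eff}}=\infty$ would have finite resistance in $G'$, violating the multiplicative $(1\pm\varepsilon)$ approximation. Thus it suffices to exhibit a single gadget $H$ on $k := \Theta(1/\varepsilon)$ vertices for which every $\varepsilon$-resistance sparsifier has $\Omega(k^2)$ edges; placing $\Theta(n/k)=\Theta(n\varepsilon)$ vertex-disjoint copies inside an $n$-vertex graph then yields $\Omega(n\varepsilon\cdot k^2)=\Omega(n/\varepsilon)$ edges total.

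\textbf{The hard gadget.}
A natural candidate is $H=K_k$ together with a perturbation family. Since the resistance profile of $K_k$ is identically $2/k$, the rigidity must come from nearby graphs rather than $K_k$ itself. For each $b\in\{0,1\}^{\binom{k}{2}}$, let $G_b$ be $K_k$ with edge $e$ reweighted to $1+\delta b_e$. A Sherman--Morrison--Woodbury expansion around $L_{K_k}^+=\tfrac{1}{k}(I-\tfrac{1}{k}J)$ gives $R_{G_b}-R_{K_k}\approx -(\delta/k^2)\,Mb$, where the interaction matrix $M_{e,f}=(b_e^\top b_f)^2$ equals $4I+A$, with $A$ the adjacency matrix of the triangular graph $T(k)$ (i.e., the line graph of $K_k$). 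Using the known spectrum of $T(k)$, $M$ has eigenvalues $\{2k,k,2\}$, hence $\sigma_{\min}(M)=2>0$ and the map $b\mapsto R_{G_b}$ is injective. One then extracts a subfamily $\mathcal{F}\subseteq\{0,1\}^{\binom{k}{2}}$ of size $2^{\Omega(k^2)}$ whose members are pairwise $\ell_\infty$-distinguishable by the sparsifier tolerance, and a discretization counting argument (weighted $m$-edge sparsifiers with weights at precision $\Theta(\varepsilon)$ use $O(m\log(k/\varepsilon))$ bits) forces $m=\Omega(k^2/\log(k/\varepsilon))$ edges per gadget.

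\textbf{Main obstacle.}
The scheme above yields $\Omega(n/(\varepsilon\log(1/\varepsilon)))$ edges overall, short of the conjecture by a single logarithmic factor. Two sources of loss contribute. First, converting the $\ell_2$ perturbation bound $\|M(b-b')\|_2\ge 2\|b-b'\|_2$ into the per-pair $\ell_\infty$ guarantee demanded by the resistance-sparsifier definition loses $\Theta(k)$ in the worst case, so to build the subfamily $\mathcal{F}$ one needs a combinatorial code whose images under $M$ are ``spiky'' rather than spread-out---a nontrivial structural requirement on both $M$ and the code. Second, even granting perfect distinguishability, the counting step charges $O(\log(k/\varepsilon))$ bits per sparsifier edge, which is intrinsically wasteful. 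I expect the main difficulty to be replacing the information-theoretic counting with a geometric rigidity statement, such as: \emph{any pseudoinverse Laplacian of an $o(k^2)$-edge weighted graph cannot approximate $L_{K_k}^+$ simultaneously on all $\binom{k}{2}$ rank-one test vectors $b_{uv}b_{uv}^\top$ within relative tolerance $1/k$}. Such matrix-rigidity statements targeted at the resistance quadratic form do not seem to appear in the literature, and proving them is where I would concentrate the effort; in their absence, the argument above already recovers the weaker $\Omega(n/(\varepsilon\log(1/\varepsilon)))$ bound as a first step towards the conjecture.
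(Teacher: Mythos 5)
First, note that the statement you are trying to prove is stated in the paper as a \emph{conjecture}: the paper itself does not prove it. It proves only the weaker bound $\Omega(n/\sqrt{\varepsilon})$ (via the fact that any non-complete weighted graph on $k$ vertices has $\max/\min$ effective-resistance ratio at least $1+1/O(k^2)$), and reduces the conjecture to the open question of improving that ratio to $1+1/O(k)$, which it verifies only in special cases (e.g.\ regular graphs, uniform weighted degrees, the symmetric case). Your amplification skeleton --- disjoint gadgets of size $k=\Theta(1/\varepsilon)$, each forcing $\Omega(k^2)$ sparsifier edges --- is exactly the paper's skeleton; the entire content lies in the per-gadget lower bound, and that is where your proposal has genuine gaps, which you partly acknowledge yourself.

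Concretely: (1) your gadget bound is not established. The spectral estimate $\sigma_{\min}(M)=2$ gives separation of the perturbed resistance profiles only in $\ell_2$, while the sparsifier definition is a per-pair relative ($\ell_\infty$-type) guarantee; the per-edge first-order signal is $\Theta(\delta/k^2)$ against a baseline resistance $2/k$, i.e.\ relative size $\Theta(\delta\varepsilon)$, which sits exactly at the scale of the allowed error, so two strings $b\neq b'$ differing in many coordinates may still have $\ell_\infty$-indistinguishable profiles, and the second-order terms of the Sherman--Morrison expansion are not controlled. You flag the needed ``spiky code'' $\mathcal{F}$ as an unresolved structural requirement --- but without $\mathcal{F}$ the counting step cannot be run at all, so even your fallback claim of $\Omega(n/(\varepsilon\log(1/\varepsilon)))$ is not actually derived, let alone the conjecture. (2) Even granting $\mathcal{F}$, the discretization/counting argument intrinsically charges $\Theta(\log(k/\varepsilon))$ bits per edge and therefore cannot reach $\Omega(n/\varepsilon)$; the ``matrix rigidity'' statement you propose to remove this loss is precisely the missing theorem and is not proved (nor known). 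By contrast, the paper's (conditional) route asks for a statement with no counting loss: that \emph{every} non-complete weighted graph on $k$ vertices distorts the clique's resistance ratio by $1+1/O(k)$, which would force each gadget's sparsifier to be the complete graph outright. That target is cleaner than an information-theoretic family argument, and the paper's partial results (e.g.\ for regular graphs, which are intuitively the best candidate sparsifiers of a clique) suggest it is the more promising place to concentrate effort.
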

\begin{theorem}\label{intro: thm: Omega(n/sqrt(eps)) lower bound on resistance sparsifiers}
For every $n\geq 2$ and every $\varepsilon>\frac{1}{n}$,
there exists a graph $G$ with $n$ vertices, 
such that every $\varepsilon$-resistance sparsifier of $G$
has $\Omega\prs{n/\sqrt{\varepsilon}}$ edges.
\end{theorem}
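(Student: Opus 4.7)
The plan is to exhibit an explicit $n$-vertex graph $G$ and show that every $\varepsilon$-resistance sparsifier of $G$ must have $\Omega(n/\sqrt{\varepsilon})$ edges. The most natural candidate, and the one I would try first, is the complete graph $G = K_n$ with unit edge weights, for which every pairwise resistance equals $2/n$. Because the resistance profile is perfectly flat, any sparsifier $H$ is forced to reproduce this flatness simultaneously across all $\binom{n}{2}$ pairs, which is a highly constrained global condition. (Should $K_n$ prove stubborn, natural alternatives would be a suitably scaled $d$-regular Ramanujan graph with $d \approx 1/\sqrt{\varepsilon}$, or a blow-up of a smaller hard instance.)

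Let $H$ be such an $\varepsilon$-resistance sparsifier of $K_n$, with $m$ edges and weights $(w_e)_{e \in E(H)}$. First I would collect three immediate consequences of the approximation hypothesis: (i) Rayleigh monotonicity yields $w_e \le 1/\Reff_H(u,v) \le n/(2(1-\varepsilon))$ for each edge $e=(u,v) \in E(H)$; (ii) Foster's theorem $\sum_e w_e \Reff_H(e) = n-1$, combined with $\Reff_H(e) \in (1\pm\varepsilon) \cdot 2/n$, forces $\sum_e w_e = \Theta(n^2)$; (iii) the Kirchhoff identity $\sum_{u<v} \Reff_H(u,v) = n \sum_{i \ge 2} 1/\lambda_i(L_H)$ together with the approximation yields $\sum_{i \ge 2} 1/\lambda_i(L_H) = (1\pm\varepsilon)(n-1)/n$.

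The next step is to convert these constraints into a lower bound on $m$. I would apply Cauchy--Schwarz against the first-moment identity, $m \cdot \sum_e w_e^2 \ge (\sum_e w_e)^2 = \Theta(n^4)$, which gives $m \ge \Theta(n^4 / \sum_e w_e^2)$. The target $m = \Omega(n/\sqrt{\varepsilon})$ then reduces to proving the moment bound $\sum_e w_e^2 = O(n^3 \sqrt{\varepsilon})$. To establish this, I would relate $\sum_e w_e^2$ to the Frobenius norm $\|L_H\|_F^2 = \operatorname{tr}(L_H^2) = \sum_i \lambda_i^2$ (up to the degree contribution $\sum_v d_v^2$, which can be treated similarly) and then push a variance estimate on the spectrum: the constraint that the inverses $1/\lambda_i$ cluster tightly around the target mean $1/n$ should force the eigenvalues $\lambda_i$ themselves to cluster around $n$ in a quantitative way that caps $\operatorname{tr}(L_H^2)$.

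The main obstacle is precisely this moment bound. The trivial estimate $w_e = O(n)$ yields $\sum_e w_e^2 = O(n^3)$ and hence only $m = \Omega(n)$, which is merely the connectivity bound; the factor $1/\sqrt{\varepsilon}$ must be extracted from the second-order content of the approximation hypothesis, and obtaining exactly $\sqrt{\varepsilon}$ rather than a worse power is where I expect the argument to be most delicate. It is also what accounts for the gap between the proven $\Omega(n/\sqrt{\varepsilon})$ and the conjectured $\Omega(n/\varepsilon)$: the conjecture corresponds to a sharper moment bound of order $n^3 \varepsilon$, which the above plan does not seem to reach.
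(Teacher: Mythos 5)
Your plan has a genuine gap at its central step, and it also targets a harder instance than necessary. The entire bound hinges on the unproven moment estimate $\sum_e w_e^2 = O(n^3\sqrt{\varepsilon})$, and the route you sketch toward it does not work: pinning $\sum_{i\ge 2} 1/\lambda_i(L_H)$ near $(n-1)/n$ gives essentially no upper control on $\sum_i \lambda_i(L_H)^2$, because very large eigenvalues contribute negligibly to the sum of inverses and so are invisible to that constraint. The only quantitative inputs you actually have — the per-edge cap $w_e = O(n)$ from Rayleigh monotonicity and the first moment $\sum_e w_e = \Theta(n^2)$ from Foster — are jointly consistent with $\sum_e w_e^2 = \Theta(n^3)$, which is exactly why you only recover $m = \Omega(n)$; extracting the extra $\sqrt{\varepsilon}$ would require a new structural argument ruling out many weight-$\Theta(n)$ edges, and no such argument is given. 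On top of this, it is not established (in this paper or elsewhere) that the single clique $K_n$ even requires $\Omega(n/\sqrt{\varepsilon})$ edges when $\varepsilon \gg 1/n^2$; the sharp lower bound for sparsifying one large clique is precisely what is left open (Question \ref{oq: does removing an edge from the clique implies a gap in the max/min resistance ratio (even after reweighting)?}), so choosing $G=K_n$ forces you to resolve an open problem rather than the stated theorem.

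The paper avoids both difficulties by choosing a different hard instance: a disjoint union of $\Theta(n\sqrt{\varepsilon})$ cliques, each on $k=\Theta(\varepsilon^{-1/2})$ vertices. Since cross-component resistances are infinite, any sparsifier must treat each component separately, and Lemma \ref{lemma: 1+1/O(n^2) LB on max/min res ratio} (any non-complete graph on $k$ vertices has $\max \Reff / \min \Reff \ge 1 + 1/O(k^2)$) is applied with $k$ tuned so that $1/O(k^2) = \Omega(\varepsilon)$: omitting even one edge inside a component would create a resistance spread exceeding the allowed $(1+\varepsilon)/(1-\varepsilon)$ ratio, so each component's sparsifier must keep all $\binom{k}{2} = \Theta(1/\varepsilon)$ edges, for a total of $\Theta(n\sqrt{\varepsilon})\cdot\Theta(1/\varepsilon) = \Theta(n/\sqrt{\varepsilon})$ edges. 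If you want to salvage your approach, the productive move is not the spectral moment bound but exactly this reduction: prove (or invoke) a ratio lower bound for non-complete graphs and then scale the clique size to match $\varepsilon$, which is also why the conjectured $\Omega(n/\varepsilon)$ would follow from improving the ratio bound to $1+1/O(k)$.
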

In fact,
Theorem \ref{intro: thm: Omega(n/sqrt(eps)) lower bound on resistance sparsifiers}
is an easy consequence of the following
bound regarding resistance sparsifiers of the clique.
\begin{lemma}\label{intro: lemma: 1+1/O(n^2) LB on max/min res ratio}
Let $G=(V,E,w)$ be a graph with $|V|=n$ 
and $|E|<\binom{n}{2}$.
Then,
\begin{equation}\label{intro: eq: 1+1/O(n^2) LB on max/min res ratio from lemma}
    \frac{\max_{x\neq y\in V}\Reff(x,y)}
    {\min_{x\neq y\in V}\Reff(x,y)}
    \geq 1+\frac{1}{O(n^2)}.
\end{equation}
\end{lemma}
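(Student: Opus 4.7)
The plan is to argue that if all effective resistances are close to a common value $R$, the Laplacian $L$ must be close to a scalar multiple of the Laplacian of the uniformly-weighted $K_n$, which forces every off-diagonal entry of $L$ to be nonzero; this contradicts the assumption that at least one pair $(u^{*}, v^{*})$ is a non-edge. Normalize so that $R = R_{\min}$ and set $\epsilon_{uv} := R_{uv}/R - 1 \in [0, \delta]$ with $\delta := R_{\max}/R_{\min} - 1$; the goal is to show $\delta \geq \Omega(1/n^2)$.

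First I would reconstruct the pseudoinverse $L^\dagger$ from the resistance matrix. The identities $L^\dagger\mathbf{1} = 0$ and $R_{uv} = L^\dagger_{uu}+L^\dagger_{vv}-2L^\dagger_{uv}$ give, after a short calculation,
\[
n L^\dagger_{uu} + \operatorname{Tr}(L^\dagger) = \sum_{v\ne u} R_{uv}, \qquad L^\dagger_{uv} = \tfrac{1}{2}\bigl(L^\dagger_{uu} + L^\dagger_{vv} - R_{uv}\bigr),
\]
whence $L^\dagger = \tfrac{R}{2} P + E$ with $P = I - \tfrac{1}{n}\mathbf{1}\mathbf{1}^\top$, where $E$ is symmetric, satisfies $E\mathbf{1} = 0$, and has entries of magnitude $O(R\delta)$ (using the crude estimates $s_u := \sum_{v\neq u}\epsilon_{uv} \leq (n-1)\delta$ and $S := \sum_{u<v}\epsilon_{uv} \leq \binom{n}{2}\delta$). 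An explicit computation also yields $E_{uv} = R\bigl[(s_u+s_v)/(2n) - S/n^2 - \epsilon_{uv}/2\bigr]$ for $u \neq v$.

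Next, restrict to $\mathbf{1}^\perp$, on which $L$ is the ordinary inverse of $L^\dagger$. For $\delta$ small enough that $\|E\|_{\mathrm{op}} < R/2$, the Neumann expansion gives
\[
L = \tfrac{2}{R} P - \tfrac{4}{R^2} E + \tfrac{8}{R^3} E^2 - \cdots,
\]
so $L_{uv} = -\tfrac{2}{nR} - \tfrac{4}{R^2} E_{uv} + \text{(higher order)}$ for any $u\neq v$. Applied to the non-edge $(u^{*}, v^{*})$, the condition $L_{u^{*}v^{*}} = 0$ forces $E_{u^{*}v^{*}} = -\tfrac{R}{2n} + \text{(higher order)}$; substituting the first-step formula for $E_{u^{*}v^{*}}$ and combining with $\epsilon_{u^{*}v^{*}} \leq \delta$ and the above crude bounds yields an inequality of the form $\tfrac{1}{n} \lesssim \delta \cdot \mathrm{poly}(n)$, from which $\delta \geq 1/O(n^2)$ follows.

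The main obstacle is tracking the higher-order terms in the Neumann expansion. The entrywise bound gives $\|E\|_F = O(nR\delta)$, hence $\|E^2\|_{\mathrm{op}} = O(n^2 R^2 \delta^2)$, so the second-order contribution $(8/R^3)(E^2)_{u^{*}v^{*}}$ can be as large as $O(n^2 \delta^2/R)$ and costs an extra factor of $n$, limiting the final bound to $\Omega(1/n^2)$ (rather than the $\Omega(1/n)$ suggested by a purely first-order analysis). An alternative, perhaps cleaner, route is to express $L_{u^{*}v^{*}}$ exactly via a cofactor identity coming from Kirchhoff's matrix-tree theorem together with $R_{uv} = \tau(G/uv)/\tau(G)$, avoiding the series expansion entirely at the cost of a careful minor calculation.
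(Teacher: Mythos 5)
Your proposal is correct, but it takes a genuinely different route from the paper. The paper's argument is entirely elementary: it isolates the missing pair $\{s,t\}$, upper-bounds $\Reff(s,t)^{-1}$ and the inverse resistance of an average non-incident pair using explicit $\{0,\tfrac12,1\}$ test potentials, combines these with a convex-combination/averaging step to lower-bound the maximum resistance by roughly $\frac{(n-2)^2}{(n-3)\,w(E)}$, and then invokes Foster's theorem to exhibit an edge with $\Reff \le \frac{n-1}{w(E)}$, giving the explicit ratio $1+\frac{1}{n^2-4n+3}$. You instead reconstruct $L^{\dagger}$ from the resistance matrix via $R_{uv}=L^{\dagger}_{uu}+L^{\dagger}_{vv}-2L^{\dagger}_{uv}$ and $L^{\dagger}\mathbf{1}=0$, view it as a perturbation $\tfrac{R}{2}P+E$ of the clique case with $\|E\|$ entrywise $O(R\delta)$, invert by a Neumann series, and use the fact that the non-edge forces the off-diagonal Laplacian entry $L_{u^*v^*}=0$, which is incompatible with all resistances being within $1+o(1/n^2)$ of each other. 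Your bookkeeping closes: either $\delta \gtrsim 1/n$ (so the series need not converge and you are done), or the second-order term $O(n^2\delta^2/R)$ is controlled and the first-order balance forces $\delta \ge 1/O(n^2)$ — in fact this dichotomy yields $\delta=\Omega(n^{-3/2})$, slightly stronger than you (and the lemma) claim. What each approach buys: the paper's proof is shorter, needs only test potentials plus Foster's theorem, and produces a clean explicit constant; yours is heavier (pseudoinverse reconstruction, operator-norm control of the tail, the explicit convergence dichotomy and the bound $|(E^k)_{ij}|\le\|E\|_{\mathrm{op}}^{k}$ should be spelled out in a full write-up) but is self-contained linear algebra, pinpoints structurally why a single missing edge forces resistance spread, and is the kind of perturbative statement that could plausibly be sharpened toward the conjectured $1+\frac{1}{O(n)}$ by improving the entrywise/operator-norm estimates on $E$.
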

Moreover, improving the bound in 
\eqref{intro: eq: 1+1/O(n^2) LB on max/min res ratio from lemma}
to $1+\frac{1}{O(n)}$,  would immediately prove 
Conjecture \ref{intro: conjecture: lower bound on resistance sparsifiers}.
Thus, we focus on studying sparsifiers for the clique.
In some cases, we can prove
the stronger  $1+\frac{1}{O(n)}$ bound.
One very interesting case
is of regular graphs,
even when allowing arbitrary edge weights, including 0.
\begin{theorem}\label{intro: theorem: regular graphs cannot approximate resistance distance of clique}
Let $G=(V,E,w)$ be a $k$-regular graph with $|V|=n$ 
and $k<n-1$.
Then,
\begin{equation}
    \frac{\max_{x\neq y\in V}\Reff(x,y)}
    {\min_{x\neq y\in V}\Reff(x,y)}
    \geq 1+\frac{1}{O(n)}.
\end{equation}
\end{theorem}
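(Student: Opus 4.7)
The plan is to combine two bounds in opposite directions: an upper bound on $R_{\min}$ obtained from Foster's theorem and a lower bound on $R_{\max}$ obtained from the Dirichlet variational characterization applied at a non-edge. The $k$-regularity will be used to align these via a short averaging argument, yielding the bound $n/(n-1)=1+\Omega(1/n)$.

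First, by Foster's theorem $\sum_{e=(u,v)\in E} w_{uv}\Reff(u,v)=n-1$, so the $w$-weighted average of the edge effective resistances equals $(n-1)/\sum_e w_e$. Since the minimum is at most any weighted average,
\[
R_{\min} \;\leq\; \min_{e\in E}\Reff(e) \;\leq\; \frac{2(n-1)}{\sum_v W(v)},
\]
where $W(v):=\sum_u w_{uv}$ is the weighted degree and $\sum_v W(v)=2\sum_e w_e$.

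For the lower bound, since $k<n-1$ there is at least one non-edge $(x,y)$. I would plug the test potential $\phi=\mathbf{1}_x-\mathbf{1}_y$ (value $+1$ at $x$, $-1$ at $y$, $0$ elsewhere) into the Dirichlet dual $\Reff(x,y)=\max_\phi (\phi(x)-\phi(y))^2/\sum_{(u,v)\in E}w_{uv}(\phi_u-\phi_v)^2$. Since $(x,y)\notin E$, only edges incident to exactly one of $x,y$ contribute, each with value $w_{uv}\cdot 1$, so the denominator equals $W(x)+W(y)$ while the numerator is $4$; hence $\Reff(x,y)\geq 4/(W(x)+W(y))$. To exhibit a useful non-edge, I would then average $W(x)+W(y)$ over non-edges. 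Regularity enters crucially here: each vertex lies in exactly $n-1-k$ non-edges, giving
\[
\sum_{(x,y)\text{ non-edge}}\bigl(W(x)+W(y)\bigr)=(n-1-k)\sum_v W(v),
\]
and since the number of non-edges is $n(n-1-k)/2$, the average is $2\sum_v W(v)/n$. Thus some non-edge $(x,y)$ satisfies $W(x)+W(y)\leq 2\sum_v W(v)/n$, and for it $\Reff(x,y)\geq 2n/\sum_v W(v)$.

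Combining the two bounds yields $R_{\max}/R_{\min}\geq n/(n-1)=1+1/(n-1)$, the desired $1+1/O(n)$. There is no real obstacle here---the argument is a short synthesis of Foster's identity, the Dirichlet dual, and a double-counting average, and all three ingredients are robust to zero-weight edges (which contribute nothing to either the Foster sum or the Dirichlet quadratic form), handling the ``including $0$'' clause automatically. As a sanity check, the bound is asymptotically tight: $K_n$ minus a perfect matching achieves ratio exactly $n/(n-1)$.
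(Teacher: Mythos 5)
Your proof is correct and follows essentially the same route as the paper: Foster's theorem to exhibit an edge of effective resistance at most the weighted average $\frac{2}{\overline{D_w}}\left(1-\frac{1}{n}\right)$, a simple test potential at a non-edge (your $\pm 1,0$ potential is equivalent to the paper's $1,\tfrac12,0$ potential) to lower-bound $R_{\max}$ by $2/\overline{D_w}$, and an average over non-edges to find one whose weighted degree sum is at most $2\overline{D_w}$ --- your direct double count using $k$-regularity is exactly the paper's covariance computation specialized to the regular case. The differences are cosmetic, and your resulting ratio $n/(n-1)$ matches the paper's $1+\frac{1}{O(n)}$ bound.
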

Intuitively,
the graphs that seem the best fit
for sparsifying the clique
are regular expanders,
and these graphs are captured by Theorem \ref{intro: theorem: regular graphs cannot approximate resistance distance of clique}.
Thus, we believe that our proof
can be generalized to any non-complete graph 
(i.e. with at least one missing edge),
which would prove Conjecture \ref{intro: conjecture: lower bound on resistance sparsifiers}.
We discuss all these results in subsection \ref{section - lower bound on resistance sparsifiers}.

\paragraph{Flow-Metric Sparsifiers.}
Cohen and Peng \cite{cohen2015lp} show that 
by sampling  rows of a matrix $A$
via an importance sampling approach,
one can preserve up to some error 
the term $\norm{Ax}_p$ for every vector $x$,
with high probability.
We use this result to
sparsify dense graphs while preserving
the flow metrics up to some error.
\begin{theorem}\label{intro: thm: main theorem - existence of d_p sparsifiers}
Let $G=(V,E,w)$ be a graph, fix $p\in \left(\frac{4}{3},\infty\right]$
with H\"older conjugate $q$
(i.e. $\frac{1}{p}+\frac{1}{q}=1$),
and let $\varepsilon>0$.
Then there exists a graph $G'=(V,E',w')$ 
that is a $d_p$-sparsifier of $G$, i.e.
\begin{equation}\label{intro: eq: sparsifier concentration guarantee in our main thm}
    \forall s,t\in V,\quad
    d_{p,G'}(s,t)\in\prs{1\pm \varepsilon}d_{p,G}(s,t),
\end{equation}
and has $|E'|=f(n,\varepsilon,p)$ edges,
where
\begin{equation}
    f(n,\varepsilon,p) = 
    \begin{cases}
        n-1     & \text{if }p=\Omega\prs{\varepsilon^{-1} \log n},\\
        \widetilde{O}\prs{n\varepsilon^{-2}} & 
        \text{if }2<p<\infty,\\
        \widetilde{O}\prs{n^{q/2}\varepsilon^{-5}} & 
        \text{if }\frac{4}{3} < p < 2.
    \end{cases}
\end{equation}
\end{theorem}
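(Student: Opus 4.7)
The plan is to formulate $d_p$ as a ratio involving $\|WBx\|_q$, where $B$ is the signed edge-vertex incidence matrix of $G$, $W$ is the diagonal matrix of edge weights, and $q$ is the H\"older conjugate of $p$; and then to sparsify by applying the $\ell_q$ row-sampling theorem of \cite{cohen2015lp} to the matrix $A:=WB$.

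First I would derive the dual of the convex program \eqref{eq: p-metric}. The flow-conservation constraint reads $B^\top f = \chi_t - \chi_s$, so Lagrangian duality applied to the objective $\|W^{-1}f\|_p$, together with the fact that the Fenchel conjugate of $\|\cdot\|_p$ is the indicator of the unit $\ell_q$-ball, yields
\[
    d_p(s,t) \;=\; \sup_{x\in\mathbb{R}^V:\,Bx\ne 0}\;\frac{x_t-x_s}{\|WBx\|_q}.
\]
Consequently, if a subgraph $G'=(V,E',w')$ has matrix $A':=W'B'$ satisfying $\|A'x\|_q\in(1\pm\varepsilon)\|Ax\|_q$ for every $x\in\mathbb{R}^V$, then $d_{p,G'}(s,t)\in(1\pm O(\varepsilon))\,d_{p,G}(s,t)$ holds for every pair $s,t$ simultaneously, with no union bound over pairs required.

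Next I would invoke Cohen-Peng. The crucial structural observation is that row $i$ of $A$ is $w(e_i)(\chi_u-\chi_v)$ for $e_i=(u,v)$, so rescaling that row by $c_i>0$ gives the corresponding row of $W'B'$ in the subgraph where edge $e_i$ has weight $c_i w(e_i)$. Thus a Cohen-Peng reweighted row-sample of $A$ literally is a reweighted subgraph. Their theorem, applied to the $|E|\times n$ matrix $A$ (note that the column count is $n$, not $|E|$), returns such a sample of size $\widetilde O(n\varepsilon^{-2})$ when $q\le 2$ and $\widetilde O(n^{q/2}\varepsilon^{-5})$ when $2<q<\infty$; converting via $\tfrac{1}{p}+\tfrac{1}{q}=1$ gives the bounds stated for $2<p<\infty$ and for $\tfrac{4}{3}<p<2$. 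The hypothesis $p>\tfrac{4}{3}$ is equivalent to $q<4$, beyond which $n^{q/2}\ge n^2$ and the sparsity ceases to be meaningful.

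The regime $p=\Omega(\varepsilon^{-1}\log n)$ I would handle separately. The elementary inequality $\|v\|_\infty\le\|v\|_p\le|E|^{1/p}\|v\|_\infty$ together with $|E|\le n^2$ yields $d_\infty(s,t)\le d_p(s,t)\le(1+\varepsilon)\,d_\infty(s,t)$ once $p\ge c\,\varepsilon^{-1}\log n$. Since $d_\infty(s,t)=1/\mincut(s,t)$ and a Gomory-Hu tree \cite{GomoryHu1961multi} preserves all pairwise minimum cuts exactly using only $n-1$ edges, this tree is the required sparsifier.

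The main obstacle I anticipate is the interface between the dual identity and the sampling theorem: one must verify that Cohen-Peng's multiplicative preservation of $\|Ax\|_q$ transfers into the same $(1\pm\varepsilon)$ preservation of the dual ratio without losing constants, and that the one-dimensional kernel of $B$ (spanned by the all-ones vector on a connected component) causes no degeneracy, which should be automatic since $x_t-x_s$ vanishes on that kernel. Everything else is bookkeeping of polylogarithmic factors and of the high-probability guarantees.
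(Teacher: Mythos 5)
Your proposal is correct and follows essentially the same route as the paper's proof: the dual characterization $d_p(s,t)=\bigl(\min_{\varphi_s-\varphi_t=1}\norm{WB\varphi}_q\bigr)^{-1}$, the application of the Cohen--Peng $\ell_q$ row-sampling theorem to $A=WB$ with the reweighted rows interpreted as a reweighted subgraph, and the Gomory--Hu tree combined with $d_\infty\leq d_p\leq |E|^{1/p}d_\infty$ for the regime $p=\Omega(\varepsilon^{-1}\log n)$. The interface issues you flag (transfer of the $(1\pm\varepsilon)$ guarantee through the dual and the kernel $\text{span}\{\OneVec\}$ of $B$) are handled in the paper exactly as you anticipate, so no new idea is needed.
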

We discuss this result in subsection \ref{section - flow metric sparsifiers}.
Note that the case of $p=2$ is in fact
the case of  resistance sparsifiers, for which
\cite{chu2020shortCycleLongPaper}
show a better upper bound of $\widetilde{O}\prs{n\varepsilon^{-1}}$ edges.
We remark that it is an open question to give lower
bounds for this problem for $p\neq 2$.
In particular, our proof of Theorem \ref{intro: thm: Omega(n/sqrt(eps)) lower bound on resistance sparsifiers} does not
extend to other values of $p$.

\paragraph{Delta-Wye Transform for Flow Metrics.}
The well known 
$Y-\Delta$ transform \cite{kennelly1899equivalence}
(or in general - Schur Complement 
\cite{haynsworth1968schur})
is an example for a transform
that removes a vertex of degree 3 from the graph
in  a way that preserves the effective resistance
among all other vertices.
An important aspect of the $Y$-$\Delta$ transform
is that it is ``local",
namely, its change to the graph
depends only on the weights of the
edges incident to the vertex being removed,
and is oblivious to the rest of the graph.
It is known that such transforms exist also
for shortest-path and for minimum cuts.
We show that such transforms do not
exist for the family of flow metric
for other values of $p$, stated
informally as follows.
The formal definitions are given in subsection
\ref{section - transforms for flow metrics}
\begin{theorem}\label{intro: thm: non existence of Y-Delta transform}
Let $p\in[1,\infty]$.
There exists a
local transform 
that removes a vertex of degree 3
and forms a triangle from its neighbors
that preserves the $d_p$ 
metric among all other vertices,
if and only if $p= 1,2,\infty$.
\end{theorem}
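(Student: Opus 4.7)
I will handle the two directions separately. For the ``if'' direction ($p \in \{1, 2, \infty\}$), I would cite the three classical constructions: for $p = 1$, make each triangle edge's length equal the star's shortest-path distance between the two corresponding leaves (the metric closure on $\{v_1, v_2, v_3\}$), which is preserved under any embedding; for $p = 2$, use Kennelly's formula; for $p = \infty$, a symmetric Gomory--Hu-style triangle works (e.g., all triangle edges of weight $\tfrac{1}{2}$ when the star is uniform). Each is verified by a short path/resistance/cut calculation. All the content is in the ``only if'' direction, which I sketch next.

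\textbf{Strategy for ``only if''.} Fix $p \in (1, \infty)$ and suppose a local transform $\phi\colon (a, b, c) \mapsto (A, B, C)$ exists. I will probe $\phi$ on the symmetric star $(1, 1, 1)$ in two embeddings, obtaining two a priori different formulas for the triangle weight; equality of these will force $p = 2$. Two identities follow immediately from the definition of $d_p$ and I use them throughout: the series rule $w_{\mathrm{eff}}^{-p} = w_1^{-p} + w_2^{-p}$ and the parallel rule $w_{\mathrm{eff}}^{q} = w_1^{q} + w_2^{q}$, where $q = p/(p-1)$ is the H\"older conjugate of $p$.

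\textbf{Step 1 (three-terminal distances).} Since $\phi$ is a function, applying it to the $S_3$-symmetric input $(1, 1, 1)$ yields a symmetric triangle $A = B = C =: T$. The transform must preserve the distances among $v_1, v_2, v_3$ even when the star is standalone; combining series composition inside the triangle's detour path with parallel composition against the direct edge, and setting the result equal to $d_p^{\mathrm{star}}(v_1, v_2) = 2^{1/p}$, gives
\[
T = (1 + 2^{q/p})^{-1/q}.
\]
Next, attach a new vertex $u$ to both $v_1$ and $v_3$ via edges of weight $1$, and compare $d_p(u, v_2)$ in the star-based versus triangle-based graphs. Both are invariant under swapping $v_1 \leftrightarrow v_3$; since the $p$-norm is strictly convex for $1 < p < \infty$, the minimizing flow is unique and hence symmetric. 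In particular, in the triangle the edge $v_1 v_3$ carries zero flow, and the unit flow splits evenly over the two pendants. A short calculation gives
\[
d_p^{\mathrm{star}}(u, v_2)^p = 2^{2-p} + 1, \qquad d_p^{\mathrm{tri}}(u, v_2)^p = 2^{1-p}(1 + T^{-p}),
\]
and equating them yields the second formula
\[
T = (1 + 2^{p-1})^{-1/p}.
\]

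\textbf{Step 2 (combining).} Both formulas must hold simultaneously. Setting $s := p - 1 > 0$ and simplifying (raising both sides to the power $pq$), the joint condition collapses to
\[
(1 + 2^{1/s})^s = 1 + 2^s.
\]
The main obstacle is to show this equation has only $s = 1$ (i.e., $p = 2$) as a positive solution. I plan to study $L(s) := \log(1 + 2^s) - s \log(1 + 2^{1/s})$, which satisfies the anti-symmetry $L(1/s) = -L(s)/s$ and the asymptotics $L(s) \sim \tfrac{1}{2} s \log 2 > 0$ as $s \to 0^+$ and $L(s) \to -\tfrac{1}{2} \log 2 < 0$ as $s \to \infty$; combined with a single-derivative monotonicity argument (possibly after a substitution $t = \log s$ that turns the anti-symmetry into literal sign-flipping), this should show $s = 1$ is the unique zero of $L$. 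Once established, this contradicts existence of $\phi$ for every $p \in (1, \infty) \setminus \{2\}$; the boundary values $p = 1$ and $p = \infty$ are recovered in the ``if'' direction, where the two formulas for $T$ coincide in the appropriate limits.
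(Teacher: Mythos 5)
Your overall strategy matches the paper's in outline: probe a hypothetical local transform on the unit $3$-star inside two different ambient graphs and extract two incompatible formulas for the (symmetric) triangle weight. But your second probe is genuinely different. The paper's second graph attaches a new vertex to two of the leaves by \emph{infinite-weight} edges, which forces the corresponding potentials to coincide, and all computations are done in the dual formulation $\bar d_p$; you instead attach a pendant vertex by \emph{unit-weight} edges and compute $d_p(u,v_2)$ directly in the primal, using strict convexity of the $p$-norm objective to argue the optimal flow is symmetric (so the edge $v_1v_3$ carries no flow). I checked your numbers: $d_p^{\mathrm{star}}(u,v_2)^p=2^{2-p}+1$ and $d_p^{\mathrm{tri}}(u,v_2)^p=2^{1-p}(1+T^{-p})$ are correct, and your two values of $T$ coincide with the paper's two values, namely $(1+2^{q-1})^{-1/q}$ (note $2^{q/p}=2^{q-1}$) and $(1+2^{p-1})^{-1/p}$, so both routes collapse to the same transcendental condition $(1+2^{1/s})^s=1+2^s$ with $s=p-1$. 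Your primal gadget is arguably more elementary (no duality and no infinite-weight/limiting device), while the paper's identification gadget makes the second computation a one-liner once the dual characterization is in hand; the ``if'' direction is treated by citation in both.

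Two points need attention before this is a proof. First, the step you leave as a plan --- that $s=1$ is the only positive root of $(1+2^{1/s})^s=1+2^s$ --- is exactly the delicate part, and you should not expect a two-line convexity shortcut: the paper's own Jensen step here claims $(1+2^{q-1})^{p-1}\le 3$ for $p\ge 2$, but for exponent $r=p-1\ge 1$ one has $(1+b)^r\ge 1+b^r$, so that inequality goes the wrong way (at $p=3$ the left side is $(1+\sqrt{2})^2\approx 5.83$). Your antisymmetry-plus-monotonicity plan for $L(s)=\log(1+2^s)-s\log(1+2^{1/s})$ is the right kind of argument (your asymptotics at $0^+$ and $\infty$ are correct, and by $L(1/s)=-L(s)/s$ it suffices to prove $L<0$ on $(1,\infty)$), but it must actually be carried out. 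Second, your claim that functionality of the transform alone makes the output triangle symmetric on the input $(1,1,1)$ is not forced by the definition of ``local'' (the rule need not be equivariant under relabeling the leaves); the paper instead derives $\alpha=\beta=\gamma$ from the fact that the three pairwise distances of the standalone star are equal, which in turn requires the (easy but unproved, also in the paper) fact that equal pairwise $d_p$-distances in a weighted triangle force equal weights. Either justify equivariance or supply that forcing argument via a short series--parallel monotonicity computation.
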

In addition, we study the interesting case of $p=\infty$
(minimum cuts),
and show that such a  transform
does not exist for removing a vertex
of degree strictly larger than 3,
even though there does
exist one for shortest-path
and effective resistance.
\begin{theorem}\label{intro: thm: non existence of star-mesh transform for p=infinity and k>3}
For every $k>3$,
there does not exist a local
transform that removes a vertex of degree $k$
and forms a $k$-clique from its neighbors
that preserves $d_\infty$ among
all other vertices.
\end{theorem}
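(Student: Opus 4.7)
The plan is to derive a contradiction by choosing a single highly symmetric ``test star'' and embedding it into two families of larger graphs that share the same local neighborhood at $v$. Since $d_\infty(s,t) = 1/\mincut(s,t)$, preserving $d_\infty$ among all other vertices is equivalent to preserving $\mincut$. Let $G_1$ be the star on $\{v, u_1, \dots, u_k\}$ in which every edge $vu_i$ has weight $1$, and suppose for contradiction that, for some $k > 3$, a local transform $T$ preserves $\mincut$ when removing a degree-$k$ vertex and producing a clique. Applying $T$ at $v$ in $G_1$ yields clique weights $(c_{ij})_{1 \le i < j \le k}$ depending only on the tuple $(w_1,\dots,w_k) = (1,\dots,1)$, and by locality the same $c_{ij}$'s must preserve mincuts in any graph whose local picture at $v$ is $G_1$. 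I will augment $G_1$ with ``heavy'' extra edges of weight $M \gg 1$ placed only among the $u_i$'s to first pin down $\sum_{l \ne i} c_{il} = 1$ for every $i$ and then $c_{ab} = 0$ for every pair $\{a,b\}$, giving the contradiction $0 = 1$.

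For the sum identity, fix $i$ and any $a \ne i$, and form $H_{i,a}$ by adding to $G_1$ an edge $u_a u_l$ of weight $M$ for each $l \in \{1,\dots,k\} \setminus \{a,i\}$. Since $u_i$ is incident only to $vu_i$ in $H_{i,a}$, the singleton $\{u_i\}$ witnesses $\mincut(u_i, u_a) = 1$. After applying $T$, every $u_i$-$u_a$ cut in the transformed graph crossing at least one heavy edge has weight at least $M$, so for $M > 1$ the minimum is attained by a cut that crosses no heavy edge; such a cut must place $u_a$ and every $u_l$ with $l \ne a,i$ together on the side opposite $u_i$, forcing $S = \{u_i\}$ with value $\sum_{l \ne i} c_{il}$. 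Equating to $1$ yields $\sum_{l \ne i} c_{il} = 1$ for every $i$.

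For the second step, given any pair $\{a,b\}$, use $k \ge 4$ to pick a disjoint pair $\{c,d\} \subset \{1,\dots,k\} \setminus \{a,b\}$ and form $H_{ab,cd}$ by adding heavy edges $u_au_b$ and $u_cu_d$ of weight $M$. The cut $\{u_a, u_b\}$ in $H_{ab,cd}$ has value $w_a + w_b = 2$ (neither heavy edge crosses it), while the two edge-disjoint paths $u_a - v - u_c$ and $u_a - u_b - v - u_d - u_c$ carry two units of flow together, so $\mincut(u_a, u_c) = 2$. After applying $T$, the same cut $\{u_a, u_b\}$ is ``light'' (crosses no heavy edge) and has value $\sum_{j \ne a,b}(c_{aj} + c_{bj}) = (1 - c_{ab}) + (1 - c_{ab}) = 2 - 2c_{ab}$ using the sum identity. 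Requiring this to be at least $\mincut(u_a,u_c) = 2$ forces $c_{ab} \le 0$, so with non-negativity $c_{ab} = 0$, which contradicts $\sum_{l \ne i} c_{il} = 1$.

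The main obstacle I expect is carefully justifying that the indicated cut is really the minimum and the indicated flow routing is feasible in each augmented graph: for $H_{i,a}$ this is immediate since $u_i$ has degree one, but for $H_{ab,cd}$ (and especially when $k > 4$) one must check that the two paths through $v$ do not conflict on the spoke capacities $w_a, w_b, w_c, w_d$ and that other ``light'' cuts in the transformed $H_{ab,cd}$ (those also containing some $u_l$ with $l \notin \{a,b,c,d\}$) do not undercut the bound $2 - 2c_{ab}$. Fortunately, any such other light cut of value below $2$ would itself directly contradict the transform preserving mincuts, so the argument is robust either way.
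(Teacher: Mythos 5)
Your argument is correct, and it takes a genuinely different route from the paper. The paper does not argue directly about the clique weights: it invokes Lemma 4 of Chaudhuri et al.\ on mimicking networks to get a subset $S\subset N(r)$ of the unweighted $k$-star whose cut value against $N(r)\setminus S$ is not preserved by any clique, and then builds a gadget $G_S$ with two \emph{new} terminals attached by infinite-weight edges to $S$ and to $N(r)\setminus S$, so that preserving the single pairwise $d_\infty$ distance between these terminals forces preservation of the subset cut, yielding the contradiction. You instead stay entirely inside the original leaf set and derive explicit linear constraints on the transform's output weights $c_{ij}$ (fixed, by locality, once the star is the unit-weight one): the fan of heavy edges in $H_{i,a}$ isolates the cut $\{u_i\}$ and gives $\sum_{l\neq i}c_{il}=1$, and the two disjoint heavy chords in $H_{ab,cd}$ give a cut of value $2-2c_{ab}$ against an original mincut of $2$, forcing $c_{ab}\le 0$; summing already yields $1\le 0$, so you do not even need nonnegativity of the $c_{ij}$ for the final contradiction (you do need $M$ large relative to the fixed $|c_{ij}|$, which is fine). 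Your approach buys a self-contained, elementary proof using only pairwise mincuts among the surviving vertices and finite weights; the paper's approach buys brevity by outsourcing the combinatorial core to the known mimicking-network lemma, and its two-terminal gadget is the natural device for converting ``cuts between arbitrary subsets of $N(r)$'' into single $st$-mincuts. Both proofs share the underlying trick of augmenting the star only with edges away from the center so that locality pins the clique weights. One small point to state explicitly if you write this up: in $H_{i,a}$ and $H_{ab,cd}$ the center still has degree exactly $k$ with the same unit spoke weights, which is exactly what licenses reusing the same $c_{ij}$; and the verification that the two-path flow of value $2$ is feasible in $H_{ab,cd}$ (each spoke carries at most one unit) should be spelled out, as you anticipate.
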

Similarly to the $Y$-$\Delta$ transform
for effective resistance, there also
exist local transforms that remove
a vertex of degree 2 or parallel edges,
and preserve the effective resistance.
We show that this transforms
extend naturally to other values of $p$.
We discuss these results in subsection \ref{section - transforms for flow metrics}.

\subsection{Additional Properties of the Flow Metrics}
\paragraph{The Flow Metrics are \text{p}-strong.}
Towards understanding the geometry of the flow metrics,
we show that  they are $p$-strong,
i.e. satisfy a stronger version of the 
triangle inequality.
\begin{theorem}\label{intro: thm: d_p is p-strong}
Let $G$ be a graph, and let $p\in[1,\infty)$.
Then,
\begin{equation}\label{intro: eq: p-strong triangle inequality}
    \forall s,t,v\in V,\quad
    d_p(s,t)^p \leq d_p(s,v)^p+d_p(v,t)^p.
\end{equation}
\end{theorem}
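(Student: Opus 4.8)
The plan is to work with the dual (potential) formulation of $d_p$ rather than with flows. By convex duality (as established earlier in the paper), for $1<p<\infty$ with H\"older conjugate $q$ one has $d_p(x,y)=\max\{\phi(x)-\phi(y):N_q(\phi)\le 1\}$, the maximum over real functions $\phi$ on $V$, where $N_q(\phi):=\prs{\sum_{e=(a,b)\in E}\prs{w(e)\,\abs{\phi(a)-\phi(b)}}^{q}}^{1/q}$. (For $p=1$ the claim is the triangle inequality for the metric $d_1$; and throughout we may assume $s,v,t$ are distinct and $G$ connected, the other cases being trivial.) A purely primal attempt --- taking optimal unit flows $f_1$ from $s$ to $v$ and $f_2$ from $v$ to $t$ and using $f_1+f_2$ as a feasible flow from $s$ to $t$ --- fails, because $\abs{f_1(e)+f_2(e)}^p$ can exceed $\abs{f_1(e)}^p+\abs{f_2(e)}^p$ on edges where $f_1$ and $f_2$ agree in direction, with no edgewise remedy (this already happens at $p=2$).

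The core idea is to split a maximizing potential $\phi$ for $d_p(s,t)$ so that its single budget $N_q(\phi)\le1$ is divided between the legs $s\to v$ and $v\to t$. First, replacing $\phi$ by its coordinatewise truncation to the interval $[\phi(t),\phi(s)]$ is $1$-Lipschitz, hence does not increase $N_q$, and leaves $\phi(s),\phi(t)$ unchanged, so I may assume $\phi(t)\le\phi(v)\le\phi(s)$. Set $c:=\phi(v)$ and define $\psi_1:=\max(\phi-c,0)$ and $\psi_2:=\min(\phi-c,0)$, so $\psi_1+\psi_2=\phi-c$. Since $r\mapsto\max(r-c,0)$ and $r\mapsto\min(r-c,0)$ are nondecreasing, on every edge $e=(a,b)$ the increments $\psi_1(a)-\psi_1(b)$ and $\psi_2(a)-\psi_2(b)$ share the sign of $\phi(a)-\phi(b)$ and add up to it, whence $\abs{\psi_1(a)-\psi_1(b)}+\abs{\psi_2(a)-\psi_2(b)}=\abs{\phi(a)-\phi(b)}$. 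Writing $\alpha_e:=w(e)\abs{\psi_1(a)-\psi_1(b)}$ and $\beta_e:=w(e)\abs{\psi_2(a)-\psi_2(b)}$, so that $\alpha_e+\beta_e=w(e)\abs{\phi(a)-\phi(b)}$ for every $e$, the superadditivity $x^q+y^q\le(x+y)^q$ for $x,y\ge0$ (valid exactly because $q\ge1$, i.e.\ $p<\infty$) gives
\[
 N_q(\psi_1)^q+N_q(\psi_2)^q=\sum_{e\in E}\prs{\alpha_e^q+\beta_e^q}\le\sum_{e\in E}\prs{\alpha_e+\beta_e}^q=N_q(\phi)^q\le1.
\]

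To conclude, $\phi(t)\le c\le\phi(s)$ gives $\psi_1(s)-\psi_1(v)=\phi(s)-c$ and $\psi_2(v)-\psi_2(t)=c-\phi(t)$, hence $\phi(s)-\phi(t)=\prs{\psi_1(s)-\psi_1(v)}+\prs{\psi_2(v)-\psi_2(t)}$. Using $\psi_1$ as a test potential in the dual formula for $d_p(s,v)$ and $\psi_2$ for $d_p(v,t)$ yields $\psi_1(s)-\psi_1(v)\le N_q(\psi_1)\,d_p(s,v)$ and $\psi_2(v)-\psi_2(t)\le N_q(\psi_2)\,d_p(v,t)$, so $d_p(s,t)=\phi(s)-\phi(t)\le N_q(\psi_1)\,d_p(s,v)+N_q(\psi_2)\,d_p(v,t)$. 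H\"older's inequality with exponents $q$ and $p$ bounds the right-hand side by $\prs{N_q(\psi_1)^q+N_q(\psi_2)^q}^{1/q}\prs{d_p(s,v)^p+d_p(v,t)^p}^{1/p}$, which by the previous display is at most $\prs{d_p(s,v)^p+d_p(v,t)^p}^{1/p}$; raising to the power $p$ gives \eqref{intro: eq: p-strong triangle inequality}.

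I expect the main difficulty to be finding this split, not verifying it: the decomposition into positive and negative parts around the value $\phi(v)$ is precisely what forces the two seminorm budgets $N_q(\psi_1)^q$ and $N_q(\psi_2)^q$ to add up to at most $N_q(\phi)^q$, and the interplay of that superadditivity (which requires $q\ge1$, hence $p$ finite) with H\"older's inequality is what manufactures the $p$-th powers on the right-hand side. The remaining bookkeeping --- the reduction to $\phi(t)\le\phi(v)\le\phi(s)$, and the degenerate subcases $N_q(\psi_i)=0$ (where that leg contributes $0$ to both sides) --- is routine.
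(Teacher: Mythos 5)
Your proof is correct, and at its core it uses the same splitting trick as the paper, but packaged along a genuinely different route. The paper (following Herbster--Lever) builds an auxiliary graph $\widetilde{G}$ from two copies of $G$ glued at $v$: combining the optimal flows for the legs $s\to v$ and $v\to t$ gives $d_{p,\widetilde{G}}(s_1,t_2)^p\le d_p(s,v)^p+d_p(v,t)^p$, and the comparison $d_{p,G}(s,t)\le d_{p,\widetilde{G}}(s_1,t_2)$ is proved on the dual side by lifting the optimal potential $\varphi^*$ (translated so $\varphi^*(v)=0$) to $\widetilde{G}$ as $\max\curprs{\varphi^*,0}$ on one copy and $\min\curprs{\varphi^*,0}$ on the other, with an edgewise four-case analysis resting on $a^\alpha+b^\alpha\le(a+b)^\alpha$ for $\alpha\ge 1$. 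Your $\psi_1=\max(\phi-\phi(v),0)$ and $\psi_2=\min(\phi-\phi(v),0)$ are exactly that lifted potential, and your identity $\abs{\psi_1(a)-\psi_1(b)}+\abs{\psi_2(a)-\psi_2(b)}=\abs{\phi(a)-\phi(b)}$ combined with superadditivity is the same computation as the paper's case analysis. What differs is the closing mechanism: you never introduce the glued graph or touch flows at all; instead you feed $\psi_1,\psi_2$ back into the dual formula for the two legs and finish with H\"older, which converts the budget split $N_q(\psi_1)^q+N_q(\psi_2)^q\le 1$ directly into the $p$-th-power inequality. This yields a shorter, purely dual argument, and your explicit truncation of $\phi$ to $[\phi(t),\phi(s)]$ cleanly justifies the reduction $\phi(t)\le\phi(v)\le\phi(s)$, a point the paper's proof leaves implicit when it asserts $\widetilde{\varphi}(s_1)-\widetilde{\varphi}(t_2)=1$; conversely, the paper's construction produces the glued-graph inequality as a by-product, mirroring the $p$-resistance argument it cites. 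Two cosmetic remarks: the parenthetical ``$q\ge 1$, i.e.\ $p<\infty$'' is not an equivalence ($q\ge 1$ for every $p\in[1,\infty]$; finiteness of $p$ is what the statement and the final H\"older step need), and the existence of a maximizing $\phi$ deserves a one-line compactness remark (fix $\phi(t)=0$ on a connected graph), though the paper's proof makes the analogous implicit assumption about its minimizer.
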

This is known to hold for the special cases
of $p=1,2$, and also for $p=\infty$,
where
\eqref{intro: eq: p-strong triangle inequality}
becomes (in the limit after raising both sides to power $1/p$)
the ultrametric inequality.
We further show that \eqref{intro: eq: p-strong triangle inequality} is tight,
namely for all $p'>p$, the metric $d_p$
is not always $p'$-strong.
We present it in subsection \ref{section - p-strong triangle inequality}.

\paragraph{An Extension of Foster's Theorem.}
Foster's Theorem
\cite{foster1949average, foster1961extension}
states that for every connected
graph $G=(V,E,w)$,
\begin{equation}
    \sum_{xy\in E}w(xy)\Reff(x,y) = |V|-1.
\end{equation}
Since $d_2^2$ is 
the effective resistance
in a graph with squared weights,
it is immediate that
\(
    \sum_{xy\in E}w(xy)^2d_2(x,y)^2 = |V|-1
\).
We  extend this bound to all $p>1$.
\begin{proposition}\label{intro: proposition: extension of fosters thm}
Let $G=(V,E,w)$ be a connected graph,
and let $p>1$.
Then,
\begin{align}
    \text{if }p\geq 2,&&
    \frac{|V|}{2}&\leq \sum_{xy\in E}\prs{w(xy)d_p(x,y)}^{\frac{p}{p-1}}
    \leq |V|-1,\\
    \text{if }p\leq 2,&&
    |V|-1 & \leq \sum_{xy\in E}\prs{w(xy)d_p(x,y)}^{\frac{p}{p-1}}
    \leq |E|.
\end{align}
\end{proposition}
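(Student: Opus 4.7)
The plan is to reduce the two nontrivial inequalities---the upper bound $\leq|V|-1$ for $p\geq 2$ and the lower bound $\geq|V|-1$ for $p\leq 2$---to Foster's theorem applied to a rescaled graph. Setting $q:=p/(p-1)$ and $G':=(V,E,w')$ with $w'(e):=w(e)^{q/2}$ (so $w'(e)^2=w(e)^q$), Foster's theorem in $G'$ reads
\begin{equation*}
\sum_{xy\in E} w(xy)^q\, d_2^{G'}(x,y)^2 \;=\; \sum_{xy\in E} w'(xy)^2\, \Reff_{G'}(x,y) \;=\; |V|-1.
\end{equation*}
It then suffices to prove the edge-by-edge comparison $d_p^G(x,y)^q\leq d_2^{G'}(x,y)^2$ for $p\geq 2$ and $d_p^G(x,y)^q\geq d_2^{G'}(x,y)^2$ for $p\leq 2$; multiplying by $w(xy)^q$ and summing yields the two target bounds.

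For the comparison I would use the dual characterization $d_p^G(s,t)=\max_\phi(\phi(t)-\phi(s))/\|w\nabla\phi\|_q$ together with a maximum principle for its optimizer. In the case $p\geq 2$, take $\phi$ to be the dual optimizer for $d_p^G(x,y)$, normalized so that $\phi(x)=0$ and $\phi(y)=1$. A truncation argument---replacing $\phi$ by $\max(\min(\phi,1),0)$ cannot increase the $q$-energy $\sum_e w(e)^q|\nabla\phi(e)|^q$ while preserving the boundary values---forces $\phi\in[0,1]$ and thus $|\nabla\phi(e)|\leq 1$ on every edge. Since $q\leq 2$ and $|\nabla\phi|\leq 1$, pointwise $|\nabla\phi|^q\geq|\nabla\phi|^2$, so $\sum_e w(e)^q|\nabla\phi(e)|^q\geq\sum_e w(e)^q|\nabla\phi(e)|^2$. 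Combining this with the dual identity $d_p^G(x,y)^q=1/\sum_e w(e)^q|\nabla\phi(e)|^q$ (since $\phi$ is the optimizer) and the dual inequality $d_2^{G'}(x,y)^2\geq 1/\sum_e w(e)^q|\nabla\phi(e)|^2$ (using the same $\phi$ as a feasible test function in the dual of $d_2^{G'}$) yields $d_p^G(x,y)^q\leq d_2^{G'}(x,y)^2$. The case $p\leq 2$ is symmetric: take $\phi$ to be the harmonic optimizer for $d_2^{G'}(x,y)$ and exploit $|\nabla\phi|^q\leq|\nabla\phi|^2$ for $q\geq 2$.

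The remaining two bounds are elementary. The upper bound $\sum\leq|E|$ for $p\leq 2$ follows from the single-edge unit flow, which gives $d_p(x,y)\leq 1/w(xy)$ and hence $(w(xy)d_p(x,y))^q\leq 1$ per edge. The lower bound $\sum\geq|V|/2$ for $p\geq 2$ follows by plugging the indicator $\phi:=\mathbf{1}_u$ into the dual of $d_p^G$, which yields $d_p(u,v)^q\geq 1/\sum_{e\ni u}w(e)^q$ for every $v\neq u$; summing $w(uv)^q/\sum_{e\ni u}w(e)^q$ over the neighbors of $u$ gives at least $1$ per vertex, and double-counting over $u\in V$ produces $2\sum_{uv\in E}(w(uv)d_p(u,v))^q\geq|V|$.

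The step I expect to be the main obstacle is the truncation argument underlying the maximum principle when $p\neq 2$: one needs to verify that the $q$-energy strictly decreases under truncation whenever $\phi$ exits $[0,1]$. This follows from the strict convexity of $t\mapsto|t|^q$ for $q>1$, which is also where the hypothesis $p>1$ enters.
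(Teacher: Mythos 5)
Your proposal is correct, and for the two main bounds ($\leq |V|-1$ when $p\geq 2$ and $\geq |V|-1$ when $p\leq 2$) it follows essentially the paper's route: the paper proves a monotonicity lemma (Lemma \ref{lemma: stronger monotonicity of the flow metric}) comparing $d_{p,G}^{q}$ with $d_{p',G'}^{q'}$ for general $p\leq p'$ under the reweighting $w'=w^{q/q'}$, proved exactly as you do --- take the dual optimizer, use that its edge increments have absolute value at most $1$, and compare $|\cdot|^{q}$ with $|\cdot|^{q'}$ pointwise --- and then specializes to $p'=2$ (resp.\ $p=2$) and invokes Foster's theorem on the graph with weights $w^{q}$. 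Two differences are worth noting. First, the paper simply asserts the bound $\abs{\varphi^*_x-\varphi^*_y}\leq 1$ for the optimal potentials, whereas you supply the truncation/maximum-principle justification explicitly (and, as you note, one does not even need strict convexity: clamping to $[0,1]$ is $1$-Lipschitz, so it suffices that some optimizer, or near-optimizer, can be taken with values in $[0,1]$). Second, for the lower bound $|V|/2$ in the case $p\geq 2$, the paper first reduces to $p=\infty$ via the general monotonicity lemma and then argues $\mincut(x,y)\leq \deg_w(x)$ edge by edge (Claim \ref{claim: lower bound on the sum for p geq 2}), while you argue directly for every $p$ by plugging the indicator potential $\mathbf{1}_u$ into the dual of $d_p$, getting $\sum_{v\in N(u)}w(uv)^q d_p(u,v)^q\geq 1$ per vertex and double counting; this is a clean direct generalization of the paper's $p=\infty$ argument that bypasses the reduction (and in fact holds for all $p>1$, which is harmless). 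Your primal single-edge-flow proof of $w(xy)^q d_p(x,y)^q\leq 1$ is an equivalent, slightly more elementary variant of the paper's dual argument in Claim \ref{claim: trivial upper bound by 1}. No gaps.
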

We remark that on trees, the above sum equals $|V|-1=|E|$
for every $p\in(1,\infty]$,
including in the limit $p\rightarrow 1$.
Furthermore, 
on unweighted cycles it holds that
\begin{align}
    \lim_{p\rightarrow \infty}\sum_{xy\in E}d_p(x,y)^{\frac{p}{p-1}}
    = \frac{|V|}{2}.
\end{align}
Thus, all four bounds are existentially
tight, and cannot be strengthened.
We discuss this in subsection \ref{section - monotonicity properties}.

\subsection{Related Work}
The literature contains several variations
of the flow metrics
that have found applications.
One example
is $p$-resistance, 
 formulated as
\begin{equation}
    R_p(s,t) = \min\left\{\sum_{e\in E}\frac{1}{w(e)}\cdot\abs{f(e)}^p
    \;:\;
    f\text{ ships 1 unit of flow from }s\text{ to }t
     \right\}.
\end{equation}
For fixed $p$,
one can express $d_p^p$ as $R_p$ on a 
related graph $G'$ with $p$-powered weights.
However, as $p\rightarrow\infty$,
the effect of the weights on $R_p$
becomes negligible;
a phenomenon that $d_p$ does not suffer from.
We utilize the connection between $R_p$
and $d_p$ to establish a connection
between $d_p$ and a dual
problem of vertex potentials,
as done in \cite{alamgir2011phaseTransitionPresistance}
where such a connection (between
flow and potentials)  is shown.
\cite{alamgir2011phaseTransitionPresistance}
also shows a transition in the ``behavior" of the $p$-resistance
when moving from small values of $p$ to large ones.
The same $p$-resistance is shown
in 
\cite{herbster2010triangle}
to satisfy a variation 
of the triangle inequality,
and we use their technique  to show
that the flow metrics satisfy a stronger
version of the triangle inequality as well.
The $p$-resistance was later extended in 
\cite{pmlr-v51-nguyen16a}
by adding some penalty terms to the objective
function,
which help understanding the structure
of the underlying graph, both local and global
properties of it,
mainly for clustering purposes.

Another variant,
called $p$-norm flow,
was studied in
\cite{henzinger2021cut, fountoulakis2020p,adil2019iterative, adil2020faster, adil2021almost}.
Given a graph $G=(V,E,w)$
with signed edge-vertex incidence matrix $B$,
and a demands vector $d\in \R^V$,
the goal is to find a flow $f\in \R^E$
that satisfies the demands and minimizes some cost function.
This is formulated as
\begin{equation}
    \min_{B^Tf=d}\texttt{cost}(f)
\end{equation}
The formal definitions are given in
subsection \ref{chapter - notations and preliminaries}.
They studied different options for the cost function
and for the constraints,
such as $\norm{f}_p^p$ (unweighted),
$g^Tf+\norm{W_1f}_2^2+\norm{W_2f}_p^p$
(where $g, W_1$ and $W_2$ are some more variables of the problem
that represent a gradient, 2-norm weights, and $p$-norm weights
respectively),
or even allowing the constraint $B^Tf\leq d$
or considering cases where $B$ is some general matrix
(i.e. regression problem).
Their focus 
is on fast computation of the $p$-norm
flows with respect to the proposed cost function
and the constraints,
either exactly or approximately,
and use it to
design fast algorithms for approximating
maximum flow (see e.g. \cite{adil2020faster}),
as well as an applications for graph clustering
(see e.g. \cite{fountoulakis2020p}).
Our focus is on objective $\texttt{cost}(f)=\norm{W^{-1}f}_p$
and  demand vector $d$ corresponding  to a single source and a single target, 
i.e. has $1$ in some entry (the source
vertex), $-1$ in another entry (the target vertex)
and $0$ everywhere else,
the formal definitions are given in subsection \ref{chapter - notations and preliminaries}.

\subsection{Notations and Problem Definition}\label{chapter - notations and preliminaries}

Unless 
stated otherwise,
we assume that graphs are connected, and have
non-negative edge weights.
For a weighted graph $G=(V,E,w)$,
we fix an arbitrary orientation to the edges,
i.e. for every edge, one of its endpoints
is said to be the ``head" of the edge
and the other is said to be its ``tail".
Let $B_{m\times n}$ be a signed edge-vertex  incidence
matrix of $G$
with respect to the given orientation,
namely $B_{e,x}=\begin{cases}
1   &   \text{if }x\text{ is the head of }e,\\
-1   &  \text{if }x\text{ is the tail of }e,\\
0   &   \text{o.w.};
\end{cases}$.
Let $W_{m\times m}$
be a diagonal matrix where $W_{e,e}=w(e)$.
For every vertex $x\in V$ we denote by
$N(x)\subseteq V$ the set of neighbors of $x$.
We also denote by $\chi_x\in \R^V$
the  unit basis vector with 1 in the entry corresponding to $x$
and 0 everywhere else.
In addition, we denote by $\OneVec\in \R^V$
the all ones vector.
A flow that ships one unit from a source vertex $s\in V$
to a target vertex $t\in V$
is a function $f:E\rightarrow \R$
such that $B^Tf = \chi_s-\chi_t$,
where the sign of $f(e)$ represents the direction
of the flow over the edge $e$ (plus sign meaning
the flow goes from the head to the tail,
and minus sign represent flow that goes from the
tail to the head).
It is easy to verify that the condition
$B^Tf=\chi_s-\chi_t$
implies that for every vertex other than
$s$ and $t$,
the incoming flow equals the outgoing flow.

For fixed $p\in[1,\infty]$,
the $\ell_p$ cost of a flow $f\in \R^E$ is defined to be
\begin{equation}
    \norm{W^{-1}f}_{p} = \prs{\sum_{e\in E}\abs{\frac{f(e)}{w(e)}}^p}^{1/p}.
\end{equation}
Using this notation we can rewrite
\eqref{eq: p-metric} as
\begin{equation}\label{eq: d_p metric as convex optimization problem}
        d_{p,G}(s,t) = \min \curprs{ \norm{W^{-1}f}_p 
         \,:\, B^Tf=\chi_s-\chi_t}.
\end{equation}
We will omit the subscript $G$ when it is clear from the context.
We say that $q\in[1,\infty]$ is the H\"older
conjugate of $p\in[1,\infty]$ if it is the (unique) parameter
such that $1/p+1/q=1$,
namely $q=\frac{p}{p-1}$,
and by convention, 1 and $\infty$ are H\"older conjugates
of each other.

\section{Properties of the Flow Metrics}\label{chapter - properties of flow metrics}
In this section we discuss some properties of the flow metrics.
We begin by presenting in subsection \ref{section - Basic Properties}
the dual problem
of \eqref{eq: d_p metric as convex optimization problem}, 
and using it to derive
equivalent definitions of $d_p$  
that will be  useful in subsequent sections.
Moreover, we present a closed-form
solution for $d_p$, which can be viewed as a generalization
of Ohm's law and electrical flows.

In subsection \ref{section - monotonicity properties}
we  discuss some monotonicity properties of the flow metrics.
Specifically, we show that for fixed $p$,
the $d_p$ distance is monotone in the edge weights,
and for fixed edge weights, 
it is monotone in $p$.
Moreover, we show a stronger monotonicity property
of $d_p$, which leads to a generalization of Foster's Theorem
(proving 
Proposition \ref{intro: proposition: extension of fosters thm}).

Finally, in subsection \ref{section - p-strong triangle inequality}
we show that for fixed $p\in[1,\infty)$,
$d_p$ satisfies a stronger version of the triangle inequality
(proving Theorem \ref{intro: thm: d_p is p-strong}),
which gives some information about the geometry of the flow metrics.

\subsection{Basic Properties}\label{section - Basic Properties}

In this subsection 
we show the dual problem of \eqref{eq: d_p metric as convex optimization problem},
and use it to provide characterizations
of the flow metrics, which will be useful
in different contexts.

\paragraph{Dual Problem.}
It is well known that \eqref{eq: d_p metric as convex optimization problem}  
has a dual  problem, that
asks to optimize vertex potentials.
Let $G=(V,E,w)$ be a graph, and fix $p\in[1,\infty]$,
having H\"older conjugate $q$.
For a vector $\varphi\in \R^V$,
viewed as vertex potentials, 
define its  $\ell_q$ cost  to be
\begin{equation}
    \norm{WB\varphi}_{q}= \prs{\sum_{xy\in E}w(xy)^q\abs{\varphi_x-\varphi_y}^q}^{1/q}.
\end{equation}
Similarly to \cite{alamgir2011phaseTransitionPresistance},
for every $s\neq t\in V$ we can 
consider the  optimization
problem,
\begin{equation}\label{eq: bar(d)_p metric as convex optimization problem}
        \bar{d}_p(s,t) = \min \curprs{ \norm{WB\varphi}_q
         \,:\,\prs{\chi_s-\chi_t}^T\varphi=1}.
\end{equation}
Proposition 4 in
\cite{alamgir2011phaseTransitionPresistance} implies that
the optimization problems \eqref{eq: d_p metric as convex optimization problem}
and \eqref{eq: bar(d)_p metric as convex optimization problem}
are equivalent in the following sense.
\begin{claim}\label{claim: d_p = 1 / bar(d)_p}
Let $p\in(1,\infty)$.
Let $G=(V,E,w)$ be a weighted graph.
Then,
\begin{equation}
    \forall s\neq t\in V,\quad
    d_{p}(s,t) = \prs{ \bar{d}_{p}(s,t)}^{-1}.
\end{equation}
\end{claim}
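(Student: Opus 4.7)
I would prove the two inequalities $d_p(s,t)\cdot\bar d_p(s,t)\geq 1$ and $d_p(s,t)\cdot\bar d_p(s,t)\leq 1$ separately: the first by a clean H\"older argument that applies to any feasible primal--dual pair, and the second by exhibiting a matching pair that realizes equality in H\"older, whose existence is guaranteed by the first-order optimality (KKT) conditions of the strictly convex program \eqref{eq: d_p metric as convex optimization problem} for $p\in(1,\infty)$.

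For the easy direction, fix any flow $f$ with $B^Tf=\chi_s-\chi_t$ and any potential vector $\varphi$ with $(\chi_s-\chi_t)^T\varphi=1$. Then
\begin{equation*}
    1 \;=\; (\chi_s-\chi_t)^T\varphi \;=\; (B^Tf)^T\varphi \;=\; \sum_{xy\in E}\frac{f(xy)}{w(xy)}\cdot w(xy)(\varphi_x-\varphi_y) \;\leq\; \norm{W^{-1}f}_p\cdot\norm{WB\varphi}_q,
\end{equation*}
where the final step is H\"older's inequality with conjugate exponents $p,q$. Taking the infimum over all feasible $f$ and $\varphi$ separately yields $1\leq d_p(s,t)\cdot \bar d_p(s,t)$.

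For the reverse inequality, I would take an optimal flow $f^*$ for \eqref{eq: d_p metric as convex optimization problem} (which exists and is unique by strict convexity of $\|W^{-1}\cdot\|_p^p$ for $p>1$) and construct a dual-feasible potential that matches it. Writing the Lagrangian of \eqref{eq: d_p metric as convex optimization problem} with multiplier vector $\psi\in\R^V$ for the constraint $B^Tf=\chi_s-\chi_t$, the KKT stationarity condition in $f$ reads, edge-by-edge,
\begin{equation*}
    \psi_x-\psi_y \;=\; c\cdot\frac{f^*(xy)\,|f^*(xy)|^{p-2}}{w(xy)^p}
\end{equation*}
for some scalar $c$. This is precisely the equality case of H\"older applied above: the vectors $W^{-1}f^*$ and $WB\psi$ (with the edge orientation) are aligned so that $f^{*T}B\psi = \norm{W^{-1}f^*}_p\cdot\norm{WB\psi}_q$. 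I would then normalize by setting $\varphi:=\psi/\bigl[(\chi_s-\chi_t)^T\psi\bigr]$, making $\varphi$ feasible for \eqref{eq: bar(d)_p metric as convex optimization problem}; the denominator is nonzero because $(\chi_s-\chi_t)^T\psi = f^{*T}B\psi$ is positive whenever $f^*\neq 0$.

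The main obstacle, and the one place where care is needed, is the bookkeeping with exponents showing that $\bar d_p(s,t)\leq\norm{WB\varphi}_q = 1/\norm{W^{-1}f^*}_p = 1/d_p(s,t)$. Concretely, the alignment relation gives $\norm{WB\psi}_q^q = c\,f^{*T}B\psi = c\,\norm{W^{-1}f^*}_p^p$ using $(p-1)q=p$, from which the normalization collapses to the desired reciprocal identity. Combined with the H\"older bound above, this yields $d_p\cdot\bar d_p=1$, and the only subtleties (uniqueness of $f^*$, nonvanishing of the normalization constant, validity of KKT) are all standard consequences of strict convexity for $p\in(1,\infty)$ and the linearity of the constraint. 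Alternatively, as the statement notes, one may invoke Proposition~4 of \cite{alamgir2011phaseTransitionPresistance} to obtain the same conclusion in one line.
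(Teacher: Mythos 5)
Your proposal is correct, but it takes a genuinely different route from the paper. The paper proves Claim \ref{claim: d_p = 1 / bar(d)_p} in Appendix \ref{appendix: ommited proofs from basic props} by a pure change-of-weights reduction: it observes that $d_{p,G}^p$ equals $R_p$ on the graph $G^p=(V,E,w^p)$ and $\bar d_{p,G}^q$ equals $C_p$ on the same graph, and then invokes Proposition~4 of \cite{alamgir2011phaseTransitionPresistance} ($R_p = C_p^{-p/q}$) to conclude $d_p^p = \bar d_p^{-p}$ in one line. You instead re-prove the underlying duality from scratch: weak duality via H\"older applied to any feasible pair ($1 = f^TB\varphi \leq \norm{W^{-1}f}_p\norm{WB\varphi}_q$), and strong duality by taking the KKT stationarity relation $\psi_x-\psi_y = c\,f^*(xy)|f^*(xy)|^{p-2}/w(xy)^p$ at the optimal flow, which is exactly the equality case of H\"older, then normalizing and using $(p-1)q=p$ to collapse the exponents. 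Your exponent bookkeeping checks out ($\norm{WB\psi}_q^q = c^q\norm{W^{-1}f^*}_p^p$ and $\psi_s-\psi_t = c\norm{W^{-1}f^*}_p^p$, so the normalized potential has $\ell_q$ cost $d_p(s,t)^{-1}$), and the side conditions you wave at are indeed routine: $f^*\neq 0$ because $s\neq t$, hence the gradient (and so $c$ and $\psi_s-\psi_t$) is nonzero with the right sign, and the linear constraint qualification makes KKT valid. What your approach buys is a self-contained argument that does not rely on the external proposition, and it makes visible the primal--dual alignment that the paper states separately as Fact \ref{fact: closed form solution for d_p}; what the paper's approach buys is brevity and a clean link to the $p$-resistance literature. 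Note also that your argument naturally yields attainment/characterization of the optimal potential, which the citation-based proof does not expose.
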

We show how it is derived  formally in Appendix 
\ref{appendix: ommited proofs from basic props}.
We remark that $\bar{d}_p$ is strongly related to 
$\mincut(s,t)$ because
in the limit $p\rightarrow \infty$,
\begin{equation}
    \begin{split}
         \lim_{p\rightarrow\infty} \bar{d}_{p}(s,t)  & = \lim_{p\rightarrow\infty} \min_{\varphi_s-\varphi_t=1}\prs{\sum_{xy\in E}\abs{\varphi_x-\varphi_y}^{1+\frac{1}{p-1}}w(xy)^{1+\frac{1}{p-1}}}^{1-\frac{1}{p}}\\
        & = \min_{\varphi_s-\varphi_t=1}\sum_{xy\in E}\abs{\varphi_x-\varphi_y}\cdot w(xy)\\
        & = \mincut(s,t).
    \end{split}
\end{equation}
Moreover, this shows that Claim
\ref{claim: d_p = 1 / bar(d)_p}
extends to $p=\infty$, because
$d_\infty(s,t)
= (\text{maxflow}(s,t))^{-1}$.

Furthermore, by swapping
the constraints and the objectives
of problem \eqref{eq: bar(d)_p metric as convex optimization problem},
we get the optimization problem,
\begin{equation}\label{eq: tilde(d_p) - dual of the flow problem}
        \widetilde{d}_p(s,t) =  \max \curprs{ \prs{\chi_s-\chi_t}^T\varphi
         \,:\,\norm{WB\varphi}_q=1},
\end{equation}
and establish another connection.
\begin{claim}\label{claim: bar(d_p) = inverse of tilde(d_p)}
Let $G=(V,E,w)$ be a graph, let $p\in[1,\infty]$.
Then,
\begin{equation}
    \forall s\neq t\in V,\quad
    \widetilde{d}_p(s,t)=\prs{\bar{d}_p(s,t)}^{-1}.
\end{equation}
\end{claim}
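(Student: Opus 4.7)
The claim is a standard homogeneity/rescaling argument: both the objective $(\chi_s-\chi_t)^T\varphi$ and the constraint function $\|WB\varphi\|_q$ in \eqref{eq: tilde(d_p) - dual of the flow problem} are positively homogeneous of degree $1$ in $\varphi$, so the constraint and the objective in \eqref{eq: bar(d)_p metric as convex optimization problem} can be swapped by rescaling any feasible point.

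The plan is to prove the two inequalities $\widetilde{d}_p(s,t) \geq (\bar{d}_p(s,t))^{-1}$ and $\widetilde{d}_p(s,t) \leq (\bar{d}_p(s,t))^{-1}$ separately. For the first inequality, take an optimal potential $\varphi^*$ for \eqref{eq: bar(d)_p metric as convex optimization problem}, so that $(\chi_s-\chi_t)^T\varphi^* = 1$ and $\|WB\varphi^*\|_q = \bar{d}_p(s,t)$, and set $\widetilde{\varphi} = \varphi^*/\bar{d}_p(s,t)$. By $1$-homogeneity of $\|WB\cdot\|_q$, this makes $\widetilde{\varphi}$ feasible for \eqref{eq: tilde(d_p) - dual of the flow problem}, and its objective value is $(\chi_s-\chi_t)^T\widetilde{\varphi} = 1/\bar{d}_p(s,t)$, yielding $\widetilde{d}_p(s,t) \geq (\bar{d}_p(s,t))^{-1}$. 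For the reverse inequality, take an optimal $\varphi^*$ for \eqref{eq: tilde(d_p) - dual of the flow problem} and rescale to $\widetilde{\varphi} = \varphi^*/\widetilde{d}_p(s,t)$; then $(\chi_s-\chi_t)^T\widetilde{\varphi} = 1$ so $\widetilde{\varphi}$ is feasible for \eqref{eq: bar(d)_p metric as convex optimization problem}, with objective $\|WB\widetilde{\varphi}\|_q = 1/\widetilde{d}_p(s,t)$, giving $\bar{d}_p(s,t) \leq (\widetilde{d}_p(s,t))^{-1}$.

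The only issue to check is that the rescalings are well-defined, i.e.\ that $\bar{d}_p(s,t) > 0$ and $\widetilde{d}_p(s,t) > 0$ for $s\neq t$ in a connected graph. For $\bar{d}_p$, if the minimum were $0$ then $WB\varphi = 0$, which forces $\varphi$ to be constant on connected components (since $G$ is connected and edge weights are positive), contradicting $(\chi_s-\chi_t)^T\varphi = 1$. For $\widetilde{d}_p$, taking any $\varphi$ with $\varphi_s \neq \varphi_t$ and normalizing by $\|WB\varphi\|_q$ (which is positive by the same argument) provides a feasible point with strictly positive objective value.

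There is no real obstacle here; the proof is essentially a one-line rescaling, and the only subtle point is the degenerate case of a disconnected graph or zero-weight edges, which is ruled out by the standing assumption in subsection~\ref{chapter - notations and preliminaries} that graphs are connected with non-negative edge weights (and, in practice, positive weights on a connected spanning subgraph). The boundary cases $p = 1$ and $p = \infty$ can be handled by taking the $q \to \infty$ and $q \to 1$ limits in the same argument, using that $\|\cdot\|_q$ remains $1$-homogeneous at the endpoints; alternatively, the case $p = \infty$ is already noted in the remark following Claim~\ref{claim: d_p = 1 / bar(d)_p} via the mincut identification.
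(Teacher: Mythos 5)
Your proposal is correct and follows essentially the same route as the paper's proof: rescale an optimal potential for \eqref{eq: bar(d)_p metric as convex optimization problem} by its $\norm{WB\varphi^*}_q$-value to get one inequality, and symmetrically rescale an optimizer of \eqref{eq: tilde(d_p) - dual of the flow problem} for the other. Your additional check that $\bar{d}_p(s,t)>0$ and $\widetilde{d}_p(s,t)>0$ (so the rescalings are well-defined) is a detail the paper leaves implicit, but it does not change the argument.
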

\begin{proof}
Indeed, let $\varphi^*$ be a solution to
\eqref{eq: bar(d)_p metric as convex optimization problem},
and define 
\(
    \widetilde{\varphi} 
    =
    \frac{\varphi^*}{\norm{WB\varphi^*}_q}
\).
Thus,
\[
    \widetilde{d}_p(s,t)\geq \prs{\chi_s-\chi_t}^T\widetilde{\varphi}
    =\prs{\chi_s-\chi_t}^T \frac{\varphi^*}{\norm{WB\varphi^*}_q}
    = \frac{1}{\norm{WB\varphi^*}_q}=\prs{\bar{d}_p(s,t)}^{-1}.
\]
We can similarly show that $\widetilde{d}_p(s,t)\leq\prs{\bar{d}_p(s,t)}^{-1}$,
and the claim follows.
\end{proof}
Combining   Claims 
\ref{claim: d_p = 1 / bar(d)_p} and
\ref{claim: bar(d_p) = inverse of tilde(d_p)} ,
we  conclude that $d_p=\widetilde{d}_p$.
Next, we use this connection to show another characterization
of the flow metrics, which will
be very useful in later sections.
\begin{claim}\label{claim: d_p as importance sampling}
Let $G=(V,E,w)$ be a graph, let $p\in[1,\infty]$,
with  H\"older conjugate $q$.
Then,
\begin{equation}\label{eq: d_p as importance sampling opt problem}
    \forall s,t\in V,\quad
    d_p(s,t) = \max\curprs{
    \frac{\prs{\chi_s-\chi_t}^T\varphi}{\norm{WB\varphi}_q}
    \;:\;\varphi\in \R^V,\varphi\notin\text{span}\curprs{\OneVec}
    }.
\end{equation}
\end{claim}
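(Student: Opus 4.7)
The plan is to combine the two previously established identities $d_p = 1/\bar{d}_p$ (Claim \ref{claim: d_p = 1 / bar(d)_p}) and $\widetilde{d}_p = 1/\bar{d}_p$ (Claim \ref{claim: bar(d_p) = inverse of tilde(d_p)}) to reduce the statement to showing that the supremum in \eqref{eq: d_p as importance sampling opt problem} equals $\widetilde{d}_p(s,t)$ as defined in \eqref{eq: tilde(d_p) - dual of the flow problem}. The bridge is the scale-invariance (0-homogeneity) of the ratio $(\chi_s-\chi_t)^T\varphi / \|WB\varphi\|_q$.

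First I would verify that the ratio is well-defined on the feasible set: since $G$ is connected, $\ker B = \mathrm{span}\{\OneVec\}$, so for any $\varphi \notin \mathrm{span}\{\OneVec\}$ we have $B\varphi \neq 0$ and hence $\|WB\varphi\|_q > 0$ (edge weights are positive). This guarantees the quotient is well-defined on the optimization domain.

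Next I would exhibit the two inequalities. For the direction $\widetilde{d}_p(s,t) \leq \text{RHS}$: every $\varphi$ feasible for \eqref{eq: tilde(d_p) - dual of the flow problem} satisfies $\|WB\varphi\|_q = 1$, hence $\varphi \notin \mathrm{span}\{\OneVec\}$ and $(\chi_s-\chi_t)^T\varphi$ equals the value of the ratio at $\varphi$; taking the sup over a larger set only increases it. For the reverse direction, given any $\varphi \notin \mathrm{span}\{\OneVec\}$, set $\widetilde{\varphi} := \varphi / \|WB\varphi\|_q$; by linearity of $B$ and the $\ell_q$ norm, $\|WB\widetilde{\varphi}\|_q = 1$, so $\widetilde\varphi$ is feasible for \eqref{eq: tilde(d_p) - dual of the flow problem}, and $(\chi_s-\chi_t)^T\widetilde\varphi$ equals the ratio at $\varphi$. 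Thus every achievable value of the ratio is attained by a feasible point of \eqref{eq: tilde(d_p) - dual of the flow problem}.

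Finally I would chain the identities: by Claims \ref{claim: d_p = 1 / bar(d)_p} and \ref{claim: bar(d_p) = inverse of tilde(d_p)}, $d_p(s,t) = \widetilde{d}_p(s,t)$, which by the equality just proved equals the RHS of \eqref{eq: d_p as importance sampling opt problem}. The only subtlety worth noting is that ``$\max$'' is justified rather than ``$\sup$'': the normalization step shows the supremum over the ratio is attained whenever $\widetilde{d}_p$ is attained, and the existence of an optimizer for \eqref{eq: tilde(d_p) - dual of the flow problem} follows from compactness of the level set $\{\varphi : \|WB\varphi\|_q = 1\}$ modulo $\mathrm{span}\{\OneVec\}$ together with continuity of the linear objective. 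There is no serious obstacle here—the argument is essentially a homogenization trick—so the main thing to be careful about is the degenerate case $\varphi \in \mathrm{span}\{\OneVec\}$, which is precisely what the excluded set in \eqref{eq: d_p as importance sampling opt problem} removes.
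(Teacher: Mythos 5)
Your proposal is correct and follows essentially the same route as the paper: combine Claims \ref{claim: d_p = 1 / bar(d)_p} and \ref{claim: bar(d_p) = inverse of tilde(d_p)} to get $d_p=\widetilde{d}_p$, then prove the two inequalities between $\widetilde{d}_p$ and the ratio via the normalization $\widetilde{\varphi}=\varphi/\norm{WB\varphi}_q$, using $\ker B=\text{span}\curprs{\OneVec}$ to handle the degenerate case. Your added remarks on well-definedness of the quotient and on attainment of the maximum are minor refinements of the same argument, not a different approach.
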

\begin{proof}
Since $d_p=\widetilde{d}_p$,
it suffices to show it for $\widetilde{d}_p$.
Fix some $s\neq t\in V$.
First, 
let $\phi^*\in\argmax{\norm{WB\varphi}_q=1}\prs{\chi_s-\chi_t}^T\varphi$, and
since $\norm{WB\phi^*}_q=1$,
it holds that $\phi^*\notin\text{span}\curprs{\OneVec}$
(otherwise the norm would have been 0 since
$\text{ker}B=\text{span}\curprs{\OneVec}$)
and $\prs{\chi_s-\chi_t}\phi^* = \frac{\prs{\chi_s-\chi_t}\phi^*}{\norm{WB\phi^*}_q}$.
Hence, it is immediate that
\begin{equation}
    \widetilde{d}_p(s,t)
    = \prs{\chi_s-\chi_t}^T\phi^*
    \leq \max_{\varphi\in \R^V:\varphi\notin\text{span}\curprs{\OneVec}}
    \frac{\prs{\chi_s-\chi_t}^T\varphi}{\norm{WB\varphi}_q}.
\end{equation}
In the other direction, let
\(
    \varphi^*\in\argmax{\varphi\in \R^V:\varphi\notin\text{span}\curprs{\OneVec}}\frac{\prs{\chi_s-\chi_t}^T\varphi}{\norm{WB\varphi}_q},
\)
and define $\widetilde{\varphi} = \frac{\varphi^*}{\norm{WB\varphi^*}_q}$.
It is easy to see that
$\widetilde{\varphi}\notin\text{span}\curprs{\OneVec}$,
$\norm{WB\widetilde{\varphi}}_q=1$,
and moreover,
\begin{equation}
    \begin{split}
        \widetilde{d}_p(s,t) & \geq \prs{\chi_s-\chi_t}^T\widetilde{\varphi}
        = \frac{\prs{\chi_s-\chi_t}^T\varphi^*}{\norm{WB{\varphi^*}}_q}
        =  \max_{\varphi\in \R^V:\varphi\notin\text{span}\curprs{\OneVec}}
    \frac{\prs{\chi_s-\chi_t}^T\varphi}{\norm{WB\varphi}_q}.
    \end{split}
\end{equation}
\end{proof}

\paragraph{Closed-Form Solution.}
We remark that via the KKT-conditions
we can get a closed-form solution
for the flow metrics,
which can be viewed as a generalization
of Ohm's law and electrical flows,
similarly to \cite{henzinger2021cut}.
\begin{fact}[KKT conditions for $d_p$]\label{fact: closed form solution for d_p}
Let $G=(V,E,w)$ be a graph,
let $p\in (1,\infty)$ with H\"older conjugate $q$, 
let $s,t\in V$, and let $f\in\R^E$
such that $B^Tf = \chi_s-\chi_t$.
Then, $f$ is an optimal flow for $d_p(s,t)$,
if and only if
there exists a potentials vector $\varphi\in \R^V$
such that for every edge \(xy\in E\),
    \(\varphi_x-\varphi_y  = \frac{f(xy)\abs{f(xy)}^{p-2}}{w(xy)^p}\),
    or equivalently,
    \(f(xy)  = w(xy)^q\prs{\varphi_x-\varphi_y}\abs{\varphi_x-\varphi_y}^{q-2}\).
\end{fact}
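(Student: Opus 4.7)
The plan is to derive the stated identity as the KKT optimality conditions of the convex program \eqref{eq: d_p metric as convex optimization problem}. Since minimizing $\|W^{-1}f\|_p$ subject to the linear equality $B^Tf=\chi_s-\chi_t$ is equivalent (by monotonicity of $t\mapsto t^p$ on $t\ge 0$) to minimizing the smoother objective $\|W^{-1}f\|_p^p=\sum_{xy\in E}|f(xy)/w(xy)|^p$, I would work throughout with this latter form. For $p>1$ the scalar map $z\mapsto|z|^p$ is continuously differentiable with derivative $p\,z\,|z|^{p-2}$ (extended by $0$ at $z=0$), so the objective is $C^1$ and strictly convex. Combined with the affine constraints this is a convex program with linear equalities, so strong duality holds and the KKT conditions are both necessary and sufficient for optimality.

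I would then form the Lagrangian
\[
L(f,\lambda)=\sum_{xy\in E}\left|\frac{f(xy)}{w(xy)}\right|^p-\lambda^T\!\left(B^Tf-\chi_s+\chi_t\right),\qquad \lambda\in\R^V,
\]
and compute $\partial L/\partial f(xy)$. Under the orientation convention $B_{e,v}=1$ for the head and $-1$ for the tail of $e$, the partial derivative of $\lambda^T B^T f$ in $f(xy)$ is exactly $\lambda_x-\lambda_y$, so stationarity reads
\[
p\cdot\frac{f(xy)\,|f(xy)|^{p-2}}{w(xy)^p}=\lambda_x-\lambda_y\quad\text{for every edge }xy\in E.
\]
Setting $\varphi:=\lambda/p$ yields the first stated identity. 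Necessity follows because any primal optimum admits such a multiplier $\lambda$ (the constraints are regular, as $B^T$ has rank $|V|-1$ on the connected graph $G$); sufficiency follows because any feasible $f$ for which such a $\varphi$ exists is a KKT point of a convex program with affine constraints, hence globally optimal.

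The equivalence of the two displayed forms of the stationarity relation is then a purely algebraic manipulation. Taking absolute values in $\varphi_x-\varphi_y=f(xy)|f(xy)|^{p-2}/w(xy)^p$ gives $|\varphi_x-\varphi_y|=|f(xy)|^{p-1}/w(xy)^p$, whence $|f(xy)|=w(xy)^q|\varphi_x-\varphi_y|^{q-1}$ using $q=p/(p-1)$. Combined with $\text{sign}(f(xy))=\text{sign}(\varphi_x-\varphi_y)$, which is forced by $w(xy)>0$ and the original identity, this produces $f(xy)=w(xy)^q(\varphi_x-\varphi_y)|\varphi_x-\varphi_y|^{q-2}$.

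I do not anticipate a serious obstacle here; the only mild subtlety is the non-twice-differentiability of $|z|^p$ at $z=0$ when $1<p<2$, but the first derivative remains continuous and the KKT analysis uses only the gradient, so nothing breaks. A fully elementary alternative, which could replace the KKT machinery entirely, uses Claims~\ref{claim: d_p = 1 / bar(d)_p} and~\ref{claim: bar(d_p) = inverse of tilde(d_p)}: apply H\"older's inequality to $(\chi_s-\chi_t)^T\varphi=(WB\varphi)^T(W^{-1}f)$ to obtain $\|W^{-1}f\|_p\ge(\chi_s-\chi_t)^T\varphi/\|WB\varphi\|_q$, and then invoke the equality case of H\"older (which pins down both magnitudes and signs) to recover exactly the stated relationship between $f$ and $\varphi$.
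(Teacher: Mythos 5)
Your proposal is correct and follows essentially the same route the paper takes: the paper states this as a Fact justified by the KKT conditions of the convex program \eqref{eq: d_p metric as convex optimization problem}, and your Lagrangian/stationarity derivation (with necessity from the affine constraints and sufficiency from convexity, plus the algebraic equivalence of the two edgewise forms) simply fills in the details the paper leaves implicit. Your closing Hölder-equality alternative likewise matches the paper's dual-based verification via Claims \ref{claim: d_p = 1 / bar(d)_p} and \ref{claim: d_p as importance sampling}, so no gaps to report.
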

We remark that
using Fact \ref{fact: closed form solution for d_p},
we can deduce a connection between
the flow metrics and the $p$-Laplacian of the graph,
as well as its second smallest eigenvalue
(see \cite{buhler2009spectralPLaplacianClustering}
for details about the graph $p$-Laplacian
and its second smallest eigenvalue).
We present it in appendix \ref{appendix: connection to the laplacian}.

\subsection{Monotonicity Properties}\label{section - monotonicity properties}

In this subsection we show 
that the flow metrics are ``monotone"
in two manners
- in the edge  weights and in the parameter $p$.
We begin by showing that for fixed $p$,
as the weights increase,
the $d_p$-distance decrease.
We then show that as $p$ increases,
the $d_p$-distance decreases.
Moreover, we show even stronger monotonicity  in $p$,
that as $p$ increases even $d_p^q$ decreases,
where $q$ is the  H\"older conjugate
of $p$
(and changes with $p$).
We then use these properties 
to show a generalization of Foster's Theorem.

\paragraph{Monotonicity in the edge weights.}
\begin{claim}\label{claim: bigger weights means closer metric}
Let $G=(V,E)$ be a graph
and let $w,w':E\rightarrow\R_+$ be two weight functions
on $E$,
such that for every edge $e\in E$, $w(e)\leq w'(e)$.
Fix $p\in[1,\infty]$, and let $d_p,d'_p$ be the corresponding
flow metrics on $G$ with  weight functions $w$ and $w'$
respectively.
Then, 
\begin{equation}
    \forall s,t\in V,\quad d_p(s,t)\geq d'_p(s,t).
\end{equation}
\end{claim}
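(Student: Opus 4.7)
The natural approach is to argue pointwise on the objective and then pass to the minimum. The key observation is that the feasible set in \eqref{eq: d_p metric as convex optimization problem} depends only on the incidence matrix $B$, not on the weights. Hence the two optimization problems defining $d_p$ and $d'_p$ compete over exactly the same flows $f$ with $B^T f = \chi_s - \chi_t$; only the objective changes.

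First I would fix an arbitrary flow $f \in \mathbb{R}^E$ with $B^T f = \chi_s - \chi_t$ and compare the two objectives edge-by-edge. Since $0 \le w(e) \le w'(e)$, we have $1/w'(e) \le 1/w(e)$, so for every edge $|f(e)/w'(e)|^p \le |f(e)/w(e)|^p$ (with the convention that edges of weight $0$ carry no flow and are handled by the usual limit $p \to \infty$ interpretation for $p=\infty$). Summing (or taking the max for $p=\infty$) and raising to the $1/p$ power yields $\|W'^{-1} f\|_p \le \|W^{-1} f\|_p$.

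Next I would take the minimum over all feasible $f$ on both sides. On the right, the minimum equals $d_p(s,t)$ by definition. On the left, the minimum is an upper bound on $d'_p(s,t)$, but in fact by minimizing over the same set we also obtain $d'_p(s,t)$. Therefore $d'_p(s,t) \le d_p(s,t)$, as desired.

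There is essentially no obstacle: the statement is an immediate monotonicity of the objective in the parameters, together with the fact that feasibility is parameter-independent. One only needs to be slightly careful with the $p=\infty$ case (reinterpret the sum as a max) and with edges where $w(e)=0$ (no finite-cost flow uses them, so they can be excluded from $E$ without affecting either side). An alternative, essentially equivalent, route would be to use the dual characterization in Claim \ref{claim: d_p as importance sampling}: a larger $W$ makes the denominator $\|WB\varphi\|_q$ larger for every $\varphi$, shrinking the supremum; I find the primal argument cleaner and would present it.
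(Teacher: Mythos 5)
Your proposal is correct and is essentially the paper's argument: the paper simply instantiates your pointwise comparison at an optimal flow $f^*$ for the weight function $w$, noting $d_p(s,t)=\norm{W^{-1}f^*}_p\geq \norm{W'^{-1}f^*}_p\geq d'_p(s,t)$, which is the same monotonicity-of-the-objective-over-a-weight-independent-feasible-set reasoning you give. No substantive difference.
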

\begin{proof}
Fix some $s\neq t\in V$,
and let $f^*$ be some minimizing $st$ flow
with respect to the weight function $w$, i.e. 
\(
    f^*\in \argmin{B^Tf=\chi_s-\chi_t}
    \norm{W^{-1}f}_p
\).
Thus, we can see that
\begin{align*}
    d_p(s,t) 
        & = \prs{\sum_{e\in E}\abs{\frac{f^*(e)}{w(e)}}^p}^{1/p}
        && \text{(by definition of }f^*\text{)}\\
        & \geq \prs{\sum_{e\in E}\abs{\frac{f^*(e)}{w'(e)}}^p}^{1/p}
        && \text{(by } w'\geq w\text{)}\\
        & \geq  d'_p(s,t).
\end{align*}
\end{proof}

\paragraph{Monotonicity in \textit{p}.}
We recall 
that for all  $1\leq p \leq {p'} \leq \infty$,
we have
\begin{equation}\label{eq: connection between different l_p norms}
    \forall x\in \R^n,\quad 
    \norm{x}_{p'}\leq\norm{x}_p\leq n^{1/p-1/{p'}}\norm{x}_{p'}.
\end{equation}
Using this bound, we get the following.
\begin{claim}\label{claim: simple connection between different d_p metrics}
Let $G=(V,E,w)$ be a graph.
Let $1\leq p \leq {p'} \leq\infty$, and let $s,t\in V$.
Then
\begin{equation}
    d_{p'}(s,t)\leq d_p(s,t)\leq \abs{E}^{1/p-1/{p'}}d_{p'}(s,t).
\end{equation}
\end{claim}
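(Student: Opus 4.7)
The plan is to deduce the claim directly from the standard $\ell_p$-norm comparison \eqref{eq: connection between different l_p norms} applied to the vector $W^{-1}f\in\R^E$ indexed by edges, for a suitably chosen flow $f$. The key observation is that the set of feasible unit $st$-flows $\{f:B^Tf=\chi_s-\chi_t\}$ does not depend on $p$, so an optimal flow for one value of $p$ is always a feasible (possibly suboptimal) flow for the other.

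For the lower inequality $d_{p'}(s,t)\leq d_p(s,t)$, I would take an optimizer $f^*$ of \eqref{eq: d_p metric as convex optimization problem} so that $\norm{W^{-1}f^*}_p=d_p(s,t)$ and $B^Tf^*=\chi_s-\chi_t$. Since $f^*$ is also feasible for the $d_{p'}$ optimization, and since $\norm{x}_{p'}\leq\norm{x}_p$ for $p\leq p'$ by \eqref{eq: connection between different l_p norms} applied to $x=W^{-1}f^*\in\R^{|E|}$, we get $d_{p'}(s,t)\leq\norm{W^{-1}f^*}_{p'}\leq\norm{W^{-1}f^*}_p=d_p(s,t)$.

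For the upper inequality $d_p(s,t)\leq|E|^{1/p-1/p'}d_{p'}(s,t)$, I would symmetrically pick an optimizer $g^*$ of the $d_{p'}$ problem, so that $B^Tg^*=\chi_s-\chi_t$ and $\norm{W^{-1}g^*}_{p'}=d_{p'}(s,t)$. Using $g^*$ as a feasible flow for $d_p$ and invoking the other half of \eqref{eq: connection between different l_p norms}, namely $\norm{x}_p\leq n^{1/p-1/p'}\norm{x}_{p'}$ with $n=|E|$ and $x=W^{-1}g^*$, we obtain $d_p(s,t)\leq\norm{W^{-1}g^*}_p\leq|E|^{1/p-1/p'}\norm{W^{-1}g^*}_{p'}=|E|^{1/p-1/p'}d_{p'}(s,t)$.

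The $p'=\infty$ case follows by taking limits (or by substituting $1/p'=0$ directly, since the objective $\max_e|f(e)/w(e)|$ for $d_\infty$ is the $\ell_\infty$-norm of $W^{-1}f$ and the same comparison between $\ell_p$ and $\ell_\infty$ on $\R^{|E|}$ holds). There is no substantive obstacle here; the proof is essentially a two-line application of the norm comparison inequality, with the only subtlety being the standing exchange of the role of the optimizer between the two sides of the inequality.
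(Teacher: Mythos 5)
Your proof is correct and follows essentially the same route as the paper: use the optimal flow for one exponent as a feasible flow for the other and apply the $\ell_p$-norm comparison \eqref{eq: connection between different l_p norms} to $W^{-1}f\in\R^{E}$. The paper writes out the upper inequality with $f^*_{p'}$ and notes the lower one follows by the symmetric calculation, which is exactly what you do.
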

\begin{proof}
Let
\(
    f_{p}^{*}\in \argmin{B^Tf=\chi_s-\chi_t}
    \norm{W^{-1}f}_p
\)
and
\(
    f_{p'}^{*}\in \argmin{B^Tf=\chi_s-\chi_t}
    \norm{W^{-1}f}_{p'}
\).
Then,
\begin{align*}
    d_p(s,t) 
    &\leq \norm{W^{-1}{f^*_{{p'}}}}_p
    &&( \text{by definition of }d_{p})\\
    & \leq \abs{E}^{1/p-1/{p'}}\norm{W^{-1}{f^*_{{p'}}}}_{p'}
    &&(\text{by }\eqref{eq: connection between different l_p norms})\\
    & = \abs{E}^{1/p-1/{p'}}d_{p'}(s,t)
    &&( \text{by definition of }f^*_{{p'}}).
\end{align*}
By similar calculations we conclude that
$d_{p'}(s,t)\leq d_p(s,t)$.
\end{proof}
The following corollary
 will come  handy
in subsection \ref{section - flow metric sparsifiers},
where we discuss flow metric sparsifiers.
\begin{corollary}\label{corl: large p is essentialy infinity}
Let $G=(V,E,w)$ be a graph on $n$ vertices, 
let $0<\varepsilon<1$, and let
$p\geq 4\varepsilon^{-1} \log n$.
Then,
\begin{equation}
    \forall s,t\in V,\quad
    d_\infty(s,t)\leq d_p(s,t)\leq (1+\varepsilon)d_\infty(s,t).
\end{equation}
\end{corollary}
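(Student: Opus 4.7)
The plan is to reduce this directly to Claim \ref{claim: simple connection between different d_p metrics}, instantiated at $p'=\infty$. That claim already gives the lower bound $d_\infty(s,t)\le d_p(s,t)$ for free, and it yields the upper bound
\[
    d_p(s,t)\le |E|^{1/p}\, d_\infty(s,t),
\]
so the entire corollary reduces to checking that $|E|^{1/p}\le 1+\varepsilon$ under the hypothesis $p\ge 4\varepsilon^{-1}\log n$.

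To bound $|E|^{1/p}$, I would first use the crude estimate $|E|\le\binom{n}{2}\le n^2$, so that $|E|^{1/p}\le n^{2/p}=e^{(2\log n)/p}$. Plugging in the hypothesis $p\ge 4\varepsilon^{-1}\log n$ gives the exponent bound $(2\log n)/p\le \varepsilon/2$, hence $|E|^{1/p}\le e^{\varepsilon/2}$.

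The only mildly non-trivial step is the elementary inequality $e^{\varepsilon/2}\le 1+\varepsilon$ valid for all $\varepsilon\in[0,1]$; this is equivalent to $\varepsilon/2\le \ln(1+\varepsilon)$, which follows since the concave function $\ln(1+x)$ satisfies $\ln(1+x)\ge x/2$ on $[0,1]$ (check the endpoints $x=0$ and $x=1$, where $\ln 2>1/2$, and invoke concavity). Combining this with the previous step yields $|E|^{1/p}\le 1+\varepsilon$, and hence $d_p(s,t)\le (1+\varepsilon)d_\infty(s,t)$, completing the proof.

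There is no real obstacle here — the corollary is essentially a routine consequence of Claim \ref{claim: simple connection between different d_p metrics} plus the scalar inequality $e^{\varepsilon/2}\le 1+\varepsilon$ on $[0,1]$. The only thing worth being careful about is the small numerical constant in the hypothesis: with the constant $4$ one obtains exactly $e^{\varepsilon/2}$, which is the tightest point at which the scalar inequality still passes on the whole range $\varepsilon\in(0,1)$; any looser constant would force a restriction on $\varepsilon$ or weaken the bound.
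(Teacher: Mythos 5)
Your proposal is correct and follows essentially the same route as the paper: invoke Claim \ref{claim: simple connection between different d_p metrics} with $p'=\infty$, bound $|E|^{1/p}\leq n^{2/p}\leq e^{\varepsilon/2}$ using the hypothesis on $p$, and finish with $e^{\varepsilon/2}\leq 1+\varepsilon$ on $(0,1)$. The only difference is that you spell out the elementary scalar inequality that the paper states without justification.
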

\begin{proof}
By Claim \ref{claim: simple connection between different d_p metrics},
we have
\begin{equation}\label{eq: we know from connection between d_p and d_inf}
    d_\infty\leq d_p\leq |E|^{1/p}d_\infty.
\end{equation}
Hence, for $p\geq 4\varepsilon^{-1} \log n$ we get,
\begin{equation}
    |E|^{1/p}
    = e^{\frac{1}{p}\log |E|}
    \leq e^{\varepsilon/2}\leq  1+\varepsilon.
\end{equation}
Thus, plugging this into (\ref{eq: we know from connection between d_p and d_inf}) gives the desired result.
\end{proof}

\subsubsection{A Generalization of Foster's Theorem}
Next, we  present another monotonicity
property for the flow metrics, and use it to conclude
lower and upper bounds on the sum 
$\sum_{xy\in E}w(xy)^qd_p(x,y)^q$
(proving Proposition \ref{intro: proposition: extension of fosters thm}).
These bounds can be viewed as a generalization of Foster's Theorem.
We restate it here for clarity.
\begin{proposition}\label{proposition: extension of fosters thm}
Let $G=(V,E,w)$ be a connected graph,
and fix $p>1$ with H\"older conjugate $q$.
Then,
\begin{align}
    \text{if }p\geq 2,&&
    \frac{|V|}{2}&\leq \sum_{xy\in E}w(xy)^qd_p(x,y)^q
    \leq |V|-1,\\
    \label{align: bounds on the sum}
    \text{if } p\leq 2,&&
    |V|-1 & \leq \sum_{xy\in E}w(xy)^qd_p(x,y)^q
    \leq |E|.
\end{align}
\end{proposition}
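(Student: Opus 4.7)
Writing $S(p)$ for the sum in the statement, the four bounds follow from combining three ingredients: two easy bounds $|V|/2\le S(p)\le|E|$ valid for every $p>1$, the exact value $S(2)=|V|-1$ from Foster's theorem, and the monotonicity claim that $S(p)$ is non-increasing in $p$. Together these yield $|V|/2\le S(p)\le S(2)=|V|-1$ for $p\ge 2$ and $|V|-1=S(2)\le S(p)\le|E|$ for $p\le 2$, matching the proposition.

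For the easy bounds, the single-edge flow is feasible for $d_p(x,y)$ whenever $xy\in E$, so $d_p(x,y)\le 1/w(xy)$, every summand is at most $1$, and hence $S(p)\le|E|$. For the lower bound I would apply Claim~\ref{claim: d_p as importance sampling} with the test potential $\varphi=\chi_x$: since $\|WB\chi_x\|_q^q=\sum_{z\sim x} w(xz)^q$, this yields $d_p(x,y)^q\ge 1/\sum_{z\sim x}w(xz)^q$ for every $y$, so $\sum_{y\sim x}w(xy)^q d_p(x,y)^q\ge 1$; summing over $x\in V$ and using that each edge is incident to two vertices gives $S(p)\ge|V|/2$. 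The value at $p=2$ is Foster's theorem applied to the graph $G'=(V,E,w^2)$: since $d_2(x,y)^2$ is the effective resistance in $G'$, classical Foster gives $S(2)=\sum_{xy\in E}w(xy)^2\,\Reff_{,G'}(x,y)=|V|-1$.

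The core of the proof, and the main obstacle, is the monotonicity of $S(p)$. The KKT relation in Fact~\ref{fact: closed form solution for d_p} yields the pleasant identity $|f^{xy}_p(xy)|=w(xy)^q\,d_p(x,y)^q$, so $S(p)=\sum_{xy\in E}|f^{xy}_p(xy)|$ is the total ``self-usage'' of the optimal $p$-flows along their defining edges; in particular the bound $S(p)\le|E|$ already falls out of this identity because the optimal $p$-flow is acyclic and satisfies $|f(e)|\le 1$ on every edge. However, the naive hope that each term $|f^{xy}_p(xy)|$ is monotone in $p$ \emph{fails}: on two parallel edges of distinct weights one verifies directly that the self-usage of the lighter edge strictly increases in $p$ while that of the heavier edge strictly decreases, and only the sum is well-behaved (in that example it is constant). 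Thus the monotonicity proof must be global, and my plan is to differentiate $S(p)$ via the envelope theorem applied to the primal $d_p^p=\min_f\|W^{-1}f\|_p^p$ (equivalently to the dual $\bar d_p^q=\min_\varphi\|WB\varphi\|_q^q$), express each edge contribution through the optimal potential $\varphi^{xy}$ normalized to $\varphi^{xy}_x-\varphi^{xy}_y=1$, and then exhibit the cross-edge cancellations that force $dS/dp\le 0$. Controlling these cancellations across edges is the technical crux, and once established it closes the proof together with the easy bounds and the Foster identity above.
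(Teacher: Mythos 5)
Your preliminary steps are fine: the single-edge flow gives each summand at most $1$ (hence $\sum_{xy\in E}w(xy)^qd_p(x,y)^q\le|E|$), Foster's theorem applied to $(V,E,w^2)$ gives the value $|V|-1$ at $p=2$, and your test-potential argument with $\varphi=\chi_x$ in Claim \ref{claim: d_p as importance sampling} is a clean, direct proof of the $|V|/2$ lower bound for every $p>1$ (the paper instead reduces to $p=\infty$ via Lemma \ref{lemma: stronger monotonicity of the flow metric} and bounds $\mincut(x,y)$ by the weighted degree; your route is arguably simpler for this piece). The genuine gap is the central step: writing $S_G(p)=\sum_{xy\in E}w(xy)^qd_{p,G}(x,y)^q$, you reduce both $n-1$ bounds to the assertion that $S_G(p)$ is non-increasing in $p$ for a \emph{fixed} graph $G$, and for this you offer only a plan (envelope theorem plus ``cross-edge cancellations''), explicitly leaving the control of those cancellations open. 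Since you yourself observe that termwise monotonicity fails (your parallel-edge example), nothing short of that global cancellation analysis would close the argument, and you give neither a way to carry it out nor evidence that the global monotonicity of $S_G(p)$ is true beyond examples. As it stands, the proposal does not prove the two $n-1$ bounds.

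The paper's proof supplies the missing idea and shows the strong statement is unnecessary: compare different values of $p$ on \emph{reweighted} graphs. Lemma \ref{lemma: stronger monotonicity of the flow metric} proves the termwise inequality $d_{p,G}(s,t)^q\ge d_{p',G'}(s,t)^{q'}$ for $p\le p'$, where $G'=(V,E,w^{q/q'})$; its proof takes the optimal dual potential for $(p',G')$ normalized to $\varphi_s-\varphi_t=1$, uses $\abs{\varphi_x-\varphi_y}\le 1$ together with $q'\le q$, and the identity $(w')^{q'}=w^q$. Because the weight factors in the Foster sum are exactly $w^q=(w')^{q'}$, this termwise inequality transfers edge by edge to the sums: choosing $p'=2$ (resp.\ $p=2$) and applying Foster's theorem to the reweighted graph yields $S_G(p)\le n-1$ for $p\ge 2$ and $S_G(p)\ge n-1$ for $p\le 2$, with no cancellation argument at all. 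To salvage your route you would either have to prove your global monotonicity claim for fixed $G$ (a statement strictly stronger than what the paper uses, and currently unsubstantiated) or switch to this reweighting comparison, which sidesteps exactly the difficulty you identified.
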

We remark that for defining the sum for $p=1$,
we take power $1/q$ on both sides and take the limit
$p\rightarrow1$.
Thus, (\ref{align: bounds on the sum})
turns into $\max_{xy\in E}\curprs{w(xy)d_1(x,y)} = 1$.
Proposition \ref{proposition: extension of fosters thm}
is a  consequence of the following lemma.
\begin{lemma}\label{lemma: stronger monotonicity of the flow metric}
Let $G=(V,E,w)$ be a graph (possibly with parallel edges),
fix $p,p'\in(1,\infty]$ with H\"older conjugates
$q,q'$   respectively.
Define $w'=w^{q/q'}$,
and denote $G'=(V,E,w')$.
If $p\leq p'$ then
\begin{equation}\label{eq: what we want to show}
    \forall s,t\in V,\quad
    d_{p,G}(s,t)^q \geq d_{p', G'}(s,t)^{q'}.
\end{equation}
Otherwise, the inequality is reversed (by symmetry).
\end{lemma}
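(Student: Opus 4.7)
The plan is to dualize and use the identity $d_{r,H}(s,t) = \bar{d}_{r,H}(s,t)^{-1}$ from Claim~\ref{claim: d_p = 1 / bar(d)_p}, which transforms the desired inequality $d_{p,G}(s,t)^q \geq d_{p',G'}(s,t)^{q'}$ into its reciprocal
\[
    \bar{d}_{p,G}(s,t)^{q} \;\leq\; \bar{d}_{p',G'}(s,t)^{q'}.
\]
Expanding both sides via \eqref{eq: bar(d)_p metric as convex optimization problem} and substituting $w' = w^{q/q'}$ (so that $w'(xy)^{q'} = w(xy)^q$), the goal reduces to
\[
    \min_{\varphi_s - \varphi_t = 1} \sum_{xy\in E} w(xy)^q \abs{\varphi_x - \varphi_y}^{q}
    \;\leq\;
    \min_{\varphi_s - \varphi_t = 1} \sum_{xy\in E} w(xy)^q \abs{\varphi_x - \varphi_y}^{q'}.
\]
In both minimization problems the edge weights $w(xy)^q$ are the \emph{same}; only the exponent on the potential differences changes. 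Since $p \leq p'$ gives $q \geq q'$, the task is to show that raising to the larger power on the LHS still produces a value no larger than the RHS.

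The key step is a standard clamping argument. Let $\varphi^{*}$ be any minimizer for the RHS. Since $B\OneVec = 0$, the objective $\norm{WB\varphi}_{q'}$ is invariant under adding a constant to $\varphi$, so I may assume $\varphi^{*}_t = 0$ and hence $\varphi^{*}_s = 1$. Define the clamped vector $\widetilde{\varphi}_x := \max\bigl(0,\min(\varphi^{*}_x, 1)\bigr)$. The map $z \mapsto \max(0, \min(z,1))$ is $1$-Lipschitz, so $\abs{\widetilde{\varphi}_x - \widetilde{\varphi}_y} \leq \abs{\varphi^{*}_x - \varphi^{*}_y}$ on every edge, while $\widetilde{\varphi}_s - \widetilde{\varphi}_t = 1$ is preserved. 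Consequently $\widetilde{\varphi}$ is itself optimal for the RHS, and additionally satisfies $\abs{\widetilde{\varphi}_x - \widetilde{\varphi}_y} \in [0,1]$ on every edge.

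With this in hand the proof concludes in one line: for $z \in [0,1]$ and $q \geq q'$ one has $z^{q} \leq z^{q'}$, hence
\[
    \sum_{xy\in E} w(xy)^q \abs{\widetilde{\varphi}_x - \widetilde{\varphi}_y}^{q}
    \;\leq\;
    \sum_{xy\in E} w(xy)^q \abs{\widetilde{\varphi}_x - \widetilde{\varphi}_y}^{q'},
\]
and the LHS is an upper bound on the $q$-minimum we wanted. The boundary case $p' = \infty$, i.e.\ $q' = 1$, is handled identically since $\bar{d}_{\infty,G'}$ admits the same potential formulation in the limit. I expect the only nontrivial point to be verifying that the translation plus clamping is truly free -- specifically that clamping fixes $\varphi^{*}_s$ and $\varphi^{*}_t$ and can only contract edge differences -- but this is a routine cutoff trick. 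The monotonicity in $p$ alone (Claim~\ref{claim: simple connection between different d_p metrics}) does not suffice because the weights in $G$ and $G'$ differ; the strength of the lemma comes from the clamping step, which is precisely what allows comparing different exponents against a common weight $w^{q}$.
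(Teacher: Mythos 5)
Your proof is correct and takes essentially the same route as the paper: both dualize via Claim \ref{claim: d_p = 1 / bar(d)_p}, use $w'^{q'}=w^{q}$ so the two potential problems share the common weights $w^{q}$, and then compare the exponents $q\geq q'$ through the bound $\abs{\varphi_x-\varphi_y}\leq 1$ on an optimal potential. The only difference is that you justify this bound with an explicit translate-and-clamp argument, whereas the paper simply asserts $\abs{\varphi^*_x-\varphi^*_y}\leq 1$ for its chosen minimizer; your extra step is sound and supplies a justification the paper leaves implicit.
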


\begin{proof}
Let $s\neq t\in V$,
and recall that by Claim \ref{claim: d_p = 1 / bar(d)_p}
\[
    d_{p,G}(s,t)=\prs{\Bar{d}_{p,G}(s,t)}^{-1}
    = \prs{\min_{\varphi_s-\varphi_t=1}{\prs{\sum_{xy\in E}w(xy)^q\abs{\varphi_x-\varphi_y}^q}^{1/q}}}^{-1}.
\]
Denote a minimizer $\varphi$ by
\[
    \varphi^*\in\argmin{\varphi_s-\varphi_t=1}{\prs{\sum_{xy\in E}w'(xy)^{q'}\abs{\varphi_x-\varphi_y}^{q'}}^{1/q'}}.
\]
Then,
\begin{align*}
    d_{p',G'}(s,t)^{-q'} & =  
        \sum_{xy\in E} w'(xy)^{q'}\abs{
        \varphi^*_x-\varphi^*_y}^{q'}
        && (\text{by definition of }\varphi^*)\\
        & \geq  \sum_{xy\in E}w'(xy)^{q'}\abs{
        \varphi^*_x-\varphi^*_y}^{q}
        &&(\text{by }q'\leq q \text{ and }\abs{
        \varphi^*_x-\varphi^*_y}\leq 1)\\
        & = \sum_{xy\in E}w(xy)^{q}\abs{
        \varphi^*_x-\varphi^*_y}^{q}
        && (\text{by }w^q=(w')^{q'})\\
        & \geq  
        d_{p,G}(s,t)^{-q}.
\end{align*}
\end{proof}
 Lemma \ref{lemma: stronger monotonicity of the flow metric}
implies two of the  bounds in
Proposition \ref{proposition: extension of fosters thm}
as easy corollaries.
\begin{corollary}\label{cor: monotonicity implies sum of w*d_p is at most n-1 for unweighted graphs}
Let $G=(V,E,w)$ be a connected graph 
(possibly with parallel edges)
with $|V|=n$ vertices.
Fix $p\in(1,\infty]$ with H\"older conjugate $q$.
Then,
\begin{equation}\label{eq: from corollary - monotonicity implies bound}
    \sum_{xy\in E} w(xy)^qd_p(x,y)^q
    \leq n-1
    \iff p\geq 2.
\end{equation}
\end{corollary}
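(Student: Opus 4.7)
The plan is to combine Lemma~\ref{lemma: stronger monotonicity of the flow metric} with the classical Foster's theorem (the case $p=2$) via a suitable reweighting of $G$. I would first reformulate Foster's theorem in the language of $d_2$. Recall that for any weighted graph $H=(V,E,w_H)$, one has $d_{2,H}(x,y)^2=\Reff_{H^2}(x,y)$, where $H^2:=(V,E,w_H^2)$. Substituting this identity into the classical Foster identity $\sum_{xy\in E}w_H(xy)^2\Reff_{H^2}(x,y)=n-1$ yields
\[
\sum_{xy\in E} w_H(xy)^2\, d_{2,H}(x,y)^2 \;=\; n-1.
\]
Applying this to the reweighted graph $\tilde G:=(V,E,w^{q/2})$, and using that $(w^{q/2})^2=w^q$, I obtain the key identity
\[
\sum_{xy\in E} w(xy)^q\, d_{2,\tilde G}(x,y)^2 \;=\; n-1. \qquad (\star)
\]

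It then suffices to compare $d_{2,\tilde G}(x,y)^2$ with $d_{p,G}(x,y)^q$, which is exactly what Lemma~\ref{lemma: stronger monotonicity of the flow metric} provides once the reweighting is matched correctly. For the $\Leftarrow$ direction ($p\geq 2$), I would apply the lemma with \emph{starting} graph $\tilde G$ and parameters $p_{\mathrm{lem}}=2\leq p'_{\mathrm{lem}}=p$; since the resulting new weight is $(w^{q/2})^{2/q}=w$, the lemma's output graph is precisely $G$, so it gives $d_{2,\tilde G}(s,t)^2\geq d_{p,G}(s,t)^q$ for all $s,t\in V$. Summing against $w(xy)^q$ and combining with $(\star)$ yields $\sum_{xy\in E} w(xy)^q d_{p,G}(x,y)^q\leq n-1$.

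For the $\Rightarrow$ direction, I would use the symmetric application: apply Lemma~\ref{lemma: stronger monotonicity of the flow metric} directly to $G$ with $p_{\mathrm{lem}}=p\leq p'_{\mathrm{lem}}=2$; the lemma's new weight is then $w^{q/2}$, so the output graph is $\tilde G$, yielding the reverse inequality $d_{p,G}(s,t)^q\geq d_{2,\tilde G}(s,t)^2$. Summing and using $(\star)$ gives $\sum_{xy\in E} w(xy)^q d_{p,G}(x,y)^q\geq n-1$ for every $p\leq 2$. To complete the $\iff$ statement, I would exhibit a simple example (e.g.\ the unit-weight triangle, for which the explicit formulas for $d_p$ on two parallel paths directly yield $\sum d_p^q > n-1$ whenever $p<2$), certifying that the bound $\leq n-1$ genuinely fails below the threshold $p=2$.

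The main (and essentially only) subtlety lies in the exponent bookkeeping of the reweighting $w'=w^{q/q'}$ in Lemma~\ref{lemma: stronger monotonicity of the flow metric}: one must choose $\tilde G$ so that the lemma's substitution transports the comparison between $d_2$ on $\tilde G$ and $d_p$ on $G$ in both directions, which is precisely what forces the exponent $q/2$ in the definition of $\tilde G$ (note that this exponent collapses to $1$ at the critical value $p=2$, consistent with Foster's equality there).
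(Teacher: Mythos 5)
Your proof is correct and is essentially the paper's own argument: the paper likewise combines Lemma~\ref{lemma: stronger monotonicity of the flow metric} (comparing $d_{p,G}^q$ with $d_{2,G'}^2$ for $G'=(V,E,w^{q/2})$) with Foster's Theorem via the identity $d_{2,G'}^2=\Reff$ on the squared-weight graph, and your application of the lemma "from $\tilde G$ to $G$" is just its symmetric case made explicit, while the $p\le 2$ direction is handled identically. Your additional strict example (the unit triangle, where $\sum_{xy\in E} d_p(x,y)^q = 3/(1+2^{1-q})>2$ for $p<2$) is a correct small supplement that certifies the threshold, which the paper leaves implicit by only proving the reverse inequality $\ge n-1$ for $p\le 2$.
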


\begin{proof}
For $p\geq 2$ with H\"older conjugate
$q$,
denote $w'=w^{q/2}$ and let 
$d'_2$ is the corresponding flow metric
on $G'=(V,E,w')$.
By Lemma \ref{lemma: stronger monotonicity of the flow metric},
\begin{equation}
    \sum_{xy\in E} w(xy)^qd_p(x,y)^q  \leq   \sum_{xy\in E}w'(xy)^2d'_2(x,y)^2
         = n-1.
\end{equation}
where the last equality is exactly Foster's Theorem
about effective resistance.
For $p\in(1,2]$ we use the symmetric case of 
Lemma \ref{lemma: stronger monotonicity of the flow metric}
and Foster's Theorem again.
\end{proof}
We remark that the  bound in (\ref{eq: from corollary - monotonicity implies bound}) is tight
(for every $p>1$), 
since on trees the sum always equals to $n-1$.

Next, we give tight upper and lower bounds for the remaining cases.
We begin by showing the following claim.
\begin{claim}\label{claim: trivial upper bound by 1}
For every graph $G=(V,E,w)$, 
and every $p>1$ with H\"older conjugate $q$,
\begin{equation}\label{eq: bound w^qd_p^q<=1}
    \forall uv\in E,\quad     w(uv)^{q}d_p(u,v)^q\leq 1,
\end{equation}
which implies the
upper bound
$\sum_{xy\in E}w(xy)^qd_p(x,y)^q\leq |E|$.
\end{claim}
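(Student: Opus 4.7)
The plan is essentially a one-line witness argument: exhibit a single explicit flow that achieves cost $1/w(uv)$ and use it to upper-bound $d_p(u,v)$.

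More concretely, fix an edge $uv \in E$. Consider the unit flow $f \in \R^E$ defined by $f(uv) = 1$ (with the appropriate sign so that the net flow goes from $u$ to $v$ under the chosen orientation) and $f(e) = 0$ on every other edge. Then $B^T f = \chi_u - \chi_v$, so $f$ is a feasible unit $uv$-flow. Its $\ell_p$ cost is
\begin{equation*}
    \norm{W^{-1}f}_p = \prs{\sum_{e\in E}\abs{f(e)/w(e)}^p}^{1/p} = \frac{1}{w(uv)},
\end{equation*}
so by the definition \eqref{eq: d_p metric as convex optimization problem} of $d_p$ we get $d_p(u,v) \leq 1/w(uv)$. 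Rearranging and raising to the power $q$ yields
\begin{equation*}
    w(uv)^q d_p(u,v)^q \leq 1,
\end{equation*}
which is \eqref{eq: bound w^qd_p^q<=1}.

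Summing the above over all edges $uv \in E$ gives the stated bound $\sum_{xy \in E} w(xy)^q d_p(x,y)^q \leq |E|$.

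There is essentially no obstacle here; the only care needed is in the case of parallel edges (where we pick one of them and set the flow to $1$ on that specific copy, $0$ on the others), and in checking that the argument is vacuous/trivial for $p = 1$ (which is excluded anyway since we require $p > 1$ for $q$ to be finite). The non-trivial content of the subsection is the matching \emph{lower} bound obtained earlier via Foster's theorem and Lemma \ref{lemma: stronger monotonicity of the flow metric}; Claim \ref{claim: trivial upper bound by 1} itself is just the easy complementary direction.
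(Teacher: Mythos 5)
Your proof is correct, but it takes a different route from the paper. You argue on the primal side: the single-edge flow $f$ with $f(uv)=\pm 1$ and $0$ elsewhere is feasible for the problem defining $d_p(u,v)$ and has cost $1/w(uv)$, giving $w(uv)\,d_p(u,v)\le 1$ and hence \eqref{eq: bound w^qd_p^q<=1} after raising to the power $q$; summing over edges is then immediate. The paper instead works on the dual side, using Claim \ref{claim: d_p = 1 / bar(d)_p} to write
\begin{equation*}
    w(uv)^{-q}d_p(u,v)^{-q}
    = \min_{\varphi_u-\varphi_v=1}\Bigl(1+\sum_{xy\in E\backslash\{uv\}}\bigl(\tfrac{w(xy)}{w(uv)}\bigr)^q\abs{\varphi_x-\varphi_y}^q\Bigr)\geq 1,
\end{equation*}
since the edge $uv$ alone contributes $1$ and the remaining terms are nonnegative. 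Your flow-witness argument is more elementary (it needs only the definition \eqref{eq: d_p metric as convex optimization problem}, not the duality claim, and works verbatim for $p=1$ and $p=\infty$); the paper's dual formulation has the advantage that the displayed expression is reused directly in the subsequent Claim \ref{claim: sum is tight in the limit p=1}, where a specific potential is plugged in to show the bound is tight as $p\rightarrow 1$. Your handling of parallel edges (route the unit on one chosen copy) and your observation that the nontrivial content of the subsection lies in the lower bounds are both accurate.
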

We remark that these bounds hold also when
taking the limit $p\rightarrow 1$.
\begin{proof}
Consider first the case $p>1$.
Fix an edge $uv\in E$, and fix $p>1$ with H\"older conjugate $q$.
Then,
\begin{equation}
    \begin{split}
        w(uv)^{-q}d_p(u,v)^{-q}& = w(uv)^{-q}\cdot \min_{\varphi_u-\varphi_v=1}
        \prs{\sum_{xy\in E}w(xy)^q\abs{\varphi_x-\varphi_y}^q}\\
        & = \min_{\varphi_u-\varphi_v=1}
        \prs{\sum_{xy\in E}\prs{\frac{w(xy)}{w(uv)}}^q\abs{\varphi_x-\varphi_y}^q}\\
        & = \min_{\varphi_u-\varphi_v=1}
        \prs{1+\sum_{xy\in E\backslash\curprs{uv}}
        \prs{\frac{w(xy)}{w(uv)}}^q\abs{\varphi_x-\varphi_y}^q}\\
        & \geq 1.
    \end{split}
\end{equation}
The case $p=1$  follows by taking the limit.
\end{proof}
In fact, we have something stronger for several graphs.
\begin{claim}\label{claim: sum is tight in the limit p=1}
    For every unweighted graph $G=(V,E)$ with no parallel edges,
    \begin{equation}
        \forall uv\in E,\quad \lim_{p\rightarrow 1}
        d_p(u,v)^q=1.
    \end{equation}
    and thus the upper bound
    $\sum_{xy\in E}w(xy)^qd_p(x,y)^q\leq |E|$ is 
    existentially tight in the limit $p\rightarrow 1$.
\end{claim}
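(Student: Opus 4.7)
The plan is to establish the limit by showing both inequalities separately. The upper bound $\limsup_{p\to 1}d_p(u,v)^q\leq 1$ is immediate from Claim~\ref{claim: trivial upper bound by 1}: since $G$ is unweighted, $w(uv)=1$, hence $d_p(u,v)^q\leq w(uv)^{-q}=1$ for every $p>1$.

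For the matching lower bound, I will use the dual characterization provided by Claim~\ref{claim: d_p as importance sampling}, and exhibit an explicit feasible potential whose $\ell_q$-cost tends to $1$ as $q\to\infty$. Concretely, define $\varphi\in\R^V$ by $\varphi_u=1$, $\varphi_v=0$, and $\varphi_x=1/2$ for every $x\in V\setminus\{u,v\}$. Then $(\chi_u-\chi_v)^T\varphi=1$ and $\varphi\notin\mathrm{span}\curprs{\OneVec}$. Because $G$ has no parallel edges, $uv$ is the unique edge between $u$ and $v$, and it contributes exactly $|1-0|^q=1$ to $\|B\varphi\|_q^q$. Every remaining edge is either incident to exactly one of $u,v$, contributing $(1/2)^q=2^{-q}$, or disjoint from $\{u,v\}$, contributing $0$. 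Summing these contributions,
\begin{equation}
\|B\varphi\|_q^q \;\leq\; 1+\bigl(\deg(u)+\deg(v)-2\bigr)\cdot 2^{-q} \;\leq\; 1+2|V|\cdot 2^{-q}.
\end{equation}
Plugging this into Claim~\ref{claim: d_p as importance sampling} (with $W=I$) gives $d_p(u,v)^q\geq (1+2|V|\cdot 2^{-q})^{-1}$, and since $q\to\infty$ as $p\to 1$, the right-hand side tends to $1$, completing the lower bound. Combining both inequalities yields $\lim_{p\to 1}d_p(u,v)^q=1$, and since the sum $\sum_{xy\in E}w(xy)^q d_p(x,y)^q$ has $|E|$ terms each bounded above by $1$ and each tending to $1$, the sum itself tends to $|E|$, showing existential tightness of the upper bound.

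I do not foresee any real obstacle: the potential is designed precisely so that the unique edge $uv$ carries the dominant contribution $1$ to $\|B\varphi\|_q^q$ while every other edge difference is at most $1/2$, making its $q$-th power exponentially small in $q$. The hypothesis ``no parallel edges'' plays an essential role here --- with $k$ parallel $uv$-edges the same dual potential would yield $\|B\varphi\|_q^q\geq k$, so the lower bound would degrade to $1/k$, matching the two-vertex, $k$-parallel-edge example already noted in the paper rather than the target value $1$.
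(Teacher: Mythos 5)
Your proposal is correct and follows essentially the same route as the paper: the upper bound comes from Claim \ref{claim: trivial upper bound by 1}, and the lower bound uses the dual characterization with exactly the same potential ($1$ on $u$, $0$ on $v$, $1/2$ elsewhere), giving a cost of $1+O(\deg)\cdot 2^{-q}\to 1$ as $q\to\infty$. The only cosmetic difference is that you invoke the max-ratio form (Claim \ref{claim: d_p as importance sampling}) while the paper works directly with $\bar{d}_p$, and you bound the error term by $2|V|\cdot 2^{-q}$ instead of $2\Delta/2^q$; both are equivalent.
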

\begin{proof}
Similarly to the proof of Claim \ref{claim: trivial upper bound by 1},
we have 
\begin{equation}\label{eq: from claim about trivial upper bound by 1}
    d_p(u,v)^{-q}
         = \min_{\varphi_u-\varphi_v=1}
        \prs{1+\sum_{xy\in E\backslash\curprs{uv}}
        \abs{\varphi_x-\varphi_y}^q}.
\end{equation}
Define a potential function $\varphi$ by
\(
    \varphi_x=\begin{cases}
    1   &   \text{if }x=u,\\
    0   &   \text{if }x=v,\\
    \frac{1}{2}   &   \text{o.w.};
    \end{cases}
\).
Plugging this into 
(\ref{eq: from claim about trivial upper bound by 1})
we get 
\begin{equation}
    \begin{split}
        d_p(u,v)^{-q} & \leq 
        1+\sum_{xy\in E\backslash\curprs{uv}}
        \abs{\varphi_x-\varphi_y}^q\\
        & = 1+\sum_{x\in N(u)\backslash\{v\}}\prs{\frac{1}{2}}^q
        + \sum_{y\in N(v)\backslash\{u\}}\prs{\frac{1}{2}}^q\\
        & \leq 1+\frac{2\Delta}{2^q}.
    \end{split}
\end{equation}
where $\Delta=\max_{v\in V}\deg(v)$.
Together with the bound from
Claim \ref{claim: trivial upper bound by 1},
we have 
\[
    1 \leq d_p(u,v)^{-q} \leq  1+\frac{2\Delta}{2^q}\xrightarrow{q\rightarrow\infty}1.
\]
and the claim follows.
\end{proof}
We conclude the case $p\leq 2$
in Proposition \ref{proposition: extension of fosters thm},
by using Corollary \ref{cor: monotonicity implies sum of w*d_p is at most n-1 for unweighted graphs}
and Claim \ref{claim: trivial upper bound by 1}.
Moreover, these bounds are tight - 
the lower bound is tight for $p=2$,
and the upper bound is tight for $p\rightarrow 1$
(Claim \ref{claim: sum is tight in the limit p=1}).
In particular, on trees
the sum always equals $|V|-1=|E|$.

Finally, we give a lower bound for the sum
\(
    \sum_{xy\in E}w(xy)^qd_p^q(x,y)
\) when $p>2$,
which will conclude Proposition \ref{proposition: extension of fosters thm}.
Due to Lemma \ref{lemma: stronger monotonicity of the flow metric},
it suffices to  give a lower bound for the case
$p=\infty$.
Indeed, we have the following bound.
\begin{claim}\label{claim: lower bound on the sum for p geq 2}
    Let $G=(V,E,w)$ be a graph with $|V|=n$ vertices.
    Then
    \begin{equation}
        \sum_{xy\in E}w(xy)d_\infty(x,y)\geq \frac{n}{2}.
    \end{equation}
\end{claim}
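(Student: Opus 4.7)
The plan is to reduce the inequality to a purely combinatorial statement about degrees, via the identification $d_\infty(x,y)=1/\mincut(x,y)$ (established earlier in this subsection and in the definition of $d_\infty$). For each edge $xy\in E$, an obvious cut separating $x$ from $y$ is the star around $x$ (or around $y$), which gives the trivial bound
\[
    \mincut(x,y)\leq \min\bigl(\deg_w(x),\deg_w(y)\bigr),
\]
where $\deg_w(v):=\sum_{e\ni v}w(e)$. Equivalently, $d_\infty(x,y)\geq 1/\min(\deg_w(x),\deg_w(y))$. So it suffices to prove
\[
    \sum_{xy\in E}\frac{w(xy)}{\min(\deg_w(x),\deg_w(y))}\geq \frac{n}{2}.
\]

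The second step is the elementary observation that for all positive reals $a,b$,
\[
    \frac{1}{\min(a,b)}\;\geq\;\frac{1}{2}\left(\frac{1}{a}+\frac{1}{b}\right),
\]
which holds because if (say) $a\leq b$ then $\tfrac12(1/a+1/b)\leq\tfrac12(1/a+1/a)=1/a=1/\min(a,b)$. Apply this with $a=\deg_w(x)$ and $b=\deg_w(y)$ (both positive since $G$ is connected, hence has no isolated vertex) to obtain
\[
    \sum_{xy\in E}\frac{w(xy)}{\min(\deg_w(x),\deg_w(y))}
    \;\geq\; \frac{1}{2}\sum_{xy\in E}w(xy)\left(\frac{1}{\deg_w(x)}+\frac{1}{\deg_w(y)}\right).
\]

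The third step is a standard swap of summation: each edge contributes its weight to exactly two vertex-terms, so
\[
    \sum_{xy\in E}w(xy)\left(\frac{1}{\deg_w(x)}+\frac{1}{\deg_w(y)}\right)
    \;=\;\sum_{v\in V}\frac{1}{\deg_w(v)}\sum_{e\ni v}w(e)
    \;=\;\sum_{v\in V}\frac{\deg_w(v)}{\deg_w(v)}\;=\;n.
\]
Dividing by $2$ yields the desired $n/2$ bound, completing the proof.

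There is no real obstacle here: the argument is just a chain of one elementary inequality and one double-counting identity, once one notices the convenient bound $\mincut(x,y)\leq \min(\deg_w(x),\deg_w(y))$. The only subtlety worth a sentence is that connectedness of $G$ guarantees $\deg_w(v)>0$ for every $v$, so that the reciprocals are well-defined; this is why the statement needs to be about connected graphs (consistent with the blanket convention in subsection~\ref{chapter - notations and preliminaries}). Note also that the bound is tight on unweighted cycles, matching the ``existentially tight'' remark made for Proposition~\ref{intro: proposition: extension of fosters thm}.
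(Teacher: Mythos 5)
Your proof is correct and follows essentially the same route as the paper: the key step in both is bounding $\mincut(x,y)$ by the weight of the star cut around an endpoint (i.e.\ by the weighted degree), followed by the same double-counting over vertices; the paper just organizes it as a per-vertex sum $\sum_{y\in N(x)}w(xy)/\deg_w(x)\geq 1$ instead of your per-edge use of $\min(\deg_w(x),\deg_w(y))$ and the averaging inequality $1/\min(a,b)\geq\tfrac12(1/a+1/b)$, which is an equivalent rearrangement.
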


\begin{proof}
For every vertex $x\in V$,
\begin{align*}
    \sum_{y\in N(x)}w(xy)d_\infty(x,y)
    & = \sum_{y\in N(x)}\frac{w(xy)}{\mincut(x,y)}\\
    & \geq \sum_{y\in N(x)}\frac{w(xy)}{\text{weighted-deg}(x)}\\
    & = 1.
\end{align*}
By rearranging the sum over the edges into a sum over the vertices,
we obtain
\begin{align*}
    \sum_{xy\in E}w(xy)d_\infty(x,y) & = \frac{1}{2}\sum_{x\in V}\sum_{y\in N(x)}w(xy)d_\infty(x,y)\\
    &  \geq \frac{n}{2}.
\end{align*}
\end{proof}
We remark that the above lower bound is  tight too,
for example  for an $n$-cycle
the sum is indeed $n/2$.

Using Claim \ref{claim: lower bound on the sum for p geq 2},
we conclude the case of $p\geq 2$ 
in Proposition \ref{proposition: extension of fosters thm}.

\subsection{\textit{p}-strong Triangle Inequality for Flow Metrics}\label{section - p-strong triangle inequality}

In this subsection we show that
the flow-metrics satisfy a stronger version of
the triangle inequality,
and discuss some of its properties.
\begin{definition}[$p$-strong metric]
Let $(X,d)$ be a metric space, and fix $p\geq 1$.
We say that $d$ is $p$-strong if it satisfies
the $p$-strong triangle inequality,
\begin{equation}\label{eq: strong triangle inequality that we want to prove}
    \forall x,y,z\in X,\quad d(x,y)^p\leq d(x,z)^p+d(z,y)^p.
\end{equation}
\end{definition}
To extend the definition to $p=\infty$,
we take power $1/p$ of both side of the inequality
and let $p\rightarrow\infty$.
Inequality
(\ref{eq: strong triangle inequality that we want to prove})
then turns to $d(x,y)\leq \max\curprs{d(x,z),d(z,y)}$.
\begin{theorem}\label{theorem: p-strong tirangle inequality}
Let $G=(V,E,w)$ be a graph, and fix $p\in[1,\infty)$.
Then $d_p$ is $p$-strong.
\end{theorem}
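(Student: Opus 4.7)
My plan is to follow the general strategy of Herbster et al.~\cite{herbster2010triangle}, combining the primal (flow) definition with the dual potential formulation and the associated KKT closed form. The proof naturally splits into the boundary case $p=1$, where $d_1$ is the ordinary shortest-path metric and the $1$-strong inequality is the usual triangle inequality, and the main case $p\in(1,\infty)$. For completeness one also checks that the theorem's statement degenerates, in the $p\to\infty$ limit, to the ultrametric inequality $d_\infty(x,y)\le \max\{d_\infty(x,z),d_\infty(z,y)\}$ for $1/\mincut$, which follows from the fact that a minimum $xy$-cut must separate $z$ from either $x$ or $y$.

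For the main case $p\in(1,\infty)$ with H\"older conjugate $q$, let $f_1$ be an optimal unit flow for $d_p(x,z)$ and $f_2$ an optimal unit flow for $d_p(z,y)$. By Fact~\ref{fact: closed form solution for d_p} there exist KKT dual potentials $\varphi^{(1)},\varphi^{(2)}$ with
\[
    f_i(ab) = w(ab)^q\,(\varphi^{(i)}_a-\varphi^{(i)}_b)\,|\varphi^{(i)}_a-\varphi^{(i)}_b|^{q-2},
\]
together with the identities $d_p(x,z)^p = \varphi^{(1)}_x - \varphi^{(1)}_z$, $d_p(z,y)^p = \varphi^{(2)}_z - \varphi^{(2)}_y$, and $\|WB\varphi^{(i)}\|_q^q$ equal to the corresponding $d_p(\cdot,\cdot)^p$. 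Since $f := f_1 + f_2$ is a unit flow from $x$ to $y$, the flow-based definition gives $d_p(x,y)^p \le \|W^{-1}f\|_p^p$, so the task reduces to proving
\[
    \|W^{-1}(f_1+f_2)\|_p^p \;\le\; \|W^{-1}f_1\|_p^p + \|W^{-1}f_2\|_p^p .
\]

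A naive per-edge bound is insufficient, because for $p>1$ one has $|a+b|^p>|a|^p+|b|^p$ whenever $a,b$ have the same sign. Instead, I would expand the left-hand side globally using the KKT identities to re-express the flow $p$-norm in terms of the dual $q$-norms of $\varphi^{(1)},\varphi^{(2)}$ (together with their cross-interaction via the flow--potential pairing). The cross terms are then controlled by an application of Young's inequality $ab \le a^p/p + b^q/q$ (equivalently, the $p=2$ case reduces to the familiar orthogonality argument $\langle\chi_x-\chi_z,\chi_z-\chi_y\rangle_{L^+}\!\le 0$ for electrical flows). The main obstacle is the precise calibration of Young's inequality: a careless application produces a constant like $2^{p-1}$ that only yields the ordinary triangle inequality after raising to the $1/p$ power, whereas the sharp constant $1$ (the $p$-strong version) requires exploiting the KKT optimality of \emph{both} $f_1$ and $f_2$ simultaneously together with the identity $\|WB\varphi^{(i)}\|_q^q=d_p^p$. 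This delicate calibration, which is the heart of the Herbster et al.\ technique, is what forces the inequality to take exactly the $p$-strong form and is where the bulk of the work lies.
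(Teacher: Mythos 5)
Your reduction is to a false statement. You take the optimal unit flows $f_1$ for $d_p(x,z)$ and $f_2$ for $d_p(z,y)$, note that $f_1+f_2$ is a feasible unit $xy$-flow, and claim the task reduces to proving $\|W^{-1}(f_1+f_2)\|_p^p\le\|W^{-1}f_1\|_p^p+\|W^{-1}f_2\|_p^p$. This inequality can fail even when both flows are KKT-optimal: take the unweighted triangle on $\{x,z,y\}$ and $p=4$. The optimal $xz$-flow puts $c/(1+c)$ on the edge $xz$ and $1/(1+c)$ on the two-hop path through $y$, where $c=2^{1/(p-1)}$, and symmetrically for the $zy$-flow; a direct computation gives $\|W^{-1}f_1\|_4^4=\|W^{-1}f_2\|_4^4\approx 0.173$, so the right-hand side is $\approx 0.347$, while the summed flow carries $\approx 0.885$ on the shared edge $xy$ and has $\|W^{-1}(f_1+f_2)\|_4^4\approx 0.614$. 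The cancellation occurs on the cheap edges and the reinforcement on the common edge, so the cost of the sum exceeds the sum of the costs (the theorem itself is still fine there, since by symmetry $d_p(x,y)^p\le 2d_p(x,z)^p$ trivially). Consequently $f_1+f_2$ is simply not the right certificate, and no "calibration of Young's inequality" or "global expansion via the KKT identities" can establish an inequality that is false; the heart of the proof is missing, not merely delicate.

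Note also that the technique of \cite{herbster2010triangle}, which the paper actually follows, is not the primal--dual/Young computation you sketch. One glues two disjoint copies of $G$ at the intermediate vertex $v$: the concatenation of the two optimal flows then lives on disjoint edge sets, so its cost is exactly $d_p(s,v)^p+d_p(v,t)^p$ --- this is precisely the device that eliminates the interference which sinks your reduction --- and the remaining inequality $d_{p,G}(s,t)\le d_{p,\widetilde G}(s_1,t_2)$ is proved in the dual, by clamping the optimal potential $\varphi^*$ at the level $\varphi^*(v)$ (taking the maximum with it on one copy and the minimum on the other) and verifying edge by edge, using $(a+b)^{\alpha}\ge a^{\alpha}+b^{\alpha}$ for $a,b\ge0$, $\alpha\ge1$, that the $q$-cost does not increase. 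If you want to salvage your outline, this doubling step (or some equivalent mechanism that decouples the two flows) is the missing idea.
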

For $p=1,2,\infty$ this was known to be true:
For $p=1$ it is trivial,
as (\ref{eq: strong triangle inequality that we want to prove}) is just the regular triangle
inequality.
For $p=\infty$ this is known to be true since
$d_\infty$ is an ultrametric.
For $p=2$,
since $d_2(s,t)^2$
is  the effective resistance
between $s$ and $t$,
it is related to the commute time.
Namely, if we denote $w_2(E)=\sum_{e\in E}w(e)^2$,
then 
$commute(s,t)=2w_2(E)d_2(s,t)^2$.
It is easy to see that for every $s,t,v$
it holds that 
$commute(s,t)
\leq commute(s,v)+commute(v,t) $,
which leads to
$d_2(s,t)^2\leq d_2(s,v)^2+d_2(v,t)^2$.

The proof of Theorem \ref{theorem: p-strong tirangle inequality}
uses the same technique as
in \cite{herbster2010triangle},
who showed a variation of the triangle
inequality for the family of $p$-resistance.
\begin{proof}(of Theorem \ref{theorem: p-strong tirangle inequality})
Let $s,t,v\in V$, and define a new graph $\widetilde{G}=\prs{\widetilde{V},
\widetilde{E}, \widetilde{w}}$
that consists of two
copies of $G$, and a single copy of $v$.
Namely, a copy  $G_1=\prs{V_1,E_1,w}$ and 
another copy $G_2=\prs{V_2,E_2,w}$,
where the two copies of  $v$ from $V_1$ and $V_2$
are identified with the same vertex $v$.
An illustration is given
in figure \ref{fig: new graph}.
\begin{figure}[htb]
    \centering

\tikzset{every picture/.style={line width=0.75pt}} 

\begin{tikzpicture}[x=0.75pt,y=0.75pt,yscale=-1,xscale=1]

\draw  [fill={rgb, 255:red, 251; green, 3; blue, 3 }  ,fill opacity=1 ] (65,85.6) .. controls (65,82.62) and (67.42,80.2) .. (70.4,80.2) .. controls (73.38,80.2) and (75.8,82.62) .. (75.8,85.6) .. controls (75.8,88.58) and (73.38,91) .. (70.4,91) .. controls (67.42,91) and (65,88.58) .. (65,85.6) -- cycle ;
\draw  [fill={rgb, 255:red, 4; green, 27; blue, 249 }  ,fill opacity=1 ] (338,88.6) .. controls (338,85.62) and (340.42,83.2) .. (343.4,83.2) .. controls (346.38,83.2) and (348.8,85.62) .. (348.8,88.6) .. controls (348.8,91.58) and (346.38,94) .. (343.4,94) .. controls (340.42,94) and (338,91.58) .. (338,88.6) -- cycle ;
\draw  [fill={rgb, 255:red, 0; green, 0; blue, 0 }  ,fill opacity=1 ] (178,41.6) .. controls (178,38.62) and (180.42,36.2) .. (183.4,36.2) .. controls (186.38,36.2) and (188.8,38.62) .. (188.8,41.6) .. controls (188.8,44.58) and (186.38,47) .. (183.4,47) .. controls (180.42,47) and (178,44.58) .. (178,41.6) -- cycle ;
\draw  [fill={rgb, 255:red, 0; green, 0; blue, 0 }  ,fill opacity=1 ] (179.4,79.6) .. controls (179.4,76.62) and (181.82,74.2) .. (184.8,74.2) .. controls (187.78,74.2) and (190.2,76.62) .. (190.2,79.6) .. controls (190.2,82.58) and (187.78,85) .. (184.8,85) .. controls (181.82,85) and (179.4,82.58) .. (179.4,79.6) -- cycle ;
\draw  [fill={rgb, 255:red, 0; green, 0; blue, 0 }  ,fill opacity=1 ] (238,107.6) .. controls (238,104.62) and (240.42,102.2) .. (243.4,102.2) .. controls (246.38,102.2) and (248.8,104.62) .. (248.8,107.6) .. controls (248.8,110.58) and (246.38,113) .. (243.4,113) .. controls (240.42,113) and (238,110.58) .. (238,107.6) -- cycle ;
\draw    (75.8,85.6) -- (178,41.6) ;
\draw    (75.8,85.6) -- (179.4,79.6) ;
\draw    (75.8,85.6) -- (172.8,141.6) ;
\draw    (172.8,141.6) -- (184.8,79.6) ;
\draw    (275.8,46.6) -- (338,88.6) ;
\draw    (172.8,141.6) -- (243.4,107.6) ;
\draw  [fill={rgb, 255:red, 0; green, 0; blue, 0 }  ,fill opacity=1 ] (265,46.6) .. controls (265,43.62) and (267.42,41.2) .. (270.4,41.2) .. controls (273.38,41.2) and (275.8,43.62) .. (275.8,46.6) .. controls (275.8,49.58) and (273.38,52) .. (270.4,52) .. controls (267.42,52) and (265,49.58) .. (265,46.6) -- cycle ;
\draw    (184.8,79.6) -- (338,88.6) ;
\draw    (243.4,107.6) -- (338,88.6) ;
\draw    (183.4,41.6) -- (243.4,102.2) ;
\draw    (183.4,41.6) -- (270.4,46.6) ;
\draw    (184.8,79.6) -- (270.4,46.6) ;
\draw  [fill={rgb, 255:red, 251; green, 3; blue, 3 }  ,fill opacity=1 ] (63.68,210.67) .. controls (63.56,214.38) and (65.88,217.39) .. (68.86,217.39) .. controls (71.85,217.39) and (74.36,214.38) .. (74.48,210.67) .. controls (74.6,206.95) and (72.28,203.94) .. (69.3,203.94) .. controls (66.31,203.94) and (63.8,206.95) .. (63.68,210.67) -- cycle ;
\draw  [fill={rgb, 255:red, 4; green, 27; blue, 249 }  ,fill opacity=1 ] (336.8,206.93) .. controls (336.68,210.64) and (339,213.65) .. (341.98,213.65) .. controls (344.97,213.65) and (347.48,210.64) .. (347.6,206.93) .. controls (347.72,203.21) and (345.4,200.2) .. (342.42,200.2) .. controls (339.43,200.2) and (336.92,203.21) .. (336.8,206.93) -- cycle ;
\draw  [fill={rgb, 255:red, 0; green, 0; blue, 0 }  ,fill opacity=1 ] (174.92,265.47) .. controls (174.8,269.19) and (177.12,272.2) .. (180.1,272.2) .. controls (183.08,272.2) and (185.6,269.19) .. (185.72,265.47) .. controls (185.84,261.76) and (183.52,258.75) .. (180.53,258.75) .. controls (177.55,258.75) and (175.04,261.76) .. (174.92,265.47) -- cycle ;
\draw  [fill={rgb, 255:red, 0; green, 0; blue, 0 }  ,fill opacity=1 ] (177.84,218.14) .. controls (177.72,221.85) and (180.04,224.87) .. (183.02,224.87) .. controls (186.01,224.87) and (188.52,221.85) .. (188.64,218.14) .. controls (188.76,214.42) and (186.44,211.41) .. (183.46,211.41) .. controls (180.47,211.41) and (177.96,214.42) .. (177.84,218.14) -- cycle ;
\draw  [fill={rgb, 255:red, 0; green, 0; blue, 0 }  ,fill opacity=1 ] (237.56,183.26) .. controls (237.44,186.98) and (239.76,189.99) .. (242.74,189.99) .. controls (245.73,189.99) and (248.24,186.98) .. (248.36,183.26) .. controls (248.48,179.55) and (246.16,176.53) .. (243.18,176.53) .. controls (240.19,176.53) and (237.68,179.55) .. (237.56,183.26) -- cycle ;
\draw    (74.48,210.67) -- (174.92,265.47) ;
\draw    (74.48,210.67) -- (177.84,218.14) ;
\draw    (74.48,210.67) -- (172.8,141.6) ;
\draw    (172.8,141.6) -- (183.24,218.14) ;
\draw    (272.92,259.25) -- (336.8,206.93) ;
\draw    (173.72,140.91) -- (242.96,183.26) ;
\draw  [fill={rgb, 255:red, 0; green, 0; blue, 0 }  ,fill opacity=1 ] (262.12,259.25) .. controls (262,262.96) and (264.32,265.97) .. (267.3,265.97) .. controls (270.28,265.97) and (272.8,262.96) .. (272.92,259.25) .. controls (273.04,255.53) and (270.72,252.52) .. (267.73,252.52) .. controls (264.75,252.52) and (262.24,255.53) .. (262.12,259.25) -- cycle ;
\draw    (183.24,218.14) -- (336.8,206.93) ;
\draw    (242.96,183.26) -- (336.8,206.93) ;
\draw    (180.32,265.47) -- (242.74,189.99) ;
\draw    (180.32,265.47) -- (267.52,259.25) ;
\draw    (183.24,218.14) -- (267.52,259.25) ;
\draw  [fill={rgb, 255:red, 126; green, 211; blue, 33 }  ,fill opacity=1 ] (167.4,141.6) .. controls (167.4,138.62) and (169.82,136.2) .. (172.8,136.2) .. controls (175.78,136.2) and (178.2,138.62) .. (178.2,141.6) .. controls (178.2,144.58) and (175.78,147) .. (172.8,147) .. controls (169.82,147) and (167.4,144.58) .. (167.4,141.6) -- cycle ;

\draw (44,76.4) node [anchor=north west][inner sep=0.75pt]    {$s_{1}$};
\draw (361,78.4) node [anchor=north west][inner sep=0.75pt]    {$t_{1}$};
\draw (197,133.4) node [anchor=north west][inner sep=0.75pt]    {$v$};
\draw (357,200.4) node [anchor=north west][inner sep=0.75pt]    {$t_{2}$};
\draw (41,204.4) node [anchor=north west][inner sep=0.75pt]    {$s_{2}$};

\end{tikzpicture}
    \caption{Illustration of the construction of the new graph}
    \label{fig: new graph}
\end{figure}
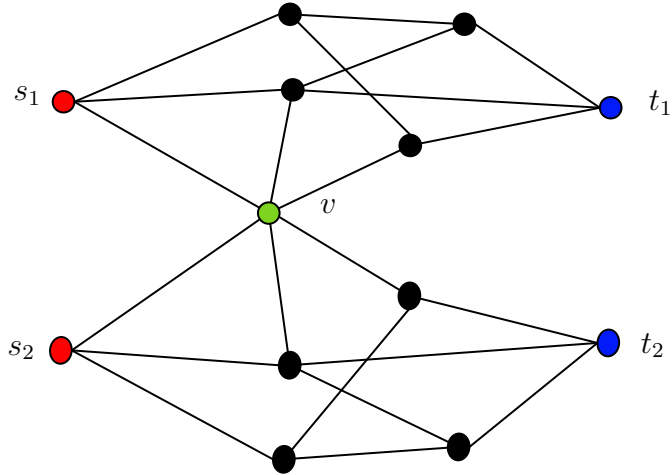

Given a flow $f_1$ that ships 1 unit from
$s$ to $v$ in the original graph $G$, and another flow $f_2$
that ships 1 unit from $v$ to $s$ (also in $G$),
we can define a new flow $\widetilde{f}$ from $s_1$ to $t_2$ in $\widetilde{G}$
that will be the same on the first copy of $G$ as $f_1$,
and on the second copy of $G$ it will be the same as $f_2$.
This is a feasible flow that ships 1 unit from $s_1$ to $t_2$,
and it is easy to see that by taking 
$f_1$ and $f_2$ to be the minimizing flows that attain 
$d_{p,G}(s,v)^p $ and $ d_{p,G}(v,t)^p$ respectively,
we get,
\begin{equation}
    d_{p,\widetilde{G}}\prs{s_1,t_2}^p
    \leq \sum_{e\in E_1}\abs{\frac{f_1(e)}{w(e)}}^p
    +\sum_{e\in E_2}\abs{\frac{f_2(e)}{w(e)}}^p=
    d_{p,G}(s,v)^p+d_{p,G}(v,t)^p.
\end{equation}
Thus, it suffices to prove  that $d_{p,G}\prs{s,t}\leq
d_{p,\widetilde{G}}\prs{s_1,t_2}$.
By Claim \ref{claim: d_p = 1 / bar(d)_p}
 it is enough to prove that
\begin{equation}\label{eq: what we want to prove}
   \overline{d}_{p,G}(s,t)\geq \overline{d}_{p,\widetilde{G}}\prs{s_1,t_2}.
\end{equation}
Let $q$ be the H\"older conjugate of $p$.
Denote by $W_{m\times m}, B_{m\times n}$
the diagonal weight matrix and the signed edge-vertex
incident matrix of $G$,
and denote by $\widetilde{W}, \widetilde{B}$
the diagonal weight matrix and the signed edge-vertex
incident matrix of $\widetilde{G}$.

To prove (\ref{eq: what we want to prove}),
let $\varphi^*\in \argmin{\varphi\in \R^V:
\varphi_s-\varphi_t=1} \norm{WB\varphi}_q$,
be a minimizing potential function on the old graph $G$,
and we will use it to define a new 
potential function $\widetilde{\varphi}$ on $\widetilde{G}$
such that $\norm{WB\varphi^*}_q\geq \norm{\widetilde{W}\widetilde{B}\widetilde{\varphi}}_q$.
This will give,
\begin{equation}
   \overline{d}_{p,G}(s,t)
   = \norm{WB\varphi^*}_q \geq \norm{\widetilde{W}\widetilde{B}\widetilde{\varphi}}_q
   \geq \overline{d}_{p,\widetilde{G}}\prs{s_1,t_2},
\end{equation}
which will conclude the proof.

For every $u_1\in V_1$, define 
$\widetilde{\varphi}\prs{u_1}=\max\curprs{\varphi^*(u),\varphi^*(v)}$
and for every $u_2\in V_2$, define
$\widetilde{\varphi}\prs{u_2}=\min\curprs{\varphi^*(u),\varphi^*(v)}$
(note that $\widetilde{\varphi}(v)=\varphi^*(v)$).
This potential function is at least $\varphi^*(v)$
on the first copy of $G$,
and at most $\varphi^*(v)$ on the second copy of $G$.
Note that in particular, $\widetilde{\varphi}\prs{s_1}-\widetilde{\varphi}\prs{t_2}
=\varphi^*(s)-\varphi^*(t)=1$.

By translation of the potential, we may assume that $\varphi^*(v)=0$.
Fix some edge $uu'\in E$, and note that
its contribution to $\norm{WB{\varphi^*}}_{q}^q$
is $w(e)^{q}\cdot\abs{\varphi^*(u)-\varphi^*(u')}^{q}$,
and the contribution
of its corresponding two edges in $\widetilde{G}$
to $\norm{\widetilde{W}\widetilde{B}\widetilde{\varphi}}_{q}^q$
is 
\begin{equation}\label{eq: subsec 2.3. contribution of corresponging edges to total sum}
  w\prs{u_1u'_1}^{q}\cdot\abs{\widetilde{\varphi}\prs{u_1}-\ \widetilde{\varphi}\prs{u'_1}}^{q}
            +w\prs{u_2u'_2}^{q}\cdot\abs{\widetilde{\varphi}\prs{u_2}-
                \widetilde{\varphi}\prs{u'_2}}^{q}.
\end{equation}
Next, let us examine (\ref{eq: subsec 2.3. contribution of corresponging edges to total sum}) more carefully by separating into four cases.
\begin{itemize}
    \item   \underline{Case 1:} $\varphi^*(u),\varphi^*(u')\geq 0$.
            In this case it holds that:
            \[\widetilde{\varphi}\prs{u_1}=\varphi^*(u),\;\;
                \widetilde{\varphi}\prs{u'_1}=\varphi^*(u'),
            \]
            \[\widetilde{\varphi}\prs{u_2}=
                \widetilde{\varphi}\prs{u'_2}=0.
            \]
            Thus, the contribution of the corresponding edges
            in (\ref{eq: subsec 2.3. contribution of corresponging edges to total sum})
            is the same as in $\norm{WB\varphi^*}_q^q$
            (i.e. 
            $w(uu')^{q}\cdot\abs{\varphi^*(u)-\varphi^*(u')}^{q}
            $).
            
    \item   \underline{Case 2:} $\varphi^*(u),\varphi^*(u')\leq 0$.
            This case is  very similar to the previous case
            since now:
            \[\widetilde{\varphi}\prs{u_1}=
                \widetilde{\varphi}\prs{u'_1}=0,
            \]
            \[\widetilde{\varphi}\prs{u_2}=\varphi^*(u),\;\;
                \widetilde{\varphi}\prs{u'_2}=\varphi^*(u').
            \]
            Thus, again, the contribution of the edges 
            is  
            $w(uu')^{q}\cdot\abs{\varphi^*(u)-\varphi^*(u')}^{q}$.
            
    \item   \underline{Case 3:} $\varphi^*(u)\geq 0\geq \varphi^*(u')$.
            In this case we have:
            \[\widetilde{\varphi}\prs{u_1}=\varphi^*(u),\;
                \widetilde{\varphi}\prs{u'_1}=0,
            \]
            \[\widetilde{\varphi}\prs{u_2}=0,\;\;
                \widetilde{\varphi}\prs{u'_2}=\varphi^*(u').
            \]
            Hence, the contribution is:
            \[
            w(uu')^{q}\cdot\prs{\abs{\varphi^*(u)-0}^{q}
            +\abs{0-\varphi^*(u')}^{q}}
            \leq w(uu')^{q}\cdot\abs{\varphi^*(u)-\varphi^*(u')}^{q}.
            \]
            where we used the fact that $\varphi^*(u)$ and 
            $-\varphi^*(u')$ are non-negative,
            which implies that $\abs{\varphi^*(u)-\varphi^*(u')}
            =\abs{\varphi^*(u)}+\abs{-\varphi^*(u')}$,
            and thus we could apply the inequality
            \[
            \forall \alpha\geq 1,\,
            a,b\geq 0,\quad
            \prs{a+b}^\alpha\geq a^\alpha+b^\alpha.
            \]
            This inequality is proved below
            as Claim 
            \ref{claim: ap+bp leq (a+b)p}.

            To conclude, we got that the contribution of the edges
            in this case is at most the contribution
            of the corresponding edge in $G$.
    \item   \underline{Case 4:} $\varphi^*(u)\leq 0\leq \varphi^*(u')$.
            This case is analogous to the previous case,
            where we use the fact that $\abs{a-b}=\abs{b-a}$
            for any $a,b\in \R$,
            and then repeat the same arguments as in the previous case
            in order to reach the same conclusion.
\end{itemize}
To summarize, we got that for every edge $e=uu'\in E$,
its contribution to $\norm{WB{\varphi^*}}_{q}^q$
is larger than the sum of the contributions of the corresponding edges
$e_1=u_1u'_1\in E_1$ and $e_2=u_2u'_2\in E_2$
to $\norm{\widetilde{W}\widetilde{B}\widetilde{\varphi}}_{q}^q$.
Thus,  it indeed holds that $\norm{WB\varphi^*}_q\geq \norm{\widetilde{W}\widetilde{B}\widetilde{\varphi}}_q$,
which concludes the proof.
\end{proof}
Next, we present two simple properties of the
$p$-strong triangle inequality.
\paragraph{The $p$-strong Triangle Inequality is Monotone in \textit{p}.}
We show that if a metric is $p$-strong for some
$p\geq 1$, then it is also $p'$-strong
for all $p'\in[1, p]$.
\begin{proposition}\label{proposition: p strong triangle inequality is monotone}
Let $(X,d)$ be a metric space that is $p$-strong
for some $p\geq 1$.
Then, for every $p'\in[1, p]$, 
$d$ is $p'$-strong as well.
\end{proposition}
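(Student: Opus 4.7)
The plan is to reduce the claim to a standard fact about $\ell_p$ norms in $\mathbb{R}^2$. Fix $x,y,z \in X$ and set $a = d(x,z)$, $b = d(z,y)$, $c = d(x,y)$. The hypothesis that $d$ is $p$-strong gives $c^p \leq a^p + b^p$, and the goal is to deduce $c^{p'} \leq a^{p'} + b^{p'}$ for arbitrary $p' \in [1,p]$.

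The key step is the monotonicity of $\ell_p$ norms: for any nonnegative reals $a,b$ and any $1 \leq p' \leq p$, one has
\begin{equation*}
    (a^p + b^p)^{1/p} = \|(a,b)\|_p \leq \|(a,b)\|_{p'} = (a^{p'}+b^{p'})^{1/p'}.
\end{equation*}
Combined with the $p$-strong inequality, this yields the chain
\begin{equation*}
    c \leq (a^p+b^p)^{1/p} \leq (a^{p'}+b^{p'})^{1/p'},
\end{equation*}
and raising both sides to the power $p'$ gives $c^{p'} \leq a^{p'}+b^{p'}$, which is exactly the $p'$-strong triangle inequality.

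There is no real obstacle here; the only point to be slightly careful about is the boundary case $p'=1$, which is simply the ordinary triangle inequality and follows from the same chain (or, since $d$ is a metric by assumption, holds automatically). One may remark afterwards that this proposition together with Theorem~\ref{theorem: p-strong tirangle inequality} implies that $d_p$ is $p'$-strong for every $1 \leq p' \leq p$, which fits with the intuition that ``larger $p$ means a stronger inequality.''
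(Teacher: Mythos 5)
Your proof is correct and takes essentially the same route as the paper: the paper raises the $p$-strong inequality to the power $p'/p$ and applies its elementary claim that $(a+b)^{\alpha}\le a^{\alpha}+b^{\alpha}$ for $\alpha\le 1$, which is precisely the two-point $\ell_{p}$-versus-$\ell_{p'}$ norm comparison you invoke, so the chains of inequalities coincide after taking $p'$-th powers. The case $p'=1$ needs no separate treatment in either version.
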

In particular, this  proves that for every
$p\in[1,\infty)$,  $d_p$ is a metric.
In order to prove Proposition \ref{proposition: p strong triangle inequality is monotone},
we will need the following simple claim,
which we prove here for completeness.
\begin{claim}\label{claim: ap+bp leq (a+b)p}
Let $a_1,\dots,a_n\geq 0$ and let $p>0$, then:
\begin{align}
    \text{if }p\leq 1,\quad
    \sum_{i=1}^na_i^p\geq \prs{\sum_{i=1}^na_i}^p.\\
    \text{if }p\geq 1,\quad
    \sum_{i=1}^na_i^p\leq \prs{\sum_{i=1}^na_i}^p.
\end{align}
\end{claim}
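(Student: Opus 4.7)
The plan is to reduce both inequalities to a pointwise comparison on the unit interval via normalization, and then sum.

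First I would dispose of the trivial case where all $a_i = 0$, in which both sides are zero. Otherwise, let $S = \sum_{i=1}^n a_i > 0$ and set $t_i = a_i/S \in [0,1]$, so that $\sum_{i=1}^n t_i = 1$. Dividing both inequalities through by $S^p$, the goal becomes
\begin{equation*}
\sum_{i=1}^n t_i^p \;\gtrless\; 1 \;=\; \sum_{i=1}^n t_i,
\end{equation*}
with direction depending on whether $p \leq 1$ or $p \geq 1$.

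The key observation is the elementary fact that for any $t \in [0,1]$, we have $t^p \leq t$ when $p \geq 1$ and $t^p \geq t$ when $p \leq 1$. This follows because, for fixed $t \in (0,1]$, the function $p \mapsto t^p$ is monotone (decreasing if $t < 1$, constant if $t = 1$), and at $p=1$ it equals $t$. Applying this pointwise to each $t_i \in [0,1]$ and summing gives the desired comparison with $\sum_i t_i = 1$. Multiplying back by $S^p$ finishes both cases simultaneously.

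There is no real obstacle here; the only thing to be careful about is the boundary case $t_i = 0$, which is handled by the convention $0^p = 0$ for $p > 0$, and the case where some $a_i$ could be zero while others are not (which is automatic since the normalization still works provided $S > 0$). I would present the argument in four short lines, exactly in the order above.
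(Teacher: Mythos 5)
Your proposal is correct and follows essentially the same route as the paper's own proof: normalize by the sum $A=\sum_i a_i$ (handling $A=0$ separately), compare $(a_i/A)^p$ with $a_i/A$ pointwise using $a_i/A\in[0,1]$, and sum. The paper writes out only the case $p\geq 1$ and notes the other case is symmetric, exactly as you do.
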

\begin{proof}(of Claim \ref{claim: ap+bp leq (a+b)p})
Suppose $p\geq 1$
(the other case is similar),
let $A=\sum_{i=1}^n a_i$, and note that if $A=0$ then the statement clearly
holds, hence we may assume that $A> 0$.
By direct calculations,
\begin{align*}
    \sum_{i=1}^n a_i^p 
    & = A^p\cdot \sum_{i=1}^n\prs{\frac{a_i}{A}}^p\\
    & \leq A^p\cdot \sum_{i=1}^n\frac{a_i}{A}
    &&(\text{by } a_i/A\leq 1\text{ and }p\geq 1)\\
    & = A^p.
\end{align*}
\end{proof}

\begin{proof}(of Proposition \ref{proposition: p strong triangle inequality is monotone})
Let $x,y,z\in X$. Then,
\begin{align*}
        d(x,y)^{p'} 
        & \leq \prs{d(x,z)^p+d(z,y)^p}^{\frac{p'}{p}}
        && (\text{by }p\text{-strong})\\
        & \leq d(x,z)^{p'}+d(z,y)^{p'}
        && (\text{by }p'\leq p \text{ and Claim }\ref{claim: ap+bp leq (a+b)p}).
\end{align*}
\end{proof}

\paragraph{The $p$-strong Triangle Inequality is Tight for $d_p$.}
We have shown that $d_p$ is $p$-strong,
and thus in particular it is also
$p'$-strong for $1\leq p'\leq p$.
Next, we present a simple example that shows
that in the general case, the  power $p$ cannot be strengthened.
\begin{claim}
There exists a graph $G=(V,E)$
such that for every $p\geq 1$
and every $\varepsilon>0$,
the metric $d_p$ is not $(p+\varepsilon)$-strong.
\end{claim}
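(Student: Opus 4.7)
The plan is to exhibit a single small graph that witnesses tightness, and then observe that strengthening the exponent even slightly above $p$ breaks the inequality. The natural candidate is the unweighted path on three vertices: let $G=(V,E)$ with $V=\{s,v,t\}$ and $E=\{sv,vt\}$, each edge of unit weight.

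The first step is to compute $d_p$ on all three pairs. For the single-edge pairs $(s,v)$ and $(v,t)$, the only feasible unit flow places value $1$ on the unique edge, so $d_p(s,v)=d_p(v,t)=1$ for every $p\in[1,\infty)$. For $(s,t)$, any unit $st$-flow must send exactly $1$ unit along each of the two edges (since $v$ has no other incident edges), so the unique feasible flow yields $d_p(s,t)=(1^p+1^p)^{1/p}=2^{1/p}$. In particular $d_p(s,t)^p=2=d_p(s,v)^p+d_p(v,t)^p$, confirming that the $p$-strong triangle inequality of Theorem \ref{theorem: p-strong tirangle inequality} is tight on this graph.

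The second step is the violation for $p+\varepsilon$. Fix $p\geq 1$ and $\varepsilon>0$. Using the computed distances,
\begin{equation}
 d_p(s,t)^{p+\varepsilon}=2^{(p+\varepsilon)/p}=2\cdot 2^{\varepsilon/p}>2=1+1=d_p(s,v)^{p+\varepsilon}+d_p(v,t)^{p+\varepsilon},
\end{equation}
so the $(p+\varepsilon)$-strong triangle inequality fails at the triple $(s,v,t)$.

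There is essentially no obstacle here; the only thing to double-check is that the simple path graph really is the minimizer, which follows because $v$ is a cut vertex of degree two and hence a unit $st$-flow is forced. This gives a single graph $G$ that works uniformly over all $p\geq 1$ and $\varepsilon>0$, establishing the claim.
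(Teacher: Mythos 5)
Your proposal is correct and follows essentially the same argument as the paper: the 2-path on $\{s,v,t\}$ with unit weights, the computation $d_p(s,v)=d_p(v,t)=1$ and $d_p(s,t)=2^{1/p}$, and the observation that $d_p(s,t)^{p+\varepsilon}=2^{1+\varepsilon/p}>2$. The only addition is your (correct) remark that the flow is forced because $v$ has degree two, which the paper leaves implicit.
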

\begin{proof}
Let $G=(V,E)$ be a 2-path,
i.e. $V=\curprs{s,t,v}$ and $E=\curprs{\curprs{s,v},\curprs{v,t}}$.
It is easy to see that for every $p\geq 1$,
$d_p(s,v)=d_p(v,t)=1$.
Moreover, it holds that
$d_p(s,t)=\prs{1^p+1^p}^{1/p}=2^{1/p}$.
It is easy to verify that indeed the $p$-strong
triangle inequality holds, but for any $\varepsilon>0$,
we can see that
\begin{equation}
    d_p(s,t)^{p+\varepsilon} = \prs{2^{1/p}}^{p+\varepsilon}
    =2^{1+\varepsilon/p}
     > 2 = 1^{p+\varepsilon} + 1^{p+\varepsilon} = d_p(s,v)^{p+\varepsilon}
    +d_p(v,t)^{p+\varepsilon},
\end{equation}
where we used the fact that $2^a$ for $a>0$ is greater than 1.
\end{proof}

\section{Graph-Size Reduction}\label{chapter - flow metric sparsifiers}
In this section we discuss techniques for reducing
the size of a graph while preserving its flow metric
up to some error.
We begin by examining the method of edge sparsification.
\begin{definition}[$d_p$-sparsifier]
Let $G=(V,E,w)$ be a graph, let $p\in[1,\infty]$,
and let $\varepsilon>0$.
A $d_p$-sparsifier of $G$ is a graph $G'=(V,E',w')$,
such that
\begin{equation}
    \forall s,t\in V,\quad
    d_{p,G'}(s,t) \in (1\pm \varepsilon)d_{p,G}(s,t).
\end{equation}
\end{definition}
We remark that for the special cases of $p=1,2,\infty$
there are known upper bounds:
for $p=1$, the definition coincides with multiplicative spanners;
for $p=2$, it coincides with resistance-sparsifiers;
for $p=\infty$, there is the Gomory-Hu tree.
We wish to generalize these 
constructions to other values of $p$.

Moreover, there are matching
lower bounds for $p=1,\infty$, but for $p=2$
no non-trivial lower bound is known.
In subsection \ref{section - lower bound on resistance sparsifiers}
we give the first lower bound for $p=2$ (Theorem
\ref{intro: thm: Omega(n/sqrt(eps)) lower bound on resistance sparsifiers}).
It is essentially a lower bound for resistance sparsifiers of the clique
(Lemma \ref{intro: lemma: 1+1/O(n^2) LB on max/min res ratio}),
and we also discuss some special graph families
(for the sparsifier)
in which the lower bound 
can be strengthened,
including regular graphs
(Theorem \ref{intro: theorem: regular graphs cannot approximate resistance distance of clique}),
which intuitively should be the best fit for
sparsifying the clique.

In subsection \ref{section - flow metric sparsifiers}
we present constructions
of $d_p$-sparsifiers for other values
of $p$ (Theorem \ref{intro: thm: main theorem - existence of d_p sparsifiers}),
that follow easily from 
a Theorem by Cohen and Peng
\cite{cohen2015lp}.
Furthermore, we discuss the relation between
the size of the sparsifier 
and the parameter $p$,
as well as the gaps between the
known constructions for $p=1,2,\infty$
and our construction for other values of $p$.

In subsection \ref{section - transforms for flow metrics}
we discuss a different method to reduce the size
of graphs while preserving exactly the $d_p$-metric,
known as the Delta-Wye transform,
and its generalization the $k$-star-mesh transform
for effective resistance;
the formal definitions are given in subsection \ref{section - transforms for flow metrics}.
We examine for which
values of $p$ and $k$ the
$d_p$ metrics admit such 
transforms.
Specifically we show in Theorem
\ref{intro: thm: non existence of Y-Delta transform}
that for $k=3$
there exists an analogue of the Delta-Wye transform
if and only if $p=1,2,\infty$,
and in Theorem
\ref{intro: thm: non existence of star-mesh transform for p=infinity and k>3}
that for $p=\infty$
there exists an analogue of the $k$-star-mesh transform
if and only if $k\leq 3$.

\subsection{Lower Bound on Resistance Sparsifiers}\label{section - lower bound on resistance sparsifiers}

In this subsection we make partial progress towards proving Conjecture
\ref{intro: conjecture: lower bound on resistance sparsifiers},
which asserts that in the worst case, 
an $\varepsilon$-resistance-sparsifier
of a graph
with $n$ vertices requires at least $\Omega(n/\varepsilon)$
edges.
We begin by showing a weaker lower bound
of $\Omega(n/\sqrt{\varepsilon})$
edges,
and then discuss some special
cases in which we achieve the stronger lower bound.
\begin{definition}[$\varepsilon$-resistance sparsifier \cite{dinitz2015towards}]
Let $G=(V,E,w)$ be a graph.
An $\varepsilon$-resistance sparsifier
of  $G$ is a graph $G'=(V,E',w')$ such that
\begin{equation}
    \forall x,y\in V,\quad
    \Reff_{,G'}(x,y)\in (1\pm \varepsilon)\Reff_{,G}(x,y).
\end{equation}
\end{definition}
Conjecture \ref{intro: conjecture: lower bound on resistance sparsifiers}
is inspired by the following open question.
\begin{oq}\label{oq: is tilde O(n/eps) edges tight?}
Chu et al \cite{chu2020shortCycleLongPaper}
show that every graph with $n$ vertices
admits an $\varepsilon$-resistance
sparsifiers with $\widetilde{O}\prs{n/\varepsilon}$
edges.
Is this  tight?
\end{oq}
It is known that a clique over $n$ vertices
admits an $\varepsilon$-resistance sparsifier
with $O(n/\varepsilon)$ edges.
We present it formally in
\ref{subsection - resistance sparsifier for clique upper bound} 
for completeness.
Thus, the best  lower bound for sparsifying the clique is
of $\Omega(n/\varepsilon)$ edges
(compared to $\widetilde{O}\prs{n/\varepsilon}$
as stated in the question),
which leads the following question,
where we think of $G$ as a possible resistance sparsifier
of the clique.
\begin{oq}\label{oq: does removing an edge from the clique implies a gap in the max/min resistance ratio (even after reweighting)?}
Let $G=(V,E,w)$ be a graph with 
$|V|=n$ and $|E|<\binom{n}{2}$.
Is it true that
\[
    \frac{\max_{x\neq y\in V}\Reff(x,y)}{\min_{x\neq y\in V}\Reff(x,y)}>1+\frac{1}{10n}?
\]
\end{oq}
We show a weaker bound than Question 
\ref{oq: does removing an edge from the clique implies a gap in the max/min resistance ratio (even after reweighting)?},
that 
gives a lower bound of $\Omega\prs{n/\sqrt{\varepsilon}}$
edges for an $\varepsilon$-resistance sparsifier
in the worst case
(Theorem \ref{intro: thm: Omega(n/sqrt(eps)) lower bound on resistance sparsifiers}).
This is in fact 
Lemma \ref{intro: lemma: 1+1/O(n^2) LB on max/min res ratio},
which we restate here for clarity.
\begin{lemma}\label{lemma: 1+1/O(n^2) LB on max/min res ratio}
For any graph $G=(V,E,w)$ with $|V|=n$
and $|E|< \binom{n}{2}$,
it holds that
\[
    \frac{\max_{x\neq y\in V}\Reff(x,y)}{\min_{x\neq y\in V}\Reff(x,y)}>1+\frac{1}{O\prs{n^2}}.
\]
\end{lemma}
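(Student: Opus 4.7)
The plan is to linearize around the weighted-clique case, where all effective resistances are equal, and to exploit the fact that a missing edge cannot be hidden in this linearization. Let $R:=\max_{x\neq y}\Reff(x,y)$ and $r:=\min_{x\neq y}\Reff(x,y)$, and suppose for contradiction that $R/r\le 1+\varepsilon$ for $\varepsilon$ very small. First I would distill the resistance-concentration into an algebraic statement about $L^{+}$: combining $\Reff(i,j)=L^{+}_{ii}+L^{+}_{jj}-2L^{+}_{ij}$ with the identity $\sum_{j\neq i}\Reff(i,j)=nL^{+}_{ii}+\operatorname{tr}(L^{+})$ (which follows from $L^{+}\OneVec=0$), each diagonal entry $p_i:=L^{+}_{ii}$ lies within $O(\varepsilon r)$ of the common average, and consequently each off-diagonal entry $L^{+}_{ij}=(p_i+p_j-\Reff(i,j))/2$ lies within $O(\varepsilon r)$ of a common value. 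Writing $\bar R$ for the average pairwise resistance, this yields a decomposition $L^{+}=L^{+}_{0}+E$, where $L^{+}_{0}=(\bar R/2)(I-J/n)$ is the Moore--Penrose pseudoinverse of the Laplacian of a uniformly weighted $K_n$, and $E$ is symmetric with $E\OneVec=0$ and every entry of magnitude $O(\varepsilon r)$.

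Next I would pseudo-invert. Since $L^{+}_{0}$ acts on $\{\OneVec\}^{\perp}$ as $(\bar R/2)I$, a Neumann expansion (valid once $\|E\|_{\mathrm{op}}<\bar R/2$) yields
\begin{equation*}
    L \;=\; L_0 \;-\; (4/\bar R^{\,2})\,E \;+\; (8/\bar R^{\,3})\,E^{2} \;-\; (16/\bar R^{\,4})\,E^{3} \;+\; \cdots,
\end{equation*}
where $L_0=(2/\bar R)(I-J/n)$ is the Laplacian of a uniform weighted $K_n$. For any missing edge $uv$ of $G$ the graph Laplacian demands $L_{uv}=0$, while $(L_0)_{uv}=-2/(\bar R n)\neq 0$; solving for $E_{uv}$ in the expansion therefore forces $|E_{uv}|=\bar R/(2n)\pm(\text{higher-order corrections})=\Omega(r/n)$, where the $k$-th term $(E^{k})_{uv}$ is bounded entrywise by $n^{k-1}(\varepsilon r)^{k}$. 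Comparing this required magnitude to the entrywise upper bound $|E_{uv}|\le O(\varepsilon r)$ produces a lower bound on $\varepsilon$.

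The main obstacle is the delicate accounting of the Neumann tail. A naive leading-order comparison would give $\varepsilon=\Omega(1/n)$ (which would prove Conjecture~\ref{intro: conjecture: lower bound on resistance sparsifiers}), but because the entrywise bound on $E^{2}$ degrades by a factor of $n$, the quadratic term $(8/\bar R^{\,3})(E^{2})_{uv}$ is of order $n\varepsilon^{2}r$ and can cancel a constant fraction of the leading $\bar R/(2n)=\Theta(r/n)$ contribution precisely when $\varepsilon\asymp 1/n$. Balancing the leading $r/n$ term against this quadratic correction pins down the final bound $\varepsilon=\Omega(1/n^{2})$, matching the lemma. As a sanity check, the simpler dual-potential bound $\Reff(u,v)\ge 1/d_u+1/d_v$ (obtained by testing $\varphi_u=1/2,\,\varphi_v=-1/2,\,\varphi_x=0$ for a non-edge $uv$) combined with Foster's theorem $\sum_{xy\in E}w(xy)\Reff(x,y)=n-1$ directly recovers $1+\Omega(1/n^{2})$ in general and sharpens to $1+\Omega(1/n)$ in the regular case; this localizes the remaining gap between the lemma and the conjecture to controlling the quadratic Neumann correction in the presence of degree imbalance.
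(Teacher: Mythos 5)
Your route is genuinely different from the paper's, which never linearizes around the clique: there, the missing edge $st$ is handled by explicit test potentials giving $\Reff(s,t)^{-1}\le\frac{n-2}{2}\bar\alpha$ and, for an average pair $u,v$ disjoint from $\{s,t\}$, $\Reff(u,v)^{-1}\le\frac{n-2}{2}\bar\beta+\bar\alpha$ (with $\bar\alpha,\bar\beta$ the average weights of edges touching, respectively not touching, $\{s,t\}$); the convex combination with coefficient $\tau=\frac{2}{n-2}$ is chosen so that the unknown $\bar\beta$ cancels, yielding $\min_{x\ne y}\Reff(x,y)^{-1}<\frac{n-3}{(n-2)^2}\,w(E)$, and Foster's theorem supplies $\min_{x\ne y}\Reff(x,y)\le\frac{n-1}{w(E)}$, giving the ratio $1+\frac{1}{n^2-4n+3}$. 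Your perturbative scheme is a legitimate alternative in principle, but the one step you yourself flag as the crux --- the accounting of the Neumann tail --- is exactly the step that is wrong as written, so the lemma is not established by your text.

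Concretely: with your own bounds $|E_{ij}|=O(\varepsilon r)$ and $|(E^{k})_{uv}|\le n^{k-1}\,(O(\varepsilon r))^{k}$, the $k$-th term of the expansion contributes at most $\frac{2^{k+1}}{\bar R^{\,k+1}}\,n^{k-1}(O(\varepsilon r))^{k}=\frac{2}{\bar R n}\,(O(n\varepsilon))^{k}$ to the $(u,v)$ entry, i.e.\ every term with $k\ge2$ carries the same $\frac{1}{\bar R n}$ prefactor as the leading contribution $\frac{2}{\bar R n}$ from $L_0$, damped by $(O(n\varepsilon))^{k}$. Hence the quadratic correction becomes comparable to the leading term only when $\varepsilon\gtrsim 1/n$, not at $\varepsilon\asymp 1/n^{2}$; "balancing the leading $r/n$ term against the quadratic correction" cannot pin down $\Omega(1/n^{2})$, and that final inference is a non sequitur. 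In fact, if all your estimates are correct (they look correct to me: $\sum_{j\ne i}\Reff(i,j)=nL^{+}_{ii}+\operatorname{tr}L^{+}$ does force the diagonal of $L^{+}$ to vary by at most $\varepsilon r$, the entries of $E$ are $O(\varepsilon r)$, $E\mathbf{1}=0$, and the series converges once $\|E\|_{\mathrm{op}}\le n\max_{ij}|E_{ij}|<\bar R/2$, i.e.\ once $n\varepsilon$ is a small constant), then carrying the computation through gives a contradiction already for $\varepsilon\le c/n$, which is the much stronger $1+\Omega(1/n)$ bound the paper leaves open. That should make you pause: either write the tail estimate out in full and claim the stronger result, or locate the estimate that fails; in neither case does the proposal as written prove the stated $1+\frac{1}{O(n^2)}$.

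Your fallback "sanity check" also does not rescue the claim: from $\Reff(s,t)\ge\frac{1}{\deg_w(s)}+\frac{1}{\deg_w(t)}$ (which, note, requires the optimized two-level potential; $\varphi_s=\tfrac12,\varphi_t=-\tfrac12$ only gives $\frac{4}{\deg_w(s)+\deg_w(t)}$) together with Foster's bound you only get a ratio of at least $\frac{w(E)}{n-1}\bigl(\frac{1}{\deg_w(s)}+\frac{1}{\deg_w(t)}\bigr)$, which is far below $1$ when the endpoints of the missing edge have atypically large weighted degree: give every edge incident to $s$ or $t$ weight $K\gg n$ and all other edges weight $1$, and this quantity degenerates to about $\frac{4}{n-1}$. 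So the simple argument does not "directly recover $1+\Omega(1/n^2)$ in general"; the degree-imbalance case is precisely what the paper's $\tau=\frac{2}{n-2}$ averaging trick is built to handle, and your sanity check cannot stand in for the missing tail accounting.
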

Before presenting the proof of Lemma 
\ref{lemma: 1+1/O(n^2) LB on max/min res ratio},
we show how this proves Theorem \ref{intro: thm: Omega(n/sqrt(eps)) lower bound on resistance sparsifiers},
which we restate here for clarity.
\begin{theorem}\label{thm: Omega(n/sqrt(eps)) lower bound on resistance sparsifiers}
For every $n\geq 2$ and every $\varepsilon>\frac{1}{n}$,
there exists a graph $G$ with $n$ vertices, 
such that every $\varepsilon$-resistance sparsifier of $G$
has $\Omega\prs{n/\sqrt{\varepsilon}}$ edges.
\end{theorem}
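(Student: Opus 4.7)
The plan is to build $G$ as a disjoint union of medium-sized cliques, reduce any sparsifier to a product of independent per-clique sparsifiers, and invoke Lemma~\ref{lemma: 1+1/O(n^2) LB on max/min res ratio} inside each piece. I would fix a small absolute constant $c$ (to be pinned down below), set $k = \lfloor c/\sqrt{\varepsilon}\rfloor$ and $m = \lfloor n/k\rfloor$, and take $G$ to be the vertex-disjoint union of $m$ unweighted copies of $K_k$ on vertex sets $V_1,\dots,V_m$ (placing any leftover $n-mk$ vertices into one of the cliques so that $|V|=n$). The hypothesis $\varepsilon>1/n$ guarantees $k\le n/2$ for any reasonable $c$, so this construction fits.

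The next step is to argue that any $\varepsilon$-resistance sparsifier $G'=(V,E',w')$ of this $G$ splits as $G'=\bigsqcup_i G'_i$ with $V(G'_i)=V_i$. Indeed, for $x\in V_i$ and $y\in V_j$ with $i\neq j$ we have $\Reff_G(x,y)=\infty$, which forces $\Reff_{G'}(x,y)=\infty$, so the components of $G'$ refine those of $G$; conversely, all within-component resistances in $G$ are finite, so the refinement is trivial. Thus each $G'_i$ is itself an $\varepsilon$-resistance sparsifier of $K_k$. Because every pair in the unweighted $K_k$ has effective resistance $2/k$, inside each $G'_i$ every pairwise resistance lies in $(1\pm\varepsilon)\cdot 2/k$, which gives
\begin{equation*}
\frac{\max_{x\neq y\in V_i}\Reff_{G'_i}(x,y)}{\min_{x\neq y\in V_i}\Reff_{G'_i}(x,y)}\;\le\;\frac{1+\varepsilon}{1-\varepsilon}\;\le\;1+3\varepsilon.
\end{equation*}

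To finish, I would invoke Lemma~\ref{lemma: 1+1/O(n^2) LB on max/min res ratio}: any $k$-vertex graph with strictly fewer than $\binom{k}{2}$ edges has $\max/\min$ resistance ratio at least $1+1/(Ck^2)$ for some absolute constant $C$. Pinning $c$ small enough so that $3Cc^2<1/2$ makes $1/(Ck^2)>3\varepsilon$, which contradicts the displayed bound unless $|E(G'_i)|\ge\binom{k}{2}$ for every $i$. Summing over the $m$ cliques yields
\begin{equation*}
|E'|\;\ge\;m\binom{k}{2}\;=\;\Omega(nk)\;=\;\Omega(n/\sqrt{\varepsilon}),
\end{equation*}
as required. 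The main technical content is already absorbed into Lemma~\ref{lemma: 1+1/O(n^2) LB on max/min res ratio}, so the only delicate points here are fixing $c$ so the contradiction is strict, and handling the trivial regime of large $\varepsilon$ (where $k<2$, in which case $n/\sqrt{\varepsilon}=O(n)$ is already realised by any spanning tree). Should one wish to respect the paper's convention that graphs be connected, it suffices to attach a path of bridge edges of weight $\delta\to 0^{+}$ between the cliques: within-clique resistances shift by $O(\delta)$ and can be absorbed into the $(1\pm\varepsilon)$ slack, between-clique resistances $\Theta(1/\delta)$ are preserved up to any constant factor automatically, and the extra $m-1$ edges do not affect the $\Omega(n/\sqrt{\varepsilon})$ count.
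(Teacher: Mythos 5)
Your proposal is correct and is essentially the paper's own argument: the paper likewise takes $\Theta(n\sqrt{\varepsilon})$ disjoint cliques of size $\Theta(\varepsilon^{-1/2})$ and notes that, by Lemma~\ref{lemma: 1+1/O(n^2) LB on max/min res ratio}, removing even one edge inside a clique ruins the $(1\pm\varepsilon)$ guarantee, giving $\Theta(\varepsilon^{-1})\cdot\Theta(n\sqrt{\varepsilon})=\Omega(n/\sqrt{\varepsilon})$ edges. You merely fill in the details the paper leaves implicit (the splitting of the sparsifier across components, the choice of constants, the large-$\varepsilon$ regime, and the optional connectivity fix), all of which are fine.
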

\begin{proof}(of Theorem \ref{thm: Omega(n/sqrt(eps)) lower bound on resistance sparsifiers})
To see this, take $\Theta(n\sqrt{\varepsilon})$
distinct cliques, each of size $\Theta\prs{{\varepsilon}^{-1/2}}$;
removing even one edge
will fail to achieve $1+\varepsilon$ approximation.
This graph has $\Theta\prs{\varepsilon^{-1}}\cdot \Theta\prs{n\sqrt{\varepsilon}}
= \Theta\prs{n/\sqrt{\varepsilon}}$ edges,
which concludes the lower bound.
\end{proof}
We remark that if Question 
\ref{oq: does removing an edge from the clique implies a gap in the max/min resistance ratio (even after reweighting)?}
is true, it implies that
every $\frac{1}{10n}$-resistance sparsifier of
the clique  must have $\Omega\prs{n^2}$ edges, 
and by following a similar proof as in Theorem
\ref{thm: Omega(n/sqrt(eps)) lower bound on resistance sparsifiers} we conclude  that
in the worst case,
an $\varepsilon$-resistance sparsifier
requires $\Omega\prs{n/\varepsilon}$ edges
(Conjecture \ref{intro: conjecture: lower bound on resistance sparsifiers}).

We proceed to present the proof of Lemma
\ref{lemma: 1+1/O(n^2) LB on max/min res ratio}.
Throughout the proof,
as well as in the next sections,
we use the following equivalent definition
of effective resistance
for a graph $G=(V,E,w)$.
\begin{equation}
        \forall s\neq t\in V,\quad
        \Reff_{, G}(s,t) 
        = \prs{
        \underset{\varphi\in \R^V:
        \varphi_s-\varphi_t=1}{\min}\sum_{xy\in E}w(xy)\prs{\varphi_x-\varphi_y}^2}^{-1}.
\end{equation}
Moreover, we introduce the following notations.
For a subset of edges $F\subseteq E$
we denote $w(F)=\sum_{e\in F}w(e)$,
and for every vertex $x$ we denote its weighted degree
by $\deg_w(x)=\sum_{y\in V}w(xy)$.
\begin{proof}(of Lemma \ref{lemma: 1+1/O(n^2) LB on max/min res ratio})
Without loss of generality, we may assume that
$|E|=\binom{n}{2}-1$ since a missing edge is
an edge of weight $0$.
Let $s,t\in V$ be the pair with a missing edge between them,
denote by $A$ the set of all the edges
that touch $s$ or $t$,
and by $B$ denote the set of the edges that do not 
touch them. Formally,
\begin{align*}
    A & = \curprs{\curprs{x,y}\in E\,:\,
                    \abs{\curprs{x,y}\cap \curprs{s,t}}=1},\\
    B & = \curprs{\curprs{x,y}\in E\,:\,
                    \abs{\curprs{x,y}\cap \curprs{s,t}}=0}. 
\end{align*}
Denote the average edge weight
in each set by
\begin{align*}
    \bar{\alpha} & = \Ex{xy\in A}\brs{w(xy)}
    = \frac{1}{|A|}w(A),\\
    \bar{\beta} & = \Ex{xy\in B}\brs{w(xy)}
    = \frac{1}{|B|}w(B).
\end{align*}
Note that $|A|=2(n-2)$ and $|B|=\binom{n-2}{2}$.
Now, by considering a potential function $\varphi\in \R^V$
given by
            $\varphi_x = \begin{cases}
            1   &   x=s,\\
            0   &   x=t,\\
            1/2   &  x\neq s,t;
            \end{cases}$,
we see that
\begin{equation}\label{eq: upper bound on Reff(s,t)}
    \begin{split}
        \Reff(s,t)^{-1} & \leq
        \sum_{xy\in A}
        w(xy) \frac{1}{4}\\
        & = \frac{|A|\cdot\bar{\alpha}}{4}\\
        & = \frac{n-2}{2}\cdot \bar{\alpha}.
    \end{split}
\end{equation}
Similarly, for every $x,y\in V\backslash\curprs{s,t}$,
by considering a potential function
with values $0,\frac{1}{2},1$
(similarly to the previous case), we get
\begin{equation}\label{eq: upper bound on Reff of an edge}
    \begin{split}
        \Reff(x,y)^{-1} & \leq \frac{\deg_w(x)+\deg_w(y)+2w(xy)}{4}.
    \end{split}
\end{equation}
Let us compute
$\Ex{xy\in B}\brs{\deg_w(x)+\deg_w(y)+2w(xy)}$.
We will first compute the sum,
and then divide by $|B|$.
\begin{equation}
    \begin{split}
        \sum_{xy\in B}\prs{\deg_w(x)+\deg_w(y)+2w(xy)}
        & = 2w(B) + \sum_{xy\in B}\prs{\deg_w(x)+\deg_w(y)}\\
        & = 2w(B) + \sum_{x\in V\backslash\curprs{s,t}}(n-3)\deg_w(x)\\
        & = 2w(B) + (n-3)\cdot\prs{ 2w(E)
        -\deg_w(s)-\deg_w(t)}\\
        & = 2w(B) + (n-3)\cdot\prs{2w(B)+w(A)}\\
        & = 2(n-2)w(B)+(n-3)w(A).
    \end{split}
\end{equation}
Thus,
\begin{equation}\label{eq: expectation of deg(x)+deg(y)+2w(xy) in B}
    \begin{split}
        \Ex{xy\in B}\brs{\deg_w(x)+\deg_w(y)+2w(xy)}
        & = \frac{1}{|B|}\prs{ 2(n-2)w(B)+(n-3)w(A)}\\
        & = 2(n-2)\bar{\beta} + \frac{n-3}{\binom{n-2}{2}}w(A)\\
        & = 2(n-2)\bar{\beta} + \frac{n-3}{\frac{(n-2)(n-3)}{2}}w(A)\\
        & = 2(n-2)\bar{\beta} + \frac{2}{n-2}w(A)\\
        & = 2(n-2)\bar{\beta} + 4\bar{\alpha}.
    \end{split}
\end{equation}
We get that there exists $uv\in B$
such that 
\begin{equation}\label{eq: upper bound on Reff(u,v)}
    \Reff(u,v)^{-1}
    \leq \Ex{xy\in B}\brs{\Reff(x,y)^{-1}}
    \leq \frac{2(n-2)\bar{\beta} + 4\bar{\alpha}}{4}
    = \frac{n-2}{2}\bar{\beta} +\bar{\alpha}.
\end{equation}
where we used (\ref{eq: upper bound on Reff of an edge})
and (\ref{eq: expectation of deg(x)+deg(y)+2w(xy) in B})
in the second transition.

Moreover, note that for all $\tau\in[0,1]$,
\begin{equation}
    \min\curprs{\Reff(s,t)^{-1},\Reff\prs{u,v}^{-1}}
    \leq \tau\cdot \Reff(s,t)^{-1}+ (1-\tau)\cdot \Reff\prs{u,v}^{-1}.
\end{equation}
We set $\tau = \frac{2}{n-2}$,
thus $1-\tau=\frac{n-4}{n-2}$,
which combined with the bounds from
(\ref{eq: upper bound on Reff(s,t)})
and (\ref{eq: upper bound on Reff(u,v)}),
 yields
\begin{equation}
    \begin{split}
        \min_{x\neq y\in V}\Reff(x,y)^{-1}
        & \leq \frac{2}{n-2}\cdot\frac{n-2}{2}\bar{\alpha}
        +\frac{n-4}{n-2}\cdot\prs{\bar{\alpha}+\frac{n-2}{2}\bar{\beta}}\\
        & = \prs{2-\frac{2}{n-2}}\bar{\alpha} + \frac{n-4}{2}\bar{\beta}\\
        & = \prs{\frac{n-3}{(n-2)^2}}\cdot 2(n-2)\bar{\alpha} +\frac{n-4}{2}\bar{\beta}\\
        & = \prs{\frac{n-3}{(n-2)^2}}\cdot w(E)-\prs{\frac{n-3}{(n-2)^2}}\cdot\binom{n-2}{2}\bar{\beta}+ \frac{n-4}{2}\bar{\beta}\\
        & = \prs{\frac{n-3}{(n-2)^2}}\cdot w(E) +\prs{-\frac{n^2-6n+9}{2(n-2)}+\frac{n^2-6n+8}{2(n-2)}}\bar{\beta}\\
        & < \prs{\frac{n-3}{(n-2)^2}}\cdot w(E).
    \end{split}
\end{equation}
This gives us a lower bound on the maximum effective resistance
in the graph.

In order to get an upper bound on the minimum  effective resistance
in the graph,
let us compute the expectation of the 
effective resistance of a random edge $e\in E$
sampled with probability
$\frac{w(e)}{w(E)}$.
By Foster's Theorem,
\begin{equation}
    \begin{split}
        \Ex{xy\in E}\brs{\Reff(x,y)}
        & =\frac{\sum_{xy\in E}w(xy)\Reff(x,y)}{w(E)}\\
        & = \frac{n-1}{w(E)}.
    \end{split}
\end{equation}
There exists an edge $u'v'\in E$
whose effective resistance is at most the expectation, i.e.
\begin{equation}
    \Reff(u',v') \leq \frac{n-1}{w(E)}.
\end{equation}
Altogether, we  see that
\begin{equation}
    \begin{split}
        \frac{\max_{x\neq y\in V}\Reff(x,y)^{-1}}{\min_{x\neq y\in V}\Reff(x,y)^{-1}}
        &> \frac{\frac{w(E)}{n-1}}{\prs{\frac{n-3}{(n-2)^2}}\cdot w(E)}\\
        & = \frac{(n-2)^2}{(n-1)(n-3)}\\
        & = \frac{n^2-4n+4}{n^2-4n+3}\\
        & = 1+\frac{1}{n^2-4n+3}.
    \end{split}
\end{equation}
\end{proof}

\subsubsection{Stronger Bound for Special Cases}
In this subsection we present some special cases in which
we can prove the stronger bound from Question
\ref{oq: does removing an edge from the clique implies a gap in the max/min resistance ratio (even after reweighting)?}.
We begin by showing a technical lemma
stating that if some condition holds,
then the graph cannot guarantee better than
$(1+1/O(n))$-approximation of the clique.
Later, we show that as a matter of fact,
this condition holds in some interesting
special cases.
One very interesting case is when the graph is regular
and arbitrary edge weights are allowed,
including zero 
(Theorem \ref{intro: theorem: regular graphs cannot approximate resistance distance of clique}).

\begin{lemma}\label{lemma: stronger lower bound on special cases with simple conditions}
Let $G=(V,E,w)$ be a graph with $|V|=n$ vertices.
Denote by $\overline{D_w}$ the average weighted degree
of the vertices, i.e. $\overline{D_w} = \frac{1}{n}\sum_{x\in V}\deg_w(x)
=\frac{2w(E)}{n}$.
Suppose that one of the following conditions holds.
\begin{enumerate}
    \item   There exists $v\in V$
            such that $\deg_w(v)\leq
            \frac{\overline{D_w}}{2}
            \cdot\prs{1+\frac{1}{2n}}$.
    \item   There exist 
            $s,t\in V$
            such that
            $\deg_w(s)+\deg_w(t)+2w(st)
            \leq 2\overline{D_w}
            \cdot\prs{1+\frac{1}{2n}}$.
\end{enumerate}
Then,
\begin{equation}
    \frac{\max_{x\neq y\in V}\Reff(x,y)}
    {\min_{x\neq y\in V}\Reff(x,y)}
    \geq 1+\frac{1}{O(n)}.
\end{equation}
\end{lemma}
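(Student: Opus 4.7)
}
The plan is to combine two simple ingredients: (i) Foster's Theorem, which forces the \emph{minimum} effective resistance to be small in a dense weighted graph; and (ii) the standard ``indicator-like'' potential function, which forces the \emph{maximum} effective resistance to be at least as large as the reciprocal of a suitable degree-like quantity. Condition 1 and Condition 2 will supply the quantity needed for (ii).

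First, by Foster's Theorem,
\[
\sum_{xy\in E} w(xy)\,\Reff(x,y) \;=\; n-1,
\]
so some edge $u'v'\in E$ satisfies $\Reff(u',v') \leq \frac{n-1}{w(E)} = \frac{2(n-1)}{n\,\overline{D_w}}$. In particular, $\min_{x\neq y}\Reff(x,y) \leq \frac{2(n-1)}{n\,\overline{D_w}}$. This part does not use either of the two hypotheses.

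Next I handle the two conditions separately to lower-bound $\max_{x\neq y}\Reff(x,y)$.
Under Condition 1, pick the witness $v$ and any $u\neq v$, and consider the potential $\varphi_v=1$ and $\varphi_x=0$ for every $x\neq v$. Then $\sum_{xy\in E}w(xy)(\varphi_x-\varphi_y)^2 = \deg_w(v)$, so by the variational characterization of $\Reff$,
\[
\Reff(v,u)^{-1} \;\leq\; \deg_w(v) \;\leq\; \tfrac{\overline{D_w}}{2}\bigl(1+\tfrac{1}{2n}\bigr),
\]
which gives $\max_{x\neq y}\Reff(x,y) \geq \frac{2}{\overline{D_w}(1+\frac{1}{2n})}$. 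Under Condition 2, pick the witnesses $s,t$ and use the $\{0,\tfrac12,1\}$-potential as in equation~\eqref{eq: upper bound on Reff of an edge}; this yields
\[
\Reff(s,t)^{-1} \;\leq\; \frac{\deg_w(s)+\deg_w(t)+2w(st)}{4} \;\leq\; \frac{\overline{D_w}}{2}\bigl(1+\tfrac{1}{2n}\bigr),
\]
so $\max_{x\neq y}\Reff(x,y) \geq \frac{2}{\overline{D_w}(1+\frac{1}{2n})}$ again.

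Combining the upper bound on the minimum with the lower bound on the maximum, the ratio in either case satisfies
\[
\frac{\max_{x\neq y}\Reff(x,y)}{\min_{x\neq y}\Reff(x,y)}
\;\geq\;
\frac{2/\bigl(\overline{D_w}(1+\tfrac{1}{2n})\bigr)}{2(n-1)/(n\,\overline{D_w})}
\;=\;
\frac{n}{(n-1)\bigl(1+\tfrac{1}{2n}\bigr)}
\;=\;
\frac{2n^2}{2n^2-n-1}
\;=\; 1 + \frac{n+1}{2n^2-n-1},
\]
which is of the form $1+\Omega(1/n)$, as desired. The slack factor $(1+\tfrac{1}{2n})$ in the hypotheses is precisely what allows the final ratio to be larger than $1$; tightening it further would break the bound, while loosening it would only strengthen the conclusion. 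There is no real obstacle here --- the proof is essentially a bookkeeping exercise once one notices that Foster's Theorem supplies a \emph{global} constraint forcing the minimum resistance to track $1/\overline{D_w}$, while both hypotheses directly produce a pair whose inverse resistance falls below this average by a multiplicative $1+\Theta(1/n)$ factor.
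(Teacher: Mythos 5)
Your proposal is correct and follows essentially the same route as the paper's proof: Foster's Theorem gives an edge of resistance at most $\frac{n-1}{w(E)}=\frac{2}{\overline{D_w}}\bigl(1-\frac{1}{n}\bigr)$, the two conditions with the $\{0,1\}$- and $\{0,\frac12,1\}$-potentials give a pair of inverse resistance at most $\frac{\overline{D_w}}{2}\bigl(1+\frac{1}{2n}\bigr)$, and the ratio computation yields $1+\Theta(1/n)$ exactly as in the paper. One minor aside: your closing remark about the slack factor is backwards (shrinking the $1+\frac{1}{2n}$ slack makes the hypothesis more restrictive and only strengthens the conclusion, while enlarging it is what would eventually break the bound), but this does not affect the argument.
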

We remark that the factor $2$
in the denominator (in $1+\frac{1}{2n}$)
is arbitrary, and in fact this can be proved for
any constant $c>1$.
\begin{proof}
Observe that by Foster's Theorem,
we can calculate the expected effective resistance of an edge
when sampling an edge $e$ with probability proportional to its
weight $w(e)$, as follows.
\begin{equation}
    \begin{split}
        \Ex{xy\in E}\brs{\Reff(x,y)} & = \frac{\sum_{xy\in E}w(xy)\Reff(x,y)}{w(E)}\\
        & = \frac{n-1}{\frac{n\overline{D_w}}{2}}\\
        & = \frac{2}{\overline{D_w}}\prs{1-\frac{1}{n}}.
    \end{split}
\end{equation}
In particular, there exists an edge $uv\in E$
such that $\Reff(u,v)\leq \frac{2}{\overline{D_w}}\prs{1-\frac{1}{n}}$.

Next, if the first condition holds,
then by considering a potential function
$\varphi$ such that $\varphi_x=\begin{cases}
1   &   x=v,\\
0   &   \text{o.w.};
\end{cases}$,
we can see that
for any vertex $u\in V\backslash\{v\}$,
\begin{equation}
    \Reff(v,u)^{-1} \leq \deg_w(v) \leq \frac{\overline{D_w}}{2}
    \cdot\prs{1+\frac{1}{2n}}.
\end{equation}
Similarly, if the second condition holds, then 
by considering a potential function
$\varphi$ such that $\varphi_x=\begin{cases}
1   &x=s,\\
0   &x=t,\\
\frac{1}{2} &\text{o.w.};
\end{cases}$,
we see that
\begin{equation}
    \Reff(s,t)^{-1}\leq \frac{\deg_w(s)+\deg_w(t)
    +2w(st)}{4}
    \leq \frac{\overline{D_w}}{2}
    \cdot\prs{1+\frac{1}{2n}}.
\end{equation}
Hence, in either case it follows that,
\begin{equation}\label{eq: calculations between upper and lower bound on resistances}
    \begin{split}
        \frac{\max_{x\neq y\in V}\Reff(x,y)}
        {\min_{x\neq y\in V}\Reff(x,y)}
         & \geq \frac{\frac{2}{\overline{D_w}\cdot
         \prs{1+\frac{1}{2n}}}}
        {\frac{2}{\overline{D_w}}\prs{1-\frac{1}{n}}} \\
         & \geq  \prs{1+\frac{1}{n-1}}\cdot\prs{
         1-\frac{1}{2n}}\\
         & = 1 + \frac{1}{n-1}-\frac{1}{2n}
         -\frac{1}{2n(n-1)}\\
         & = 1+\frac{1}{2(n-1)}.
    \end{split}
\end{equation}
as desired.
\end{proof}

Observe that Lemma \ref{lemma: stronger lower bound on special cases with simple conditions}
gives the following theorem as a  corollary.

\begin{theorem}\label{theorem: stronger lower bound on special cases}
Let $G=(V,E,w)$ be a graph with $|V|=n$
and $|E|=m<\binom{n}{2}$.
For every vertex $x\in V$,
denote $N_x=\abs{N(x)}$ and $\deg_w(x)=\sum_{y\in N(x)}w(xy)$.
Suppose that one of the following conditions holds.
\begin{enumerate}
    \item   The graph is regular, i.e. 
            there is $a>0$ such that
            for every $x\in V$, $N_x = a$.
    \item   All the weighted degrees are the same, 
            i.e.
            there is $b>0$ such that
            for every $x\in V$, $\deg_w(x) = b$.
    \item   All the edge weights are the same, i.e. 
            there is $c>0$ such that
            for every $xy\in E$, $w(xy)=c$.
\end{enumerate}
Then,
\begin{equation}
    \frac{\max_{x\neq y\in V}\Reff(x,y)}
    {\min_{x\neq y\in V}\Reff(x,y)}
    \geq 1+\frac{1}{O(n)}.
\end{equation}
\end{theorem}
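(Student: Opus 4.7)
The plan is to derive each of the three special cases from Lemma \ref{lemma: stronger lower bound on special cases with simple conditions} by verifying its Condition 2: that there exist $s,t \in V$ with $\deg_w(s) + \deg_w(t) + 2w(st) \leq 2\overline{D_w}\prs{1+\frac{1}{2n}}$. The hypothesis $|E| < \binom{n}{2}$ guarantees at least one non-adjacent pair in $V$; for such a pair $w(st)=0$, so it suffices to exhibit a non-adjacent pair $\{s,t\}$ with $\deg_w(s)+\deg_w(t)\leq 2\overline{D_w}$, which is even stronger than the target bound. The unifying technique in all three cases will be a simple averaging argument over the set of non-adjacent pairs.

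For Case 1 (the graph is $a$-regular, and necessarily $a<n-1$ since $|E|<\binom{n}{2}$), each vertex has exactly $n-1-a$ non-neighbors, so counting contributions vertex-by-vertex gives
\[
\sum_{\{s,t\}\ \text{non-adjacent}}\prs{\deg_w(s)+\deg_w(t)} \;=\; (n-1-a)\sum_{v\in V}\deg_w(v) \;=\; (n-1-a)\cdot n\overline{D_w}.
\]
Dividing by the number of non-adjacent pairs, which is $\frac{n(n-1-a)}{2}$, the average equals exactly $2\overline{D_w}$, so some non-adjacent pair meets the required bound. Case 2 (all weighted degrees equal some $b>0$) is immediate, because $\overline{D_w}=b$ and any non-adjacent pair trivially satisfies $\deg_w(s)+\deg_w(t)=2b=2\overline{D_w}$.

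For Case 3 (all edge weights equal some $c>0$), I would write $\deg_w(v)=c\cdot\deg(v)$ with $\deg(v)$ the unweighted degree, so $\overline{D_w}=c\overline{D}$ where $\overline{D}=2|E|/n$. Counting contributions of each vertex gives
\[
\sum_{\{s,t\}\ \text{non-adjacent}}\prs{\deg(s)+\deg(t)} \;=\; \sum_{v\in V}\deg(v)\prs{n-1-\deg(v)} \;=\; 2|E|(n-1) - \sum_{v\in V}\deg(v)^2.
\]
By Cauchy--Schwarz, $\sum_v \deg(v)^2 \geq (2|E|)^2/n$, and combining this with the identity $\binom{n}{2}-|E|=\tfrac{n}{2}\prs{n-1-\overline{D}}$ for the number of non-adjacent pairs yields an average of at most $2\overline{D}$. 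Rescaling by $c$ produces a non-adjacent pair with $\deg_w(s)+\deg_w(t)\leq 2\overline{D_w}$, as required.

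The main subtlety is Case 3, where straightforward vertex-by-vertex bounds are too weak — Cauchy--Schwarz on the sum of squared degrees is precisely what makes the averaged bound collapse to the clean value $2\overline{D}$, independent of how unevenly the degrees are distributed. Once each case produces a non-adjacent pair satisfying Condition 2, a direct invocation of Lemma \ref{lemma: stronger lower bound on special cases with simple conditions} completes the proof of the $1+\frac{1}{O(n)}$ lower bound on the max-to-min ratio of effective resistance.
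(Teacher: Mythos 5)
Your proposal is correct and follows essentially the same route as the paper: both reduce to Condition 2 of Lemma \ref{lemma: stronger lower bound on special cases with simple conditions} by averaging $\deg_w(s)+\deg_w(t)$ over the non-adjacent pairs (which have $w(st)=0$) and showing the average is at most $2\overline{D_w}$. The only cosmetic difference is that the paper unifies the three cases through the single inequality $\sum_{x}\deg_w(x)N_x\geq \frac{4m\,w(E)}{n}$, interpreted as $\textsc{Cov}(\deg_w,N_x)\geq 0$, whereas you verify each case directly, with your Cauchy--Schwarz step in Case 3 being exactly the nonnegativity of that variance in disguise.
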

We remark that condition \#1 in the above
is in fact Theorem \ref{intro: theorem: regular graphs cannot approximate resistance distance of clique}
presented in the introduction.
\begin{proof}
We will show that if one of the conditions in the theorem
holds, 
then there exists a non-edge pair $s,t\in V$
such that $\deg_w(s)+\deg_w(t)\leq 
2\overline{D_w}$
(i.e. condition \#2 in Lemma \ref{lemma: stronger lower bound on special cases with simple conditions}
holds), which will give the desired result.

First, assume that
\begin{equation}\label{eq: proof of generalized claim with covariance}
    \sum_{x\in V}\deg_w (x)\cdot N_x \geq \frac{4m\cdot w(E)}{n}.
\end{equation}
We will show later that if one of the conditions
in the theorem holds, then (\ref{eq: proof of generalized claim with covariance}) holds as well.
Denote by $F$ the set of all the missing edges in $G$,
namely $F=E\prs{K_n}\backslash E$.
Now, we can see that
when sampling a non-edge pair 
uniformly from $F$,
the following holds.
\begin{equation}
    \begin{split}
        \Ex{st\in F}\brs{\deg_w(s)+\deg_w(t)}
        & = \frac{1}{\binom{n}{2}-m}\prs{\sum_{x\in V}\deg_w(x)\prs{n-1-N_x}}\\
        & = \frac{1}{\binom{n}{2}-m}\prs{(n-1)\cdot 2w(E) - \sum_{x\in V}\deg_w(x)\cdot N_x}\\
        & \leq \frac{1}{\binom{n}{2}-m}\prs{\binom{n}{2}\cdot \frac{4w(E)}{n} - \frac{4m\cdot w(E)}{n}}\\
        & = 2\overline{D_w}.
    \end{split}
\end{equation}
Hence, there exists a pair
$s,t\in V$ such that 
$\deg_w(s)+\deg_w(t)\leq 2\overline{D_w}$
(since $w(st)=0$).
Thus, condition \#2 in Lemma \ref{lemma: stronger lower bound on special cases with simple conditions}
holds as claimed.

All we are left to show is that if one of the three conditions in the theorem
holds, then 
(\ref{eq: proof of generalized claim with covariance}) holds as well.
Indeed,
this can be seen as follows.
\begin{align*}
    \sum_{x\in V}\deg_w(x)\cdot N_x
    & \stackrel{?}{\geq} \frac{4m\cdot w(E)}{n}\\
    \iff 
    \frac{\sum_{x\in V}\deg_w(x)\cdot N_x}{n}
    & \stackrel{?}{\geq} 
    \frac{\sum_{y\in V}N_y}{n}\cdot \frac{\sum_{z\in V}\deg_w(z)}{n} \\
    \iff \Ex{x\in V}\brs{\deg_w(x)\cdot N_x}
    & \stackrel{?}{\geq}
    \Ex{y\in V}\brs{\deg_w(y)}\cdot\Ex{z\in V}\brs{N_z}\\
    \iff \textsc{Cov}\prs{\deg_w(x),N_x} & \stackrel{?}{\geq} 0\\
    \iff
    \Ex{x\in V}\brs{\prs{\deg_w(x)-\overline{D_w}}\prs{N_x-\overline{d}}}
    & \stackrel{?}{\geq} 0
\end{align*}
where we denote by $\overline{d}$ the average number of neighbors
of the vertices, i.e. 
\[\overline{d} = \frac{1}{n}\sum_{x\in V}N_x = \frac{2m}{n}.\]
Observe that in the first two cases, one of the random variables
($\deg_w(x)$ and $N_x$) is constant, and thus
$\prs{\deg_w(x)-\overline{D_w}}\prs{N_x-\overline{d}}=0$
for every vertex $x$.
In the third case, the two random variables
are equal up to scaling by a constant,
and we know $\Ex{}\brs{Z^2}\geq 0$.
Thus, we arrive at the desired result.
\end{proof}
We remark that Theorem \ref{theorem: stronger lower bound on special cases}
does not say that any graph with
maximal degree (number of neighbors)
$\leq (n-2)$ cannot guarantee
better than $1+\frac{1}{O(n)}$ approximation,
as not every such graph can be ``completed"
to form an $(n-2)$-regular graph.
However, it
does immediately give the following 
corollary.
\begin{corollary}
Let $G=(V,E,w)$ be a graph with $|V|=n$ where $n$ is even 
and with maximal degree $\Delta<\frac{n}{2}$.
Then,
\begin{equation}
    \frac{\max_{x\neq y\in V}\Reff(x,y)}
    {\min_{x\neq y\in V}\Reff(x,y)}
    \geq 1+\frac{1}{O(n)}
\end{equation}
\end{corollary}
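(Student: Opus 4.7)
The plan is to reduce to Theorem~\ref{theorem: stronger lower bound on special cases} under condition~\#1, by augmenting $G$ with zero-weight edges to obtain a regular graph on the same vertex set without altering any of the effective resistances.

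First, I would observe that $\Delta<n/2$ together with $n$ even forces $\Delta\leq n/2-1$, so the complement graph $\bar G$ has minimum degree at least $n-1-\Delta\geq n/2$. Since $n$ is even, a standard corollary of Dirac's theorem (indeed, minimum degree at least $n/2$ already guarantees a perfect matching on an even number of vertices) yields a perfect matching $M\subseteq E(\bar G)$.

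Next, I would define $G'=(V,E',w')$ by
\[
    E'=\binom{V}{2}\setminus M,\qquad w'(e)=\begin{cases}w(e)&\text{if }e\in E,\\ 0&\text{if }e\in E'\setminus E.\end{cases}
\]
Every vertex of $G'$ has exactly one non-neighbor, namely its $M$-mate, so $G'$ is regular in the sense of condition~\#1 with $N_x=n-2>0$ for every $x\in V$, and $|E'|=\binom{n}{2}-n/2<\binom{n}{2}$. Moreover, zero-weight edges carry no current and do not affect effective resistances (Foster's theorem in the proof of Theorem~\ref{theorem: stronger lower bound on special cases} is likewise unaffected since zero-weight terms contribute $0$ to the sum), so $\Reff_{,G'}(x,y)=\Reff_{,G}(x,y)$ for all $x,y\in V$.

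Finally, applying Theorem~\ref{theorem: stronger lower bound on special cases} (condition~\#1) to $G'$ yields $\max_{x\neq y}\Reff_{,G'}(x,y)/\min_{x\neq y}\Reff_{,G'}(x,y)\geq 1+1/O(n)$, which by the previous paragraph equals the corresponding ratio for $G$. The disconnected case is trivial since then some effective resistance in $G$ is infinite. The only step that might warrant care is the existence of the perfect matching in $\bar G$, but given that $\bar G$ has minimum degree at least $n/2$, this is immediate from classical results and should not be a serious obstacle.
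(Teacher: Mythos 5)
Your proposal is correct and follows essentially the same route as the paper: the paper also applies Dirac's theorem to the complement (which has minimum degree at least $n/2$) to extract a perfect matching of non-edges, views $G$ as an $(n-2)$-regular graph by adding the remaining missing pairs as zero-weight edges, and then invokes Theorem~\ref{theorem: stronger lower bound on special cases} (condition \#1). Your write-up merely makes explicit the details the paper leaves implicit (the zero-weight augmentation preserving all effective resistances, and the connectedness caveat).
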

\begin{proof}
Recall that by  Dirac's Theorem, 
every graph with minimal degree $\geq \frac{n}{2}$,
must be Hamiltonian.
Thus, we may apply it on the complement of $G$,
which leads to the conclusion that
there exists a complete matching
(complete since $n$ is even)
that does not belong to $G$.
Hence, we can refer to $G$
as an $(n-2)$-regular graph,
and by Theorem \ref{theorem: stronger lower bound on special cases} we are done.
\end{proof}
In addition, observe that in  Theorem \ref{theorem: stronger lower bound on special cases},
we essentially showed that
if (\ref{eq: proof of generalized claim with covariance})
holds, then the graph cannot guarantee
better than $\prs{1+\frac{1}{O(n)}}$-approximation
of the clique.
Similarly, we can show the following.
\begin{claim}
Let $G=(V,E,w)$ with $|V| = n$ vertices and $|E|=m$ edges.
For every vertex $x\in V$,
denote $N_x=\abs{N(x)}$
and $\deg_w(x)=\sum_{y\in N(x)}w(xy)$.
If
\begin{equation}\label{eq: variant where inner product is small enough}
    \sum_{x\in V}\deg_w(x)\cdot N_x
    \leq \frac{4m\cdot w(E)}{n}-2w(E),
\end{equation}
Then,
\begin{equation}
    \frac{\max_{x\neq y\in V}\Reff(x,y)}{\min_{x\neq y\in V}\Reff(x,y)}
    \geq 1+\frac{1}{O(n)}
\end{equation}
\end{claim}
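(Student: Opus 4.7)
The plan is to mirror the proof of Theorem~\ref{theorem: stronger lower bound on special cases}, but with the sum taken over edges of $G$ rather than over non-edges. The key algebraic observation is that for any weighted graph,
\begin{equation*}
    \sum_{st\in E}\prs{\deg_w(s)+\deg_w(t)} \;=\; \sum_{x\in V}\deg_w(x)\cdot N_x,
\end{equation*}
since every vertex $x$ is incident to exactly $N_x$ edges and contributes $\deg_w(x)$ each time it appears as an endpoint. Adding $\sum_{st\in E}2w(st)=2w(E)$ to both sides and using the hypothesis \eqref{eq: variant where inner product is small enough}, we obtain
\begin{equation*}
    \sum_{st\in E}\prs{\deg_w(s)+\deg_w(t)+2w(st)} \;\leq\; \frac{4m\cdot w(E)}{n},
\end{equation*}
and dividing by $m$ turns this into the expectation bound
\begin{equation*}
    \Ex{st\in E}\brs{\deg_w(s)+\deg_w(t)+2w(st)} \;\leq\; \frac{4 w(E)}{n} \;=\; 2\overline{D_w}.
\end{equation*}

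A standard averaging argument then yields an edge $st\in E$ with $\deg_w(s)+\deg_w(t)+2w(st)\leq 2\overline{D_w}$, which is strictly stronger than the bound $2\overline{D_w}\prs{1+\frac{1}{2n}}$ required by condition \#2 of Lemma~\ref{lemma: stronger lower bound on special cases with simple conditions}. Applying that lemma to the pair $s,t$ immediately gives the desired ratio bound $1+\frac{1}{O(n)}$.

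There is no real obstacle here: the role of the additive correction $-2w(E)$ in the hypothesis is precisely to cancel the $\sum_{st\in E}2w(st)$ contribution when translating the vertex-sum condition into the edge-sum condition appearing in Lemma~\ref{lemma: stronger lower bound on special cases with simple conditions}. In effect, the claim is the mirror image of Theorem~\ref{theorem: stronger lower bound on special cases}: instead of exhibiting a non-edge pair $\{s,t\}$ witnessing a small value of $\Reff(s,t)^{-1}$ via the potential function taking values $\{0,\tfrac12,1\}$, we exhibit an edge $st\in E$ that does the same, and combine this with the edge produced by Foster's theorem to extract the $1+1/O(n)$ gap between maximum and minimum effective resistance.
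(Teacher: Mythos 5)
Your proposal is correct and follows essentially the same route as the paper: both compute $\Ex{xy\in E}\brs{\deg_w(x)+\deg_w(y)+2w(xy)} = \frac{1}{m}\prs{2w(E)+\sum_{x\in V}\deg_w(x)\cdot N_x}\leq 2\overline{D_w}$ from the hypothesis, extract an edge achieving this bound by averaging, and invoke condition \#2 of Lemma \ref{lemma: stronger lower bound on special cases with simple conditions}. No gaps.
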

\begin{proof}
Similarly to the proof of 
Theorem \ref{theorem: stronger lower bound on special cases},
we will show that there exists an edge
$xy\in E$ such that
$\deg_w(x)+\deg_w(y)+2w(xy)\leq
2\overline{D_w}$
(where 
$\overline{D_w} = \frac{1}{n}\sum_{x\in V}\deg_w(x) = \frac{2w(E)}{n}$),
which by Lemma  \ref{lemma: stronger lower bound on special cases with simple conditions}
will give the desired result.

By similar calculations
as in Theorem \ref{theorem: stronger lower bound on special cases}
we can see that 
by sampling an edge uniformly
the following holds.
\begin{equation}
    \begin{split}
        \Ex{xy\in E}\brs{\deg_w(x)+\deg_w(y)+2w(xy)}
        & = \frac{1}{m}\prs{ 2w(E)+\sum_{x\in V}\deg_w(x)\cdot N_x}\\
        & \leq \frac{1}{m}\cdot \frac{4m\cdot w(E)}{n}\\
        & = 2\overline{D_w}
    \end{split}
\end{equation}
Hence, 
again, by Lemma 
\ref{lemma: stronger lower bound on special cases with simple conditions}
we are done.
\end{proof}

\subsubsection{The Symmetric Case}
If we further add the assumption that
the graph is symmetric,
in the sense that all the edges that touch
$s$ and $t$ (where $st$ is the missing edge)
have the same weight $\alpha$,
and the rest of the edges have the same weight $\beta$,
then we can prove the stronger bound from Question
\ref{oq: does removing an edge from the clique implies a gap in the max/min resistance ratio (even after reweighting)?}.
Formally,
\begin{claim}\label{claim: lower bound on the symmetric case}
Let $G=(V,E,w)$ be a graph with $|E|=\binom{n}{2}-1$.
Let $s,t\in V$ be the pair with a missing edge between them,
and suppose that for every edge $e\in E$,
$w(e)=\begin{cases}
\alpha  &   e\text{ touches }s\text{ or }t,\\
\beta   &   \text{o.w.};
\end{cases}$
where $\alpha,\beta\in \R^+$.
Then,
\[
    \frac{\max_{x\neq y\in V}\Reff(x,y)}{\min_{x\neq y\in V}\Reff(x,y)}>1+\frac{1}{10n}.
\]
\end{claim}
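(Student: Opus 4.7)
The plan is to exploit the large automorphism group of $G$---namely $\Gamma := S_2 \times S_{n-2}$, acting by swapping $s \leftrightarrow t$ and permuting the $n-2$ middle vertices $M := V \setminus \{s,t\}$---to reduce the problem to just three explicit quantities. Since $w$ is $\Gamma$-invariant, $\Reff$ is constant on $\Gamma$-orbits of pairs, and there are exactly three orbits of unordered pairs of distinct vertices: $\{s,t\}$; $\{s,x\}$ (equivalently $\{t,x\}$) for $x\in M$; and $\{x,y\}$ for $x\neq y\in M$. Denote the three resulting resistance values by $A$, $C$, and $B$ respectively; then the ratio in the claim equals $\max(A,B,C)/\min(A,B,C)$.

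Next I would compute each value in closed form using the variational characterization $\Reff(u,v)^{-1} = \min\{\sum_{ab\in E} w(ab)(\varphi_a-\varphi_b)^2 : \varphi_u-\varphi_v=1\}$, together with the fact that the optimal $\varphi$ can be chosen invariant under the stabilizer of $\{u,v\}$ inside $\Gamma$ (and, where appropriate, under the affine symmetry $\varphi\mapsto 1-\varphi$). This pins down $\varphi_z=1/2$ for every $z\in M$ in the $A$-computation, yielding $A = 2/((n-2)\alpha)$; it pins down $\varphi_s=\varphi_t=\varphi_{z'}=1/2$ for other middle $z'$ in the $B$-computation, yielding $B = 2/((n-2)\beta+2\alpha)$; and for $C$ it reduces the problem to a two-variable minimization over $\varphi_t=c_1$ and $\varphi_{z'}=c_2$ (the stabilizer being smaller), whose solution after routine algebra is $C = [(n-1)\alpha + (n-2)\beta]/[(n-2)\alpha(2\alpha+(n-2)\beta)]$.

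The heart of the argument is to extract a clean relation among the three. Normalizing $\alpha=1$ (everything scales with $\alpha$), a direct substitution verifies the identity
\[
C = \frac{A+B}{2} - \frac{AB}{4},
\]
so $C < (A+B)/2 \le \max(A,B)$ and hence $\max(A,B,C)=\max(A,B)$. Using this identity and the explicit forms of $A,B$,
\[
\frac{\max(A,B)}{C} \ge \frac{(A+B)/2}{(A+B)/2 - AB/4} = 1 + \frac{1}{(n-2)\beta + n - 1}.
\]
A short case analysis on $\beta$ completes the proof. When $\beta\le 9n/(n-2)$ and $C\le \min(A,B)$ (which covers the near-balanced range around $\beta\approx (n-4)/(n-2)$), the displayed bound already exceeds $1+1/(10n)$; otherwise $\beta$ lies outside a narrow window, and a direct comparison shows that $A/B=\beta+2/(n-2)$ or its reciprocal exceeds $1+1/(n-3)$, which is far larger than $1+1/(10n)$.

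The main obstacle will be the combined algebra: carrying out the two-variable minimization for $C$, verifying the identity $C=(A+B)/2-AB/4$, and ensuring the constant $1/(10n)$ holds uniformly across all $\beta>0$, including correctly handling the split into the ranges $C\le\min(A,B)$ and $C>\min(A,B)$. Conceptually, however, the identity localizes the worst case to $\beta=(n-4)/(n-2)$, where $A=B$ and $\max/\min$ evaluates to exactly $(2n-4)/(2n-5)=1+1/(2n-5)$, comfortably above the required $1+1/(10n)$.
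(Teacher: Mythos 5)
Your proposal is correct, and it follows the same overall plan as the paper---reduce by symmetry to the three orbit resistances and compute them in closed form---but it finishes differently. Your values agree with the paper's: $A=\Reff(s,t)=\frac{2}{(n-2)\alpha}$, $B=R_B=\frac{2}{2\alpha+(n-2)\beta}$, and your two-variable minimization for $C$ yields exactly the paper's $R_A=\frac{1}{2\alpha}\bigl(1+\frac{1}{n-2}-\frac{(n-3)\beta}{2\alpha+(n-2)\beta}\bigr)$, which the paper instead derives from Foster's Theorem. Where the paper's main-text proof assumes $\max/\min\le 1+\frac{1}{10n}$ and squeezes $\beta/\alpha$ between two incompatible bounds (and its appendix gives a commute-time proof), you establish the identity $C=\frac{A+B}{2}-\frac{AB}{4}$ (with $\alpha=1$; it checks out against the explicit formulas) and deduce directly $\frac{\max(A,B)}{C}\ge 1+\frac{1}{(n-2)\beta+n-1}$, which exceeds $1+\frac{1}{10n}$ whenever $\beta\le 9n/(n-2)$, while for larger $\beta$ the ratio $A/B=\beta+\frac{2}{n-2}$ is already a large constant. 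What your route buys is a direct, non-contradiction argument that localizes the worst case at $A=B$, i.e.\ $\beta=\frac{n-4}{n-2}$, where the ratio equals $1+\frac{1}{2n-5}$; what the paper's route buys is that it never has to minimize for $R_A$ at all, obtaining it for free from Foster's Theorem. Two small points: the side condition ``$C\le\min(A,B)$'' in your first case is unnecessary, since $\max(A,B,C)/\min(A,B,C)\ge\max(A,B)/C$ holds unconditionally (the minimum is at most $C$ and the maximum is at least $\max(A,B)$); and when $\beta$ leaves the narrow window from above, $A/B$ is only guaranteed to exceed $1+\frac{1}{n-2}$ rather than $1+\frac{1}{n-3}$---harmless, since either bound is far above $1+\frac{1}{10n}$.
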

We remark that this case is not necessarily contained
within any other case which we have presented so far.

We present here a proof via direct calculations.
We give an additional proof via the connection
between effective resistance and commute time
in Appendix \ref{appendix sectoin: Proof of Symmetric Case via Commute Time}.
\begin{proof}
Again,
denote by $A$ the set of all the edges
that touch $s$ or $t$,
and by $B$ denote the set of the edges that do not 
touch them.
By symmetry, for every $uv,u'v'\in A$
it holds that $\Reff(u,v)=\Reff(u',v')$,
and the same holds for every pair of edges in $B$.
We will denote by $R_A$ and $R_B$ the resistances
of the edges from $A$ and $B$ respectively.

Let us compute the effective resistance
for each pair according to the sets.
\begin{enumerate}
    \item   $\Reff(s,t)$:
            On the one hand we can suggest a flow
            that splits equally to the neighbors
            of $s$, and then each of the neighbors
            ships the same amount of flow directly to $t$.
            \begin{equation}
                \begin{split}
                    \Reff(s,t) & \leq
                    \frac{1}{(n-2)^2}\cdot\prs{
                    \sum_{x\in N(s)}\frac{1}{\alpha}
                    + \sum_{y\in N(t)}\frac{1}{\alpha}
                    }\\
                    & = \frac{2}{(n-2)\alpha}
                \end{split}
            \end{equation}
            On the other hand we can suggest a potential function
            $\varphi_x = \begin{cases}
            1   &   ,x=s\\
            0   &   ,x=t\\
            1/2   &   ,x\neq s,t
            \end{cases}$.
            \begin{equation}
                \begin{split}
                    \Reff(s,t)^{-1} & \leq
                    \sum_{xy\in E}
                    \alpha\prs{\varphi_x-\varphi_y}^2
                    \\
                    & \leq  2(n-2)\alpha\cdot \frac{1}{4}\\
                    & = \frac{(n-2)\alpha}{2}
                \end{split}
            \end{equation}
            Thus we conclude that
            \begin{equation}
                 \Reff(s,t)
                =
                \frac{2}{(n-2)\alpha}
            \end{equation}

\item   $R_B$:
        Take some vertices $u,v\in V\backslash\{s,t\}$.
        On the one hand we can suggest a flow
        that ships $\tau/2$ amount of flow to
        each of $s$ and $t$, and then ships
        the same amount from $s$ and $t$ to $v$.
        In addition, the flow will send $\sigma$
        on the edge $uv$,
        and then split the rest  of the flow equally 
        between every vertex
        $x\in V\backslash\{s,t,v\}$.
        Thus,
        \begin{equation}
            \begin{split}
                R_B & \leq
                \min_{\tau,\sigma\in [0,1]:\,
                \tau+\sigma\leq 1}
                \prs{2\cdot2\cdot\frac{\prs{\frac{\tau}{2}}^2}{\alpha}
                +\frac{\sigma^2}{\beta}
                +2(n-4)\frac{\prs{\frac{1-\tau-\sigma}{n-4}}^2}{\beta}
                }\\
                & =
                \min_{\tau,\sigma\in [0,1]:\,
                \tau+\sigma\leq 1}
                \prs{\frac{\tau^2}{\alpha}
                +\frac{\sigma^2}{\beta}
                +\frac{2}{(n-4)\beta}(1-\tau-\sigma)^2
                }
            \end{split}
        \end{equation}
        Let us denote the RHS by $f(\tau,\sigma)$, differentiate it
        with respect to $\sigma$, 
        and compare to 0 in order to find a minimum.
        \begin{align*}
            \frac{\partial f}{\partial \sigma}(\tau,\sigma) & = 0\\
            \iff \frac{2\sigma}{\beta}
                -\frac{4}{(n-4)\beta}(1-\tau-\sigma) & = 0\\
            \iff \prs{\frac{n-2}{n-4}}\sigma
            & = \frac{2}{n-4}(1-\tau)\\
            \iff \sigma
            & =\frac{2}{n-2}(1-\tau)
        \end{align*}
        Denote  $\sigma_0 = \frac{2}{n-2}(1-\tau)$,
        and let $f_{\sigma_0}(\tau)
        = f\prs{\tau,\sigma_0}$,
        thus,
        \begin{equation}
            \begin{split}
                f_{\sigma_0}(\tau) & = 
                \frac{\tau^2}{\alpha}
                +\frac{\prs{\frac{2}{n-2}(1-\tau)}^2}{\beta}
                +\frac{2}{(n-4)\beta}
                \prs{1-\tau-\frac{2}{n-2}(1-\tau)}^2\\
                & = \frac{\tau^2}{\alpha}
                +\frac{4}{(n-2)^2\beta}(1-\tau)^2
                +\frac{2}{(n-4)\beta}
                \prs{\prs{1-\tau}\prs{1-\frac{2}{n-2}}}^2\\
                & = \frac{\tau^2}{\alpha}
                +\frac{4}{(n-2)^2\beta}(1-\tau)^2
                +\frac{2(n-4)}{(n-2)^2\beta}
                \prs{1-\tau}^2\\
                & = \frac{\tau^2}{\alpha}
                +\frac{2}{(n-2)^2\beta}(1-\tau)^2\prs{
                2+n-4
                }\\
                & = \frac{\tau^2}{\alpha}
                +\frac{2}{(n-2)\beta}(1-\tau)^2
            \end{split}
        \end{equation}
        Again, let us now differentiate $f_{\sigma_0}$
        and compare to 0 in order to find a minimum.
        \begin{equation}
            \begin{split}
                f_{\sigma_0}'(\tau) & = 0\\
                \iff
                \frac{2}{\alpha}\tau
                -\frac{4}{(n-2)\beta}(1-\tau) & = 0\\
                \iff
                \tau\prs{\frac{1}{\alpha}+\frac{2}{(n-2)\beta}}
                & = \frac{2}{(n-2)\beta}\\
                \iff
                \tau & = \frac{1}{1+\frac{n-2}{2}\cdot \frac{\beta}{\alpha}}
            \end{split}
        \end{equation}
        Thus, we conclude that
        \begin{equation}
            \begin{split}
                R_B
                & \leq \frac{1}{\alpha}\cdot\prs{\frac{1}{1+\frac{n-2}{2}\cdot \frac{\beta}{\alpha}}}^2
                + \frac{2}{(n-2)\beta}\prs{\frac{\frac{n-2}{2}\cdot \frac{\beta}{\alpha}}{1+\frac{n-2}{2}\cdot \frac{\beta}{\alpha}}}^2\\
                & = \frac{\frac{1}{\alpha}\prs{1+\frac{n-2}{2}\cdot \frac{\beta}{\alpha}}}{\prs{1+\frac{n-2}{2}\cdot \frac{\beta}{\alpha}}^2}\\
                & = 
                \frac{1}{ \frac{n-2}{2}\cdot \beta +\alpha}
            \end{split}
        \end{equation}
        On the other hand we can suggest a potential function
        $\varphi_x = \begin{cases}
        1   &   ,x=u\\
        0   &   ,x=v\\
        1/2   &   ,x\neq u,v
        \end{cases}$.
        \begin{equation}
            \begin{split}
                R_B^{-1} & \leq
                \sum_{xy\in E}
                w(xy)\prs{\varphi_x-\varphi_y}^2
                \\
                & =\beta + 2\cdot 2\cdot \alpha\cdot\frac{1}{4} 
                + (2(n-4))\beta\cdot \frac{1}{4}\\
                & = \frac{1}{2}\cdot\prs{2\alpha+(n-2)\beta}
            \end{split}
        \end{equation}
        and thus
        \begin{equation}\label{eq: upper and lower bounds on R_B}
            R_B
            =
            \frac{1}{\alpha+\frac{n-2}{2}\beta}
        \end{equation}

\item   $R_A$:
         Recall that by Foster's Theorem we have
         \begin{equation}
             \sum_{xy\in E}w(xy)\Reff(x,y)=n-1
         \end{equation}
         Thus,
         \begin{align*}
             n-1 & = \sum_{xy\in A}\alpha R_A+\sum_{x'y'\in B}\beta R_B\\
             & = 2(n-2)\alpha R_A+\binom{n-2}{2}\beta R_B
         \end{align*}
         which leads to the conclusion that
         \begin{equation}
             \begin{split}
                 R_A & = \frac{(n-1)-\binom{n-2}{2}\beta R_B}{2(n-2)\alpha}\\
                 & =\frac{1}{2\alpha}\cdot\prs{
                 1+\frac{1}{n-2}-\frac{(n-3)}{2}\beta R_B}
             \end{split}
         \end{equation}
         Hence, by using  (\ref{eq: upper and lower bounds on R_B})
         we get that
         \begin{equation}
                 R_A  = \frac{1}{2\alpha}\cdot\prs{
                 1+\frac{1}{n-2}-  \frac{(n-3)\beta}{2\alpha+(n-2)\beta}
                }
         \end{equation}
\end{enumerate}
Assume towards contradiction that $\frac{\max_{x\neq y}\Reff(x,y)}{\min_{x\neq y}\Reff(x,y)}\leq 1+\frac{1}{10n}$.
On the one hand, by comparing $\Reff(s,t)$ and $R_B$
we can see that
\begin{align*}
    \prs{1+\frac{1}{10n}}& \geq \frac{\frac{2}{(n-2)\alpha}}{\frac{2}{2\alpha+(n-2)\beta}}\\
    & = \frac{2}{n-2}+\frac{\beta}{\alpha}\\
    \implies
    \frac{\beta}{\alpha} & \leq 
     1+\frac{1}{10n}-\frac{2}{n-2}
\end{align*}
On the other hand, by comparing $R_B$
and $R_A$, we see that
\begin{align*}
    \prs{1+\frac{1}{10n}} & \geq \frac{\frac{1}{\alpha+\frac{n-2}{2}\beta}}
    {\frac{1}{2\alpha}\cdot\prs{
    1+\frac{1}{n-2}-  \frac{(n-3)\beta}{2\alpha+(n-2)\beta}}}\\
    & = \frac{\frac{4\alpha}{2\alpha+(n-2)\beta}}
    {1+\frac{1}{n-2}-  \frac{(n-3)\beta}{2\alpha+(n-2)\beta}}\\
    & = \frac{4\alpha}
    {2\alpha+(n-2)\beta+\frac{2\alpha+(n-2)\beta}{n-2}-  (n-3)\beta}\\
    & = \frac{2}{1+\frac{1}{n-2}+\frac{\beta}{\alpha}}
\end{align*}
Thus,
\begin{align*}
    \frac{\beta}{\alpha}
    +\frac{1}{n-2}+1
    \geq 2\prs{1-\frac{1}{10n+1}}\\
    \iff
    \frac{\beta}{\alpha}\geq
    1-\frac{1}{n-2}-\frac{2}{10n+1}
\end{align*}
Hence, we conclude that
\begin{equation}
    1-\frac{1}{n-2}-\frac{2}{10n+1}
    \leq \frac{\beta}{\alpha}
    \leq  1-\frac{2}{n-2}+\frac{1}{10n}
\end{equation}
Which is a contradiction.
\end{proof}

\subsubsection{Discussion.}
We remark that (\ref{eq: variant where inner product is small enough})
can be viewed as follows.
\begin{equation}
    \begin{split}
        \sum_{x\in V}\deg_w(x)\cdot N_x
        & \leq \frac{4m\cdot w(E)}{n}-2w(E)\\
        \iff
        \sum_{x\in V}\frac{\deg_w(x)}{2w(E)}\cdot N_x
        & \leq \frac{2m}{n} - 1\\
        \iff
        \Ex{x\sim \deg_w(x)}\brs{N_x}
        & \leq \Ex{x\in V}\brs{N_x} -1
    \end{split}
\end{equation}
where by $x\sim \deg_w(x)$ we mean that a vertex
$x\in V$ is sampled with probability proportional
to its weighted degree.

Recall that as we mentioned earlier,
the proof of 
Theorem \ref{theorem: stronger lower bound on special cases}
essentially shows that 
if (\ref{eq: proof of generalized claim with covariance}) holds,
then the lower bound of $1+\frac{1}{O(n)}$ holds.
Note that (\ref{eq: proof of generalized claim with covariance})
can be viewed as follows.
\begin{equation}
    \begin{split}
        \sum_{x\in V}\deg_w(x)\cdot N_x
        & \geq \frac{4m\cdot w(E)}{n}\\
        \iff
        \sum_{x\in V}\frac{\deg_w(x)}{2w(E)}\cdot N_x
        & \geq \frac{2m}{n}\\
        \iff
        \Ex{x\sim \deg_w(x)}\brs{N_x}
        & \geq \Ex{x\in V}\brs{N_x}
    \end{split}
\end{equation}
A summary of our results is presented in Table \ref{tab:summary of lower bounds for resistance sparsifiers}.
\begin{table}[bht]
    \centering
    \begin{tabular}{|l|c|}
        \hline
         Condition & Lower Bound on $\frac{\max \Reff}{\min \Reff}$ \\
         \hline
         All graphs except the clique & $1+\frac{1}{O(n^2)}$\\
         \hline
         $\min_{x\in V}\deg_w(x)\leq \frac{\overline{D_w}}{2}
         \prs{1+\frac{1}{(1+O(1))n}}$
         & $1+\frac{1}{O(n)}$\\
         $\min_{x\neq y\in V}\prs{\deg_w(x)+\deg_w(y)+2w(xy)}\leq 2\overline{D_w}
         \prs{1+\frac{1}{(1+O(1))n}}$
         & $1+\frac{1}{O(n)}$\\
         $\Ex{x\sim \deg_w(x)}\brs{N_x}
         \geq \Ex{x\in V}\brs{N_x}$ & $1+\frac{1}{O(n)}$\\
         $\Ex{x\sim \deg_w(x)}\brs{N_x}
         \leq \Ex{x\in V}\brs{N_x}-1$  & $1+\frac{1}{O(n)}$\\
         Symmetric case & $1+\frac{1}{O(n)}$\\
         \hline
    \end{tabular}
    \caption{Summary of lower bounds for resistance sparsifiers
    of the clique.}
    \label{tab:summary of lower bounds for resistance sparsifiers}
\end{table}

\subsubsection{Upper Bound for Resistance Sparsifier of the Clique}\label{subsection - resistance sparsifier for clique upper bound}
In this subsection we complete the discussion about 
Question \ref{oq: is tilde O(n/eps) edges tight?}
by showing the upper bound of $O(n/\varepsilon)$
edges for resistance sparsifier of the clique,
as observed in \cite{dinitz2015towards}.
\begin{claim}\label{claim: clique is not such a tight example for resistance sparsifiers}
The  clique $K_n$
admits an $\varepsilon$-resistance sparsifier with $O\prs{n/\varepsilon}$
edges.
\end{claim}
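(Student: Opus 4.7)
I would take the sparsifier to be a weighted near-Ramanujan expander. Fix $d := \lceil c/\varepsilon \rceil$ for a sufficiently large absolute constant $c$; if $d \geq n-1$ the claim is trivial (take $G' = K_n$ itself), so assume $d < n$. Pick $H$ to be a $d$-regular graph on $V$ whose adjacency matrix $A_H$ satisfies $|\lambda_i(A_H)| \leq 3\sqrt{d}$ for every non-trivial eigenvalue -- such graphs exist for every $(n,d)$, either explicitly (Lubotzky--Phillips--Sarnak) or via Friedman's theorem on uniformly random $d$-regular graphs, absorbing the $1+o(1)$ slack into $c$. Define $G' := (V, E(H), w)$ with uniform edge weight $w := n/d$. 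This has $nd/2 = O(n/\varepsilon)$ edges, matching the desired sparsity.

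To compute $\Reff_{G'}(s,t)$, I would use $L_{G'} = w(dI - A_H)$ and expand in an orthonormal eigenbasis $v_1,\dots,v_n$ of $A_H$, with $v_1 = \OneVec/\sqrt{n}$. Since $\chi_s - \chi_t \perp \OneVec$ and $\|\chi_s - \chi_t\|^2 = 2$,
\begin{equation*}
    \Reff_{G'}(s,t)
    = \frac{1}{wd}\sum_{i \geq 2}\frac{\langle \chi_s-\chi_t,v_i\rangle^2}{1-\lambda_i/d}
    = \frac{1}{wd}\sum_{k \geq 0}\frac{1}{d^k}(\chi_s-\chi_t)^T A_H^k (\chi_s-\chi_t),
\end{equation*}
where the Neumann expansion is valid because $|\lambda_i/d| \leq 3/\sqrt{d} < 1$ once $c$ is large. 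The $k=0$ term is $\frac{2}{wd} = \frac{2}{n} = \Reff_{K_n}(s,t)$ exactly. The $k=1$ term equals $\frac{-2\, A_H(s,t)}{wd^2}$, which has absolute value at most $\frac{2}{wd^2} = \frac{2}{nd}$ because $A_H(s,t) \in \{0,1\}$. For $k \geq 2$ I would use that $A_H$ preserves $\OneVec^\perp$ with operator norm at most $3\sqrt{d}$ there, so by Cauchy--Schwarz $|(\chi_s-\chi_t)^T A_H^k (\chi_s-\chi_t)| \leq 2(3\sqrt{d})^k$, and the tail sums to at most $\frac{2}{wd} \cdot \frac{(3/\sqrt{d})^2}{1 - 3/\sqrt{d}} = O(1/d) \cdot \frac{2}{n}$. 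Combining, $\Reff_{G'}(s,t) = \frac{2}{n}\bigl(1 \pm O(1/d)\bigr)$, and choosing $c$ large enough places this inside $(1\pm\varepsilon)\Reff_{K_n}(s,t)$.

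The conceptual point -- and the reason the edge count is $O(n/\varepsilon)$ rather than the $O(n/\varepsilon^2)$ one would get from demanding a full spectral sparsifier -- is that both correction mechanisms produce a $1/d$-sized relative error rather than the worst-case $1/\sqrt{d}$ predicted by the spectral gap alone: the $k=1$ term does so because the entry $A_H(s,t)$ is trivially bounded by $1$ (a gain of $\sqrt{d}$ over the operator bound $\lambda_2 \approx \sqrt{d}$), and the $k\geq 2$ tail does so because it starts at $k=2$ so it carries \emph{two} factors of $\lambda_2/d = O(1/\sqrt{d})$. The main technical obstacle I anticipate is producing a $d$-regular near-Ramanujan graph for arbitrary $(n,d)$ with $nd$ even; I would handle this by a black-box appeal to Friedman's theorem and by adjusting $d$ by $\pm 1$ to address the parity issue, both of which only affect the constant in the $O(n/\varepsilon)$ edge count.
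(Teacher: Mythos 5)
Your proposal is correct, and the construction itself is the same one the paper uses: a $\Theta(1/\varepsilon)$-regular expander with uniform edge weight $\Theta(\varepsilon n)$ (your $w=n/d$ with $d=\lceil c/\varepsilon\rceil$ is exactly this). Where you diverge is in the verification. The paper does not compute anything: it invokes, as a black box, Proposition 5 of von Luxburg, Radl and Hein, which states $\Reff(s,t)\approx \frac{1}{\deg_w(s)}+\frac{1}{\deg_w(t)}$ up to a relative error of $\frac{w_{\max}}{d_{\min}}\bigl(\frac{1}{\lambda_2}+2\bigr)$; with uniform weights this error is $O(1/d)$ once the normalized spectral gap is $\Omega(1)$, and since $\frac{1}{\deg_w(s)}+\frac{1}{\deg_w(t)}=\frac{2}{n}=\Reff_{K_n}(s,t)$ the claim follows. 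Your argument instead proves the needed resistance estimate from scratch: writing $L_{G'}=w(dI-A_H)$, expanding $\Reff_{G'}(s,t)$ as a Neumann series, and observing that the $k=1$ term is saved by the entrywise bound $A_H(s,t)\le 1$ while the $k\ge 2$ tail carries two factors of $\lambda/d=O(1/\sqrt d)$ — which is precisely why the relative error is $O(1/d)$ rather than the $O(1/\sqrt d)$ a spectral-sparsification argument would give, the same phenomenon hidden inside the cited proposition. The trade-off is that your route is self-contained and makes the $1/d$ mechanism transparent, at the cost of needing an explicit near-Ramanujan (or Friedman-type) eigenvalue bound, whereas the paper's route only needs $\lambda_2=\Omega(1)$ for some $d$-regular expander but leans on an external result. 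Both are valid proofs of the claim.
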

Claim \ref{claim: clique is not such a tight example for resistance sparsifiers} follows from the fact that $\Reff(s,t)=\frac{2}{n}$
for all pairs of vertices in the clique (Fact \ref{fact: R_eff-2/n in K_n})
and the following theorem presented 
in \cite{von2014hitting}.
\begin{theorem}[Proposition 5 in \cite{von2014hitting}]\label{thm: proposition 5 from VLRH14 - R_eff is essentialy 1/deg+1/deg}
Let $G=(V,E,w)$ be a graph.
For every vertex $x\in V$
denote its weighted degree by $\deg_w(x)=\sum_{y\in N(x)}w(xy)$.
Denote the minimum weighted degree by $d_{min}$
and the maximal edge weight by $w_{max}$.
Denote by $\lambda_2(G)$
the second smallest eigenvalue of the normalized
Laplacian of $G$.
Then
\begin{equation}
    \abs{\Reff(s,t)-\prs{\frac{1}{\deg_w(s)}+\frac{1}{\deg_w(t)}}}
    \leq \frac{w_{max}}{d_{min}}\prs{\frac{1}{\lambda_2(G)}+2}\prs{\frac{1}{\deg_w(s)}+\frac{1}{\deg_w(t)}}
\end{equation}
\end{theorem}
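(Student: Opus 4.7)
The plan is to produce an explicit $\varepsilon$-resistance sparsifier of $K_n$ and verify the approximation guarantee by appealing directly to Theorem~\ref{thm: proposition 5 from VLRH14 - R_eff is essentialy 1/deg+1/deg}. The starting observation is that in the clique, by Fact~\ref{fact: R_eff-2/n in K_n}, every pair has the same resistance $\Reff_{K_n}(s,t)=\tfrac{2}{n}$, so we only need to build a sparse weighted graph $G'$ on the vertex set of $K_n$ whose effective resistance between \emph{every} pair of vertices is $\tfrac{2}{n}(1\pm\varepsilon)$.

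Fix an integer $k=\Theta(1/\varepsilon)$ (chosen large enough; we pin it down below). Take any $k$-regular expander $H$ on $n$ vertices with second-smallest normalized-Laplacian eigenvalue $\lambda_2(H)=\Omega(1)$; for instance, a Ramanujan graph or a random $k$-regular graph works as soon as $k$ exceeds a suitable absolute constant, and one can handle the remaining tiny values of $k$ (equivalently, constant $\varepsilon$) by taking $G'=K_n$ itself. Assign every edge of $H$ the common weight $w = n/k$, and let $G'$ be the resulting weighted graph. By $k$-regularity we have $\deg_w(v)=k\cdot(n/k)=n$ for every vertex $v$, so $d_{\min}=n$ and $w_{\max}=n/k$, and thus $w_{\max}/d_{\min}=1/k$.

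Now I plug these quantities into the von Luxburg–type bound of Theorem~\ref{thm: proposition 5 from VLRH14 - R_eff is essentialy 1/deg+1/deg}: for every pair $s,t$,
\begin{equation*}
\left|\Reff_{G'}(s,t)-\tfrac{2}{n}\right|
= \left|\Reff_{G'}(s,t)-\left(\tfrac{1}{\deg_w(s)}+\tfrac{1}{\deg_w(t)}\right)\right|
\leq \tfrac{1}{k}\!\left(\tfrac{1}{\lambda_2(H)}+2\right)\!\cdot\tfrac{2}{n}.
\end{equation*}
Since $\lambda_2(H)=\Omega(1)$, the parenthesized factor is $O(1)$, so the right-hand side is at most $\tfrac{C}{k}\cdot\tfrac{2}{n}$ for some absolute constant $C$. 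Choosing $k=\lceil C/\varepsilon\rceil$ makes this at most $\varepsilon\cdot\tfrac{2}{n}=\varepsilon\Reff_{K_n}(s,t)$, establishing the $(1\pm\varepsilon)$ guarantee simultaneously for every pair $s,t\in V$. The graph $G'$ has $nk/2=O(n/\varepsilon)$ edges, as desired.

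The only non-routine point is the availability of a $k$-regular $n$-vertex expander with constant spectral gap for the chosen $k$; this is standard (Ramanujan constructions for appropriate parameters, or probabilistic existence for random $k$-regular graphs), and the parity/divisibility nuisances (needing $nk$ to be even, say) are handled by adjusting $k$ by $1$ when necessary, which affects constants only. Everything else is just substitution into the cited theorem.
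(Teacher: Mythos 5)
There is a genuine gap here: your proposal does not prove the stated result at all. The statement you were asked to prove is the von Luxburg--Radl--Hein bound itself (Proposition 5 of \cite{von2014hitting}), which asserts that for an \emph{arbitrary} weighted graph, the effective resistance between any two vertices deviates from $\frac{1}{\deg_w(s)}+\frac{1}{\deg_w(t)}$ by at most a factor controlled by $\frac{w_{max}}{d_{min}}\bigl(\frac{1}{\lambda_2(G)}+2\bigr)$. Your argument takes this theorem as a black box and uses it to construct an $O(n/\varepsilon)$-edge resistance sparsifier of the clique from a $\Theta(1/\varepsilon)$-regular expander with uniform weight $\Theta(\varepsilon n)$ --- that is, you have reproduced (in more detail) the paper's one-line remark following the theorem, which is the proof of Claim \ref{claim: clique is not such a tight example for resistance sparsifiers}, not a proof of the theorem itself. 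With respect to the assigned statement the argument is circular: the cited bound is exactly what remains to be established.

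An actual proof of the statement has to work at the level of the Laplacian pseudoinverse. One writes $\Reff(s,t)=(\chi_s-\chi_t)^T L^{\dagger}(\chi_s-\chi_t)$, passes to the normalized Laplacian $\mathcal{L}=D^{-1/2}LD^{-1/2}$, and observes that the ``main term'' $\frac{1}{\deg_w(s)}+\frac{1}{\deg_w(t)}$ arises from replacing $\mathcal{L}$ by the identity on the space orthogonal to its kernel; the error is then the quadratic form of $\mathcal{L}^{\dagger}-(I-\Pi)$ (with $\Pi$ the projection onto the kernel direction) evaluated at $D^{-1/2}(\chi_s-\chi_t)$, and bounding that error is where the spectral gap $\lambda_2(G)$ and the ratio $w_{max}/d_{min}$ enter, following the argument of \cite{von2014hitting}. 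None of this machinery appears in your write-up, so as a proof of the stated theorem it is missing the entire content; as a proof of the downstream clique-sparsifier claim it is fine and matches the paper's intended use of the theorem.
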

Since for expanders $\lambda_2= \Omega(1)$,
an $\Theta\prs{\varepsilon^{-1}}$-regular expander with
all edges having the same weight $\Theta(\varepsilon\cdot  n)$
is an $\varepsilon$-resistance sparsifier for $K_n$.

\begin{fact}\label{fact: R_eff-2/n in K_n}
Let $G=K_n$ be a clique over $n$ vertices.
Then 
\[
    \forall s,t\in V,\quad
    \Reff(s,t) = \frac{2}{n}.
\]
\end{fact}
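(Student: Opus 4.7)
The plan is to exploit the high symmetry of $K_n$ and reduce to Foster's Theorem, which is already used elsewhere in the paper. First I would observe that $K_n$ with unit edge weights is vertex-transitive and in particular edge-transitive: for any two edges $st$ and $s't'$, there is a graph automorphism of $K_n$ mapping one to the other. Since effective resistance is invariant under graph automorphisms (this is immediate from the variational definition via potentials that the paper uses, as automorphisms preserve both the edge set and the weight function), this implies that $\Reff(s,t)$ takes the same value $r$ for every pair $s \neq t \in V$.

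Next I would invoke Foster's Theorem, which states that
\[
    \sum_{xy \in E} w(xy)\,\Reff(x,y) = |V| - 1 = n-1.
\]
Since all weights are $1$, there are $\binom{n}{2}$ edges, and all resistances equal the common value $r$, this immediately gives $\binom{n}{2}\, r = n-1$, hence $r = \frac{2(n-1)}{n(n-1)} = \frac{2}{n}$.

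There is essentially no obstacle here: the only subtlety is the symmetry step, which requires noting that in $K_n$ every pair of vertices is an edge, so edge-transitivity covers all pairs at once and no separate argument for non-adjacent vertices is needed. As a sanity check, one could alternatively construct an explicit optimal flow: by symmetry, the minimizer sends some amount $\alpha$ directly along $st$ and equal amounts $(1-\alpha)/(n-2)$ through each two-edge path $s \to v \to t$, giving energy $\alpha^2 + 2(1-\alpha)^2/(n-2)$. Minimizing in $\alpha$ recovers $\frac{2}{n}$, but the Foster's Theorem route is by far the shortest and fits the paper's existing toolkit.
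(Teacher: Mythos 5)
Your proof is correct and follows essentially the same route as the paper, which also argues by symmetry that all pairwise resistances in $K_n$ are equal and then applies Foster's Theorem to conclude $\Reff(s,t)=\frac{n-1}{\binom{n}{2}}=\frac{2}{n}$. Your additional justification of the symmetry step via edge-transitivity and the sanity-check via an explicit flow are fine elaborations but not a different argument.
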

\begin{proof}
By symmetry and Foster's theorem,
\begin{equation}
    \Reff(s,t)
    = \frac{\sum_{xy\in E}\Reff(x,y)}{\binom{n}{2}}
    = \frac{n-1}{\binom{n}{2}} = \frac{2}{n}.
\end{equation}
\end{proof}

\subsection{Flow Metric Sparsifiers}\label{section - flow metric sparsifiers}

In this subsection we show that for some values of  $p$
there exists a $d_p$-sparsifier, where the sparsity of the 
graph depends on  $p$ and some additional parameters of the problem.
This is in fact Theorem \ref{intro: thm: main theorem - existence of d_p sparsifiers}, which we restate here for clarity.
\begin{theorem}\label{thm: main theorem - existence of d_p sparsifiers}
Let $G=(V,E,w)$ be a graph, fix $p\in \left(\frac{4}{3},\infty\right]$
having H\"older conjugate $q$,
and let $\varepsilon>0$.
Then there exists a graph $G'=(V,E',w')$ 
that is a $d_p$-sparsifier of $G$, i.e.
\begin{equation}\label{eq: sparsifier concentration guarantee in our main thm}
    \forall s,t\in V,\quad
    d_{p,G'}(s,t)\in\prs{1\pm \varepsilon}d_{p,G}(s,t),
\end{equation}
and has $|E'|=f(n,\varepsilon,p)$ edges,
where
\begin{equation}
    f(n,\varepsilon,p) = 
    \begin{cases}
        n-1     & \text{if }p=\Omega\prs{\varepsilon^{-1} \log n},\\
        O\prs{n\log(n/\varepsilon)\prs{\log\log(n/\varepsilon)}^2 \varepsilon^{-2}}& 
        \text{if }2<p<\infty,\\
        O\prs{n^{q/2}\log(n)\log(1/\varepsilon)\varepsilon^{-5}} & 
        \text{if }\frac{4}{3} < p < 2.
    \end{cases}
\end{equation}
\end{theorem}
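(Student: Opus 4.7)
The plan is to exploit the dual formulation of the flow metric to reduce the construction of a $d_p$-sparsifier to an $\ell_q$-subspace embedding of the weighted edge-vertex incidence matrix, and then to invoke the $\ell_q$ row-sampling theorem of Cohen and Peng \cite{cohen2015lp}. Indeed, by Claim \ref{claim: d_p = 1 / bar(d)_p} together with \eqref{eq: bar(d)_p metric as convex optimization problem}, the distance $d_{p,G}(s,t)$ depends on $G$ only through the map $\varphi \mapsto \norm{WB\varphi}_q$ on $\R^V$. Hence, if $G'=(V,E',w')$ has incidence matrix $B'$ and weight matrix $W'$ such that
\begin{equation*}
    \forall \varphi \in \R^V,\qquad \norm{W'B'\varphi}_q \in (1 \pm \varepsilon)\,\norm{WB\varphi}_q,
\end{equation*}
then $G'$ is automatically an $\varepsilon$-sparsifier for $d_p$. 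This reduces the theorem to building an $\ell_q$-subspace embedding of the operator $A := WB$.

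The first regime, $p \geq 4\varepsilon^{-1}\log n$, follows immediately from Corollary \ref{corl: large p is essentialy infinity}: the metric $d_p$ is already a $(1+\varepsilon)$-approximation of $d_\infty$, and a Gomory-Hu tree of $G$, which has $n-1$ edges and preserves every minimum $st$-cut exactly, is therefore a $d_p$-sparsifier of the claimed size.

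For the two remaining regimes I would apply the main row-sampling result of \cite{cohen2015lp} to the matrix $A=WB$. Via Lewis-weight importance sampling, they construct a nonnegative diagonal matrix $S$ supported on few rows such that, with high probability, $\norm{SA\varphi}_q = (1\pm\varepsilon)\norm{A\varphi}_q$ simultaneously for all $\varphi$; the number of nonzero entries of $S$ is $O(d\log(d/\varepsilon)(\log\log(d/\varepsilon))^2/\varepsilon^2)$ when $q\in[1,2]$ and $O(d^{q/2}\log d \cdot \log(1/\varepsilon)/\varepsilon^5)$ when $q\in(2,\infty)$, where $d$ is the effective rank of $A$. Each row of $A$ has the form $w(e)(\chi_{\text{head}(e)}-\chi_{\text{tail}(e)})$, so scaling that row by $S_{e,e}$ is precisely the operation of retaining edge $e$ with new weight $S_{e,e}\cdot w(e)$. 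Taking $G'$ to be the graph on $V$ whose edges are the support of $S$ with these modified weights then realizes the required subspace embedding. Substituting $d=n-1$ and translating between H\"older conjugates ($q\in(1,2)$ for $p\in(2,\infty)$ and $q\in(2,4)$ for $p\in(4/3,2)$) recovers the two non-trivial edge-count bounds of the theorem; the cutoff $p>4/3$ simply ensures $n^{q/2}<n^2$ so that the bound improves upon the trivial $\binom{n}{2}$.

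The main technical point that requires some care is the correct choice of effective dimension. Although $A=WB$ has $n$ columns, its kernel is exactly $\text{span}\{\OneVec\}$, so $\norm{A\varphi}_q$ depends only on $\varphi$ modulo constants, and the Cohen-Peng guarantee must be invoked with $d=\text{rank}(A)=n-1$ in order to match the claimed bounds; otherwise one would lose a factor of roughly $n^{q/2-1}$. This is handled by restricting attention to an orthogonal complement of $\text{span}\{\OneVec\}$, on which $A$ acts with full column rank, and observing that any $\ell_q$-subspace embedding produced there automatically extends to all of $\R^V$ since both sides of the subspace-embedding inequality are invariant under adding a multiple of $\OneVec$ to $\varphi$. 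Everything else is bookkeeping of the polylogarithmic and $\varepsilon$-factors from \cite{cohen2015lp}, together with verifying, via Claim \ref{claim: d_p = 1 / bar(d)_p} applied in both $G$ and $G'$, that the multiplicative $(1\pm\varepsilon)$-distortion in $\norm{\cdot}_q$ translates to the matching distortion of $d_p$ in \eqref{eq: sparsifier concentration guarantee in our main thm}.
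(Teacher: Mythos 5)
Your proposal is correct and follows essentially the same route as the paper: reduce to preserving $\norm{WB\varphi}_q$ for all potentials via Claim \ref{claim: d_p = 1 / bar(d)_p}, apply the Cohen--Peng $\ell_q$ row-sampling theorem to $A=WB$ (reinterpreting the sampled, rescaled rows as reweighted edges), and handle $p=\Omega(\varepsilon^{-1}\log n)$ with a Gomory--Hu tree via Corollary \ref{corl: large p is essentialy infinity}. The only remark is that your "effective dimension $n-1$" caveat is unnecessary: the scores in Theorem \ref{thm: thm 7.1 from cohen and peng - matrix concentration by lp row sampling} already sum to at most the number of columns $n$, so invoking it directly with dimension $n$ yields $N\leq n\cdot g(n,\varepsilon,q)$, which matches the stated edge counts without any restriction to the complement of $\text{span}\{\OneVec\}$.
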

We remark that in general, the last row of the table
applies for all $1<p<2$, but for $1<p\leq \frac{4}{3}$
it gives trivial bounds since this implies that $q>4$
and thus we have $n^{q/2}>n^2$.
Moreover, we remark that as $p$ tends to $\infty$,
the $d_p$ metric tends to the inverse of $\mincut(s,t)$ (ultra)metric,
for which there exists cut sparsifiers
\cite{karger1993global, benczur1996approximating},
that preserve \textit{all} of the cuts in the graph,
and for which there exists a lower bound
\cite{carlson2019optimalLowerBoundForSketchingCuts}
of $\Omega(n/\varepsilon^2)$ edges.
However in our case, in order to preserve the $d_\infty$
metric, it suffices to preserve only 
the minimum-$st$-cuts.
In addition, note that the case $p=2$ is the special case where
$d_2^2$ is in fact the resistance distance, for which \cite{chu2020shortCycleLongPaper}
showed the existence of a resistance sparsifier
with $\widetilde{O}\prs{n\varepsilon^{-1}}$ edges.

We now turn to proving Theorem
\ref{thm: main theorem - existence of d_p sparsifiers}
which follows easily from a theorem
of Cohen and Peng \cite{cohen2015lp}.
\begin{theorem}[Theorem 7.1 in \cite{cohen2015lp}]\label{thm: thm 7.1 from cohen and peng - matrix concentration by lp row sampling}
Given a matrix $A\in \R^{m\times n}$ and parameters $q\in (1,\infty),
\varepsilon>0$, there exists a set of scores $\curprs{\tau_i(A,q)}_{i=1}^m$
summing up to at most $n$,
such that for any set of sampling values $\curprs{\sigma_i}_{i=1}^m$
satisfying
\[
    \sigma_i \geq \tau_i(A,q)\cdot g(n,\varepsilon, q)
\]
if we generate a matrix $S$ with $N=\sum_{i=1}^m\sigma_i$
rows, each chosen independently as $\frac{1}{\sigma_i^{1/q}}\cdot \overrightarrow{e_i}$ with probability $\frac{\sigma_i}{N}$
($\overrightarrow{e_i}\in\R^m$ is the $i^{th}$ basis vector),
then with probability at least $1-\frac{1}{n^{\Omega(1)}}$ 
we have
\[
    \forall \varphi\in \R^n,\quad
    \norm{SA\varphi}_q\in (1\pm \varepsilon)\norm{A\varphi}_q
\]
where
\begin{equation}
    g(n,\varepsilon,q) = 
    \begin{cases}
        \log(n) \varepsilon^{-2}     &
        \text{if }q=1,\\
       \log(n/\varepsilon)\prs{\log(\log(n/\varepsilon))}^2 \varepsilon^{-2}& 
        \text{if }1<q<2,\\
        n^{\frac{q}{2}-1}\log(n)\log(1/\varepsilon)\varepsilon^{-5} & 
        \text{if }2 < q.
    \end{cases}
\end{equation}
\end{theorem}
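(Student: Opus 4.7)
The plan is to define the scores $\tau_i(A,q)$ to be the $\ell_q$ Lewis weights of $A$, i.e., the unique non-negative scores $w_1,\ldots,w_m$ satisfying the fixed-point equation
\[
  w_i \;=\; \bigl(a_i^\top (A^\top W^{1-2/q} A)^{-1} a_i\bigr)^{q/2},
\]
where $W=\mathrm{diag}(w_i)$ and $a_i^\top$ denotes the $i$-th row of $A$. First I would establish existence and uniqueness via a convex variational characterization (maximizing $\log\det(A^\top W^{1-2/q}A)$ on the simplex $\{w\ge 0:\sum_i w_i\le n\}$ when $q<2$, and a dual maximization when $q>2$), and derive the bound $\sum_i w_i\le n$ directly from the first-order optimality conditions.

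The unifying reduction is the Lewis normalization: set $\widetilde A := W^{1/2-1/q} A$. A short computation using the fixed-point equation shows that the ordinary $\ell_2$ leverage scores of $\widetilde A$ equal exactly $w_i$, so $\sum_i w_i = \mathrm{rank}(A)\le n$ and $\widetilde A$ is well-conditioned in the leverage-score sense. This reduces the $\ell_q$ subspace-embedding problem for $A$ to an $\ell_q$ concentration problem for $\widetilde A$ whose rows have controlled $\ell_2$ influence, giving the crucial a priori bound $|\tilde a_i^\top \varphi| \le w_i^{1/2}\|\widetilde A\varphi\|_2$.

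For a fixed $\varphi$ I would write $\|SA\varphi\|_q^q = \sum_{j=1}^N Y_j$ as a sum of $N$ i.i.d.\ variables with $Y_j = \sigma_{i_j}^{-1}|a_{i_j}^\top \varphi|^q$, verify that $\mathbb{E}\bigl[\sum_j Y_j\bigr] = \|A\varphi\|_q^q$, and bound the magnitude and variance of each $Y_j$ in terms of $\|\widetilde A\varphi\|_2^q$ using the Lewis normalization. For $q\in(1,2]$ a Bernstein-type inequality applied on the simplex of samples, combined with an $\varepsilon$-net of size $(O(1/\varepsilon))^n$ over the unit $\ell_q$ sphere of $\mathrm{col}(A)$ and a Lipschitz argument in $\varphi$, then gives $(1\pm\varepsilon)$-concentration uniformly with the claimed sample count $\mathrm{polylog}(n/\varepsilon)\,\varepsilon^{-2}$. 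The case $q=1$ follows from a Talagrand-style moment bound on $\ell_1$ embeddings of $n$-dimensional subspaces.

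The hard part is the range $q>2$, where a single summand $Y_j$ can dominate the variance and a direct Bernstein bound is suboptimal; this is where the $n^{q/2-1}$ factor enters via $\|\widetilde A\varphi\|_\infty^q \le n^{q/2-1}\|\widetilde A\varphi\|_2^q$. Here I would follow a two-stage strategy: first obtain a coarse constant-factor subspace embedding with $\widetilde{O}(n^{q/2-1})$ samples using a crude moment bound, then use this preliminary embedding to replace the exponential-size $\varepsilon$-net by a polynomial-size net (by projecting the sphere onto a coarser low-distortion embedded copy), and re-run the Bernstein step at scale $\varepsilon$ to get the final $(1\pm\varepsilon)$ distortion. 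The $\varepsilon^{-5}$ exponent arises from balancing net resolution against concentration in this bootstrap, and the $\log(1/\varepsilon)$ from the union bound. Carrying out this chaining/bootstrapping for $q>2$ and verifying the precise polylogarithmic factors in each regime is the main technical obstacle; Lewis weights are the single device that makes all three regimes ($q=1$, $1<q<2$, $q>2$) yield to essentially the same sampling template.
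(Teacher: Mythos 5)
This statement is not proved in the paper at all: it is quoted verbatim as Theorem 7.1 of \cite{cohen2015lp} and used as a black box in the proof of Theorem \ref{thm: main theorem - existence of d_p sparsifiers}, so the only meaningful comparison is with the argument in that source. Your skeleton — taking $\tau_i(A,q)$ to be the $\ell_q$ Lewis weights, getting $\sum_i\tau_i\le n$ because the Lewis change of density $\widetilde A=W^{1/2-1/q}A$ turns them into ordinary $\ell_2$ leverage scores, and exploiting the resulting pointwise bound $|\tilde a_i^{\top}\varphi|\le w_i^{1/2}\|\widetilde A\varphi\|_2$ — is exactly the Cohen--Peng route, so at the level of architecture you have reconstructed the right proof.

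The genuine gap is in the concentration step you propose for $1<q<2$. For a fixed $\varphi$, Bernstein with the Lewis-weight bounds gives failure probability roughly $\exp(-c\,\varepsilon^2 g)$; taking a union bound over a single-scale net of size $(O(1/\varepsilon))^n$ therefore forces $g\gtrsim n\log(1/\varepsilon)\varepsilon^{-2}$, i.e.\ a total of $N=\sum_i\sigma_i\gtrsim n^2\log(1/\varepsilon)\varepsilon^{-2}$ rows — a full factor of $n$ worse than the claimed $n\log(n/\varepsilon)(\log\log(n/\varepsilon))^2\varepsilon^{-2}$. No Lipschitz refinement of a single net can recover this; the $(\log\log)^2$ bound for $1\le q<2$ comes from multi-scale chaining/majorizing-measure results on embedding $n$-dimensional subspaces of $L_q$ (Talagrand, Bourgain--Lindenstrauss--Milman, Ledoux--Talagrand), and the actual contribution of Cohen and Peng in this regime is precisely the reduction, via Lewis weights, to those theorems rather than any net-plus-Bernstein argument. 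Your $q>2$ plan is closer in spirit to what is needed (a coarse embedding followed by refinement, with the $n^{q/2-1}$ flattening loss), but the claim that the preliminary embedding lets you replace the exponential net by a polynomial one is asserted, not argued, and it is exactly where the work lies; as written, neither regime's sample bound is actually established by the proposal.
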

Note that for the proof of existence,
we can set $\sigma_i = \tau_i(A,q)\cdot g(n,\varepsilon, q)$
and thus
a sufficient number of rows in the matrix $S$
will satisfy $N=\sum_{i=1}^m\tau_i(A,q)g(n,\varepsilon,q) \leq n\cdot g(n,\varepsilon,q)$.
We are ready to present the proof of Theorem
\ref{thm: main theorem - existence of d_p sparsifiers}.

\begin{proof}(of Theorem \ref{thm: main theorem - existence of d_p sparsifiers})
We will first deal with the last two rows in the table of the theorem
($\frac{4}{3} < p < \infty$).
Let $A=WB$ where $W_{m\times m}$ is the diagonal weight matrix
and $B_{m\times n}$ is the signed edge-vertex incidence matrix of $G$.
By Theorem \ref{thm: thm 7.1 from cohen and peng - matrix concentration by lp row sampling},
there exists a matrix $S$ with $N$ rows
(same $N$ as defined in Theorem 
\ref{thm: thm 7.1 from cohen and peng - matrix concentration by lp row sampling})
where each row is a reweighted basis vector,
such that
\begin{equation}\label{eq: concentration guarantee from thm 7.1 cp15}
    \forall \varphi\in \R^n,\quad
    \norm{SWB\varphi}_q\in (1\pm \varepsilon)\norm{WB\varphi}_q
\end{equation}
with $q$ being the H\"older conjugate of $p$.
Note that $q=\frac{p}{p-1}$, and thus the case 
$2 < p < \infty$ corresponds to the case $1<q<2$,
and the case $2< q$ corresponds to the case $1<p<2$.

Let $W'=\prs{S^TS}^{1/2}W$, and note that 
it is a  diagonal 
matrix of dimensions $m\times m$
where the entries on the diagonal are 
in fact the weights of the edges corresponding to the entries
in $SW$.
Let $G'$ be the graph defined by the weights $W'$,
and let $d_p'$ be the flow metric on $G'$.
It is important to note that since $S$ consists of
$N$ rows, then $W'$ has at most $N$ 
non-zero entries,
and moreover it is easy to see that
for any $\varphi\in \R^n$
it holds that $\norm{W'B\varphi}_q=\norm{SWB\varphi}_q$.

Next, recall that by Claim \ref{claim: d_p = 1 / bar(d)_p}
\begin{equation}
    \forall s\neq t\in V,\quad
    d_p(s,t)
    = \prs{\min_{\varphi_s-\varphi_t=1}\norm{WB\varphi}_q}^{-1}.
\end{equation}
and similarly for $d'_p$ using $W'$.

Now, we can see that
\begin{equation}
    \begin{split}
    d'_p(s,t)^{-1} &= \min_{\varphi_s-\varphi_t=1}\norm{W'B\varphi}_q\\
    & \leq (1+\varepsilon)\min_{\varphi_s-\varphi_t=1}\norm{WB\varphi}_q\\
    & =  (1+\varepsilon)d_p(s,t)^{-1}
    \end{split}
\end{equation}
where the inequalities follow because
we can take a minimizer $\varphi^*$
of 
$\norm{WB\varphi}_q$,
apply (\ref{eq: concentration guarantee from thm 7.1 cp15}) on it, and 
conclude an upper bound on the minimum of $\norm{W'B\varphi}_q$.
The other direction is similar,
and thus we can conclude that $G'$ is a $d_p$ sparsifier of $G$ that satisfies the guarantee of
(\ref{eq: sparsifier concentration guarantee in our main thm}).

Finally, recalling that $N \leq n\cdot g(n,\varepsilon,q) $, gives the desired bound on the number of the edges
in $G'$.

Next, for the first row in the table (the case $p=\Omega\prs{\varepsilon^{-1}\logn}$) we recall
that by Corollary \ref{corl: large p is essentialy infinity},
\begin{equation}\label{cor-eq: d_p and d_infty are the same for large enough p}
    \forall s,t\in V,\quad
    d_\infty(s,t)\leq d_p(s,t)\leq (1+\varepsilon)d_\infty(s,t).
\end{equation}
Thus, we choose $G'$ to be the Gomory-Hu tree of $G$,
and clearly Corollary \ref{corl: large p is essentialy infinity}
and (\ref{cor-eq: d_p and d_infty are the same for large enough p})
can be applied also to $d'_p$.
It is easy to see that this is indeed a $d_p$ sparsifier for $G$,
since we have 
\begin{align*}
        d'_p(s,t) & \leq (1+\varepsilon) d'_\infty(s,t)
        && (\text{by }(\ref{cor-eq: d_p and d_infty are the same for large enough p})
        \text{ on }G')\\
        & = (1+\varepsilon) d_\infty(s,t) 
        && {\text{by GH guarantee}}\\
        & \leq (1+\varepsilon) d_p(s,t) 
        && (\text{by }(\ref{cor-eq: d_p and d_infty are the same for large enough p})
        \text{ on }G)
\end{align*}
The other direction is similar, and this completes the proof of Theorem \ref{thm: main theorem - existence of d_p sparsifiers}.
\end{proof}

\subsection{Transforms that Preserve the Flow Metrics, and Those that do not Exist}\label{section - transforms for flow metrics}

In this subsection, we  present transformations that
reduce the number of edges/vertices in the graph in some cases,
while preserving the flow metrics on them.
These transformations are closely related to
known  transformations for
effective resistance.
We begin with reductions of parallel edges 
and of sequential edges,
that are natural extensions of corresponding
reductions for effective resistance.
We then proceed to discuss the well known $Y$-$\Delta$
transform for effective resistance,
as well as the more general $k$-star-mesh transform,
and examine for which values of $k$ and $p$
there exists an analogue of it for $d_p$.

\subsubsection{Sequential Edges Reduction}
Think of the case presented  in figure
\ref{fig:Sequential edges to single edge transform.},
where 
a vertex $x$ of degree 2 is
connected with an edge $e_1$ of weight $\alpha$
to a vertex $a$,
and with an edge $e_2$ of weight $\beta$
to another vertex  $b\neq a$.
We wish to remove the vertex $x$ and find a weight $\gamma=\gamma(\alpha,\beta)$ for a new edge
that will now connect  $a$ and $b$,
and will preserve the flow metric of the graph.

\begin{figure}[htb!]
    \centering

\tikzset{every picture/.style={line width=0.75pt}} 

\begin{tikzpicture}[x=0.75pt,y=0.75pt,yscale=-1,xscale=1]

\draw  [fill={rgb, 255:red, 74; green, 90; blue, 226 }  ,fill opacity=1 ] (120,111.53) .. controls (120,106.27) and (124.27,102) .. (129.53,102) .. controls (134.79,102) and (139.06,106.27) .. (139.06,111.53) .. controls (139.06,116.79) and (134.79,121.06) .. (129.53,121.06) .. controls (124.27,121.06) and (120,116.79) .. (120,111.53) -- cycle ;
\draw  [fill={rgb, 255:red, 208; green, 2; blue, 27 }  ,fill opacity=1 ] (490,109.53) .. controls (490,104.27) and (494.27,100) .. (499.53,100) .. controls (504.79,100) and (509.06,104.27) .. (509.06,109.53) .. controls (509.06,114.79) and (504.79,119.06) .. (499.53,119.06) .. controls (494.27,119.06) and (490,114.79) .. (490,109.53) -- cycle ;
\draw  [fill={rgb, 255:red, 74; green, 90; blue, 226 }  ,fill opacity=1 ] (122,248.53) .. controls (122,243.27) and (126.27,239) .. (131.53,239) .. controls (136.79,239) and (141.06,243.27) .. (141.06,248.53) .. controls (141.06,253.79) and (136.79,258.06) .. (131.53,258.06) .. controls (126.27,258.06) and (122,253.79) .. (122,248.53) -- cycle ;
\draw  [fill={rgb, 255:red, 208; green, 2; blue, 27 }  ,fill opacity=1 ] (494,248.53) .. controls (494,243.27) and (498.27,239) .. (503.53,239) .. controls (508.79,239) and (513.06,243.27) .. (513.06,248.53) .. controls (513.06,253.79) and (508.79,258.06) .. (503.53,258.06) .. controls (498.27,258.06) and (494,253.79) .. (494,248.53) -- cycle ;
\draw [line width=3]    (141.06,248.53) -- (494,248.53) ;
\draw  [fill={rgb, 255:red, 255; green, 255; blue, 255 }  ,fill opacity=1 ] (15.06,179.49) .. controls (15.06,156.33) and (45.92,137.56) .. (84,137.56) -- (84,158.24) .. controls (45.92,158.24) and (15.06,177.01) .. (15.06,200.17) ;\draw  [fill={rgb, 255:red, 255; green, 255; blue, 255 }  ,fill opacity=1 ] (15.06,200.17) .. controls (15.06,217.37) and (32.07,232.15) .. (56.42,238.62) -- (56.42,245.51) -- (84,231.76) -- (56.42,211.04) -- (56.42,217.93) .. controls (32.07,211.46) and (15.06,196.68) .. (15.06,179.49)(15.06,200.17) -- (15.06,179.49) ;
\draw  [fill={rgb, 255:red, 126; green, 211; blue, 33 }  ,fill opacity=1 ] (305,110.53) .. controls (305,105.27) and (309.27,101) .. (314.53,101) .. controls (319.79,101) and (324.06,105.27) .. (324.06,110.53) .. controls (324.06,115.79) and (319.79,120.06) .. (314.53,120.06) .. controls (309.27,120.06) and (305,115.79) .. (305,110.53) -- cycle ;
\draw    (139.06,111.53) -- (305,110.53) ;
\draw    (324.06,110.53) -- (490,109.53) ;

\draw (98,104.4) node [anchor=north west][inner sep=0.75pt]  [font=\large]  {$a$};
\draw (527,100.4) node [anchor=north west][inner sep=0.75pt]  [font=\large]  {$b$};
\draw (183,78.4) node [anchor=north west][inner sep=0.75pt]    {$w( e_{1}) =\alpha $};
\draw (367,79.4) node [anchor=north west][inner sep=0.75pt]    {$w( e_{2}) =\beta $};
\draw (100,241.4) node [anchor=north west][inner sep=0.75pt]  [font=\large]  {$a$};
\draw (529,237.4) node [anchor=north west][inner sep=0.75pt]  [font=\large]  {$b$};
\draw (282,210.4) node [anchor=north west][inner sep=0.75pt]    {$w( e') =\gamma $};
\draw (309,122.4) node [anchor=north west][inner sep=0.75pt]  [font=\large]  {$x$};

\end{tikzpicture}
    \caption{Sequential edges to single edge transform.}
    \label{fig:Sequential edges to single edge transform.}
\end{figure}
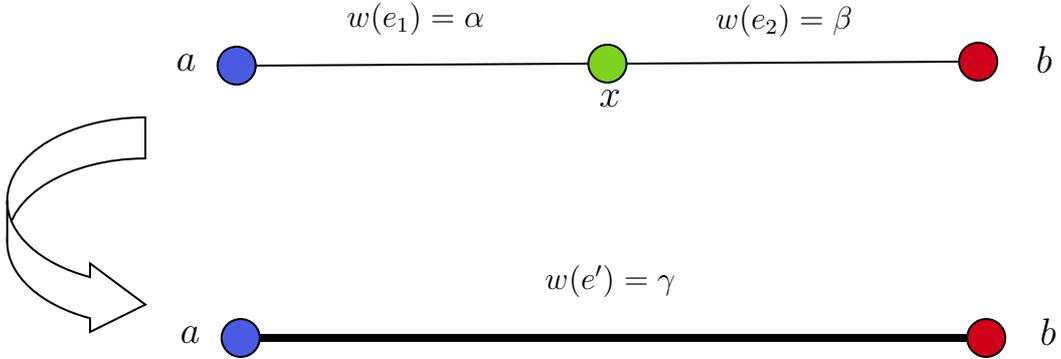

\begin{claim}
Let $p\in[1,\infty]$.
Let $G=(V,E,w)$ be a graph that contains a vertex $x$
that is incident to exactly 2 edges $e_1=\{x,a\}$ and $e_2=\{x,b\}$
of weights $\alpha$ and $\beta$ respectively.
Let  $G'=(V',E',w')$ be a graph
with $V'=V\backslash\{x\}, E'=E\backslash\{e_1,e_2\}\cup\curprs{\{a,b\}}$
and
\[
    w'(e)=\begin{cases}
    \frac{1}{\prs{\frac{1}{\alpha^p}+\frac{1}{\beta^p}}^{1/p}}
    & \text{if }e=\{a,b\},\\
    w(e)    &   \text{o.w.};
    \end{cases}
\]
Then, for every $s,t\in V'$,
$d_{p,G}(s,t)=d_{p,G'}(s,t)$.
\end{claim}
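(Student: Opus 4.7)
The plan is to exhibit a natural bijection between feasible unit $st$-flows on $G$ and feasible unit $st$-flows on $G'$ that preserves the $\ell_p$-cost, and then use this to conclude that the minimum values over both problems agree.

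First, observe that since $x$ has degree exactly $2$ in $G$ and $x \notin \{s,t\}$, flow conservation at $x$ forces $f(e_1) = f(e_2)$ for any feasible unit $st$-flow $f$ on $G$ (when the orientations of $e_1,e_2$ are chosen consistently, one incoming and one outgoing at $x$). Call this common value $\phi \in \R$. Define a map $\Phi$ from feasible $st$-flows on $G$ to functions $f' \colon E' \to \R$ by setting $f'(e) = f(e)$ for every $e \in E \setminus \{e_1,e_2\}$ and $f'(\{a,b\}) = \phi$. The conservation condition $(B')^T f' = \chi_s - \chi_t$ on $G'$ follows directly from $B^T f = \chi_s - \chi_t$ on $G$: the vertex $x$ is removed, and at $a$ (respectively $b$) the net contribution from $\{a,b\}$ in $G'$ equals the contribution of $e_1$ (respectively $e_2$) in $G$. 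Conversely, given any feasible $st$-flow $f'$ on $G'$, define $\Phi^{-1}(f')$ by setting $f(e_1) = f(e_2) = f'(\{a,b\})$ with the proper orientations and $f(e) = f'(e)$ otherwise; this is a feasible unit $st$-flow on $G$. Hence $\Phi$ is a bijection between the feasible regions of the two optimization problems.

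Next I would verify cost preservation. For any $e \in E \setminus \{e_1,e_2\}$ the weights agree, so the contribution $|f(e)/w(e)|^p = |f'(e)/w'(e)|^p$ is unchanged. For the modified edges, the combined contribution of $e_1,e_2$ in $G$ is
\begin{equation}
    \abs{\tfrac{\phi}{\alpha}}^p + \abs{\tfrac{\phi}{\beta}}^p
    = |\phi|^p\prs{\tfrac{1}{\alpha^p} + \tfrac{1}{\beta^p}}
    = |\phi|^p\cdot \tfrac{1}{\gamma^p}
    = \abs{\tfrac{f'(\{a,b\})}{\gamma}}^p,
\end{equation}
which equals the contribution of $\{a,b\}$ in $G'$ by the definition $\gamma = (\alpha^{-p}+\beta^{-p})^{-1/p}$. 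Summing and taking $p$-th roots yields $\|W^{-1} f\|_p = \|(W')^{-1} f'\|_p$ for every matched pair $(f,f')$.

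Taking the minimum over all feasible unit $st$-flows on both sides then gives $d_{p,G}(s,t) = d_{p,G'}(s,t)$, as desired. For the limiting case $p = \infty$, the same bijection applies and the formula $\gamma = \lim_{p \to \infty} (\alpha^{-p} + \beta^{-p})^{-1/p} = \min(\alpha,\beta)$ gives the matching cost identity $\max\{|\phi|/\alpha,|\phi|/\beta\} = |\phi|/\min(\alpha,\beta)$, so the argument goes through unchanged. There is no real obstacle; the only point requiring care is the orientation bookkeeping for $e_1, e_2, \{a,b\}$ to ensure that the forced identity from conservation at $x$ reads as $f(e_1) = f(e_2)$ with the correct sign convention, which is routine.
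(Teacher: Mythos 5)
Your proposal is correct and takes essentially the same approach as the paper: the paper's proof also observes that conservation at the degree-2 vertex forces equal flow $\tau$ on both edges, and that the combined contribution $|\tau/\alpha|^p+|\tau/\beta|^p$ matches $|\tau/\gamma|^p$ for the stated $\gamma$. Your write-up merely spells out the flow bijection and the $p=\infty$ limit in more detail.
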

Before we begin the proof,
note that for $p=1$, we have
$\frac{1}{\gamma}    =\frac{1}{\alpha}+\frac{1}{\beta}$,
which is the desired behavior since 
 $d_1$ coincides with the shortest-path metric in the graph
with inverse edge weights.
Moreover, 
note that as $p\rightarrow\infty$,
it holds that $\gamma\rightarrow\min\curprs{\alpha,\beta}$,
which is the desired behavior for the minimum cut in the graph.

\begin{proof}
Note that in the graph $G$,
for every amount of flow $\tau$
that flows from $a$ to $x$,
the contribution
of the flow (without the $1/p$ power)
over the edges that connect
$a$ to $b$ in this case is exactly
$\abs{\frac{\tau}{\alpha}}^p+\abs{\frac{\tau}{\beta}}^p
=|\tau|^p\cdot\prs{\frac{1}{\alpha^p}+\frac{1}{\beta^p}}$,
where in $G'$ it will be $\abs{\frac{\tau}{\gamma}}^p$.
So setting 
\[
    \gamma=\frac{1}{\prs{\frac{1}{\alpha^p}+\frac{1}{\beta^p}}^{1/p}}
\]
indeed satisfies our demands,
since it holds that
$\frac{1}{\gamma^p} = \frac{1}{\alpha^p}+\frac{1}{\beta^p}$.
\end{proof}

\subsubsection{Parallel Edges Reduction}
Think of the case presented  in figure
\ref{fig:Parallel edges to single edge transform},
where two vertices $a$ and $b$
are connected with two edges $e_1$ and  $e_2$ 
of some weight $\alpha$ and $\beta$ respectively.
Again, we wish to find a weight 
$\gamma=\gamma(\alpha,\beta)$
that will replace the two edges and preserve
the flow metric in the graph.

\begin{figure}[htb!]
    \centering

\tikzset{every picture/.style={line width=0.75pt}} 

\begin{tikzpicture}[x=0.75pt,y=0.75pt,yscale=-1,xscale=1]

\draw  [fill={rgb, 255:red, 74; green, 90; blue, 226 }  ,fill opacity=1 ] (120,111.53) .. controls (120,106.27) and (124.27,102) .. (129.53,102) .. controls (134.79,102) and (139.06,106.27) .. (139.06,111.53) .. controls (139.06,116.79) and (134.79,121.06) .. (129.53,121.06) .. controls (124.27,121.06) and (120,116.79) .. (120,111.53) -- cycle ;
\draw  [fill={rgb, 255:red, 208; green, 2; blue, 27 }  ,fill opacity=1 ] (492,111.53) .. controls (492,106.27) and (496.27,102) .. (501.53,102) .. controls (506.79,102) and (511.06,106.27) .. (511.06,111.53) .. controls (511.06,116.79) and (506.79,121.06) .. (501.53,121.06) .. controls (496.27,121.06) and (492,116.79) .. (492,111.53) -- cycle ;
\draw [color={rgb, 255:red, 0; green, 0; blue, 0 }  ,draw opacity=1 ][line width=2.25]    (139.06,111.53) .. controls (139.06,41.56) and (492.06,40.56) .. (492,111.53) ;
\draw [color={rgb, 255:red, 0; green, 0; blue, 0 }  ,draw opacity=1 ][line width=2.25]    (139.06,111.53) .. controls (141.06,191.56) and (490.06,191.56) .. (492,111.53) ;
\draw  [fill={rgb, 255:red, 74; green, 90; blue, 226 }  ,fill opacity=1 ] (122,248.53) .. controls (122,243.27) and (126.27,239) .. (131.53,239) .. controls (136.79,239) and (141.06,243.27) .. (141.06,248.53) .. controls (141.06,253.79) and (136.79,258.06) .. (131.53,258.06) .. controls (126.27,258.06) and (122,253.79) .. (122,248.53) -- cycle ;
\draw  [fill={rgb, 255:red, 208; green, 2; blue, 27 }  ,fill opacity=1 ] (494,248.53) .. controls (494,243.27) and (498.27,239) .. (503.53,239) .. controls (508.79,239) and (513.06,243.27) .. (513.06,248.53) .. controls (513.06,253.79) and (508.79,258.06) .. (503.53,258.06) .. controls (498.27,258.06) and (494,253.79) .. (494,248.53) -- cycle ;
\draw [line width=3]    (141.06,248.53) -- (494,248.53) ;
\draw  [fill={rgb, 255:red, 255; green, 255; blue, 255 }  ,fill opacity=1 ] (15.06,179.49) .. controls (15.06,156.33) and (45.92,137.56) .. (84,137.56) -- (84,158.24) .. controls (45.92,158.24) and (15.06,177.01) .. (15.06,200.17) ;\draw  [fill={rgb, 255:red, 255; green, 255; blue, 255 }  ,fill opacity=1 ] (15.06,200.17) .. controls (15.06,217.37) and (32.07,232.15) .. (56.42,238.62) -- (56.42,245.51) -- (84,231.76) -- (56.42,211.04) -- (56.42,217.93) .. controls (32.07,211.46) and (15.06,196.68) .. (15.06,179.49)(15.06,200.17) -- (15.06,179.49) ;

\draw (98,104.4) node [anchor=north west][inner sep=0.75pt]  [font=\large]  {$a$};
\draw (527,101.4) node [anchor=north west][inner sep=0.75pt]  [font=\large]  {$b$};
\draw (278,24.4) node [anchor=north west][inner sep=0.75pt]    {$w( e_{1}) =\alpha $};
\draw (278,139.4) node [anchor=north west][inner sep=0.75pt]    {$w( e_{2}) =\beta $};
\draw (100,241.4) node [anchor=north west][inner sep=0.75pt]  [font=\large]  {$a$};
\draw (529,238.4) node [anchor=north west][inner sep=0.75pt]  [font=\large]  {$b$};
\draw (282,210.4) node [anchor=north west][inner sep=0.75pt]    {$w( e') =\gamma $};

\end{tikzpicture}
    \caption{Parallel edges to single edge transform.}
    \label{fig:Parallel edges to single edge transform}
\end{figure}
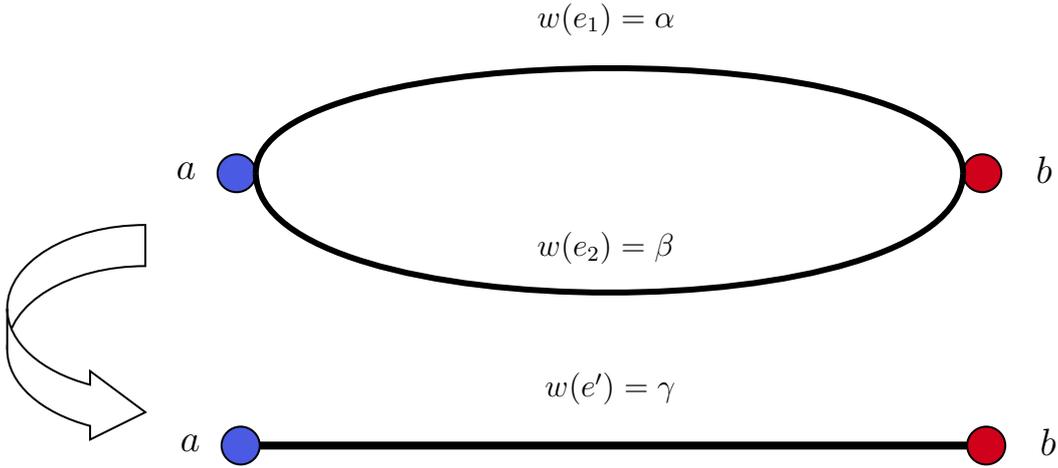

\begin{claim}\label{claim: parallel edges reduction for d_p}
Let $p\in[1,\infty]$, and let $q$
be the H\"older conjugate of $p$.
Let $G=(V,E,w)$ be a graph that contains two
parallel edges $e_1$ and $e_2$
of weights $\alpha$ and $\beta$ respectively
between some vertices $a$ and $b$.
Let $G'=(V,E',w')$ be a graph with
with $E'=E\backslash\{e_1\}$
and 
\[
    w'(e)=\begin{cases}
    \prs{\alpha^q+\beta^q}^{1/q}
    & \text{if }e=e_2,\\
    w(e)    &  \text{o.w.};
    \end{cases}
\]
Then, for every $s,t\in V'$,
$d_{p,G}(s,t)=d_{p,G'}(s,t)$.
\end{claim}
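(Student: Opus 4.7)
The plan is to pass to the dual potential formulation given by Claim \ref{claim: d_p = 1 / bar(d)_p}, which reduces the problem to showing that the $\ell_q$ cost of every potential vector is preserved under the proposed edge merge. Concretely, by Claim \ref{claim: d_p = 1 / bar(d)_p} we have
\begin{equation*}
    d_{p,G}(s,t)=\bigl(\bar{d}_{p,G}(s,t)\bigr)^{-1}
    =\Bigl(\min_{\varphi_s-\varphi_t=1}\norm{WB\varphi}_q\Bigr)^{-1},
\end{equation*}
and similarly for $G'$ with $W',B'$. Thus it suffices to prove the pointwise identity $\norm{WB\varphi}_q=\norm{W'B'\varphi}_q$ for every $\varphi\in\R^V$, which immediately yields equality of the two minima and hence of the two distances.

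First I would fix any $\varphi\in\R^V$ and expand the difference between the two $\ell_q$ costs. Since $G$ and $G'$ agree on every edge except that $e_1,e_2$ are merged into a single edge $e'$ between $a$ and $b$, all the summands outside $\{e_1,e_2,e'\}$ cancel. What remains to check is the local identity
\begin{equation*}
    \alpha^q|\varphi_a-\varphi_b|^q+\beta^q|\varphi_a-\varphi_b|^q
    = \bigl(\alpha^q+\beta^q\bigr)|\varphi_a-\varphi_b|^q
    = \gamma^q|\varphi_a-\varphi_b|^q,
\end{equation*}
where $\gamma=(\alpha^q+\beta^q)^{1/q}=w'(e')$. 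This is immediate, and raising both sides to the $1/q$ power after summing over all edges gives the desired $\norm{WB\varphi}_q=\norm{W'B'\varphi}_q$, from which $\bar{d}_{p,G}=\bar{d}_{p,G'}$ and therefore $d_{p,G}=d_{p,G'}$ follow.

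Finally, I would handle the endpoint cases $p=1$ and $p=\infty$ separately by taking the appropriate limit in $q$. For $p=\infty$ (so $q=1$) the formula reduces to $\gamma=\alpha+\beta$, matching the usual capacity-sum rule for minimum cuts. For $p=1$ (so $q=\infty$) it becomes $\gamma=\max\{\alpha,\beta\}$, which matches the shortest-path interpretation where parallel edges of inverse-lengths $1/\alpha,1/\beta$ collapse to the shorter one; in both cases the same cost-preservation argument applies with sums replaced by maxima. There is no real obstacle here: the whole statement is essentially bookkeeping once one notices that in the dual formulation parallel edges contribute independently as separate summands of the form $w(e)^q|\varphi_a-\varphi_b|^q$. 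The only mild care is to invoke Claim \ref{claim: d_p = 1 / bar(d)_p} at $p=\infty$ via the limit computation already spelled out after that claim.
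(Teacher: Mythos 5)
Your proposal is correct and follows essentially the same route as the paper's proof: pass to the dual potential formulation via Claim \ref{claim: d_p = 1 / bar(d)_p} and observe that the two parallel edges contribute $(\alpha^q+\beta^q)\abs{\varphi_a-\varphi_b}^q$ to the $\ell_q$ cost, which is matched by a single edge of weight $(\alpha^q+\beta^q)^{1/q}$. (The paper also records a second proof via flows in the appendix, but its main argument is exactly yours.)
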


Let us examine what
happens to $\gamma$ as $p\rightarrow\infty$.
In this case, $q\rightarrow 1$, and thus
$\gamma\xrightarrow{p\rightarrow\infty}\alpha+\beta$,
which is  the desired behavior for minimum cut.
Moreover, note that as $p\rightarrow 1$
then $q\rightarrow\infty$.
Then $\gamma\xrightarrow{p\rightarrow 1}\max\curprs{\alpha,\beta}$,
and hence $\frac{1}{\gamma}\xrightarrow{p\rightarrow 1}
\min\curprs{\frac{1}{\alpha},\frac{1}{\beta}}$,
which is again the desired behavior since
$d_1$ coincides with
the shortest-path metric in the graph with inverse edge weights.
       
We present here a proof via the dual problem.
We give an additional proof via flows 
in Appendix \ref{appendix section - Another proof for the parallel edges reduction via flows}.
\begin{proof}
We  consider the dual problem.
Recall that $d_p(s,t)=\prs{\Bar{d}_p(s,t)}^{-1}$
(Claim \ref{claim: d_p = 1 / bar(d)_p}),
where
\[
    \Bar{d}_p(s,t)=\min_{\varphi_s-\varphi_t=1}
    \norm{WB\varphi}_q.
\]
Thus, we can finish the proof by showing that the $\Bar{d}_p$
metric is preserved in the new graph $G'$.
Suppose we have potentials $\varphi_a,\varphi_b$,
and we wish to find $\gamma$ such that the potential
difference between $a$ and $b$ is preserved, i.e.
\[
    \gamma^q\cdot\abs{\varphi_a-\varphi_b}^q
    = \alpha^q\cdot\abs{\varphi_a-\varphi_b}^q
    +\beta^q\cdot\abs{\varphi_a-\varphi_b}^q
    =\prs{ \alpha^q+\beta^q}\cdot\abs{\varphi_a-\varphi_b}^q.
\]
and thus it is easy to see that setting
$\gamma=\prs{\alpha^q+\beta^q}^{1/q}$
will satisfy our requirement.
\end{proof}

\subsubsection{Non-Existence of Y-\texorpdfstring{$\Delta$}{Delta} Transform}\label{section: non existence of delta-wye transform}
Suppose that a graph
has a vertex of degree 3, 
and we wish to remove it
while preserving the $d_p$
metric between the remaining
vertices,
thus obtaining a smaller equivalent instance
to work with.
The way to do this
for the resistance distance is
via the well known Y-$\Delta$
transform
(as shown in figure \ref{fig:Y-Delta-transform}),
and we wish to generalize it to all flow metrics.
It is known that 
such a transformation  exists for
$p=1$ (shortest-path metric)
and for $p=\infty$ (specifically for the case
where the middle vertex has degree 3 - see
Theorem 1 in \cite{chaudhuri2000mimickingNetworks}, 
we elaborate on this in subsection \ref{subsection: p=infinity}).
In addition, it is important to note that the Y-$\Delta$
transform for effective resistance,
depends solely on the weights of the 
edges incident to the vertex of degree 3,
i.e. it is a local transformation
that does not depend on the rest of the graph.
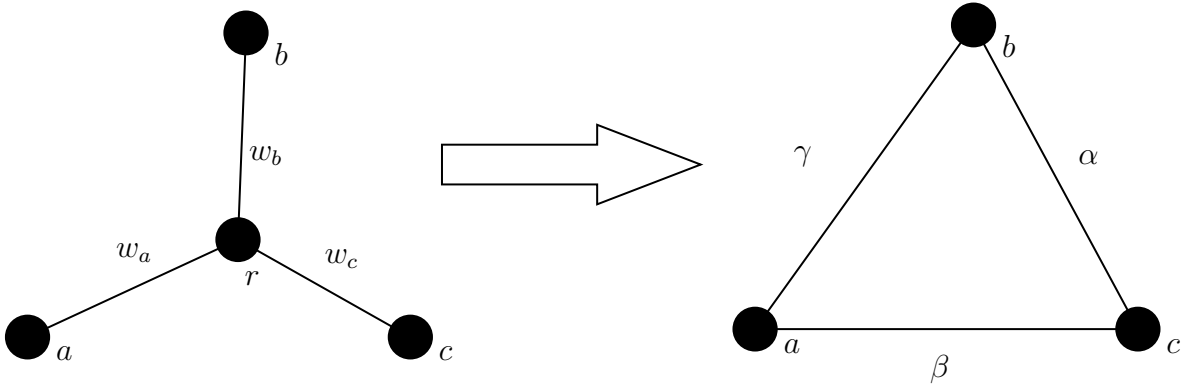
\begin{figure}[hbt]
    \centering

\tikzset{every picture/.style={line width=0.75pt}} 

\begin{tikzpicture}[x=0.75pt,y=0.75pt,yscale=-1,xscale=1]

\draw  [fill={rgb, 255:red, 0; green, 0; blue, 0 }  ,fill opacity=1 ] (51.44,200.78) .. controls (51.44,194.83) and (56.27,190) .. (62.22,190) .. controls (68.17,190) and (73,194.83) .. (73,200.78) .. controls (73,206.73) and (68.17,211.56) .. (62.22,211.56) .. controls (56.27,211.56) and (51.44,206.73) .. (51.44,200.78) -- cycle ;
\draw  [fill={rgb, 255:red, 0; green, 0; blue, 0 }  ,fill opacity=1 ] (160.44,47.78) .. controls (160.44,41.83) and (165.27,37) .. (171.22,37) .. controls (177.17,37) and (182,41.83) .. (182,47.78) .. controls (182,53.73) and (177.17,58.56) .. (171.22,58.56) .. controls (165.27,58.56) and (160.44,53.73) .. (160.44,47.78) -- cycle ;
\draw  [fill={rgb, 255:red, 0; green, 0; blue, 0 }  ,fill opacity=1 ] (242.44,200.78) .. controls (242.44,194.83) and (247.27,190) .. (253.22,190) .. controls (259.17,190) and (264,194.83) .. (264,200.78) .. controls (264,206.73) and (259.17,211.56) .. (253.22,211.56) .. controls (247.27,211.56) and (242.44,206.73) .. (242.44,200.78) -- cycle ;
\draw  [fill={rgb, 255:red, 0; green, 0; blue, 0 }  ,fill opacity=1 ] (156.44,151.78) .. controls (156.44,145.83) and (161.27,141) .. (167.22,141) .. controls (173.17,141) and (178,145.83) .. (178,151.78) .. controls (178,157.73) and (173.17,162.56) .. (167.22,162.56) .. controls (161.27,162.56) and (156.44,157.73) .. (156.44,151.78) -- cycle ;
\draw  [fill={rgb, 255:red, 0; green, 0; blue, 0 }  ,fill opacity=1 ] (414.44,196.78) .. controls (414.44,190.83) and (419.27,186) .. (425.22,186) .. controls (431.17,186) and (436,190.83) .. (436,196.78) .. controls (436,202.73) and (431.17,207.56) .. (425.22,207.56) .. controls (419.27,207.56) and (414.44,202.73) .. (414.44,196.78) -- cycle ;
\draw  [fill={rgb, 255:red, 0; green, 0; blue, 0 }  ,fill opacity=1 ] (523.44,43.78) .. controls (523.44,37.83) and (528.27,33) .. (534.22,33) .. controls (540.17,33) and (545,37.83) .. (545,43.78) .. controls (545,49.73) and (540.17,54.56) .. (534.22,54.56) .. controls (528.27,54.56) and (523.44,49.73) .. (523.44,43.78) -- cycle ;
\draw  [fill={rgb, 255:red, 0; green, 0; blue, 0 }  ,fill opacity=1 ] (605.44,196.78) .. controls (605.44,190.83) and (610.27,186) .. (616.22,186) .. controls (622.17,186) and (627,190.83) .. (627,196.78) .. controls (627,202.73) and (622.17,207.56) .. (616.22,207.56) .. controls (610.27,207.56) and (605.44,202.73) .. (605.44,196.78) -- cycle ;
\draw    (167.22,151.78) -- (253.22,200.78) ;
\draw    (167.22,151.78) -- (62.22,200.78) ;
\draw    (167.22,151.78) -- (171.22,47.78) ;
\draw    (425.22,196.78) -- (534.22,43.78) ;
\draw    (616.22,196.78) -- (534.22,43.78) ;
\draw    (616.22,196.78) -- (425.22,196.78) ;
\draw   (269,104) -- (346.43,104) -- (346.43,94) -- (398.06,114) -- (346.43,134) -- (346.43,124) -- (269,124) -- cycle ;

\draw (75,204.18) node [anchor=north west][inner sep=0.75pt]    {$a$};
\draw (184,51.18) node [anchor=north west][inner sep=0.75pt]    {$b$};
\draw (266,204.18) node [anchor=north west][inner sep=0.75pt]    {$c$};
\draw (629,200.18) node [anchor=north west][inner sep=0.75pt]    {$c$};
\draw (438,200.18) node [anchor=north west][inner sep=0.75pt]    {$a$};
\draw (547,47.18) node [anchor=north west][inner sep=0.75pt]    {$b$};
\draw (169.22,165.96) node [anchor=north west][inner sep=0.75pt]    {$r$};
\draw (105,152.18) node [anchor=north west][inner sep=0.75pt]    {$w_a$};
\draw (171.22,103.18) node [anchor=north west][inner sep=0.75pt]    {$w_b$};
\draw (209,155.18) node [anchor=north west][inner sep=0.75pt]    {$w_c$};
\draw (585,105.18) node [anchor=north west][inner sep=0.75pt]    {$\alpha$};
\draw (511,208.18) node [anchor=north west][inner sep=0.75pt]    {$\beta$};
\draw (443,103.18) node [anchor=north west][inner sep=0.75pt]    {$\gamma$};

\end{tikzpicture}
    \caption{Transforming ``Y" (or a 3-star)
    into a ``$\Delta$" (a triangle)
    by deleting the middle vertex $r$,
    and creating new edges while preserving the
    effective resistance on the graph.
    In the general case, an $n$-star transforms 
    into a clique $K_n$.
    }
    \label{fig:Y-Delta-transform}
\end{figure}

In the case of effective resistance,
the transformation is as follows
(the notations of the weights are as
presented in figure
\ref{fig:Y-Delta-transform}).
\begin{align*}
    \alpha&=\frac{w_b\cdot w_c}{w_a+w_b+w_c},
    &
    \beta&=\frac{w_a\cdot w_c}{w_a+w_b+w_c},
    &
    \gamma&=\frac{w_a\cdot w_b}{w_a+w_b+w_c}
\end{align*}
Unfortunately, such a transformation does not exist when $p\neq
1,2,\infty$,
as we show next.

We first define the transform for general $k$.
Note that for general values of  $k$,
the transform is called a $k$-star-mesh transform,
where in this subsection we focus on the special
case of $k=3$ (and we will continue to call
it $Y$-$\Delta$-transform).
\begin{definition}[$k$-star-mesh transform]
A  $k$-star-mesh transform 
is an operation that given a graph $G$ and 
a vertex $r\in V(G)$ of degree $k$,
removes $r$ from $G$ and
replaces the $k$-star formed by $r$ and its neighbors
with a (possibly weighted) clique
on the neighbor set $N(r)$.
\end{definition}
\begin{definition}[local \textit{k}-star-mesh transform]
A $k$-star-mesh transform is called \textbf{local}
if the edge weights inside the clique $N(r)$
in the transformed graph
depend only on the edge weights of the
$k$-star in $G$
(obliviously to the rest of the graph $G$).
\end{definition}
\begin{definition}[local \textit{k}-star-mesh transform
preserving $d_p$]
We say that a local \textit{k}-star-mesh
transform \textbf{preserves} $d_p$ if
for every graph $G=(V,E,w)$
and a vertex $r\in V$ of degree $k$,
applying the transform on 
$G$ and $r$ yields a graph $G'$ where
\begin{equation*}
    \forall s,t\in V\backslash\{r\},
    \quad d_{p,G'}(s,t)
    = d_{p,G}(s,t).
\end{equation*}
\end{definition}
We can now state our main theorem of this subsection.
\begin{theorem}\label{thm: non existence of Y-Delta transform}
For every $p\neq1,2,\infty$,
there is no local Y-$\Delta$
transform  
that preserves the $d_p$ 
metric.
\end{theorem}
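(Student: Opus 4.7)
The plan is to assume that such a local transform exists for some $p \notin \{1,2,\infty\}$ and derive a contradiction by applying it to two graphs that share the same $3$-star but for which the required triangle weights are incompatible.

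First I would determine what the transform must look like on the bare $3$-star alone (vertices $\{r,a,b,c\}$ with edges $ra, rb, rc$). In the symmetric case $w_a=w_b=w_c=1$, the transformed triangle on $\{a,b,c\}$ must, by symmetry, have all three edge weights equal to some $\gamma=\gamma(p)$. Using the sequential-edges reduction already proved in this section, the original star satisfies $d_p(a,b)=d_p(a,c)=d_p(b,c)=2^{1/p}$. In the symmetric triangle, computing $d_p(a,b)$ reduces to a one-parameter convex optimization: for a unit flow from $a$ to $b$ with fraction $x$ on the direct edge $ab$ and fraction $1-x$ routed via $c$, the $\ell_p$ cost is $\gamma^{-1}\bigl(x^p+2(1-x)^p\bigr)^{1/p}$, minimized at $x^*(p)$ given by $x^{p-1}=2(1-x)^{p-1}$. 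Setting this minimum equal to $2^{1/p}$ pins down $\gamma(p)$ as an explicit function of $p$.

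Next I would apply the (now uniquely determined) transform to an enlarged symmetric graph obtained by adjoining a second degree-$3$ vertex $u$ to the star, with unit edges to each of $a,b,c$. By the locality hypothesis, removing $r$ yields a triangle on $\{a,b,c\}$ of common weight $\gamma(p)$ together with the unchanged star at $u$. On the original side, the symmetry between $r$ and $u$ makes the optimal unit $a$-to-$b$ flow split evenly between the two stars and, within each star, according to the same $x^*(p)$ as above; this gives a closed-form expression for $d_p(a,b)$. On the transformed side, $d_p(a,b)$ is the minimum of a two-stage flow problem (through the triangle versus through the star at $u$), again admitting a closed form in terms of $p$ and $\gamma(p)$. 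Equating the two values, as demanded by preservation of $d_p$, yields a single transcendental equation $F(p)=0$ for the would-be transform weight.

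Finally I would verify that the three known constructions for $p=1,2,\infty$ satisfy $F(p)=0$, and that $F$ has no other zero. One checks directly that the equation degenerates correctly in the limits $p\to 1$ (shortest path), $p\to\infty$ (minimum cut), and $p=2$ (electrical), and then shows $F$ is analytic and non-trivial in $p$. The asymmetric case is reduced to the symmetric one by perturbing $w_a,w_b,w_c$ around the point $(1,1,1)$ and appealing to uniqueness of the transform in Step~1. The main obstacle will be the final analytic step: proving $F(p)\neq 0$ for \emph{every} $p\notin\{1,2,\infty\}$ rather than merely at one convenient value. A robust workaround, should a global analysis be awkward, is to pick a single explicit intermediate value (e.g.\ $p=3$ or $p=4$) where both sides of the equation can be evaluated exactly and shown to differ; combined with continuity and the fact that $F$ is real-analytic in $p$ on $(1,2)$ and on $(2,\infty)$, this rules out the transform on whole open intervals, and the boundary cases are handled by the limiting arguments above.
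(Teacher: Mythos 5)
Your overall strategy is the same as the paper's: exploit locality by exhibiting two graphs that contain the same unit-weight $3$-star, so the transform must produce the same (by symmetry, equilateral) triangle weight in both, and then show the two graphs impose incompatible values of that weight unless $p\in\{1,2,\infty\}$. The paper's second gadget differs from yours: instead of attaching a second unit-weight star at a new vertex $u$, it attaches a vertex $v$ to $b$ and $c$ by infinite-weight edges, which in the dual (potential) formulation forces $\varphi_b=\varphi_c=\varphi_v$ and collapses both the original and transformed computations to one-variable minimizations with clean closed forms, namely $\alpha=(1+2^{q-1})^{-1/q}$ from the bare star versus $\alpha=(2^{1/(q-1)}+1)^{-1/p}$ from the augmented graph. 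Your gadget would also work in principle, but the transformed graph then admits flows through $ab$, through $c$, through $u$, and through mixed paths such as $a\to c\to u\to b$, so the ``two-stage'' closed form you assert needs more care than you allow for.

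The genuine gap is in your final step. Having reduced everything to a single equation $F(p)=0$, you must show $F(p)\neq 0$ for \emph{every} $p\notin\{1,2,\infty\}$, and your proposed workaround --- evaluate $F$ exactly at one value such as $p=3$, then invoke continuity and real-analyticity on $(1,2)$ and $(2,\infty)$ --- does not deliver this. A real-analytic function that is nonzero at one point is merely not identically zero; its zero set is discrete, but it may still contain isolated points other than $1,2,\infty$, and at any such point your contradiction evaporates, so the theorem would remain unproved there. This global analysis is exactly where the paper does its real work: it shows the two closed-form values of $\alpha$ can only coincide when $p=q=2$ by raising the identity to the $p$-th (respectively $q$-th) power and applying Jensen's inequality to the convex function $x\mapsto x^{p-1}$ (respectively $x^{q-1}$), handling the cases $p\geq 2$ and $p\leq 2$ symmetrically and deriving $p\leq 2$ from the assumption $p\geq 2$ and vice versa. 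To complete your proof you would need an analogous global monotonicity or convexity argument for your $F$, not a single-point evaluation plus analyticity.
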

We prove this by showing two graphs
that contain the same unweighted 3-star 
(``Y")
as an induced subgraph,
on which such a  transformation,
if one existed,
must have acted differently.
Of course, this is not possible,
since the transformation
should be local and thus the same in the two graphs.
Before we present the claim formally, let us add some notation.
Denote by $V_T=\curprs{a,b,c}$ a set of 3 vertices we will call
``terminals",
denote $V_Y=V_T\cup\{r\}$,
and denote by $G_Y=\prs{V_Y,E_Y}$ the 3-star over 
the terminals, i.e. $E_Y=\curprs{\{a,r\},\{b,r\},\{c,r\}}$.
In addition, denote by $G_\Delta=\prs{V_T,E_\Delta}$
the triangle over the terminals,
i.e. $E_\Delta=\{\{a,b\},\{b,c\},\{c,a\}\}$.

\begin{claim}\label{claim: existence of graphs that contradict the existence of Y-Delta transform}
Let $p\in(1,2)\cup(2,\infty)$.
There exist graphs $G_1=\prs{V_1,E_1,w_1}, G_2=\prs{V_2,E_2,w_2}$,
with $V_Y\subseteq V_i$, and $G_i\brs{V_Y}=G_Y$ for $i=1,2$,
that satisfy the following.
Suppose that we have graphs $G'_1=\prs{V'_1,E'_1,w'_1},
G_2=\prs{V'_2,E'_2,w'_2}$
on which  the ``Y" transformed into a ``$\Delta$",
i.e. satisfying for $i=1,2$,
\begin{align}
    V_i' & = V_i\backslash\{r\}\\
    E\prs{G'_i\brs{V_T}} & = E_\Delta\\
    E'_i\backslash E_\Delta & = E_i\backslash E_Y\\
    w'_i\mid_{E'_i\backslash E_\Delta} & = w_i\mid_{E_i\backslash E_Y}
\end{align}
If in addition we have that
\begin{equation}
    \forall s,t\in V_T,\quad d_{p,G_i'}(s,t)  = d_{p,G_i}(s,t)
\end{equation}
for $i=1,2$,
then there exists $e\in E_\Delta$ such that
$w'_1(e)\neq w'_2(e)$.
\end{claim}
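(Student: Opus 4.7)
The idea is to exhibit two concrete graphs $G_1, G_2$ that share the unweighted $3$-star $G_Y$ as their induced subgraph on $V_Y$, and then to show that the triangle weights forced by preservation of $d_p$ on the terminals $V_T$ cannot coincide between $G_1'$ and $G_2'$ unless $p = 1$, $2$, or $\infty$. I would take $G_1 = G_Y$ with all three edge weights equal to $1$, and let $G_2$ be $G_1$ augmented by a single new vertex $v$ together with two unit-weight edges $av$ and $bv$. Both graphs satisfy $G_i[V_Y] = G_Y$ since the additional structure in $G_2$ lies outside $V_Y$; hence, by the locality hypothesis, any $Y$-$\Delta$ transform preserving $d_p$ would have to assign the same triangle weights to $G_1'$ and $G_2'$.

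By the symmetry of $G_1$, all three triangle weights in $G_1'$ must equal a common value $\alpha_p$, uniquely determined by the requirement $d_{p,G_1'}(a,b) = d_{p,G_1}(a,b) = 2^{1/p}$. A series-plus-parallel computation based on Claim~\ref{claim: parallel edges reduction for d_p} gives the closed form $\alpha_p = (2^{q/p} + 1)^{-1/q}$, which correctly specializes to $\alpha_1 = 1/2$, $\alpha_2 = 1/\sqrt{3}$, and $\alpha_\infty = 1/2$. Next I would compute the three $d_p$-distances in $G_2$: the symmetry between $a$ and $b$ induced by $v$ gives $d_{p,G_2}(a,c) = d_{p,G_2}(b,c)$, while two symmetric length-$2$ paths between $a$ and $b$ yield $d_{p,G_2}(a,b) = 2^{2/p-1}$. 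The nontrivial distance is $d_{p,G_2}(b,c)$: since $c$ has the unique neighbor $r$, every flow must cross the edge $rc$, so $d_{p,G_2}(b,c)^p = 1 + \min_{f \in [0,1]} \bigl(f^p + 3(1-f)^p\bigr)$, whose minimizer $f^* = 3^{1/(p-1)}/(1 + 3^{1/(p-1)})$ comes from the KKT characterization in Fact~\ref{fact: closed form solution for d_p}. The analogous distance $d_{p,G_2'}(b,c)$ in the transformed graph, namely the triangle of weight $\alpha_p$ plus the unit-weight $v$-path, is likewise a closed-form function of $\alpha_p$ and $p$ obtained from Fact~\ref{fact: closed form solution for d_p} applied to the two-parameter optimization over flows splitting among the direct edge $bc$, the in-triangle detour through $a$, and the detour $b \to v \to a \to c$.

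The main obstacle is to show that these two expressions for $d_{p,G_2}(b,c)$ and $d_{p,G_2'}(b,c)$ disagree whenever $p \neq 1, 2, \infty$. The $(a,b)$-equation turns out to be vacuous: substituting the identity $\alpha_p^q(1 + 2^{-q/p}) = 2^{-q/p}$ into the parallel combination of the direct triangle edge, the in-triangle $c$-detour, and the $v$-path yields exactly $d_{p,G_2'}(a,b) = 2^{2/p-1}$ for every $p$, matching $d_{p,G_2}(a,b)$ (this is essentially Claim~\ref{claim: parallel edges reduction for d_p} together with the series reduction, both of which are already known to be local). So the contradiction must arise from the $(b,c)$-equation. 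Both sides there are real-analytic in $p$ on $(1,\infty)$ and agree at $p = 2$ (where the classical $Y$-$\Delta$ transform applies after squaring edge weights) as well as in the limits $p \to 1$ and $p \to \infty$; I would certify strict inequality on the remaining open set either by Taylor-expanding around $p = 2$ to exhibit the first nonvanishing coefficient of the difference, or, more concretely, by numerically evaluating the gap at a convenient test value such as $p = 3$ and invoking analyticity to propagate the mismatch. A single witness $p$ suffices to produce an edge $e \in E_\Delta$ with $w_1'(e) \neq w_2'(e)$, which is exactly the conclusion of the claim.
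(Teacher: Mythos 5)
Your construction differs from the paper's in the choice of $G_2$: you attach a new vertex $v$ to $a$ and $b$ by \emph{unit}-weight edges, whereas the paper attaches $v$ to $b$ and $c$ by \emph{infinite}-weight edges, which forces $\varphi_b=\varphi_c=\varphi_v$ in the dual and collapses everything to closed forms. Your preliminary computations are fine: the value $\alpha_p=(2^{q/p}+1)^{-1/q}=(1+2^{q-1})^{-1/q}$ forced by $G_1$ agrees with the paper's, and your observation that with this $\alpha_p$ the $(a,b)$-equation in your $G_2$ is automatically satisfied (so any contradiction must come from the $(b,c)$-pair) is correct. The difficulty is that, having chosen this $G_2$, you are left with a two-parameter flow optimization on each side of the $(b,c)$-equation, and this is exactly where your argument stops being a proof.

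The claim is quantified over every $p\in(1,2)\cup(2,\infty)$: for each such $p$ you must exhibit graphs for which the forced weights disagree, so with a fixed pair $(G_1,G_2)$ you must show the $(b,c)$-mismatch at \emph{every} $p\neq 1,2,\infty$. Your proposed closing step --- Taylor-expand around $p=2$, or check numerically at $p=3$ and ``invoke analyticity to propagate the mismatch'' --- does not deliver this. Analyticity only says the difference of the two sides either vanishes identically or has isolated zeros; a nonzero value at $p=3$ rules out identical vanishing, but it leaves open the possibility of further isolated zeros in $(1,2)\cup(2,\infty)$, and you have already noted that the difference vanishes at $p=2$ and in the limits $p\to 1$ and $p\to\infty$, so the zero set is certainly nonempty. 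At any additional zero $p_0$ your graphs would simply fail to witness the claim, and ``a single witness $p$ suffices'' is the wrong quantifier. The paper avoids this by engineering $G_2$ so that both forced values of $\alpha$ are explicit, namely $(1+2^{q-1})^{-1/q}$ versus $(2^{1/(q-1)}+1)^{(1-q)/q}$, and then proving by a Jensen/convexity argument that these coincide only when $p=q=2$ --- an inequality valid for all $p$ at once. To rescue your route you would need an analogous monotonicity or convexity argument showing your two $(b,c)$-expressions agree only at $p=2$, which is precisely the part your proposal defers.
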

Observe that Claim \ref{claim: existence of graphs that contradict the existence of Y-Delta transform}
immediately  gives
Theorem \ref{thm: non existence of Y-Delta transform}
as a corollary.
\begin{proof}
We will take $G_1=G_Y$ (with unit weights),
and $G_2$ to be $G_Y$ with the addition of a new vertex
and two edges as presented in figure
\ref{fig:new graph G_2 for Y-Delta transform}
below.
\begin{figure}[ht!]
    \centering

\tikzset{every picture/.style={line width=0.75pt}} 

\begin{tikzpicture}[x=0.75pt,y=0.75pt,yscale=-1,xscale=1]

\draw  [fill={rgb, 255:red, 0; green, 0; blue, 0 }  ,fill opacity=1 ] (186.44,195.78) .. controls (186.44,189.83) and (191.27,185) .. (197.22,185) .. controls (203.17,185) and (208,189.83) .. (208,195.78) .. controls (208,201.73) and (203.17,206.56) .. (197.22,206.56) .. controls (191.27,206.56) and (186.44,201.73) .. (186.44,195.78) -- cycle ;
\draw  [fill={rgb, 255:red, 0; green, 0; blue, 0 }  ,fill opacity=1 ] (295.44,42.78) .. controls (295.44,36.83) and (300.27,32) .. (306.22,32) .. controls (312.17,32) and (317,36.83) .. (317,42.78) .. controls (317,48.73) and (312.17,53.56) .. (306.22,53.56) .. controls (300.27,53.56) and (295.44,48.73) .. (295.44,42.78) -- cycle ;
\draw  [fill={rgb, 255:red, 0; green, 0; blue, 0 }  ,fill opacity=1 ] (377.44,195.78) .. controls (377.44,189.83) and (382.27,185) .. (388.22,185) .. controls (394.17,185) and (399,189.83) .. (399,195.78) .. controls (399,201.73) and (394.17,206.56) .. (388.22,206.56) .. controls (382.27,206.56) and (377.44,201.73) .. (377.44,195.78) -- cycle ;
\draw  [fill={rgb, 255:red, 0; green, 0; blue, 0 }  ,fill opacity=1 ] (291.44,146.78) .. controls (291.44,140.83) and (296.27,136) .. (302.22,136) .. controls (308.17,136) and (313,140.83) .. (313,146.78) .. controls (313,152.73) and (308.17,157.56) .. (302.22,157.56) .. controls (296.27,157.56) and (291.44,152.73) .. (291.44,146.78) -- cycle ;
\draw  [fill={rgb, 255:red, 0; green, 0; blue, 0 }  ,fill opacity=1 ] (437.44,54.78) .. controls (437.44,48.83) and (442.27,44) .. (448.22,44) .. controls (454.17,44) and (459,48.83) .. (459,54.78) .. controls (459,60.73) and (454.17,65.56) .. (448.22,65.56) .. controls (442.27,65.56) and (437.44,60.73) .. (437.44,54.78) -- cycle ;
\draw    (302.22,146.78) -- (388.22,195.78) ;
\draw    (302.22,146.78) -- (197.22,195.78) ;
\draw    (302.22,146.78) -- (306.22,42.78) ;
\draw    (306.22,42.78) -- (448.22,54.78) ;
\draw    (388.22,195.78) -- (448.22,54.78) ;

\draw (210,199.18) node [anchor=north west][inner sep=0.75pt]    {$a$};
\draw (275,46.18) node [anchor=north west][inner sep=0.75pt]    {$b$};
\draw (401,199.18) node [anchor=north west][inner sep=0.75pt]    {$c$};
\draw (361,22.18) node [anchor=north west][inner sep=0.75pt]    {$w=\infty $};
\draw (299,160.18) node [anchor=north west][inner sep=0.75pt]    {$r$};
\draw (463.22,28.96) node [anchor=north west][inner sep=0.75pt]    {$v$};
\draw (420.22,128.68) node [anchor=north west][inner sep=0.75pt]    {$w=\infty $};

\end{tikzpicture}
    
    \caption{Illustration of the graph $G_2$.}
    \label{fig:new graph G_2 for Y-Delta transform}
\end{figure}
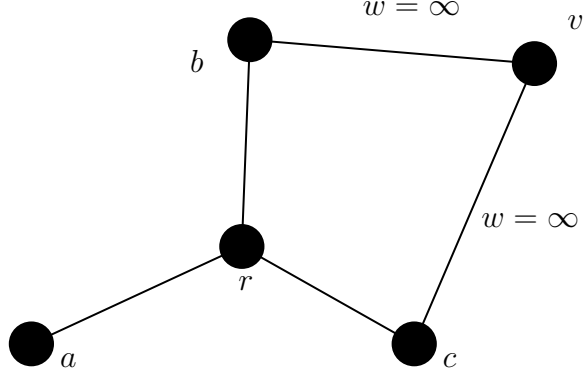
Formally, we define
$V_2=V_Y\cup\{v\}$,
$E_2=E_Y\cup\curprs{\{v,b\},\{v,c\}}$,
and $w_2\prs{\{v,b\}}=
w_2\prs{\{v,c\}}=\infty$,
and the rest of the edges are of unit weight.

Next, suppose that we have transformed the ``Y''
into ``$\Delta$" in the two graphs, and received
$G_1', G_2'$ that satisfy the conditions in the claim,
and assume towards contradiction that 
for all $e\in E_\Delta$,
$w'_1(e)=w'_2(e)$.
Denote the weights by $w_i'\prs{\{b,c\}}=\alpha,
w_i'\prs{\{a,c\}}=\beta,$
and $w_i'\prs{\{a,b\}}=\gamma$
(as presented in figure
\ref{fig:Y-Delta-transform}).

We first focus on the case of $G_1$,
and examine the new weights in $G_1'$.
Note that in our case, $G_1'$
is simply the triangle over the terminals with some new weights.
In addition, note that by symmetry,
the new weights must satisfy $\alpha=\beta=\gamma$.
It is easy to see that for any two pairs
$(s,t),(s',t')\in V_T\times V_T$ with $s\neq t$ and $s'\neq t'$,
it holds that
$d_{p,G_1}\prs{s,t} = d_{p,G_1}\prs{s',t'}$.
Thus, by our assumption,
this also holds in $G'_1$.
In order for this to hold in the triangle,
it must hold that $\alpha=\beta=\gamma$.
Next, let us compute $\bar{d}_{p,G_1}(a,b)$
(recall that $d_p=\prs{\bar{d}_p}^{-1}$ and thus
preserving $\bar{d}_p$
implies preserving $d_p$
and vice versa).
Note that by definition,
$\bar{d}_{p,G_1}(a,b)^q  = 
\underset{\varphi\in\R^{V_1}:
        \varphi_a-\varphi_b=1}{\min}
        \curprs{\abs{\varphi_a-\varphi_r}^q
        +\abs{\varphi_b-\varphi_r}^q
        +\abs{\varphi_c-\varphi_r}^q}$.
Thus, w.l.o.g we can set $\varphi_a=1,\varphi_b=0
        ,\varphi_r=x,\varphi_c=y$
for some $x,y\in \R$
and get that,
\begin{align*}
    \bar{d}_{p,G_1}(a,b)^q 
        & = \min_{x,y\in \R}
        \curprs{\abs{1-x}^q
        +\abs{0-x}^q
        +\abs{y-x}^q}
        \\
        & = \min_{x\in [0,1]}
        \curprs{\prs{1-x}^q + x^q}
        && (\text{by setting }y=x \text{ and }
        x\in[0,1])
\end{align*}
Define $f(x)=\prs{1-x}^q + x^q$,
and we wish to find a minimum
in the interval $[0,1]$.
Hence,
\begin{align*}
    && f'(x) & = 0 \\
    \iff &&
    -\prs{1-x}^{q-1}+x^{q-1}
    & = 0\\
    \iff &&
    x^{q-1}
    & = \prs{1-x}^{q-1}
    \\
    \iff &&
    x 
    & = \frac{1}{2}\\
\end{align*}
Thus, we get that
\begin{equation}
    \bar{d}_{p,G_1}(a,b)^q=
    \prs{1-\frac{1}{2}}^q
    +\prs{\frac{1}{2}-0}^q
    = 2\cdot \prs{\frac{1}{2}}^{q}
    = 2^{1-q}
\end{equation}
On the other hand,
let us compute $\bar{d}_{p,G_1'}(a,b)$
in terms of $\alpha$.
Again, w.l.o.g. we can set $\varphi_a=1,\varphi_b=0,\varphi_c=x$
and see that
\begin{align*}
    \bar{d}_{p,G_1'}(a,b)^q 
        & = \min_{x\in \R}
        \curprs{\alpha^q\cdot\abs{1-0}^q
        +\alpha^q\cdot\abs{0-x}^q
        +\alpha^q\cdot\abs{x-1}^q}
        \\
        & = \min_{x\in [0,1]}
        \curprs{\alpha^q
        +\alpha^q\cdot x^q
        +\alpha^q\cdot\prs{1-x}^q}
\end{align*}
Similarly, we can define $g(x) = \alpha^q
        +\alpha^q\cdot x^q
        +\alpha^q\cdot\prs{1-x}^q$
and find a minimum in the interval $[0,1]$.
By similar computations as above,
we deduce that the minimum
is achieved in $x=\frac{1}{2}$,
and thus we get that
\begin{equation}
    \bar{d}_{p,G'_1}(a,b)^q=\alpha^q\prs{
    1 
    +\prs{\frac{1}{2}}^q
    + \prs{1-\frac{1}{2}}^q
    }
    = \alpha^q\prs{
    1 +2^{1-q}
    }
\end{equation}
Thus, we can derive the value of the new weights $\alpha$.
\begin{equation}
    \begin{split}
        \alpha^q\cdot\prs{1+2^{1-q}} & = 2^{1-q}\\
        \implies \alpha & = \prs{\frac{2^{1-q}}{1+2^{1-q}}}^{1/q}
    \end{split}
\end{equation}
Hence, by examining the $d_p$ metric on $G_1$
and $G'_1$, we deduce that the weights should be 
$\alpha 
= \prs{1+2^{q-1}}^{-1/q}$.

Next, let us focus on the case of $G_2$,
and we start by computing $\bar{d}_{p,G_2}(a,b)$.
Note that in order to not pay the extremely large
weight of the edges
connecting $v$ to $b$ and $c$,
it must hold that  any minimizing potentials vector
will satisfy
$\varphi_b=\varphi_c=\varphi_v$.
Thus, we get that
\begin{align*}
    \bar{d}_{p,G_2}(a,b)^q & = 
    \min_{\varphi\in\R^{V_2}:\,
    \varphi_a-\varphi_b=1, 
    \varphi_b=\varphi_c=\varphi_v}
    \curprs{\abs{\varphi_a-\varphi_r}^q
    +\abs{\varphi_b-\varphi_r}^q
    +\abs{\varphi_c-\varphi_r}^q}\\
    & = \min_{x\in \R}
    \curprs{\abs{1-x}^q
    +\abs{0-x}^q
    +\abs{0-x}^q}
    &&(
    \varphi_a=1,\varphi_b=0,\varphi_r=x)\\
    & = \min_{x\in [0,1]}
    \curprs{\prs{1-x}^q + 2x^q}
\end{align*}
Similarly to the previous case, we can 
define $f(x)=\prs{1-x}^q + 2x^q$,
and we wish to find a minimum
by considering the derivative of $f$
and requiring it to be 0
(because again the minimum
is obtained when $x\in[0,1]$).
\begin{align*}
    && f'(x) & = 0 \\
    \iff &&
    -\prs{1-x}^{q-1}+2x^{q-1}
    & = 0\\
    \iff &&
    2 x^{q-1}
    & = \prs{1-x}^{q-1}\\
    \iff &&
    2^{1/(q-1)}\cdot x
    & = \prs{1-x}\\
    \iff &&
    x\cdot \prs{2^{1/(q-1)}+1}
    & = 1\\
    \iff &&
    x 
    & = \frac{1}{2^{1/(q-1)}+1}\\
\end{align*}
Thus, we get that 
\begin{equation}
    \bar{d}_{p,G_2}(a,c)^q=
    \prs{1-\frac{1}{2^{1/(q-1)}+1}}^q
    +2\cdot\prs{\frac{1}{2^{1/(q-1)}+1}}^q
    = \frac{2^{q/(q-1)}+2}{\prs{2^{1/(q-1)}+1}^q}
    = 2\cdot \prs{2^{1/(q-1)}+1}^{1-q}
\end{equation}
On the other hand,
let us compute $\bar{d}_{p,G_2'}(a,b)$
in terms of $\alpha$
(note that by our assumption all the weights
of edges from $E_\Delta$
must be equal).
Note that again, we do not want to pay the extremely large
weight of the edges
connecting $v$ to $b$ and $c$,
and thus we must set
$\varphi_b=\varphi_c=\varphi_v$.
Moreover, w.l.o.g we can choose $\varphi_a=1,\varphi_b=\varphi_c=0$
and get that,
\begin{align*}
        \bar{d}_{p,G_2'}(a,b)^q 
        & = \alpha^q\cdot\abs{1}^q
                +\alpha^q\cdot\abs{0}^q
                +\alpha^q\cdot\abs{-1}^q \\
        & = 2\alpha^q
\end{align*}
Thus, we can derive the value of the new weights $\alpha$.
\begin{align*}
    2\cdot\alpha^q & = 2\cdot \prs{2^{1/(q-1)}+1}^{1-q}\\
    \alpha & = \prs{2^{1/(q-1)}+1}^{\frac{1-q}{q}}
\end{align*}
Hence, in this case the new weights should be 
$\alpha  = \prs{2^{1/(q-1)}+1}^{-1/p}$.

Thus, we got that according to the first case, the weights
should be $ \alpha  =  \prs{1+2^{q-1}}^{-1/q}$,
but on the other hand they should be
$\alpha=\prs{2^{1/(q-1)}+1}^{-1/p}$.
Thus we ask whether
\begin{equation}\label{eq: 2 terms, are they identical? only when p=q=2}
    \prs{2^{1/(q-1)}+1}^{-1/p} \stackrel{?}{=}\prs{1+2^{q-1}}^{-1/q}
\end{equation}
It is easy to see that the two terms equal when $p=q=2$.
We will now show 
that they are different when $q\neq 2$.
First, note that (\ref{eq: 2 terms, are they identical? only when p=q=2})
is equivalent to
\begin{equation}\label{eq: 2 terms equiv 1 (power p)}
    2^{1/(q-1)}+1  \stackrel{?}{=}\prs{1+2^{q-1}}^{p-1}
\end{equation}
by taking power $p$.

Next, assume that $p\geq 2$, 
and focus on the RHS.
By applying $f\prs{\frac{a+b}{2}}\leq \frac{f(a)+f(b)}{2}$
for $f(x)=x^{p-1}$ (since $p\geq 2$ it is convex)
we can see that,
\begin{align*}
    \prs{1+2^{q-1}}^{p-1}
    & \leq 2^{p-1}\cdot\prs{ \prs{\frac{1}{2}}^{p-1}
    +2^{(q-2)\cdot(p-1)}
    }\\
    & = 1+2^{p -2(p-1) + (p-1)}\\
    & = 3
\end{align*}
Now, plugging this into (\ref{eq: 2 terms equiv 1 (power p)})
we get
\begin{equation}
    \begin{split}
        2^{1/(q-1)} & \leq  2\\
        \iff \frac{1}{q-1}  & \leq 1 \\
        \iff 1 & \leq q-1 \\
        \iff 2 & \leq q \\
        \iff 2 & \geq p
    \end{split}
\end{equation}
Recall that we assumed that $p\geq 2$ (in order to apply Jensen's
inequality) and reached the conclusion that $p\leq 2$,
thus it is impossible that the two terms in 
(\ref{eq: 2 terms, are they identical? only when p=q=2})
equal when $p>2$.

Next, we assume that $p\leq 2$ (which implies that $q\geq 2$),
and note that (\ref{eq: 2 terms, are they identical? only when p=q=2})
is equivalent to
\begin{equation}\label{eq: 2 terms equiv 2 (power q)}
    \prs{2^{1/(q-1)}+1}^{q-1}  \stackrel{?}{=}1+2^{q-1}
\end{equation}
by taking power $q$.
This time we focus on the LHS and similarly to the previous case,
we apply Jensen's inequality on it and get that
\begin{align*}
    \prs{2^{1/(q-1)}+1}^{q-1} & \leq 
    2^{q-1}\cdot\prs{2^{\prs{\frac{1}{q-1} - 1}\cdot(q-1)}+\prs{
    \frac{1}{2}}^{q-1}}\\
    & = 2^{1-(q-1)+(q-1)}+1\\
    & = 3
\end{align*}
Now, plugging this into (\ref{eq: 2 terms equiv 2 (power q)})
we get
\begin{equation}
    \begin{split}
        2^{q-1} & \leq  2\\
        \iff q-1  & \leq 1 \\
        \iff q & \leq 2 \\
        \iff p & \geq 2
    \end{split}
\end{equation}
and again, recall that we assumed that $p\leq 2$,
and reached the conclusion that $p\geq 2$,
which implies that the two terms in
(\ref{eq: 2 terms, are they identical? only when p=q=2})
equal if and only if $p=q=2$.
Thus, we have reached a contradiction,
and the claim follows.
\end{proof}

Note that a corresponding example will also
contradict the other direction, i.e. that there
is no valid $\Delta-Y$ transform for any $p\neq 2$.
            
We see that our counter example does not
contradict the case of $p=q=2$, 
since it holds that in both cases
$\alpha=\frac{1}{ \sqrt{3}}$.
As a matter of fact, in this case, there
exists a proper transform (similarly to the known
transform for the effective resistance),
and we show it in  Appendix \ref{appendix section - Proof of Delta-Y transform for p=2}.

\subsubsection{The Case of Minimum Cuts
(\texorpdfstring{$p=\infty$}{\textit{p}=infinity})}
\label{subsection: p=infinity}
In this subsection we elaborate on the case of $p=\infty$,
i.e. minimum cuts.
The general case of the $Y-\Delta$-transform 
is called a $k$-star-mesh transform, 
in which a $k$-star (a root vertex 
with $k$ neighbors) is
transformed into a clique (``mesh") over $k$
vertices, formally defined as 
shown in subsection \ref{section: non existence of delta-wye transform}.

By Theorem 1 from \cite{chaudhuri2000mimickingNetworks},
and also directly from basic principles,
it is easy to conclude that there exists such a transform
for $k=3$ and for $k=2$
(which is essentially the sequential edges
reduction case).
However,
for every $k>3$,
using Lemma 4 from \cite{chaudhuri2000mimickingNetworks},
we can deduce that there does not exist
such a transform for removing a vertex of degree $k$,
stated as follows.
\begin{theorem}\label{proposition: non existence of star-mesh transform for p=infinity and k>3}

For every $k>3$,
there does not exist a local
$k$-star-mesh transform that preserves $d_\infty$.
\end{theorem}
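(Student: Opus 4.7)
The plan is to prove Theorem \ref{proposition: non existence of star-mesh transform for p=infinity and k>3} by contradiction, mirroring the philosophy of the $k=3$ proof in Claim \ref{claim: existence of graphs that contradict the existence of Y-Delta transform}: I would exhibit a family of graphs sharing the same unweighted $k$-star at a vertex $r$, whose $\mincut$s among the non-root vertices force mutually incompatible constraints on the replacement clique. The new phenomenon for $k\geq 4$ is that the cut function of a clique on $k$ vertices has strictly fewer degrees of freedom than needed to mimic the multi-terminal cut structure of a $k$-star.

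I would start by fixing $G_0$ to be the unweighted $k$-star with root $r$ and leaves $v_1,\dots,v_k$, and letting $H$ with weights $\{c_{ij}\}_{i\neq j}$ denote the clique produced by the alleged local transform on $G_0$. By locality, the same $H$ must appear as the replacement whenever $G_0$ sits as an induced star (at $r$) inside a larger graph $G$, and since $d_\infty(s,t)=1/\mincut(s,t)$, preservation amounts to $\mincut_G(s,t)=\mincut_{G'}(s,t)$ for every pair $s,t$ of non-root vertices and every such $G$.

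To harvest constraints on $H$, for each non-trivial bipartition $(A,B)$ of the leaf set I would build a test graph $G_{A,B}$ that contains $G_0$ together with two fresh vertices $u,u'$, joining $u$ to every leaf in $A$ by an edge of very large weight $M$, and similarly $u'$ to every leaf in $B$. A short case analysis (on where to place $r$ and on whether to cut incident $u$- or $u'$-edges) shows
\begin{equation*}
    \mincut_{G_{A,B}}(u,u') \;=\; \min(|A|,|B|),
\end{equation*}
whereas in the transformed graph $G'_{A,B}$ (replace the star by $H$) the cheapest cut for $M$ large is $\{u\}\cup A$ versus $\{u'\}\cup B$, of weight $H[A,B]:=\sum_{i\in A,\, j\in B} c_{ij}$. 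Preservation therefore forces $H[A,B]=\min(|A|,|B|)$ for every non-trivial bipartition $(A,B)$.

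The contradiction would then come from two choices of $|A|$. Taking $|A|=1$ gives $\deg_H(v_i)=1$ for every $i$; taking $|A|=2$, which requires $k\geq 4$, gives
\begin{equation*}
    \deg_H(v_i)+\deg_H(v_j)-2c_{ij}\;=\;\min(2,k-2)\;=\;2,
\end{equation*}
so $c_{ij}=0$ for every pair $i\neq j$, immediately contradicting $\deg_H(v_i)=1$. The main obstacle I expect is the cut computation for $\mincut_{G_{A,B}}(u,u')$ on both sides and the verification that for $M$ sufficiently large the stated cuts are indeed optimal; this is routine but must be done carefully, and is essentially a special case of Lemma~4 of \cite{chaudhuri2000mimickingNetworks} from which the theorem is declared to follow. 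I would conclude by pointing out that the argument collapses precisely at $k=3$: then $|A|=2$ forces $|B|=1$, the two families of constraints merge into the single degree constraint, and the uniform weights $c_{ij}=\tfrac{1}{2}$ satisfy everything, consistent with the classical $k=3$ star-mesh transform for minimum cuts.
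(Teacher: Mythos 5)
Your proof is correct, but it takes a genuinely different route from the paper. The paper's argument is a contradiction via a single gadget: it invokes Lemma~4 of \cite{chaudhuri2000mimickingNetworks} as a black box to produce one ``bad'' bipartition $S$ of the leaves on which any star-mesh replacement must change the cut value, and then attaches two auxiliary vertices with infinite-weight edges to $S$ and to $N(r)\setminus S$ so that this multi-terminal cut becomes a $d_\infty$ distance between two non-root vertices, contradicting preservation. You instead make the argument self-contained: since the mesh is by definition a weighted clique $H$ on the leaves, your family of gadgets $G_{A,B}$ (with large but finite weight $M$, chosen above $k$ and above the fixed total clique weight, which plays the same role as the paper's infinite weights) translates preservation of $d_\infty=1/\mincut$ into the explicit linear constraints $H[A,B]=\min(|A|,|B|)$, and the two sub-families $|A|=1$ and $|A|=2$ are already inconsistent for $k\geq 4$ (degrees $1$ versus all $c_{ij}=0$). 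What the paper's route buys is brevity given the citation, since the cited lemma handles the harder statement about general mimicking networks (which may use Steiner vertices and arbitrary topology); what your route buys is an elementary, explicit contradiction that re-proves exactly the special case needed here, plus the nice observation of why the constraints become consistent (all weights $1/2$) precisely at $k=3$. Two small points to tighten: locality should be phrased as ``$r$ has degree $k$ with the same incident edge weights'' rather than ``$G_0$ sits as an induced star'' (induced-ness is not required and not what the definition uses), and your cut computations implicitly use nonnegativity of the clique weights when discarding cuts through $M$-edges; also, the routine gadget verification is not really ``a special case of Lemma~4'' of \cite{chaudhuri2000mimickingNetworks} — it replaces that lemma rather than instantiates it — but this mislabel does not affect the argument.
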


This is in fact a corollary from the following
Lemma presented in \cite{chaudhuri2000mimickingNetworks}.
\begin{lemma}[Lemma 4 in \cite{chaudhuri2000mimickingNetworks}]
There exists a $k$-terminal network for which every
mimicking network must have at least one non-terminal
vertex (in addition to the $k$ terminals).
\end{lemma}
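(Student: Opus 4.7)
The plan is to prove the lemma in the strong form required for the preceding theorem, namely that for every integer $k \geq 4$ there is an explicit $k$-terminal network whose min-cut function on subsets of terminals cannot be realized by any mimicking network on only the $k$ terminals. The candidate example is the $k$-star $G_0$ on vertex set $T \cup \{s\}$, where $T = \{t_1, \ldots, t_k\}$ are the terminals and $s$ is a single Steiner vertex, with $k$ unit-weight edges $\{s, t_i\}$. First I would compute the min-cut function of $G_0$: for any nonempty $S \subsetneq T$, the min-cut separating $S$ from $T \setminus S$ is achieved by placing $s$ on the larger side, yielding $\mincut_{G_0}(S, T \setminus S) = \min(|S|, k-|S|)$.

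Next, I would assume toward contradiction that a mimicking network $G'$ exists on vertex set exactly $T$, with non-negative edge weights $w_{ij} := w(t_i, t_j)$. Since $G'$ has no non-terminals, the only bipartition of $V(G')$ separating $S$ from $T \setminus S$ is $(S, T\setminus S)$ itself; hence the min-cut in $G'$ collapses to the single cut value $C(S) := \sum_{i \in S,\, j \notin S} w_{ij}$, which must equal $\mincut_{G_0}(S, T\setminus S)$ for every $S$. The singleton constraints $C(\{t_i\}) = 1$ pin the weighted degree of each terminal to $1$, so summing over $i$ and halving yields total edge weight $\sum_{i<j} w_{ij} = k/2$.

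For $k \geq 4$ the pair constraints read $C(\{t_i, t_j\}) = \min(2, k-2) = 2$, and the key step is a double-counting argument: summing these $\binom{k}{2}$ equations, a fixed edge weight $w_{ab}$ is counted once for each pair $\{t_i, t_j\}$ having exactly one endpoint in $\{a,b\}$, i.e. $2(k-2)$ times, which gives
\[
\sum_{\{i,j\}} C(\{t_i,t_j\}) \;=\; 2(k-2) \cdot \sum_{i<j} w_{ij} \;=\; k(k-2),
\]
whereas the right-hand side equals $\binom{k}{2} \cdot 2 = k(k-1)$. Since $k(k-2) < k(k-1)$ for every $k \geq 1$, this yields the desired contradiction, so no terminals-only mimicking network exists for this $G_0$.

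The argument invokes only singleton-cut and pair-cut equations; no higher-order cut values need to be analyzed. The only point requiring care is verifying that in a terminals-only mimicking network the min-cut really equals the canonical cut value $C(S)$ with equality, not merely as an upper bound; this holds trivially because the only candidate bipartition of $V(G')$ separating $S$ from $T \setminus S$ is $(S, T\setminus S)$ itself, leaving no freedom to minimize further. I do not anticipate a substantive obstacle: the proof reduces to a one-line counting identity once the star example is in hand, and remains valid under either common definition of ``mimicking network'' (preserving all subset min-cuts, or only pairwise terminal min-cuts), since the only equations used are of singleton and pair type.
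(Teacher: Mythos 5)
Your argument is correct, and it is worth noting that the paper itself does not prove this lemma: it is quoted from Chaudhuri et al., and the paper only records that the original proof is based on exactly the witness you chose, namely the unweighted $k$-star whose leaves are the terminals. Your derivation is sound for every $k\geq 4$: in a terminals-only network the cut separating $S$ from $T\setminus S$ is forced to be the canonical value $C(S)$, the singleton equations pin every weighted degree to $1$ (so the total weight is $k/2$), and the double count of the pair equations gives $2(k-2)\cdot k/2=k(k-2)$ on one side versus $\binom{k}{2}\cdot 2=k(k-1)$ on the other, a contradiction. This is a clean, uniform-in-$k$ argument and fully establishes the cited lemma.

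One correction to your closing remark: the claim that the proof ``remains valid under either common definition of mimicking network'' is false, and the distinction is exactly the subtlety this paper flags. Under the weaker requirement of preserving only pairwise terminal min-cuts, the constraints are of the form $\min_{S\ni t_i,\,S\not\ni t_j}C(S)=1$, which do not pin down $C(\{t_i\})$ or $C(\{t_i,t_j\})$; indeed, a cycle on the $k$ leaves with every edge of weight $1/2$ preserves all pairwise min-cuts of the star while using no non-terminal vertex, so the pairwise-only variant of the lemma is simply untrue. This is precisely why the paper remarks that the lemma does not immediately yield the non-existence of a local $k$-star-mesh transform for $d_\infty$, and why its proof of that theorem introduces the auxiliary graph $G_S$ with the extra vertices $v_S,u_S$ and infinite-weight edges, reducing subset cuts to $st$-cuts. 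So your proof establishes the lemma as stated (mimicking networks preserving all subset min-cuts), but the aside about the pairwise definition should be dropped.
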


Its proof in \cite{chaudhuri2000mimickingNetworks}
actually shows the following Lemma 
(restated using our terminology).
\begin{lemma}\label{lemma: not possible to remove vertex of degree >3 and preserve all the cuts between the neighbors}
Let $k>3$, 
and let $G=(V,E)$ be an unweighted
star with a (root) vertex $r$  of degree $k$.
Denote by $G'$  the graph obtained 
after applying a $k$-star-mesh transform on $G$.
Then, there must exist  $S\subset N(r)$
such that
\[
    \mincut_{G'}\prs{S,N(r)\backslash(S)}
    \neq \mincut_{G}\prs{S,N(r)\backslash(S)}.
\]
\end{lemma}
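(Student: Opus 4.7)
The plan is to derive a contradiction from the assumption that for every $S\subset N(r)$ the minimum cut separating $S$ from $N(r)\setminus S$ is the same in $G$ and $G'$. The strategy is to first write down the exact value of these mincuts on both sides, using only two sizes of $S$ (namely $|S|=1$ and $|S|=2$), and then show the two resulting systems of equations are inconsistent whenever $k>3$.

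First I would compute the mincut in the original star. Let $S\subset N(r)$ with $|S|=s$, and note that every path between a vertex of $S$ and a vertex of $N(r)\setminus S$ must pass through $r$. Any $(S,N(r)\setminus S)$-separating cut $(S',V\setminus S')$ either places $r$ on the side of $S$, in which case we must delete all edges from $r$ to $N(r)\setminus S$ (cost $k-s$), or places $r$ on the opposite side, in which case we must delete all edges from $r$ to $S$ (cost $s$). Hence
\begin{equation*}
    \mincut_G\bigl(S,N(r)\setminus S\bigr) \;=\; \min(s,k-s).
\end{equation*}
Next I would describe the mincut in $G'$. Since $G'$ has vertex set exactly $N(r)$, any cut separating $S$ from $N(r)\setminus S$ must place every vertex on its own side, so the cut is uniquely determined, giving
\begin{equation*}
    \mincut_{G'}\bigl(S,N(r)\setminus S\bigr) \;=\; \sum_{i\in S,\,j\in N(r)\setminus S} w'(ij),
\end{equation*}
where $w'$ denotes the clique edge weights produced by the transform.

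Now I would assume for contradiction that these two quantities agree for every $S\subset N(r)$ and extract constraints from two carefully chosen sizes. Taking $|S|=1$, i.e.\ $S=\{i\}$, the identity $\min(1,k-1)=1$ yields that the weighted degree of each vertex satisfies $\deg_{w'}(i)=\sum_{j\neq i}w'(ij)=1$. Taking $|S|=2$, say $S=\{a,b\}$, the cut value in $G'$ equals
\begin{equation*}
    \sum_{c\notin\{a,b\}}\bigl(w'(ac)+w'(bc)\bigr) \;=\; \bigl(\deg_{w'}(a)-w'(ab)\bigr)+\bigl(\deg_{w'}(b)-w'(ab)\bigr) \;=\; 2-2w'(ab),
\end{equation*}
using the degree normalization from the previous step. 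On the other hand, for $k>3$ we have $\min(2,k-2)=2$, so equality of mincuts forces $w'(ab)=0$ for every pair $\{a,b\}\subseteq N(r)$. But then every weighted degree in $G'$ is $0$, contradicting the requirement $\deg_{w'}(i)=1$ derived from the $|S|=1$ case. This contradiction proves the lemma.

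The only subtle step, and the one I would state most carefully, is the first one: in $G'$ every vertex lies in $N(r)$, so there is no auxiliary non-terminal that could absorb flow, and consequently the $(S,N(r)\setminus S)$-mincut coincides with the single bipartition $(S,N(r)\setminus S)$ of the whole vertex set. Everything else is a short calculation; there is no real obstacle once the $|S|=1$ and $|S|=2$ constraints are set side by side.
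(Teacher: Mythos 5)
Your proof is correct, and it is worth noting that the paper itself does not actually prove this lemma: it only asserts that the statement is implicit in the proof of Lemma 4 of \cite{chaudhuri2000mimickingNetworks} and defers entirely to that reference. Your argument supplies a self-contained, elementary replacement. The two computations are right: in the star, the only free vertex is $r$, so the cut separating $S$ from $N(r)\setminus S$ costs $\min(|S|,k-|S|)$; in $G'$ every vertex is a terminal, so the separating cut is the unique bipartition $(S,N(r)\setminus S)$ with value $\sum_{i\in S, j\notin S}w'(ij)$. The singleton cuts then force $\deg_{w'}(i)=1$ for every $i\in N(r)$, and for $k>3$ the pair cuts force $2-2w'(ab)=\min(2,k-2)=2$, i.e.\ $w'(ab)=0$ for every pair, contradicting the degree constraint (and the contradiction does not even require nonnegativity of the clique weights). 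A reassuring sanity check, which you might add in one line, is that the argument fails exactly at $k=3$: there $\min(2,k-2)=1$ gives $w'(ab)=\tfrac12$ for all pairs, which is consistent with unit degrees and is precisely the $Y$-$\Delta$ rule for min-cuts, so the hypothesis $k>3$ is used where it must be.
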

Note that it does not immediately imply Theorem \ref{proposition: non existence of star-mesh transform for p=infinity and k>3},
as in order to preserve the $d_\infty$ metric we only 
need to preserve all \textit{minimum $st$-cuts}, rather
than  minimum cuts between \textit{all subsets
of }$N(r)$.
\begin{proof}(of Theorem \ref{proposition: non existence of star-mesh transform for p=infinity and k>3})
Assume towards contradiction that there exists a local 
$k$-star-mesh transform that preserves the $d_\infty$ metric.
Let $G=(V,E)$ be an unweighted $k$-star graph, i.e. a root vertex 
$r$ with $k$ neighbors.
Then, apply the transform on $G$ in order to obtain a 
clique $G'$.
By Lemma \ref{lemma: not possible to remove vertex of degree >3 and preserve all the cuts between the neighbors},
there is a subset $S\subset N(r)$ such that 
\begin{equation}\label{eq: mincuts are not the same on S}
    \mincut_{G'}\prs{S,N(r)\backslash(S)}
    \neq \mincut_{G}\prs{S,N(r)\backslash(S)}.
\end{equation}
Next, consider the following graph $G_S$.
Add to $G$ two vertices $v_S$ and $u_S$,
connect $v_S$ to every vertex in $S$,
and connect $u_S$ to every vertex in $N(r)\backslash S$.
In addition, define an edge-weight function $w$
given by
\[
    w(e) = \begin{cases}
    1   & \text{if } e \text{ is incident to }r;\\
    \infty   & \text{o.w.}
    \end{cases}
\]
i.e. the weights of the star edges remain the same,
and the weights of the newly added edges
are infinite.
An illustration is presented in Figure \ref{fig:G_S illustration}.
\begin{figure}[htb!]
    \centering

\tikzset{every picture/.style={line width=0.75pt}} 

\begin{tikzpicture}[x=0.75pt,y=0.75pt,yscale=-1,xscale=1]

\draw [fill={rgb, 255:red, 0; green, 0; blue, 0 }  ,fill opacity=1 ]  (318.22,150.78) .. controls (318.22,145.93) and (322.15,142) .. (327,142) .. controls (331.85,142) and (335.78,145.93) .. (335.78,150.78) .. controls (335.78,155.63) and (331.85,159.56) .. (327,159.56) .. controls (322.15,159.56) and (318.22,155.63) .. (318.22,150.78) -- cycle ;
\draw [fill={rgb, 255:red, 0; green, 0; blue, 0 }  ,fill opacity=1 ]  (381,141.78) .. controls (381,136.93) and (384.93,133) .. (389.78,133) .. controls (394.63,133) and (398.56,136.93) .. (398.56,141.78) .. controls (398.56,146.63) and (394.63,150.56) .. (389.78,150.56) .. controls (384.93,150.56) and (381,146.63) .. (381,141.78) -- cycle ;
\draw  [fill={rgb, 255:red, 0; green, 0; blue, 0 }  ,fill opacity=1 ] (358.78,94.78) .. controls (358.78,89.93) and (362.71,86) .. (367.56,86) .. controls (372.4,86) and (376.33,89.93) .. (376.33,94.78) .. controls (376.33,99.63) and (372.4,103.56) .. (367.56,103.56) .. controls (362.71,103.56) and (358.78,99.63) .. (358.78,94.78) -- cycle ;
\draw [fill={rgb, 255:red, 0; green, 0; blue, 0 }  ,fill opacity=1 ]  (358.78,213.78) .. controls (358.78,208.93) and (362.71,205) .. (367.56,205) .. controls (372.4,205) and (376.33,208.93) .. (376.33,213.78) .. controls (376.33,218.63) and (372.4,222.56) .. (367.56,222.56) .. controls (362.71,222.56) and (358.78,218.63) .. (358.78,213.78) -- cycle ;
\draw  [fill={rgb, 255:red, 0; green, 0; blue, 0 }  ,fill opacity=1 ] (272,180.78) .. controls (272,175.93) and (275.93,172) .. (280.78,172) .. controls (285.63,172) and (289.56,175.93) .. (289.56,180.78) .. controls (289.56,185.63) and (285.63,189.56) .. (280.78,189.56) .. controls (275.93,189.56) and (272,185.63) .. (272,180.78) -- cycle ;
\draw [fill={rgb, 255:red, 0; green, 0; blue, 0 }  ,fill opacity=1 ]  (279,115.78) .. controls (279,110.93) and (282.93,107) .. (287.78,107) .. controls (292.63,107) and (296.56,110.93) .. (296.56,115.78) .. controls (296.56,120.63) and (292.63,124.56) .. (287.78,124.56) .. controls (282.93,124.56) and (279,120.63) .. (279,115.78) -- cycle ;
\draw  [fill={rgb, 255:red, 0; green, 0; blue, 0 }  ,fill opacity=1 ] (488,124.78) .. controls (488,119.93) and (491.93,116) .. (496.78,116) .. controls (501.63,116) and (505.56,119.93) .. (505.56,124.78) .. controls (505.56,129.63) and (501.63,133.56) .. (496.78,133.56) .. controls (491.93,133.56) and (488,129.63) .. (488,124.78) -- cycle ;
\draw [fill={rgb, 255:red, 0; green, 0; blue, 0 }  ,fill opacity=1 ]  (176,150.78) .. controls (176,145.93) and (179.93,142) .. (184.78,142) .. controls (189.63,142) and (193.56,145.93) .. (193.56,150.78) .. controls (193.56,155.63) and (189.63,159.56) .. (184.78,159.56) .. controls (179.93,159.56) and (176,155.63) .. (176,150.78) -- cycle ;
\draw    (287.78,115.78) -- (327,150.78) ;
\draw    (327,150.78) -- (367.56,213.78) ;
\draw    (327,150.78) -- (389.78,141.78) ;
\draw    (367.56,94.78) -- (327,150.78) ;
\draw    (327,150.78) -- (280.78,180.78) ;
\draw [line width=2.25]    (280.78,180.78) -- (184.78,150.78) ;
\draw [line width=2.25]    (287.78,115.78) -- (184.78,150.78) ;
\draw [line width=3]    (496.78,124.78) -- (367.56,94.78) ;
\draw [line width=3]    (496.78,124.78) -- (389.78,141.78) ;
\draw [line width=2.25]    (496.78,124.78) -- (367.56,213.78) ;
\draw   (254,151.78) .. controls (254,101.09) and (267.33,60) .. (283.78,60) .. controls (300.22,60) and (313.56,101.09) .. (313.56,151.78) .. controls (313.56,202.47) and (300.22,243.56) .. (283.78,243.56) .. controls (267.33,243.56) and (254,202.47) .. (254,151.78) -- cycle ;
\draw   (344.56,150.78) .. controls (344.56,100.09) and (357.89,59) .. (374.33,59) .. controls (390.78,59) and (404.11,100.09) .. (404.11,150.78) .. controls (404.11,201.47) and (390.78,242.56) .. (374.33,242.56) .. controls (357.89,242.56) and (344.56,201.47) .. (344.56,150.78) -- cycle ;

\draw (150,138.4) node [anchor=north west][inner sep=0.75pt]    {$v_{S}$};
\draw (516,114.4) node [anchor=north west][inner sep=0.75pt]    {$u_{S}$};
\draw (276,33.4) node [anchor=north west][inner sep=0.75pt]    {$S$};
\draw (320,115.4) node [anchor=north west][inner sep=0.75pt]    {$r$};
\draw (340,30.4) node [anchor=north west][inner sep=0.75pt]    {$N( r) \backslash S$};
\draw (219,112.4) node [anchor=north west][inner sep=0.75pt]    {$\infty $};
\draw (219,144.4) node [anchor=north west][inner sep=0.75pt]    {$\infty $};
\draw (423.17,147.68) node [anchor=north west][inner sep=0.75pt]    {$\infty $};
\draw (421.17,118.68) node [anchor=north west][inner sep=0.75pt]    {$\infty $};
\draw (423.17,89.68) node [anchor=north west][inner sep=0.75pt]    {$\infty $};

\end{tikzpicture}
    \caption{Illustration of the graph $G_S$.}
    \label{fig:G_S illustration}
\end{figure}
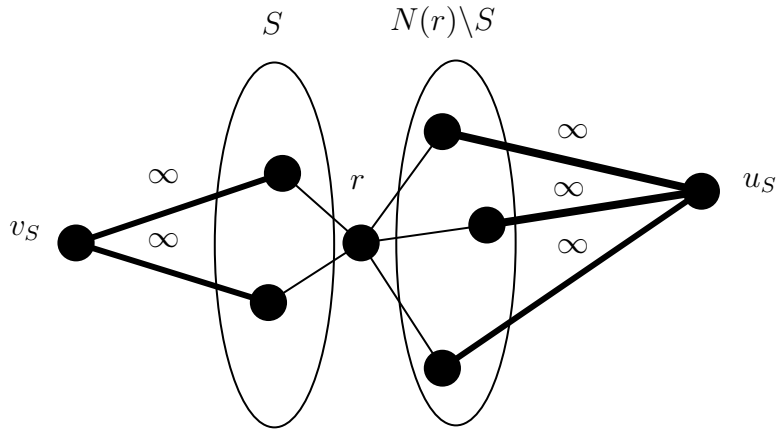

Observe that $d_{\infty,G_S}\prs{v_S,u_S}=
\mincut_{G}\prs{S,N(r)\backslash S}$.
Next, apply the transform on $G_S$ and denote
the obtained graph by $G'_S$.
We assumed the transform preserves $d_\infty$,
and thus in particular 
$d_{\infty,G'_S}\prs{v_S,u_S}=d_{\infty,G_S}\prs{v_S,u_S}$.
But this implies that 
\[
    \mincut_{G'}\prs{S,N(r)\backslash(S)}
    = \mincut_{G}\prs{S,N(r)\backslash(S)},
\]
in contradiction to (\ref{eq: mincuts are not the same on S}).
\end{proof}

Note that the proof in fact shows that there does not
exist a local $k$-star-mesh transform that preserves
$d_\infty$ even for the family of planar graphs,
as $G_S$ is planar.
Moreover, the construction of $G_S$ can be modified
such that $G_S$ will be outer-planar,
which will give the same result even for 
 outer-planar graphs.
However, for trees there does exist such a transform.

\section{Conclusions and Open Questions}\label{chapter: conclusions and open questions}

In this section we discuss  questions
that are left open from our work.

\paragraph{Regarding $d_p$ sparsifiers.}
In subsection \ref{section - lower bound on resistance sparsifiers}
we proved a general lower bound
of $\Omega(n/\sqrt{\varepsilon})$
edges for an $\varepsilon$-resistance sparsifier,
while there is an upper bound of $O(n/\varepsilon)$
edges for a resistance sparsifier for the clique.
We proved this by showing 
that every non-complete graph cannot achieve
better than 
$\prs{1+\frac{1}{O(n^2)}}$-approximation 
of the resistance distance in an $n$-clique.
However, we showed the stronger bound of
$\prs{1+\frac{1}{O(n)}}$-approximation 
for regular graphs,
which intuitively, seem to be the best fit for this task.
This result suggests that 
the stronger bound should hold in general,
which would prove Conjecture 
\ref{intro: conjecture: lower bound on resistance sparsifiers},
stating that in the worst case,
an $\varepsilon$-resistance sparsifier requires
$\Omega(n/\varepsilon)$ edges.
\begin{oq}
Is it true that
for every non-complete graph,
\begin{equation}
    \frac{\max_{x\neq y\in V}\Reff(x,y)}{\min_{x\neq y\in V}\Reff(x,y)}
    \geq  1+ \frac{1}{O(n)}?
\end{equation}
\end{oq}
Another open question
is to extend the lower bound on resistance
sparsifiers to other values of $p$
(the solved cases are $p=1,\infty$ with matching
upper and lower bounds),
as our lower bound for effective resistance
does not extend to other
values of $p$.

Additionally, in subsection \ref{section - flow metric sparsifiers}
we saw that for fixed $p\in (4/3,\infty]$
with H\"older conjugate $q$,
there exists $d_p$-sparsifiers
with 
$|E'|=f(n,\varepsilon,p)$ edges,
where
\begin{equation}\label{eq: number of edges for dp sparsifiers}
    f(n,\varepsilon,p) = 
    \begin{cases}
        n-1     & \text{if }p=\Omega\prs{\varepsilon^{-1} \log n},\\
        \widetilde{O}\prs{n\varepsilon^{-2}} & 
        \text{if }2<p<\infty,\\
        \widetilde{O}\prs{n^{q/2}\varepsilon^{-5}} & 
        \text{if }\frac{4}{3} < p < 2.
    \end{cases}
\end{equation}
One question is regarding
the number of edges needed
to preserve $d_p$ for  $p$ that is  close to $2$.
Recall that for $d_2$
there is the construction of Chu et al
\cite{chu2020shortCycleLongPaper}
for resistance sparsifiers with $\widetilde{O}(n/\varepsilon)$
edges.
In \eqref{eq: number of edges for dp sparsifiers}
we see that in order to preserve 
the $d_p$ metric for $p>2$, then $\widetilde{O}\prs{n\cdot\varepsilon^{-2}}$ edges
suffice, 
i.e. we have a gap in the dependence on $\varepsilon$.
Additionally, for $\frac{3}{4}<p<2$, we 
see that $\widetilde{O}\prs{n^{q/2}\cdot\varepsilon^{-5}}$
edges suffice (where $q/2>1$),
i.e. there is a gap in both $\varepsilon$ and $n$.
Thus, it would be interesting to close these gaps from both
sides of $p=2$.
\begin{oq}
For $p=2\pm 0.01$, does every graph $G$
with $n$ vertices
admits a $d_p$-sparsifier 
that achieves $1+\varepsilon$ approximation
with $\widetilde{O}(n/\varepsilon)$ edges?
\end{oq}
In addition, as $p$ tends to 1, the $d_p$ metric
tends to the shortest-path metric,
where multiplicative spanners are in fact $d_1$-sparsifiers
\cite{peleg1989graph, althofer1993sparse},
for which there exists a lower bound on the number of edges
needed in order to preserve the shortest-path distance.
This leads to the question  whether we can  find 
a lower bound on the number of edges needed 
to preserve the $d_p$ metric for values of $p$
close to 1.
\begin{oq}
For $p=1.1$, can we prove that
a $d_p$-sparsifier requires
$n^{1+\Omega(1)}$
edges in order to achieve 
$1.01$ approximation?
\end{oq} 

Another interesting question is
the gap in the number of needed edges for $d_p$
sparsifiers for $p\in(2,\infty)$.
For instance, for $p=2$ there exists a resistance sparsifier
with $\widetilde{O}\prs{n\cdot\varepsilon^{-1}}$ edges
\cite{chu2020shortCycleLongPaper},
for $p=\Omega(\varepsilon^{-1}\log n)$ 
there exists
a $d_\infty$-sparsifier with $n-1$ edges, which is clearly the
best we can hope for,
and yet for any value $2<p<\varepsilon^{-1}\log n$
the best result we have so far requires
 $\widetilde{O}\prs{n\cdot \varepsilon^{-2}}$ edges.
An explanation for this phenomenon is that 
essentially our proof for flow metric sparsifiers
showed a generalization
of spectral sparsifiers,
since the sparsifiers preserve the norm
of $\norm{WB\varphi}_q$
for every $\varphi\in \R^V$,
which is stronger than what we need.
\begin{oq}
For $2<p<\varepsilon^{-1}\log n$,
there exists $d_p$-sparsifier with
$\widetilde{O}(n/\varepsilon^2)$ edges.
Can we remove the $\poly(\log n)$ factors?
Can we reduce the dependency in 
$\varepsilon$ to (say) $\varepsilon^{-1}$?
Can we show that 
a tree with $n-1$ edges is \textbf{not} sufficient?
\end{oq}

\paragraph{Regarding Delta-Wye transform.}
It is known that for $p=1,2$ and every $k\geq 3$,
there exists
local $k$-star-mesh transforms that preserve $d_p$.
However, in subsection \ref{section - transforms for flow metrics}
we showed that for $k=3$,  \textbf{local}  
$k$-star-mesh-transform that preserves
$d_p$
exists if and only if $p=1,2,\infty$.
Moreover, we showed that for every $k>3$,
there is no \textbf{local}  
$k$-star-mesh-transform that preserves
$d_\infty$.
What about transforms such that the weights
of the new edges may depend on the rest of the graph
(i.e. not local)?
For example in the nature of Schur-Complements.
\begin{oq}
Does there exist a $k$-star-mesh-transform that preserves
$d_p$ for $k=3$ and $p\neq 1,2,\infty$?
for $k>3$ and $p=\infty$?
\end{oq}

\paragraph{Understanding the geometry of the flow metrics.}
An  important tool for understanding the
structure of the flow metrics
is via metric embeddings, 
i.e. mapping a metric space into another one
(specifically in our case into a
normed space)
while preserving the distances -
in which case the mapping is called
an isometry,
or up to some error
 - in which case we say that the mapping has
distortion $>1$,
see e.g. \cite{MatousekDiscreteGeometry,matouvsek1997embedding, matouvsek2013lectureEmbeddings}.
Towards this,
in subsection \ref{section - p-strong triangle inequality}
we showed that the $d_p$
metrics are $p$-strong, which gives some information
about their structure.
For example, for fixed $p\geq 2$,
in any embedding of $d_p$
into $\ell_2$, no 3 points
can lie on the same line.

For the special cases of $p=1,2, \infty$,
it is known that $d_p$ embeds isometrically into
$\ell_q$ (with $q$ being the H\"older conjugate of $p$).
We conjecture that this should hold in general.
\begin{conjecture}\label{conjecture: d_p into l_q embedding conjecture}
Fix $p\in [1,\infty]$ with H\"older conjugate $q$,
and let $G=(V,E,w)$ be a graph.
Then there exists a mapping $\Phi:V\rightarrow \ell_q$
such that
\begin{equation}
    \forall s,t\in V,\quad
    d_p(s,t)=\norm{\Phi(s)-\Phi(t)}_q.
\end{equation}
\end{conjecture}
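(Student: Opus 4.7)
The plan is to attack this conjecture via the dual-potential formulation of Claim \ref{claim: d_p as importance sampling},
\[
d_p(s,t) \;=\; \sup\{ (\chi_s-\chi_t)^T \varphi \,:\, \varphi\in\R^V,\ \|WB\varphi\|_q \le 1\},
\]
which realizes $d_p$ as the dual of the ``pullback'' norm $N(\varphi):=\|WB\varphi\|_q$ on $\R^V/\mathrm{span}\{\OneVec\}$. Since $N$ is literally pulled back from $\ell_q^E$ via the linear map $WB$, the natural candidate embedding takes the form $\Phi(v):=WB\varphi^{(v)}\in\R^E$ for a well-chosen family of ``base potentials'' $\{\varphi^{(v)}\}_{v\in V}\subset\R^V$. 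The displacement is then $\Phi(s)-\Phi(t) = WB(\varphi^{(s)}-\varphi^{(t)})$, so the desired isometry reduces to verifying
\[
\|WB(\varphi^{(s)}-\varphi^{(t)})\|_q = d_p(s,t) \quad \text{for all } s,t\in V.
\]

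I would first verify this template against the three known cases, which should pin down the correct choice of $\varphi^{(v)}$. For $p=2$ one takes $\varphi^{(v)}=L^+\chi_v$ with $L=B^TW^2B$, and the identity reduces to $(\chi_s-\chi_t)^TL^+(\chi_s-\chi_t)=\Reff(s,t)=d_2(s,t)^2$, recovering the classical embedding $v\mapsto (L^{1/2})^+\chi_v$ into $\ell_2$. For $p=\infty$ the base potentials are cut indicators coming from a Gomory--Hu decomposition, and for $p=1$ one uses a Fr\'echet-type construction. These three instantiations expose a common mechanism: the pair-optimal dual potential from \eqref{eq: bar(d)_p metric as convex optimization problem} must decompose as $\varphi^{(s)}-\varphi^{(t)}$ for a single vertex-indexed family.

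For general $p$, the core step would be to construct $\{\varphi^{(v)}\}$ (for instance, by fixing a base vertex $v_0$ and letting $\varphi^{(v)}$ be the normalized dual optimizer for the pair $(v_0,v)$), and then to verify the cross-pair identity for every $s,t$ by substituting into the nonlinear KKT characterization of Fact \ref{fact: closed form solution for d_p}. If this single-family attempt fails, a natural second attempt is to enlarge the target to an arbitrarily high-dimensional $\ell_q$ by stacking coordinate blocks indexed by \emph{all} pair-optimal dual potentials $\{\varphi^{*,(s,t)}\}_{s,t\in V}$ with carefully chosen scalar weights, with the hope of averaging out the mismatch from any single family.

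The hard part will be the nonlinearity of the $q$-Laplacian for $p\neq 2$: the success of the template for $p=2$ rests entirely on the linearity of $L^+$, which guarantees that the pair-optimal potential decomposes as $\varphi^{(s)}-\varphi^{(t)}$; for $p\neq 2$ no such linearity holds, and the pair-optimal $\varphi^*$ for $(s,t)$ generally does \emph{not} lie in the affine span of any single vertex-indexed family. Neither the $p$-strong triangle inequality (Theorem \ref{theorem: p-strong tirangle inequality}) nor the monotonicity results of subsection \ref{section - monotonicity properties} suffice to force an $\ell_q$ embedding on their own, so genuinely new structural information about $d_p$ must be invoked. A plausible workaround for $p\ge 2$ (so $1\le q\le 2$) is to try to establish $q$-negative type for $d_p$ and then invoke the Bretagnolle--Dacunha-Castelle--Krivine characterization to get existential $\ell_q$-embeddability; for $p<2$ (so $q>2$) no such clean characterization is available, and that range is where the conjecture seems most likely to demand essentially new ideas.
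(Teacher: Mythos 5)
The statement you are trying to prove is Conjecture~\ref{conjecture: d_p into l_q embedding conjecture}, which the paper explicitly leaves open: it offers no proof, only the three known special cases $p=1,2,\infty$ and, in Appendix~\ref{section: p strong is not enough for embedding conjecture}, a negative result (Claim~\ref{claim: what we want to show in section 1}) showing that the $p$-strong triangle inequality alone cannot force an $\ell_q$-embedding. So there is no argument in the paper to compare yours against, and your submission is not a proof either --- it is a research plan whose decisive step is left unestablished. Concretely, the pullback template $\Phi(v)=WB\varphi^{(v)}$ proves the conjecture only if, for every pair $s,t$, the pair-optimal dual potential of \eqref{eq: bar(d)_p metric as convex optimization problem} is (up to the right normalization) of the form $\varphi^{(s)}-\varphi^{(t)}$ for a single vertex-indexed family. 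For $p=2$ this is exactly the linearity of $L^{+}$; for $p\neq 2$ the optimizer is characterized by the nonlinear $q$-Laplacian equation of Fact~\ref{fact: closed form solution for d_p}, the solution map $(\chi_s-\chi_t)\mapsto\varphi^{*}$ is not additive, and --- as you yourself observe --- the required decomposition generally fails. You name this obstacle but do not overcome it, so the core of the conjecture is untouched.

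The two fallbacks you sketch do not close the gap either. Stacking coordinate blocks indexed by all pair-optimal potentials with scalar weights is a Fr\'echet-style device that works for $\ell_\infty$ because the max ignores the non-witnessing coordinates; for finite $q$ every block contributes its full $q$-th power to every pair, so you would need to solve a system forcing each pair's total to equal $d_p(s,t)^q$ exactly, with no argument that nonnegative weights exist --- and the paper's counterexample machinery in Claim~\ref{claim: what we want to show in section 1} warns that any such argument must use properties of $d_p$ beyond $p$-strongness. The negative-type route for $p\ge 2$ (establishing that $d_p^{q}$ is of negative type and invoking the classical embedding theorems into $L_q$, $q\le 2$) is a legitimate sufficient-condition reduction, but proving negative type for $d_p$ is itself open and is not attempted, and you concede that the range $p<2$ (hence $q>2$) has no analogous characterization at all. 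In short: the approach is a sensible way to think about the problem and is consistent with the paper's own discussion, but as a proof it has a genuine gap --- essentially the entire content of the conjecture remains to be supplied.
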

We remark that the $p$-strong triangle
inequality alone is not enough in order to prove
the above,
and thus there must be additional properties
of the flow metrics that should be used.
We present it formally in Appendix
\ref{section: p strong is not enough for embedding conjecture}.

Furthermore,
by using the connection between the resistance distance
and the graph Laplacian, it was shown
that the resistance distance
is isometrically embedded into $\ell_2^2$
\cite{spielmanSrivastava2011graphSparsificationReff}.
We suspect that this approach can be generalized
by using the connection between $d_p$
and the graph $q$-Laplacian
(presented in Appendix \ref{appendix: connection to the laplacian})
in order to show  that $d_p^p$
can be isometrically embedded
into $\ell_2^{2}$,
which would give some more information about their structure.
\begin{conjecture}\label{conjecture: d_p^p embeds isometricaly into l_2^2}
Fix $p\in (1,\infty)$,
and let $G=(V,E,w)$ be a graph.
Then there exists a mapping $\Psi:V\rightarrow \ell_2^2$
such that
\begin{equation}
    \forall s,t\in V,\quad
    d_p(s,t)^p=\norm{\Psi(s)-\Psi(t)}_2^2.
\end{equation}
\end{conjecture}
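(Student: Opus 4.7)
The plan is to reduce the conjecture to a negative-type condition via Schoenberg's theorem, and then attempt to verify it through the dual $q$-Laplacian formulation developed in subsection \ref{section - Basic Properties}. Recall Schoenberg's theorem: a semimetric $\rho$ on $V$ embeds isometrically into $\ell_2^2$ if and only if $\rho$ is of negative type, i.e. $\sum_{i,j} a_i a_j \rho(x_i,x_j)\le 0$ whenever $\sum_i a_i=0$. Theorem \ref{intro: thm: d_p is p-strong} already ensures that $d_p^p$ satisfies the ordinary triangle inequality, so it is at least a semimetric, and Conjecture \ref{conjecture: d_p^p embeds isometricaly into l_2^2} reduces to showing that the matrix $M_{ij}:=d_p(x_i,x_j)^p$ is conditionally negative semidefinite for every finite set of points $x_1,\dots,x_n\in V$.

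Next, I would exploit the closed-form characterization of Fact \ref{fact: closed form solution for d_p}: at optimum, the potential $\varphi^*$ and flow $f^*$ are linked by $\varphi^*_x-\varphi^*_y=f^*(xy)\abs{f^*(xy)}^{p-2}/w(xy)^p$, and together with the identification $d_p=\widetilde d_p$ from Claim \ref{claim: d_p as importance sampling} one can normalize so that $d_p(s,t)^p=\varphi^*_s-\varphi^*_t$, where (up to a scalar) $\varphi^*$ solves the $q$-Laplacian system $L_q\varphi=\chi_s-\chi_t$ with $L_q\varphi:=B^T W^q \sigma_{q-1}(B\varphi)$ and $\sigma_{q-1}(x)=\abs{x}^{q-1}\operatorname{sign}(x)$. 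For $p=2$ this system is linear, so $d_2(s,t)^2=(\chi_s-\chi_t)^TL^+(\chi_s-\chi_t)$ and the PSD factorization $\Psi(v)=(L^+)^{1/2}\chi_v$ furnishes the embedding. The goal is to mimic this factorization in spite of the nonlinearity of $L_q$.

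For the general case, I would set up the energy $\mathcal{E}_q(\varphi)=\tfrac{1}{q}\norm{WB\varphi}_q^q$ and its Legendre transform $\mathcal{E}_q^*$, and show that $d_p(s,t)^p$ is a positively $p$-homogeneous function of $\chi_s-\chi_t$ expressible through $\mathcal{E}_q^*$. I would then attempt to verify the CNSD inequality directly: given scalars $a_i$ with $\sum_i a_i=0$, combine the optimal flows $f^*_{ij}$ for the pairs $(x_i,x_j)$ into a single superposition $F=\sum_{i<j}a_ia_j f^*_{ij}$, and bound the cross terms by using convexity of $\norm{W^{-1}\cdot}_p^p$ together with the $p$-strong triangle inequality. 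A complementary approach would be to seek an integral representation of the form $d_p(s,t)^p=\int_\Omega(\psi_s(\omega)-\psi_t(\omega))^2\,d\mu(\omega)$, where $(\Omega,\mu)$ is built from the nonlinear Green's function of $L_q$, possibly via a resolvent/series expansion of $L_q^{-1}$ around the linear point $p=2$.

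The hard part will be precisely the nonlinearity of $L_q$: neither the optimal flow nor the optimal potential is linear in the demand $\chi_s-\chi_t$, so the telescoping identity that trivializes the $p=2$ case breaks down, and the CNSD property is a \emph{global} constraint on the metric that is not implied by the local $p$-strong inequality alone (as explicitly noted in Appendix \ref{section: p strong is not enough for embedding conjecture}). I therefore expect that either an explicit Hilbert-space realization has to be constructed from the $q$-Laplacian structure (probably requiring genuinely functional-analytic input about monotone operators), or the conjecture will turn out to fail for some range of $p$ far from $2$; distinguishing these two outcomes, perhaps by stress-testing small graphs (paths, cycles, $K_4$ with nontrivial weights) against a Gram-matrix computation, is what I would do first to calibrate the approach.
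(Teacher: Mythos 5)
You have correctly identified the natural frame (Schoenberg's reduction to conditional negative definiteness, the $p=2$ factorization through $L^+$, and the obstruction coming from the nonlinearity of $L_q$), but what you have written is a research plan, not a proof. The statement you were asked about is stated in the paper as Conjecture \ref{conjecture: d_p^p embeds isometricaly into l_2^2} and is left open there: the paper only suggests, without any argument, that the Spielman--Srivastava approach for $p=2$ might generalize via the graph $q$-Laplacian of Appendix \ref{appendix: connection to the laplacian}, and explicitly says that functional-analytic justifications would be needed. So there is no proof in the paper for your sketch to be measured against, and your proposal does not supply one either.

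The concrete gaps in your plan are the ones you yourself flag, and none of them is resolved. The superposition $F=\sum_{i<j}a_ia_jf^*_{ij}$ cannot be controlled by convexity of $\norm{W^{-1}\cdot}_p^p$ in the way you suggest: the coefficients $a_ia_j$ have mixed signs, so triangle-type or convexity bounds go in the wrong direction exactly where you need them, and the $p$-strong inequality is a local three-point constraint that (as the paper's Appendix \ref{section: p strong is not enough for embedding conjecture} shows) cannot certify the global negative-type condition. The Legendre-transform observation that $d_p(s,t)^p$ is $p$-homogeneous in $\chi_s-\chi_t$ likewise gives no handle on the sign of $\sum_{i,j}a_ia_jd_p(x_i,x_j)^p$, because homogeneity and convexity of $\mathcal{E}_q^*$ do not imply conditional negative definiteness of the induced kernel. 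Finally, the proposed integral representation through a ``nonlinear Green's function'' of $L_q$ is not constructed; the resolvent/series expansion around $p=2$ is only heuristic, since $L_q^{-1}$ is not a linear operator and no perturbative framework is set up. Your suggestion to stress-test small weighted graphs numerically against the Gram-matrix criterion is sensible as a first step, but until either a Hilbert-space realization is built or a counterexample is found, the conjecture remains open, and your proposal should be presented as such rather than as a proof.
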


\paragraph{Small sketches.}
Once we understand the geometry of the flow metrics,
e.g. which metric spaces they embed into,
we would like to find the best trade-off between
dimension and approximation 
of such embeddings.
One famous example is the Johnson-Lindenstrauss Lemma
\cite{johnsonLindenstrauss1984extensions}
that states that every $n$ points  in $\ell_2$,
can be embedded 
with distortion $1+\varepsilon$ (for every $\varepsilon>0$)
into a subspace of $\ell_2$ of dimension
$O\prs{\varepsilon^{-2}\cdot\logn}$.
Once we reduce the dimension, we can design natural small sketches and 
exploit them  to improve running time and storage requirements
of algorithms.
In particular, Spielman and Srivastava \cite{spielmanSrivastava2011graphSparsificationReff}
utilized such an embedding of the resistance distance
in order to construct a data structure
that given a query  pair of vertices,
returns an approximation of the effective resistance
between them.
If 
Conjecture \ref{conjecture: d_p^p embeds isometricaly into l_2^2}
is true,
their approach can be generalized
and yield a small sketch
for $d_p^{p/2}$, and thus also for $d_p$.

\paragraph{Computing all-pairs distances.}
Another important line of research 
is to compute the distance between all pairs of
vertices simultaneously, or to
construct a data structure
that given as query
 a pair of vertices, returns the exact $d_p$ distance
(or an approximation to it) between them.
Such constructions are known for 
the three special cases.
For $p=\infty$, there is the 
Gomory-Hu tree
\cite{GomoryHu1961multi};
for $p=1$, there are distance oracles
\cite{awerbach1993NearLinearDO,thorupZwick2005approximate, chechik2015ApproximateDO},
All-Pairs Shortest-Path algorithms
\cite{chan2010more, seidel1995all},
and spanners
\cite{peleg1989graph, althofer1993sparse};
for $p=2$, there are constructions by Spielman and Srivastava
\cite{spielmanSrivastava2011graphSparsificationReff},
and later on by Jambulapati and Sidford \cite{jambulapati2018efficient}
for approximating
the effective resistance.
It would be interesting to design algorithms that solve
this problem for other values of $p$,
as well as give lower bounds for
this problem.

\paragraph{Capturing properties of the underlying graphs.}
As mentioned earlier, 
effective resistance captures key properties of the underlying graph.
For example,
the effective resistance between
two  vertices
connected by a unit-weight edge,
equals the probability
that this edge   belongs to a uniformly random
spanning tree of the graph,
and moreover, 
it also equals the commute time between them,
up to scaling by a factor
that depends on the weights in the graph.
A natural direction  is to extend this  characterizations
of the effective resistance ($p=2$)
to other values of $p$,
or to find other properties of the underlying graphs
captured by them.

\small 
\printbibliography[heading=bibintoc, title={Bibliography}]

\appendix

\section{Omitted Proofs from Basic Properties Section}
\subsection{Deriving the Connection to the Dual Problem}\label{appendix: ommited proofs from basic props}
In this subsection we show how Claim \ref{claim: d_p = 1 / bar(d)_p}
is derived from Proposition 4 in \cite{alamgir2011phaseTransitionPresistance},
who consider the following optimization problems
for a graph $G=(V,E,w)$ and fixed $p\in(1,\infty)$
with H\"older conjugate $q$.
\paragraph{Flow problem.}
\begin{equation}
    R_p(s,t)=\min\curprs{\sum_{e\in E}\frac{\abs{f(e)}^p}{w(e)}
    \,:\,B^Tf=\chi_s-\chi_t}.
\end{equation}
\paragraph{Potential problem.}
\begin{equation}
    C_p(s,t)=\min\curprs{
    \sum_{xy\in E}w(xy)^{\frac{q}{p}}\abs{\varphi_x-\varphi_y}^q
    \,:\,\varphi_s-\varphi_t=1}.
\end{equation}
We remark that $d_p^p$ is just $R_p$ on a graph with
$w^p$ as a weight function,
and the same holds with $\Bar{d}_p^q$ and $C_p$.
Moreover, \cite{alamgir2011phaseTransitionPresistance} show that
$R_p$ and $C_p$ are related in the following manner.
\begin{proposition}[Proposition 4\footnote{The original statement in 
\cite{alamgir2011phaseTransitionPresistance} is stated with power $-\frac{q}{p}$.
However,
in the supplementary material they proved the version we presented.} in \cite{alamgir2011phaseTransitionPresistance}]\label{proposition 4 in AvL}
Fix $p>1$ with H\"older conjugate $q$, 
and let $G=(V,E,w)$ be a graph. Then,
\begin{equation}
    \forall s,t\in V,\quad
    R_p(s,t)=\prs{C_p(s,t)}^{-\frac{p}{q}}.
\end{equation}
\end{proposition}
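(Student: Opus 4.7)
The plan is to prove the identity by a standard Lagrangian duality argument: establish one direction by applying Hölder's inequality to any feasible flow–potential pair, and the other direction by constructing, from an optimal potential $\varphi^*$, a feasible flow that makes the Hölder bound tight. Working with the reformulation $p/q = p-1$, the goal is to show $R_p(s,t)^{1/p}\cdot C_p(s,t)^{1/q} = 1$.

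For the easy (``weak duality'') direction, I would take any $f$ with $B^T f = \chi_s - \chi_t$ and any $\varphi$ with $\varphi_s - \varphi_t = 1$, and write
\begin{equation*}
1 = (\chi_s - \chi_t)^T \varphi = (B^T f)^T \varphi = \sum_{xy \in E} f(xy)(\varphi_x - \varphi_y) = \sum_{xy} \frac{f(xy)}{w(xy)^{1/p}} \cdot w(xy)^{1/p}(\varphi_x - \varphi_y).
\end{equation*}
Applying Hölder's inequality with exponents $p, q$ to the last sum yields $1 \leq R_p(s,t)^{1/p} C_p(s,t)^{1/q}$, hence $R_p(s,t) \geq C_p(s,t)^{-p/q}$.

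For the matching upper bound, I would exploit the first-order optimality of $\varphi^*$ minimizing $C_p$. Since the constraint is linear, the gradient of the objective at $\varphi^*$ is a multiple of $\chi_s - \chi_t$, which after discarding the overall constant $q$ gives the existence of $\mu \in \R$ with
\begin{equation*}
B^T\!\left[\,w(xy)^{q/p}\,(\varphi^*_x - \varphi^*_y)^{\langle q-1\rangle}\right] \;=\; \mu\,(\chi_s - \chi_t),
\end{equation*}
where $x^{\langle r\rangle} := |x|^{r-1} x$ denotes the signed power. Define $f(xy) := \mu^{-1} w(xy)^{q/p}(\varphi^*_x - \varphi^*_y)^{\langle q-1\rangle}$; by construction this is a feasible $s\to t$ flow. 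Taking the inner product of the KKT equation with $\varphi^*$ and using $\varphi^*_s - \varphi^*_t = 1$ identifies the multiplier as $\mu = C_p(s,t)$. Finally, a short calculation using $(q-1)p = q$ and $q/p \cdot p - 1 = q-1$ gives
\begin{equation*}
\sum_{xy} \frac{|f(xy)|^p}{w(xy)} \;=\; \frac{1}{\mu^p}\sum_{xy} w(xy)^{q-1}|\varphi^*_x - \varphi^*_y|^q \;=\; \frac{C_p(s,t)}{C_p(s,t)^p} \;=\; C_p(s,t)^{1-p},
\end{equation*}
so $R_p(s,t) \leq C_p(s,t)^{-p/q}$.

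The delicate step is the construction of the matching flow, which really amounts to checking that the Hölder bound above can be saturated. Two issues arise: (i) the arithmetic with the exponents $p,q,q/p$ must be done carefully so that the powers of $w$ and of $|\varphi^*_x - \varphi^*_y|$ reassemble correctly into $C_p$; and (ii) the signed-power notation $\langle \cdot \rangle$ is needed so that $f$ actually sends flow in the direction of decreasing potential, ensuring both that $f(xy)(\varphi^*_x - \varphi^*_y) \geq 0$ everywhere (the equality case of Hölder) and that the KKT equation $B^T(\cdots) = \mu(\chi_s - \chi_t)$ holds with a positive multiplier. Once the sign bookkeeping is handled, the remaining computations are routine, and the two inequalities combine to give the claimed identity.
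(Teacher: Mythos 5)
Your proposal is correct. One important observation: the paper does not prove this proposition at all --- it is imported (with the exponent correction mentioned in the footnote) from \cite{alamgir2011phaseTransitionPresistance} and then used in Appendix \ref{appendix: ommited proofs from basic props} to derive Claim \ref{claim: d_p = 1 / bar(d)_p}. Your argument is the standard duality proof of the cited fact and would make that step self-contained. The weak direction is exactly right: for feasible $f,\varphi$, writing $1=f^TB\varphi$ and applying H\"older with exponents $p,q$ to the factorization $f(xy)w(xy)^{-1/p}\cdot w(xy)^{1/p}(\varphi_x-\varphi_y)$ gives $1\le R_p(s,t)^{1/p}C_p(s,t)^{1/q}$. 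The strong direction also checks out: since $|x|^q$ is differentiable for $q>1$ and the constraint is a single linear equation, the multiplier rule applies at a minimizer $\varphi^*$ (which exists by coercivity on the quotient by constants, using connectivity), pairing the stationarity equation with $\varphi^*$ gives $\mu=C_p(s,t)>0$, and the exponent bookkeeping $q/p=q-1$, $(q-1)p=q$ indeed yields $\sum_e |f(e)|^p/w(e)=\mu^{-p}C_p(s,t)=C_p(s,t)^{1-p}=C_p(s,t)^{-p/q}$. Your construction of $f$ from $\varphi^*$ is precisely the Ohm's-law-type primal--dual pair that the paper records separately as Fact \ref{fact: closed form solution for d_p} (also stated there without proof), specialized to the $R_p/C_p$ normalization, so your route is fully consistent with the paper's framework while actually supplying the missing argument. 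Only cosmetic caveats: the identity should be asserted for $s\neq t$ (for $s=t$ the potential problem is infeasible), and connectivity should be invoked so that $C_p(s,t)$ is finite and positive.
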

We can now show how Claim \ref{claim: d_p = 1 / bar(d)_p}
is an easy consequence of Proposition \ref{proposition 4 in AvL}.
\begin{proof}(Claim \ref{claim: d_p = 1 / bar(d)_p})
Let $G=(V,E,w)$ be a graph, and let $G^p=(V,E,w^p)$.
Hence, we see that for every $s\neq t\in V$,
\begin{equation}
    d_{p,G}(s,t)^p=R_{p,G^p}(s,t)
    = \prs{C_{p,G^p}(s,t)}^{-\frac{p}{q}}
    = \prs{\bar{d}_{p,G}(s,t)^q}^{-\frac{p}{q}}
    = \bar{d}_{p,G}(s,t)^{-p}.
\end{equation}
\end{proof}

\subsection{Connection to the Graph \textit{p}-Laplacian}\label{appendix: connection to the laplacian}
In this subsection we show how Fact
\ref{fact: closed form solution for d_p}
can be used to relate between the $d_p$ metric
and the graph $p$-Laplacian.

Given a graph $G=(V,E,w)$, and fix $p\in(1,\infty)$
with H\"older conjugate $q$,  denote
the $q$-Laplacian of $G$ by $L_q:\R^V\rightarrow\R^V$,
given by
\begin{equation}
    \forall x\in V,\quad 
    \prs{L_q\varphi}_x = \sum_{y\in V}w(xy)^q\prs{\varphi_x-\varphi_y}\abs{\varphi_x-\varphi_y}^{q-2}.
\end{equation}
This is a non-linear generalization of the ordinary
Laplacian of the graph.
Now, using Fact
\ref{fact: closed form solution for d_p},
we can see
that for a pair $s,t\in V$, and minimizing flow $f^*$
for $d_p(s,t)$,
and a corresponding potentials vector $\varphi^*$,
it holds that
\begin{equation}
    B^Tf^* = L_q\varphi^*.
\end{equation}
In addition, note that $L_q$ satisfies for any $\varphi\in \R^V$
\begin{equation}
    \begin{split}
        \iprod{\varphi}{L_q\varphi}
        & = \sum_{x\in V}\sum_{y\in V}w(xy)^q\prs{\varphi_x-\varphi_y}\abs{\varphi_x-\varphi_y}^{q-2}\cdot\varphi_x\\
        & = \sum_{xy\in E}\prs{
        w(xy)^q\prs{\varphi_x-\varphi_y}\abs{\varphi_x-\varphi_y}^{q-2}\cdot\varphi_x
        +w(xy)^q\prs{\varphi_y-\varphi_x}\abs{\varphi_y-\varphi_x}^{q-2}\cdot\varphi_y}\\
        & = \sum_{xy\in V}w(xy)^q\abs{\varphi_x-\varphi_y}^q\\
        & = \norm{WB\varphi}_q^q.
     \end{split}
\end{equation}
and in particular, for connected graphs,
\begin{equation}\label{eq: quadratic form of L_q is 0 iff x in span(1)}
    \iprod{\varphi}{L_q\varphi} = 0
    \iff
    \varphi \in \text{span}\{\OneVec\}.
\end{equation}
This is also shown as Proposition 3.1 in \cite{buhler2009spectralPLaplacianClustering}.
This motivates the definition of the second eigenvalue
of the $q$-Laplacian.
\begin{definition}
Let $p\in(1,\infty)$ with H\"older conjugate $q$,
and let $G=(V,E,w)$ be a connected graph with $q$-Laplacian $L_q$.
The second smallest eigenvalue of $L_q$, denoted by
$\lambda_q^{(2)}$,
is defined as
\begin{equation}
    \lambda_q^{(2)} = \min_{\varphi\perp \OneVec}\frac{\iprod{\varphi}{L_q\varphi}}{\norm{\varphi}_2^2}.
\end{equation}
\end{definition}
We remark that a consequence of \eqref{eq: quadratic form of L_q is 0 iff x in span(1)}
(and Proposition 3.1 in \cite{buhler2009spectralPLaplacianClustering})
is that for connected graphs,
$\lambda_q^{(2)}>0$.
Next, we show that $d_p$ and $\lambda_q^{(2)}$
are related as follows.
\begin{claim}
Let $p\in(1,\infty)$ with H\"older conjugate $q$,
and let $G=(V,E,w)$ be a connected graph with $q$-Laplacian $L_q$
and second smallest eigenvalue
$\lambda_q^{(2)}$.
Then
\begin{equation}
    \forall s,t\in V,\quad
    d_p(s,t)\leq \prs{\frac{2}{\lambda_q^{(2)}}}^{1/q}.
\end{equation}
\end{claim}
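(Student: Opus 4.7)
The strategy is to bound $d_p(s,t)$ from above by bounding $\bar d_p(s,t)$ from below, using the dual characterization together with the Laplacian interpretation of $\|WB\varphi\|_q^q$ developed in this appendix. By Claim \ref{claim: d_p = 1 / bar(d)_p} we have $d_p(s,t) = \bar d_p(s,t)^{-1}$, so it suffices to show that
\[
    \bar d_p(s,t)^q \;=\; \min_{\varphi_s-\varphi_t=1} \|WB\varphi\|_q^q \;\geq\; \frac{\lambda_q^{(2)}}{2}.
\]

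\textbf{Reduction to the Rayleigh-type quotient.} Fix any $\varphi \in \R^V$ with $\varphi_s-\varphi_t = 1$. Since $B\OneVec = 0$, we may translate $\varphi$ by an appropriate multiple of $\OneVec$ without changing either $\varphi_s-\varphi_t$ or $\|WB\varphi\|_q$, so we may assume $\varphi \perp \OneVec$. The next step is to lower bound $\|\varphi\|_2$: because $\chi_s-\chi_t$ is itself orthogonal to $\OneVec$ and has $\ell_2$-norm $\sqrt 2$, Cauchy-Schwarz yields
\[
    1 = (\chi_s-\chi_t)^T\varphi \;\leq\; \|\chi_s-\chi_t\|_2\,\|\varphi\|_2 = \sqrt{2}\,\|\varphi\|_2,
\]
so $\|\varphi\|_2^2 \geq 1/2$. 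Now invoke the identity $\|WB\varphi\|_q^q = \langle \varphi, L_q\varphi\rangle$ established earlier in this appendix, together with the defining inequality $\langle \varphi, L_q\varphi\rangle \geq \lambda_q^{(2)}\,\|\varphi\|_2^2$ for every $\varphi\perp\OneVec$, to conclude
\[
    \|WB\varphi\|_q^q \;\geq\; \lambda_q^{(2)}\,\|\varphi\|_2^2 \;\geq\; \frac{\lambda_q^{(2)}}{2}.
\]
Taking the minimum over all admissible $\varphi$ gives $\bar d_p(s,t)^q \geq \lambda_q^{(2)}/2$, and the claim follows.

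\textbf{Where the subtlety lies.} The genuinely delicate step is the use of the Rayleigh-type inequality $\langle \varphi, L_q\varphi\rangle \geq \lambda_q^{(2)}\|\varphi\|_2^2$ for \emph{every} $\varphi \perp \OneVec$: as the left-hand side is $q$-homogeneous while the right-hand side is $2$-homogeneous, a literal minimum over unnormalized $\varphi$ would force $\lambda_q^{(2)}=0$ whenever $q \neq 2$. The intended meaning, following \cite{buhler2009spectralPLaplacianClustering}, is that the infimum is taken over a normalized set (e.g.\ $\|\varphi\|_2 = 1$), and the inequality is then applied to $\varphi/\|\varphi\|_2$ before re-scaling; checking that this normalization is the one used in the theorem being invoked is the only nontrivial issue. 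Modulo that bookkeeping, the Cauchy--Schwarz reduction and the dual identity from Claim \ref{claim: d_p = 1 / bar(d)_p} combine cleanly, with no further computation required.
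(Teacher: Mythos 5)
Your proof is correct and takes essentially the same route as the paper's: pass to the dual potential problem via Claim \ref{claim: d_p = 1 / bar(d)_p}, translate the potential to be orthogonal to $\OneVec$, use $\norm{WB\varphi}_q^q=\iprod{\varphi}{L_q\varphi}\geq \lambda_q^{(2)}\norm{\varphi}_2^2$, and finish with Cauchy--Schwarz giving $\norm{\varphi}_2^2\geq 1/2$ (the paper applies Cauchy--Schwarz to the two-coordinate vector $(\varphi_s,-\varphi_t)$ against $(1,1)$, which is the same estimate as your $(\chi_s-\chi_t)^T\varphi\leq\sqrt{2}\,\norm{\varphi}_2$). The homogeneity caveat you flag is a remark about the paper's definition of $\lambda_q^{(2)}$ rather than a gap in your argument, since the paper's own proof invokes the Rayleigh-quotient inequality exactly as you do.
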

Note that the above claim is tight on the clique
for $p=q=2$,
since $\lambda_2^{(2)}(K_n)=n$
($\chi_s-\chi_t$ is a corresponding eigenvector for any $s,t\in V$),
which yields the exact $d_2$-distance.
\begin{proof}
Fix $s\neq t\in V$,
and let $\varphi^*\in\argmin{\varphi_s-\varphi_t=1}\norm{WB\varphi}_q$.
By subtracting the average of the entries of $\varphi$,
we may assume that $\varphi^*\perp \OneVec$.
Now, we can see that
\begin{align*}
    \frac{1}{d_p(s,t)^q}
        & = \norm{WB\varphi^*}_q^q\\
        & = \iprod{\varphi^*}{L_q\varphi^*}\\
        & \geq \lambda_q^{(2)}\norm{\varphi^*}_2^2
        &&(\text{by }\varphi^*\perp \OneVec
        \text{ and definition of }\lambda_q^{(2)})\\
        & \geq \lambda_q^{(2)}\cdot \norm{\prs{
        \begin{array}{c}
             \varphi^*_s   \\
             -\varphi^*_t
        \end{array}
        }}_2^2\\
        & \geq \lambda_q^{(2)}\cdot \frac{\abs{\iprod{\prs{
        \begin{array}{c}
             \varphi^*_s   \\
             -\varphi^*_t
        \end{array}
        }}{\prs{
        \begin{array}{c}
             1   \\
             1
        \end{array}
        }}}^2}{\norm{\prs{
        \begin{array}{c}
             1   \\
             1
        \end{array}
        }}_2^2}
        && (\text{by Cauchy-Schwartz})\\
        & = \frac{\lambda_q^{(2)}}{2}
        && (\text{by }\varphi_s^*-\varphi_t^*=1).
\end{align*}
\end{proof}

\section{Graph-Size Reductions Appendix}

\subsection{Lower Bound on Resistance Sparsifiers}

\subsubsection{Proof of Symmetric Case via Commute Time}\label{appendix sectoin: Proof of Symmetric Case via Commute Time}
Here we present another proof for Claim
\ref{claim: lower bound on the symmetric case}
by using the relation between
effective resistance and commute time.

First, recall that the hitting time $h(u,v)$ is the 
expected number of steps of a random walk
starting at vertex $u$ to reach vertex $v$ at the fist time.
i.e. $h(u,v)=1+\Ex{x\in N(u)}h(x,v)$.
The commute time $C(u,v)$ is the expected time
that a random walk starting at $u$ will reach $v$
and get back to $u$, i.e.
$C(u,v)=h(u,v)+h(v,u)$.
In addition, we know that $C(u,v)=2w(E)\Reff(u,v)$
where $w(E)=\sum_{e\in E}w(e)$.
Thus, instead of considering effective resistance,
we can work with commute time.

\begin{proof}(of Claim \ref{claim: lower bound on the symmetric case})
Let $x\neq y\in V\backslash \{s,t\}$,
and denote
\begin{align}
    H_0 & = h(s,t) = h(t,s),\\
    H_1 & = h(x,s) = h(x,t),\\
    H_2 & = h(s,x) = h(t,x),\\
    H_3 & = h(x,y).
\end{align}
Note that by symmetry it does not matter which specific
vertices $x,y\in V\backslash\{s,t\}$ we chose
when defining the $H_i$'s.
Now,
we can see that the relations between them are
as follows.
\begin{align*}
    H_0  & = 1+\Ex{z\in N(s)}h(z,t)
     = 1+H_1,\\
    H_1  & = 1+\Ex{z\in N(x)}h(z,t) \\
    & = 1+\frac{\alpha}{2\alpha+(n-3)\beta}H_0+\frac{(n-3)\beta}{2\alpha
    +(n-3)\beta}H_1,\\
    H_2 & = 1+\Ex{z\in N(s)}h(z,x)\\
    & = 1+ \frac{(n-3)\alpha}{(n-2)\alpha}H_3,\\
    H_3  & = 1+\Ex{z\in N(y)}h(z,y)\\
    & = 1 + \frac{2\alpha}{2\alpha+(n-3)\beta}H_2 + \frac{(n-4)\beta}{2\alpha+
    (n-3)\beta}H_3.
\end{align*}
Let us start with computing $H_0$ and $H_1$.
\begin{align*}
    && H_1 & = 1 + \frac{\alpha}{2\alpha+(n-3)\beta}\prs{1+H_1}+\frac{(n-3)\beta}{2\alpha
    +(n-3)\beta}H_1\\
    && & = 1 + \frac{\alpha}{2\alpha+(n-3)\beta} + H_1\cdot \prs{\frac{\alpha+(n-3)\beta}{2\alpha
    +(n-3)\beta}}\\
    \implies &&  \frac{\alpha}{2\alpha+(n-3)\beta}\cdot H_1
    & = \frac{3\alpha+(n-3)\beta}{2\alpha+(n-3)\beta}\\
    \implies && 
    H_1 & = 3+(n-3)\frac{\beta}{\alpha},\\
    &&  H_0 & = 4+ (n-3)\frac{\beta}{\alpha}.
\end{align*}
Let us move on to $H_2$ and $H_3$.
\begin{align*}
    H_3 & = 1 + \frac{2\alpha}{2\alpha+(n-3)\beta}
    \prs{1+ \frac{(n-3)}{(n-2)}H_3}
    + \frac{(n-4)\beta}{2\alpha+(n-3)\beta}H_3\\
    & = 1 + \frac{2\alpha}{2\alpha+(n-3)\beta}
    +H_3\prs{\prs{1-\frac{1}{n-2}}\cdot\frac{2\alpha}{2\alpha+(n-3)\beta}
    +\frac{(n-4)\beta}{2\alpha+(n-3)\beta}}\\
    & = \frac{4\alpha+(n-3)\beta}{2\alpha+(n-3)\beta}
    + H_3\prs{
    \frac{2\alpha + (n-4)\beta}{2\alpha+(n-3)\beta}
    -\frac{2\alpha}{(n-2)(2\alpha+(n-3)\beta)}}\\
    & = 
    \frac{4\alpha+(n-3)\beta}{2\alpha+(n-3)\beta}
    +H_3\prs{1-\frac{\beta+\frac{2\alpha}{n-2}}{2\alpha+(n-3)\beta}}.
\end{align*}
Thus,
\begin{align*}
    \implies H_3 
    & = \frac{4\alpha+(n-3)\beta}{\beta+\frac{2\alpha}{n-2}}\\
    & = (n-2)\prs{\frac{4+(n-3)\frac{\beta}{\alpha}}{2+(n-2)\frac{\beta}{\alpha}}},\\
    \implies 
    H_2 & = 1+ \frac{n-3}{n-2}\prs{(n-2)\prs{\frac{4+(n-3)\frac{\beta}{\alpha}}{2+(n-2)\frac{\beta}{\alpha}}}}\\
    & =1+ (n-3)\prs{\frac{4+(n-3)\frac{\beta}{\alpha}}{2+(n-2)\frac{\beta}{\alpha}}}.
\end{align*}
Denoted $\gamma = \frac{\beta}{\alpha}$,
and now we conclude that,
\begin{align}
    C(s,t) & = 2H_0 
    = 2(4+(n-3)\gamma),\\
    C(x,y) & = 2H_3 
    = \frac{2(n-2)(4+(n-3)\gamma)}{2+(n-2)\gamma},\\
    C(s,x) & = H_1 + H_2
    = (4 + (n-3)\gamma)\prs{1
    + \frac{n-3}{2+(n-2)\gamma}}\\
    & = \frac{(4+(n-3)\gamma)(n-1+(n-2)\gamma)}{2+(n-2)\gamma}.
\end{align}
Assume towards contradiction that $\frac{\max_{x'\neq y'\in V} \Reff(x',y')}{\min_{x'\neq y'\in V} \Reff(x',y')}<1+\frac{1}{10n}$,
and note that in particular this implies that
for any $u,v,u',v'\in V$,
\[
    \frac{C(u,v)}{C(u',v')} = \frac{2w(E)\Reff(u,v)}{2w(E)\Reff(u',v')}<1+\frac{1}{10n}.
\]
Let us compute the ratios.
\begin{align*}
    \frac{C(s,t)}{C(x,y)}
    & = 2(4+(n-3)\gamma)\cdot \frac{2+(n-2)\gamma}{2(n-2)(4+(n-3)\gamma)}\\
    & = \gamma + \frac{2}{n-2}.
\end{align*}
Thus
\begin{align*}
    \gamma & < 1-\frac{2}{n-2}+\frac{1}{10n}.
\end{align*}
But on the other hand we see that,
\begin{align*}
    \frac{C(x,y)}{C(s,x)}
    & = \frac{2(n-2)(4+(n-3)\gamma)}{2+(n-2)\gamma}
    \cdot \frac{2+(n-2)\gamma}{(4+(n-3)\gamma)(n-1+(n-2)\gamma)}\\
    & = \frac{2(n-2)}{n-1+(n-2)\gamma}\\
    & = \frac{2}{1+\frac{1}{n-2}+\gamma}.
\end{align*}
and thus,
\begin{align*}
    2 & < \prs{1+\frac{1}{n-2}}\cdot \prs{1+\frac{1}{10n}}
    +\gamma\prs{1+\frac{1}{10n}}\\
    \implies \gamma & >
    \frac{2-\prs{1+\frac{1}{n-2}}\cdot \prs{1+\frac{1}{10n}}}
    {1+\frac{1}{10n}} \\
    & = 2\cdot\prs{1-\frac{1}{10n+1}}-\prs{1+\frac{1}{n-2}}\\
    & = 1-\frac{1}{n-2}-\frac{2}{10n+1}.
\end{align*}
and thus we conclude that
\begin{equation}
    1-\frac{1}{n-2}-\frac{2}{10n+1}
    < \gamma
    < 1-\frac{2}{n-2}+\frac{1}{10n},
\end{equation}
which is a contradiction.
\end{proof}

\subsection{Transforms for the Flow Metrics}

\subsubsection{Another proof for the parallel edges reduction via flows}\label{appendix  section - Another proof for the parallel edges reduction via flows}
In this subsection we present an alternative
proof for Claim \ref{claim: parallel edges reduction for d_p} via flows.
\begin{proof}(of Claim \ref{claim: parallel edges reduction for d_p})
Denote $f_p(x,t)=\prs{\frac{x}{\alpha}}^p+
\prs{\frac{t-x}{\beta}}^p$.
For any amount of  flow $0\leq t \leq 1$
that is shipped to $a$,
it is the best to minimize $f_p(x,t)$
where $0\leq x\leq t$ (with respect to $x$
where $t$ is fixed),
and thus choosing how much amount of flow to 
ship for the top edge and how much to ship
from the bottom edge.
We will use it to compute the contribution of the
discussed edges to the norm of the minimizing flow,
and then choose a proper weight $\gamma$
which will preserve the norm of the flow.
            
Let us compute the derivative of $f_p(x,t)$
with respect to $x$ and equalize it to 0
in order to find the minimizing flow
(in the interval $x\in [0,t]$).
\begin{align*}
    \frac{d}{dx}f_p(x,t) & = 
    \frac{p}{\alpha}\cdot\prs{\frac{x}{\alpha}}^{p-1}
    -\frac{p}{\beta}\cdot\prs{\frac{t-x}{\beta}}^{p-1}
    \stackrel{\text{want}}{=}0\\
    \Rightarrow\frac{x^{p-1}}{\alpha^p}
    & = \frac{(t-x)^{p-1}}{\beta^p}\\
    \Rightarrow
    x^{p-1} & = \prs{\frac{\alpha}{\beta}}^p
    \cdot(t-x)^{p-1}\\
    \Rightarrow x& = \prs{\frac{\alpha}{\beta}}^{\frac{p}{p-1}}
    \cdot(t-x)\\
    \Rightarrow\prs{1+\prs{\frac{\alpha}{\beta}}^{\frac{p}{p-1}}}\cdot x & = t\cdot \prs{\frac{\alpha}{\beta}}^{\frac{p}{p-1}}\\
    \Rightarrow x_0 & = t\cdot \frac{\prs{\frac{\alpha}{\beta}}^{\frac{p}{p-1}}}{1+\prs{\frac{\alpha}{\beta}}^{\frac{p}{p-1}}}
    = t\cdot \frac{\alpha^{\frac{p}{p-1}}}{\alpha^{\frac{p}{p-1}}+\beta^{\frac{p}{p-1}}}
    = t\cdot\prs{1- \frac{\beta^{\frac{p}{p-1}}}{\alpha^{\frac{p}{p-1}}+\beta^{\frac{p}{p-1}}}}\\
    \Rightarrow \min_{x\in[0,t]}f_p\prs{x,t}
    & = \prs{\frac{t\cdot\alpha^{\frac{1}{p-1}}}{\alpha^{\frac{p}{p-1}}+\beta^{\frac{p}{p-1}}}}^p
    +\prs{\frac{t\cdot\beta^{\frac{1}{p-1}}}{\alpha^{\frac{p}{p-1}}+\beta^{\frac{p}{p-1}}}}^p.
\end{align*}
Thus, we have found the contribution of the discussed
edges to the minimizing flow in the graph
(between specific vertices),
and we wish that the contribution of the new
edge will be the same.
The contribution of the new edge is $\abs{\frac{t}{\gamma}}^p$,
and hence setting
\[
    \gamma=\frac{1}{\prs{\prs{\frac{\alpha^{\frac{1}{p-1}}}{\alpha^{\frac{p}{p-1}}+\beta^{\frac{p}{p-1}}}}^p
    +\prs{\frac{\beta^{\frac{1}{p-1}}}{\alpha^{\frac{p}{p-1}}+\beta^{\frac{p}{p-1}}}}^p}^{1/p}}
    = \prs{\alpha^{\frac{p}{p-1}}+\beta^{\frac{p}{p-1}}}
    ^{\frac{p-1}{p}}.
\]
will give us the desired outcome.
In other words, if we take $q=\frac{p}{p-1}$
the H\"older conjugate of $p$,
we get the following rule:
\[
    \gamma^q=\alpha^q+\beta^q.
\]
which is the same conclusion as before.
\end{proof}

\subsubsection{Proof of Y-\texorpdfstring{$\Delta$}{Delta} transform for \textit{p}=2}\label{appendix section - Proof of Delta-Y transform for p=2}
In this  subsection we show the existence of a Y-$\Delta$
transform analogue for $d_2$.
\begin{claim}\label{claim: appendix - delta-Y transform for d_2}
There exists a local $Y$-$\Delta$ transform
that preserves $d_2$.
\end{claim}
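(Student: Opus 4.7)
The strategy is to reduce the claim to the classical $Y$-$\Delta$ transform for effective resistance, using the fact (noted in the introduction) that $d_2(s,t)^2$ coincides with the effective resistance in the graph whose edge weights are the squares of the original ones. So the transform for $d_2$ should be obtained by first squaring weights to pass to the resistance setting, then applying the classical $Y$-$\Delta$, and finally taking square roots to return to $d_2$ weights.

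\textbf{Main steps.} First, given a graph $G=(V,E,w)$ with a vertex $r$ of degree 3 whose neighbors are $a,b,c$ connected by edges of $d_2$-weights $w_a, w_b, w_c$, I would define an auxiliary graph $G^{(2)} = (V,E,w^2)$ whose edge weights are the squares of those of $G$; by the identity $d_{2,G}(s,t)^2 = \Reff_{G^{(2)}}(s,t)$ (which follows directly from \eqref{eq: d_p metric as convex optimization problem} with $p=2$), preserving $d_2$ in $G$ is equivalent to preserving $\Reff$ in $G^{(2)}$. Second, I would invoke the classical Kennelly $Y$-$\Delta$ transform \cite{kennelly1899equivalence} at $r$ in $G^{(2)}$, replacing the three star-edges by triangle edges whose \textit{conductances} are
\begin{equation}
    \widetilde{w}(bc) = \frac{w_b^2 w_c^2}{w_a^2+w_b^2+w_c^2},\qquad
    \widetilde{w}(ac) = \frac{w_a^2 w_c^2}{w_a^2+w_b^2+w_c^2},\qquad
    \widetilde{w}(ab) = \frac{w_a^2 w_b^2}{w_a^2+w_b^2+w_c^2};
\end{equation}
this is local and preserves effective resistance between every pair of remaining vertices. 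Third, I would define the candidate $d_2$-graph $G'$ on $V\setminus\{r\}$ keeping all edges of $G$ outside the star, and giving the new triangle edges weights
\begin{equation}
    w'(xy) = \sqrt{\widetilde{w}(xy)} = \frac{w_x w_y}{\sqrt{w_a^2+w_b^2+w_c^2}}\qquad\text{for }xy\in\{ab,bc,ac\}.
\end{equation}
By construction $(G')^{(2)}$ coincides with the graph obtained from $G^{(2)}$ by the classical transform, so for every $s,t\in V\setminus\{r\}$,
\begin{equation}
    d_{2,G'}(s,t)^2 = \Reff_{(G')^{(2)}}(s,t) = \Reff_{G^{(2)}}(s,t) = d_{2,G}(s,t)^2.
\end{equation}
Since $w'(xy)$ depends only on the weights $w_a,w_b,w_c$ of the star being removed, the transform is local in the required sense.

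\textbf{What needs checking.} There is essentially no obstacle: the only substantive ingredient is the classical $Y$-$\Delta$ theorem for effective resistance, which I would either cite or reprove in one or two lines using the energy-minimization characterization of $\Reff$. The passage between $d_2$ and $\Reff$ via squaring of edge weights is a direct consequence of the definition \eqref{eq: d_p metric as convex optimization problem}, so no additional analytic work is needed. A sanity check is that this prescription agrees with the value $\alpha = 1/\sqrt{3}$ computed in the proof of Claim \ref{claim: existence of graphs that contradict the existence of Y-Delta transform} for the case $p=q=2$ on the unweighted 3-star, where the formula above gives $w'(xy) = 1/\sqrt{3}$ as well, confirming that the argument ruling out $Y$-$\Delta$ for $p\neq 2$ does not apply here.
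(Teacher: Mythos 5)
Your proof is correct, and it rests on the same key observation the paper uses to guess the weights, namely that $d_{2,G}(s,t)^2$ equals the effective resistance in the graph $G^{(2)}$ with squared edge weights; unsurprisingly, your formula $w'(xy)=w_xw_y/\sqrt{w_a^2+w_b^2+w_c^2}$ is exactly the one in the paper. The difference is in how the verification is done. You treat the classical Kennelly $Y$-$\Delta$ rule for effective resistance as a black box and simply transport it through the squaring correspondence, observing that $(G')^{(2)}$ is the classical transform of $G^{(2)}$ and that locality is inherited. The paper instead gives a self-contained verification: it proves, for every assignment of potentials to the three terminals, that the minimum over the center-vertex potential of the star's (squared-weight) energy equals the triangle's energy with the new weights, which is the Schur-complement identity underlying the classical rule, established by a page of explicit algebra. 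Your route is shorter and cleaner at the cost of invoking the classical theorem (which you could reprove in a line or two from the energy-minimization characterization of $\Reff$, as you note); the paper's route buys a fully self-contained, if laborious, computation. One small point worth making explicit in your write-up: if $G$ already has edges among $\{a,b,c\}$ outside the star, the transformed graph has parallel edges, which is harmless here (and equally an issue for the paper's version) since the local potential identity, or equivalently the parallel-edge reduction for $d_2$, handles them.
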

Consider the case presented in figure 
\ref{fig:Y-Delta-transform}.
Recall that in the case of effective resistance,
the rule is as follows:
\begin{equation}
    \begin{split}
        \alpha&=\frac{w_b\cdot w_c}{w_a+w_b+w_c},\\
        \beta&=\frac{w_a\cdot w_c}{w_a+w_b+w_c},\\
        \gamma&=\frac{w_a\cdot w_b}{w_a+w_b+w_c}.
    \end{split}
\end{equation}
But in our case, 
recall that $d_2$ is in fact
the squareroot of the resistance distance
in the same graph but with squared edge weights.
Thus, we conclude that for $p=2$ the rule should be
\begin{equation*}
    \begin{split}
        \alpha&=\sqrt{\frac{w_b^2\cdot w_c^2}{w_a^2+w_b^2+w_c^2}},\\
        \beta&=\sqrt{\frac{w_a^2\cdot w_c^2}{w_a^2+w_b^2+w_c^2}},\\
        \gamma&=\sqrt{\frac{w_a^2\cdot w_b^2}{w_a^2+w_b^2+w_c^2}}.
    \end{split}
\end{equation*}

Below are the algebraic computations that show
that the numbers add up.
Throughout the proof we use the same notations
as presented for the proof of Claim \ref{claim: existence of graphs that contradict the existence of Y-Delta transform}.
\begin{proof}(of Claim \ref{claim: appendix - delta-Y transform for d_2})
We will show that this transform
indeed holds for $d_2$.
In fact, we will show that for any 
potential function $\varphi$ over the terminals
$V_T$,
the new weights satisfy
\begin{equation}\label{eq: Y-Delta transform - what we want to find}
    \min_{x\in \R}
    \curprs{w_a\abs{\varphi_a-x}^q
    +w_b\abs{\varphi_b-x}^q
    +w_c\abs{\varphi_c-x}^q}
    = 
    \gamma^q\abs{\varphi_a-\varphi_b}^q
    +\alpha^q\abs{\varphi_b-\varphi_c}^q
    +\beta^q\abs{\varphi_c-\varphi_a}^q.
\end{equation}
In particular, this will hold for the 
minimizing potential function, and will
lead to the desired outcome.

Let us compute the LHS of (\ref{eq: Y-Delta transform - what we want to find}), 
in the case where $q=p=2$ to verify the rule.
Note that the minimizing $x$ in the LHS of (\ref{eq: Y-Delta transform - what we want to find})
is the weighted average of the potentials (weighted by the weights of
the edges that connect them to the center of the star).
Define a random variable $X$ by 
\[
    \Pr\prs{X=x} = \begin{cases}
    \frac{w_a^2}{w_a^2+w_b^2+w_c^2}   &   \text{if }x=\varphi_a,   \\
    \frac{w_b^2}{w_a^2+w_b^2+w_c^2}   &   \text{if }x=\varphi_b,   \\
    \frac{w_c^2}{w_a^2+w_b^2+w_c^2}   &   \text{if }x=\varphi_c;
    \end{cases}
\]
and now we can
 can view the LHS of (\ref{eq: Y-Delta transform - what we want to find})
 as
\[
    \min_{x\in\R}\curprs{w_a^2\cdot\abs{\varphi_a-x}^2
    +w_b^2\cdot\abs{\varphi_b-x}^2
    +w_c^2\cdot\abs{\varphi_c-x}^2}
    = \min_{x\in\R}\curprs{\prs{w_a^2+w_b^2+w_c^2}\cdot\E\brs{\abs{X-x}^2}}.
\]
The minimum of the RHS in the above is exactly the variance of $X$,
and thus 
\[
    x=\E[X] =  \frac{w_a^2}{w_a^2+w_b^2+w_c^2}\cdot \varphi_a   +
    \frac{w_b^2}{w_a^2+w_b^2+w_c^2}\cdot\varphi_b   +
    \frac{w_c^2}{w_a^2+w_b^2+w_c^2}\cdot\varphi_c.
\]
            
Let us now plug this into the LHS of (\ref{eq: Y-Delta transform - what we want to find}) and compute.
\begin{footnotesize}
\begin{align*}
    & LHS(\ref{eq: Y-Delta transform - what we want to find})= \\
    & = w_a^2\cdot\abs{\varphi_a-\frac{w_a^2\cdot\varphi_a+w_b^2\cdot\varphi_b+w_c^2\cdot\varphi_c}{w_a^2+w_b^2+w_c^2}}^2\\
    &+w_b^2\cdot\abs{\varphi_b-\frac{w_a^2\cdot\varphi_a+w_b^2\cdot\varphi_b+w_c^2\cdot\varphi_c}{w_a^2+w_b^2+w_c^2}}^2\\
    &+w_c^2\cdot\abs{\varphi_c-\frac{w_a^2\cdot\varphi_a+w_b^2\cdot\varphi_b+w_c^2\cdot\varphi_c}{w_a^2+w_b^2+w_c^2}}^2\\
    & = \prs{\frac{w_a}{w_a^2+w_b^2+w_c^2}}^2 \cdot\abs{(w_b^2+w_c^2)\cdot\varphi_a-w_b^2\cdot\varphi_b-w_c^2\cdot\varphi_c}^2\\
    &+\prs{\frac{w_b}{w_a^2+w_b^2+w_c^2}}^2\cdot\abs{(w_a^2+w_c^2)\cdot\varphi_b-w_a^2\cdot\varphi_a-w_c^2\cdot\varphi_c}^2\\
    &+\prs{\frac{w_c}{w_a^2+w_b^2+w_c^2}}^2\cdot\abs{(w_a^2+w_b^2)\cdot\varphi_c-w_a^2\cdot\varphi_a-w_b^2\cdot\varphi_b}^2\\
    & = \prs{\frac{w_a}{w_a^2+w_b^2+w_c^2}}^2 \cdot\abs{w_b^2\cdot\prs{\varphi_a-\varphi_b}+w_c^2\cdot\prs{\varphi_a-\varphi_c}}^2\\
    &+\prs{\frac{w_b}{w_a^2+w_b^2+w_c^2}}^2\cdot\abs{w_a^2\cdot\prs{\varphi_b-\varphi_a}+w_c^2\cdot\prs{\varphi_b-\varphi_c}}^2\\
    &+\prs{\frac{w_c}{w_a^2+w_b^2+w_c^2}}^2\cdot\abs{w_a^2\cdot\prs{\varphi_c-\varphi_a}+w_b^2\cdot\prs{\varphi_c-\varphi_b}}^2\\
    & = \prs{\frac{w_a}{w_a^2+w_b^2+w_c^2}}^2 \cdot\prs{w_b^4\cdot\prs{\varphi_a-\varphi_b}^2
    +2\cdot w_b^2\cdot w_c^2\cdot\prs{\varphi_a-\varphi_b}\cdot 
    \prs{\varphi_a-\varphi_c}
    +w_c^4\cdot\prs{\varphi_a-\varphi_c}^2}\\
    &+\prs{\frac{w_b}{w_a^2+w_b^2+w_c^2}}^2 \cdot\prs{w_a^4\cdot\prs{\varphi_b-\varphi_a}^2
    +2\cdot w_a^2\cdot w_c^2\cdot\prs{\varphi_b-\varphi_a}\cdot 
    \prs{\varphi_b-\varphi_c}
    +w_c^4\cdot\prs{\varphi_b-\varphi_c}^2}\\
    &+\prs{\frac{w_c}{w_a^2+w_b^2+w_c^2}}^2 \cdot\prs{w_a^4\cdot\prs{\varphi_c-\varphi_a}^2
    +2\cdot w_a^2\cdot w_b^2\cdot\prs{\varphi_c-\varphi_a}\cdot 
    \prs{\varphi_c-\varphi_b}
    +w_b^4\cdot\prs{\varphi_c-\varphi_b}^2}\\
    & = \frac{w_a^2\cdot w_b^2\cdot \prs{w_a^2+w_b^2}\cdot\prs{\varphi_a
    -\varphi_b}^2
    +w_a^2\cdot w_c^2\cdot \prs{w_a^2+w_c^2}\cdot\prs{\varphi_a
    -\varphi_c}^2
    +w_b^2\cdot w_c^2\cdot \prs{w_b^2+w_c^2}\cdot\prs{\varphi_b
    -\varphi_c}^2}{\prs{w_a^2+w_b^2+w_c^2}^2}
    \\
    & + 2\cdot w_a^2\cdot w_b^2\cdot w_c^2
    \cdot\frac{\prs{\varphi_a-\varphi_b}\cdot \prs{\varphi_a-\varphi_c}+ 
    \prs{\varphi_b-\varphi_a}\cdot \prs{\varphi_b-\varphi_c}+ 
    \prs{\varphi_c-\varphi_a}\cdot \prs{\varphi_c-\varphi_b} }{\prs{w_a^2+w_b^2+w_c^2}^2}\\
    & = \frac{w_a^2\cdot w_b^2\cdot \prs{w_a^2+w_b^2}\cdot\prs{\varphi_a
    -\varphi_b}^2
    +w_a^2\cdot w_c^2\cdot \prs{w_a^2+w_c^2}\cdot\prs{\varphi_a
    -\varphi_c}^2
    +w_b^2\cdot w_c^2\cdot \prs{w_b^2+w_c^2}\cdot\prs{\varphi_b
    -\varphi_c}^2}{\prs{w_a^2+w_b^2+w_c^2}^2}
    \\
    & + 2\cdot w_a^2\cdot w_b^2\cdot w_c^2
    \cdot\frac{\varphi_a^2-\varphi_a\cdot \varphi_c-\varphi_a
    \cdot \varphi_b+\varphi_b\cdot \varphi_c+
    \varphi_b^2-\varphi_b\cdot \varphi_c-\varphi_a
    \cdot \varphi_b+\varphi_a\cdot \varphi_c+
    \varphi_c^2-\varphi_a\cdot \varphi_c-\varphi_b
    \cdot \varphi_c+\varphi_a\cdot \varphi_b
    }{\prs{w_a^2+w_b^2+w_c^2}^2}\\
    & = \frac{w_a^2\cdot w_b^2\cdot \prs{w_a^2+w_b^2}\cdot\prs{\varphi_a
    -\varphi_b}^2
    +w_a^2\cdot w_c^2\cdot \prs{w_a^2+w_c^2}\cdot\prs{\varphi_a
    -\varphi_c}^2
    +w_b^2\cdot w_c^2\cdot \prs{w_b^2+w_c^2}\cdot\prs{\varphi_b
    -\varphi_c}^2}{\prs{w_a^2+w_b^2+w_c^2}^2}
    \\
    & + 2\cdot w_a^2\cdot w_b^2\cdot w_c^2
    \cdot\frac{\varphi_a^2+\varphi_b^2
    +\varphi_c^2
    -\varphi_a\cdot \varphi_c
    -\varphi_b\cdot \varphi_c
    -\varphi_a\cdot \varphi_b
    }{\prs{w_a^2+w_b^2+w_c^2}^2}.
\end{align*}
\end{footnotesize}
            
Now let us compute the RHS of (\ref{eq: Y-Delta transform - what we want to find})
when applying the rule.
\begin{footnotesize}
\begin{align*}
    & RHS(\ref{eq: Y-Delta transform - what we want to find}) = \\
    & = \prs{\sqrt{\frac{w_a^2\cdot w_b^2}{w_a^2+w_b^2+w_c^2}}}^2\cdot\abs{\varphi_a-\varphi_b}^2
    +\prs{\sqrt{\frac{w_b^2\cdot w_c^2}{w_a^2+w_b^2+w_c^2}}}^2\cdot\abs{\varphi_b-\varphi_c}^2
    +\prs{\sqrt{\frac{w_a^2\cdot w_c^2}{w_a^2+w_b^2+w_c^2}}}^2\cdot\abs{\varphi_c-\varphi_a}^2\\
    & = \frac{w_a^2\cdot w_b^2}{w_a^2+w_b^2+w_c^2}\cdot\abs{\varphi_a-\varphi_b}^2
    +\frac{w_b^2\cdot w_c^2}{w_a^2+w_b^2+w_c^2}\cdot\abs{\varphi_b-\varphi_c}^2
    +\frac{w_a^2\cdot w_c^2}{w_a^2+w_b^2+w_c^2}\cdot\abs{\varphi_c-\varphi_a}^2\\
    & = \frac{{w_a^2\cdot w_b^2}\cdot\prs{w_a^2+w_b^2+w_c^2}\cdot\abs{\varphi_a-\varphi_b}^2
    +{w_b^2\cdot w_c^2}\cdot\prs{w_a^2+w_b^2+w_c^2}\cdot\abs{\varphi_b-\varphi_c}^2
    +{w_a^2\cdot w_c^2}\cdot\prs{w_a^2+w_b^2+w_c^2}\cdot\abs{\varphi_c-\varphi_a}^2}{\prs{w_a^2+w_b^2+w_c^2}^2}\\
    & = \frac{w_a^2\cdot w_b^2\cdot \prs{w_a^2+w_b^2}\cdot\prs{\varphi_a
    -\varphi_b}^2
    +w_a^2\cdot w_c^2\cdot \prs{w_a^2+w_c^2}\cdot\prs{\varphi_a
    -\varphi_c}^2
    +w_b^2\cdot w_c^2\cdot \prs{w_b^2+w_c^2}\cdot\prs{\varphi_b
    -\varphi_c}^2}{\prs{w_a^2+w_b^2+w_c^2}^2}
    \\
    & + w_a^2\cdot w_b^2\cdot w_c^2
    \cdot\frac{\prs{\varphi_a
    -\varphi_b}^2
    +\prs{\varphi_a
    -\varphi_c}^2
    +\prs{\varphi_b
    -\varphi_c}^2
    }{\prs{w_a^2+w_b^2+w_c^2}^2}\\
    & = \frac{w_a^2\cdot w_b^2\cdot \prs{w_a^2+w_b^2}\cdot\prs{\varphi_a
    -\varphi_b}^2
    +w_a^2\cdot w_c^2\cdot \prs{w_a^2+w_c^2}\cdot\prs{\varphi_a
    -\varphi_c}^2
    +w_b^2\cdot w_c^2\cdot \prs{w_b^2+w_c^2}\cdot\prs{\varphi_b
    -\varphi_c}^2}{\prs{w_a^2+w_b^2+w_c^2}^2}
    \\
    & + w_a^2\cdot w_b^2\cdot w_c^2
    \cdot\frac{\varphi_a^2
    -2\cdot\varphi_a\cdot\varphi_b+\varphi_b^2
    +\varphi_a^2
    -2\cdot\varphi_a\cdot\varphi_c+\varphi_c^2
    +\varphi_b^2
    -2\cdot\varphi_b\cdot\varphi_c+\varphi_c^2
    }{\prs{w_a^2+w_b^2+w_c^2}^2}\\
    & = \frac{w_a^2\cdot w_b^2\cdot \prs{w_a^2+w_b^2}\cdot\prs{\varphi_a
    -\varphi_b}^2
    +w_a^2\cdot w_c^2\cdot \prs{w_a^2+w_c^2}\cdot\prs{\varphi_a
    -\varphi_c}^2
    +w_b^2\cdot w_c^2\cdot \prs{w_b^2+w_c^2}\cdot\prs{\varphi_b
    -\varphi_c}^2}{\prs{w_a^2+w_b^2+w_c^2}^2}
    \\
    & +2\cdot  w_a^2\cdot w_b^2\cdot w_c^2
    \cdot\frac{\varphi_a^2+\varphi_b^2+\varphi_c^2
    -\varphi_a\cdot\varphi_b
    -\varphi_a\cdot\varphi_c
    -\varphi_b\cdot\varphi_c
    }{\prs{w_a^2+w_b^2+w_c^2}^2}.
\end{align*}
\end{footnotesize}
Thus we got that both quantities are the same,
and hence we have a proper $Y-\Delta$ transform
for $d_2$.
\end{proof}

\section{Embedding Conjecture Appendix}

In section \ref{chapter: conclusions and open questions}, we discussed the geometry
of the flow metrics, which we would like to better understand.
We mentioned Conjecture \ref{conjecture: d_p into l_q embedding conjecture}, which we restate here.
\begin{conjecture}\label{conjecture: embedding dp into lq}
Let $p\in [1,\infty]$ with H\"older conjugate $q$,
and let $G=(V,E,w)$ be a graph.
Then, there exists a mapping $\Phi:V\rightarrow \ell_q$
such that,
\begin{equation}
    \forall s,t\in V,\quad
    d_p(s,t)=\norm{\Phi(s)-\Phi(t)}_q.
\end{equation}
\end{conjecture}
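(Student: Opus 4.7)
I plan to attack Conjecture~\ref{conjecture: embedding dp into lq} through the dual formulation of $d_p$. By Claim~\ref{claim: d_p as importance sampling},
\[
d_p(s,t) = \max_{\varphi\notin\text{span}\curprs{\OneVec}}\frac{(\chi_s-\chi_t)^T \varphi}{\norm{WB\varphi}_q},
\]
which exhibits $d_p$ as the dual norm (evaluated at difference vectors) of the seminorm $\norm{\varphi}_\star := \norm{WB\varphi}_q$ on the quotient $\R^V/\text{span}\curprs{\OneVec}$. The map $\varphi\mapsto WB\varphi$ is a linear injection on that quotient for connected $G$, so $(\R^V/\text{span}\curprs{\OneVec},\norm{\cdot}_\star)$ already sits isometrically inside $\ell_q^{|E|}$. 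Dualising, the normed space whose restriction to $\curprs{\chi_s-\chi_t}$ equals $d_p$ is a quotient of $\ell_p^{|E|}$ by $\ker(WB)^T$, i.e.\ by the (reweighted) cycle space of $G$. The conjecture thus reduces to the purely Banach-theoretic statement that this particular quotient of $\ell_p^{|E|}$ embeds isometrically into $\ell_q$ on the subset of difference vectors, a step that is automatic only when $q=2$ or when one of the norms degenerates.

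My first step would be to test a direct construction $\Phi:V\to\R^E$, using the three known cases as a guide. For $p=2$ the standard embedding is the electric-flow map $\Phi(v)=WBL_{W^2}^+\chi_v$; for $p=1$ the Fr\'echet map $\Phi(v)=(d_p(v,u))_{u\in V}$ into $\ell_\infty$; and for $p=\infty$ the $\ell_1$-embedding of the Gomory-Hu tree. Fact~\ref{fact: closed form solution for d_p} together with Appendix~\ref{appendix: connection to the laplacian} suggests a unified candidate: solve the nonlinear $q$-Laplacian equation $L_q\varphi_v = \chi_v-\tfrac{1}{|V|}\OneVec$ for each $v$, and define
\[
\Phi(v)(xy):= w(xy)^q\cdot (\varphi_v(x)-\varphi_v(y))\cdot\abs{\varphi_v(x)-\varphi_v(y)}^{q-2}\in \R^E.
\]
If $\varphi_s-\varphi_t$ were itself the optimal potential for the pair $(s,t)$, then by Fact~\ref{fact: closed form solution for d_p} the resulting vector $\Phi(s)-\Phi(t)$ would coincide with the optimal $st$-flow, and $\norm{\Phi(s)-\Phi(t)}_q^q$ would equal $d_p(s,t)^q$ essentially by the KKT identity.

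The central obstacle is that for $q\neq 2$ the operator $L_q$ is genuinely nonlinear, so $L_q(\varphi_s-\varphi_t)\neq \chi_s-\chi_t$ and $\varphi_s-\varphi_t$ is not generally optimal; this is morally the same rigidity that obstructs a local Y-$\Delta$ transform for $p\neq 1,2,\infty$ (Theorem~\ref{intro: thm: non existence of Y-Delta transform}), and I expect it to be the main hurdle. To proceed I would (i) verify or refute the conjecture on small test graphs at $p=3$—say $C_4$, $K_4$, and the 4-vertex graph from Claim~\ref{claim: existence of graphs that contradict the existence of Y-Delta transform}—where both sides of the dual optimization can be computed explicitly, to see what structure the embedding must have and whether $|E|$ coordinates suffice; (ii) attempt to rescue the nonlinearity by averaging or integrating the candidate $\Phi$ over a continuous family of auxiliary source-sink problems, so that $\Phi(s)-\Phi(t)$ becomes a true optimal flow for $(s,t)$; and (iii) seek an intrinsic embeddability criterion for $\ell_q$ in the spirit of Schoenberg's characterization for $\ell_2$ or the Bretagnolle-Dacunha-Castelle-Krivine characterization of isometric embeddings into $L_q$, and check whether the combination of $p$-strongness (Theorem~\ref{theorem: p-strong tirangle inequality}) with the underlying flow structure is sufficient. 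My honest expectation is that the direct KKT-based construction cannot be made to work for $p\notin\{1,2,\infty\}$ and that any eventual proof will have to proceed via such an abstract characterization, with the flow representation supplying the extra structure that $p$-strongness alone lacks.
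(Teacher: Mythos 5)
First, a framing point: the statement you are proving is Conjecture~\ref{conjecture: d_p into l_q embedding conjecture}, which the paper itself leaves open. There is no proof in the paper to compare against; the paper only (a) verifies the three special cases $p=1,2,\infty$ (Fr\'echet/$\ell_\infty$ for $p=1$, Spielman--Srivastava for $p=2$, the ultrametric structure for $p=\infty$), and (b) proves in Appendix~\ref{section: p strong is not enough for embedding conjecture}, via Matou\v{s}ek's expander lower bound, that the $p$-strong triangle inequality alone cannot yield the embedding. Your proposal is a plan rather than a proof, and to your credit it anticipates exactly the paper's negative result: your step (iii) observation that $p$-strongness must be supplemented by the flow structure is precisely what the paper establishes. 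Your dual/quotient description of $d_p$ (via Claim~\ref{claim: d_p as importance sampling} and the reweighted cycle space) is also consistent with the paper's Section~\ref{section - Basic Properties}.

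The concrete gap is in the candidate construction, and it is worse than the nonlinearity issue you flag: the normalization and exponents do not match even in the benign cases. With $\Phi(v)(xy)=w(xy)^q\prs{\varphi_v(x)-\varphi_v(y)}\abs{\varphi_v(x)-\varphi_v(y)}^{q-2}$ you are embedding the \emph{flow} $f_v$ itself, whereas $d_p(s,t)=\norm{W^{-1}f^{st}}_p$; already at $q=2$ with non-unit weights, $\norm{\Phi(s)-\Phi(t)}_2=\norm{W^2B\prs{\varphi_s-\varphi_t}}_2\neq d_2(s,t)$, and the correct map is $\Phi(v)=WB\varphi_v$ (one power of $W$), i.e.\ $W^{-1}$ applied to the flow. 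More fundamentally, the KKT identity of Fact~\ref{fact: closed form solution for d_p} produces equalities of the form $\norm{WB\varphi}_q^q=\norm{W^{-1}f}_p^p=d_p(s,t)^p$ for the optimal pair associated to a unit flow, so even granting optimality of $\varphi_s-\varphi_t$ (which, as you note, fails for $q\neq 2$ by nonlinearity of $L_q$), the construction delivers $\norm{\Phi(s)-\Phi(t)}_q=d_p(s,t)^{p/q}$ rather than $d_p(s,t)$; the exponents coincide only at $p=q=2$. So any rescue of this route must handle both the loss of superposition and the $p$-versus-$q$ power mismatch, which is why the paper records the statement as a conjecture and separately proves (Appendix~\ref{section: p strong is not enough for embedding conjecture}) that no argument using only the metric axioms plus $p$-strongness can close it.
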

We remark that for the special cases $p=1,2,\infty$
it is known to hold.
\paragraph{Effective Resistance.}
Spielman and Srivastava \cite{spielmanSrivastava2011graphSparsificationReff}
showed that the resistance distance can be 
isometrically embedded into $\ell_2^2$.
Thus, since $d_2$ is the squareroot
of the effective resistance,
we conclude that $d_2$ can be isometrically embedded
into $\ell_2$ as desired (since $p=q=2$ in this special
case).

\paragraph{Shortest-path.}
Note that every finite metric space embeds
isometrically into $\ell_\infty$
\cite{MatousekDiscreteGeometry},
and thus $d_1$ does as well.

\paragraph{Minimum Cuts.}
We remark that $d_\infty$ is in fact an ultrametric,
and thus it embeds isometrically into $\ell_1$.

\subsection{The \textit{p}-strong Triangle Inequality is not Enough}\label{section: p strong is not enough for embedding conjecture}

In this subsection we show that even though the flow
metrics satisfy a stronger version of the triangle inequality
(Theorem \ref{theorem: p-strong tirangle inequality}),
it is not enough in order to prove 
Conjecture \ref{conjecture: embedding dp into lq}.
We first recall the relevant definitions.
\begin{definition}
Let $\prs{X,d_X}$ and $\prs{Y,d_Y}$ be metric spaces,
and let $\Phi:X\rightarrow Y$ be a mapping (which
we call an embedding).
The distortion of $\Phi$
is the minimum $D\geq 1$ for which
there exists a scaling factor $\alpha>0$,
such that
\begin{equation}
    \forall x,x'\in X,\quad
    d_X(x,x')\leq \alpha\cdot d_Y(\Phi(x),\Phi(x'))
    \leq D\cdot d_X(x,x').
\end{equation}
\end{definition}
\begin{definition}
Let $\prs{X,d_X}$ and $\prs{Y,d_Y}$ be metric spaces,
and let $D\geq 1$.
We say that $\prs{X,d_X}$ $D$-embeds into $\prs{Y,d_Y}$
if there exists an embedding of $\prs{X,d_X}$ into
$\prs{Y,d_Y}$ with distortion $D$.
\end{definition}
Our focus in this subsection is to show the following claim.
\begin{claim}\label{claim: what we want to show in section 1}
For all $p\in [1,\infty)$, $q\in [1,\infty)$,
and  $n\geq 2$, there exists an $n$-point metric space
$(X,d)$, that satisfies the $p$-strong triangle inequality,
but $D$-embeds into $\ell_q$
only for $D=\Omega\prs{ \frac{1}{q}\cdot\prs{\logn}^{1/p}}$.
\end{claim}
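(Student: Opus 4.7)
The plan is to take a finite metric that is a ``snowflake'' of the shortest-path metric on a bounded-degree expander. Concretely, I would fix $G=(V,E)$ to be a constant-degree expander on $|V|=n$ vertices whose spectral gap is bounded away from zero (e.g.\ a Ramanujan graph), and let $d_G$ denote its unweighted shortest-path distance. Define
\[
    \rho(x,y) \;=\; d_G(x,y)^{1/p}.
\]
Since $1/p \leq 1$, $\rho$ is itself a metric (a snowflake of a metric with exponent $\leq 1$ is always a metric), and the $p$-strong triangle inequality for $\rho$ reduces immediately to the ordinary triangle inequality for $d_G$:
\[
    \rho(x,y)^p \;=\; d_G(x,y) \;\leq\; d_G(x,z)+d_G(z,y) \;=\; \rho(x,z)^p+\rho(z,y)^p.
\]

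Next, I would derive the distortion lower bound via a nonlinear Poincar\'e-type inequality for maps into $\ell_q$, in the spirit of Matou\v{s}ek (1997). For the fixed expander there is a constant $C(q)$ such that for every $f: V \to \ell_q$,
\[
    \frac{1}{n^2}\sum_{x,y\in V}\|f(x)-f(y)\|_q^q \;\leq\; C(q)\cdot\frac{1}{|E|}\sum_{xy\in E}\|f(x)-f(y)\|_q^q,
\]
where the Matou\v{s}ek-type argument, exploiting the uniform smoothness/convexity of $\ell_q$, yields $C(q)^{1/q} = O(q)$. Now suppose $\Phi:V \to \ell_q$ is a $D$-embedding of $(V,\rho)$, normalized so that $\rho(x,y) \leq \|\Phi(x)-\Phi(y)\|_q \leq D\,\rho(x,y)$. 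Since a constant fraction of pairs $(x,y)$ in a constant-degree expander satisfy $d_G(x,y) = \Omega(\log n)$, the LHS of the Poincar\'e inequality, applied to $f=\Phi$, is at least $\Omega\!\bigl((\log n)^{q/p}\bigr)$. On the RHS, each edge has $\rho=1$ and thus contributes at most $D^q$, so the RHS is at most $C(q)\, D^q$. Rearranging and taking $q$-th roots yields $D \geq (\log n)^{1/p}/C(q)^{1/q} = \Omega\!\bigl((\log n)^{1/p}/q\bigr)$, which is exactly the bound claimed.

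The main obstacle is establishing the Poincar\'e inequality with the correct $O(q)$ scaling of $C(q)^{1/q}$, since this is precisely what produces the $1/q$ factor in the conclusion. For $q=2$ this reduces to the standard spectral-gap argument. For general $q$ one must invoke the nonlinear extension (such as Matou\v{s}ek's $\ell_q$-Poincar\'e inequality on expanders) and carefully track the $q$-dependence of the constant through the modulus of smoothness of $\ell_q$. Once that ingredient is in place, the remainder of the proof is purely computational: estimating the average of $\rho^q$ from below via the expander's diameter and average pairwise distance, bounding the edge contribution above using the distortion, and solving the resulting inequality for $D$.
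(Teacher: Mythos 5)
Your proposal is correct, and it uses the same underlying construction as the paper — the $1/p$-snowflake $d_G^{1/p}$ of the shortest-path metric on a constant-degree expander, with the identical observation that the $p$-strong inequality for the snowflake is just the triangle inequality for $d_G$ — but it derives the distortion bound by a genuinely different route. The paper keeps Matou\v{s}ek's theorem as a black box: from a $D$-embedding $\Phi$ of the snowflake it builds the rescaled map $\Psi=\rho^{(p-1)/p}\alpha\Phi$ with $\rho=\mathrm{diam}(G)=O(\log n)$, checks that $\Psi$ is a $\bigl(D\rho^{(p-1)/p}\bigr)$-embedding of the \emph{original} expander metric into $\ell_q$, and then invokes the stated $\Omega\bigl(\tfrac{1}{q}\log n\bigr)$ lower bound to solve for $D$. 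You instead open the black box and apply the $\ell_q$-valued Poincar\'e inequality on expanders directly to the snowflaked metric, using that every edge has snowflake length $1$ (so the edge average is at most $D^q$) while a constant fraction of pairs are at snowflake distance $\Omega\bigl((\log n)^{1/p}\bigr)$; solving $\Omega\bigl((\log n)^{q/p}\bigr)\le C(q)D^q$ with $C(q)^{1/q}=O(q)$ gives the same bound. Your route is more transparent about where the $(\log n)^{1/p}$ and $1/q$ factors come from and avoids the auxiliary rescaled embedding, but it requires the Poincar\'e inequality with the explicit $q$-dependence of the constant, which is the content of Matou\v{s}ek's proof rather than of his theorem statement — so it must be cited (or re-proved) as such; the paper's reduction needs only the theorem plus the $O(\log n)$ diameter bound. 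One small caution: you attribute the constant $C(q)^{1/q}=O(q)$ to uniform smoothness/convexity of $\ell_q$, whereas Matou\v{s}ek's argument is a combinatorial extrapolation from the spectral ($q=2$) case; a smoothness-based derivation exists but gives different (in fact better, $O(\sqrt{q})$ via Markov type) constants, so just be precise about which inequality you are importing — either suffices for the claimed bound.
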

In other words, the above claim says that the fact that a metric
space  is  $p$-strong isn't enough to guarantee
isometric embedding into $\ell_q$, for
any $q\in [1,\infty)$.
We remark that $p$ and $q$ in the statement
are not necessarily H\"older
conjugates of each other.
But, in the specific case where they are,
and $p\in[2,\infty)$,
it holds that $q\in[1,2]$.
Thus,  $D=\Omega\prs{\prs{\logn}^{1/p}}$
and as a consequence, 
we will need to use  additional properties of 
the family of the flow-metrics if we desire to prove the Conjecture \ref{conjecture: embedding dp into lq}.

Our proof of Claim \ref{claim: what we want to show in section 1} relies on the following theorem,
presented by 
Matou{\v{s}}ek \cite{matouvsek1997embedding}.
\begin{theorem}\label{theorem: expanders embed in lp with distortion omega(logn/p)}
There exists constants 
$c_1>0$ and $n_0\in\N$ such that for any $p\geq 1$
and any $n\geq n_0$ there exists an $n$-point metric space
which $D$-embeds into $\ell_p$
only for $D\geq \frac{c_1}{p}\cdot\logn$.
\end{theorem}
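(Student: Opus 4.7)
The idea is to take the Matou\v{s}ek metric from Theorem~\ref{theorem: expanders embed in lp with distortion omega(logn/p)} --- concretely, the shortest-path metric on a constant-degree expander on $n$ vertices --- and $p$-snowflake it. So let $(Y,\rho)$ be that $n$-point expander metric, set $X := Y$, and define $d(x,y) := \rho(x,y)^{1/p}$. Two quick checks are needed. First, $d$ is a metric: since $\rho$ is a metric and $1/p \in (0,1]$, the only nontrivial axiom is subadditivity, which follows from concavity of $t \mapsto t^{1/p}$. Second, $d$ is $p$-strong by design, because
$$d(x,y)^p \;=\; \rho(x,y) \;\le\; \rho(x,z)+\rho(z,y) \;=\; d(x,z)^p + d(z,y)^p$$
for all $x,y,z \in X$. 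It therefore remains to lower bound the distortion of any embedding $\Phi \colon (X,d) \to \ell_q$.

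\textbf{The lower bound.} I would re-run Matou\v{s}ek's own proof of Theorem~\ref{theorem: expanders embed in lp with distortion omega(logn/p)} almost verbatim on $(X,d)$. Its engine is a Poincar\'e-type inequality for the expander $G$ in $\ell_q$: there exists a constant $C_q$ satisfying $C_q^{1/q} = O(q)$ such that for every $f\colon V(G)\to \ell_q$,
$$\frac{1}{n^2}\sum_{x,y \in V(G)} \|f(x)-f(y)\|_q^q \;\le\; C_q \cdot \frac{1}{|E(G)|}\sum_{\{x,y\} \in E(G)} \|f(x)-f(y)\|_q^q.$$
Scale $\Phi$ so that $d(x,y) \le \|\Phi(x)-\Phi(y)\|_q \le D\cdot d(x,y)$. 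For each edge $\{x,y\}$ of $G$ we have $\rho(x,y)=1$, hence $d(x,y)=1$, so each term on the right-hand side is at most $D^q$, giving RHS $\le C_q D^q$. By expansion of $G$, a constant fraction of pairs $(x,y)$ satisfy $\rho(x,y) = \Omega(\log n)$, hence $d(x,y) = \Omega((\log n)^{1/p})$ and $\|\Phi(x)-\Phi(y)\|_q^q = \Omega((\log n)^{q/p})$ for such pairs, so LHS $= \Omega((\log n)^{q/p})$. Combining, $(\log n)^{q/p} = O(C_q D^q)$, i.e.\ $D = \Omega((\log n)^{1/p} / C_q^{1/q}) = \Omega((\log n)^{1/p}/q)$, as claimed.

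\textbf{Main obstacle.} The one subtle point is that I have to open up Matou\v{s}ek's proof in order to reuse the expander Poincar\'e inequality in $\ell_q$ with constant $C_q^{1/q} = O(q)$, rather than invoke Theorem~\ref{theorem: expanders embed in lp with distortion omega(logn/p)} as a black box. A black-box reduction --- attempting to convert a distortion-$D$ embedding of $(X,\rho^{1/p})$ into one of $(Y,\rho)$ --- appears blocked, because raising distances to the $p$-th power produces the quantity $\|\Phi(x)-\Phi(y)\|_q^p$ with $p>1$, which is generally not even a metric, let alone realizable in $\ell_q$. Fortunately, reopening the proof is painless: the only change from Matou\v{s}ek's original argument is the typical-distance estimate (from $\log n$ to $(\log n)^{1/p}$) on the LHS of the Poincar\'e inequality, which is exactly where the $1/p$ exponent in the final bound comes from; the edge contribution on the RHS is unchanged.
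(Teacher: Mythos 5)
Your proposal does not actually prove the statement you were given. Theorem \ref{theorem: expanders embed in lp with distortion omega(logn/p)} is Matou\v{s}ek's result --- an $n$-point metric space (the shortest-path metric of a constant-degree expander) that $D$-embeds into $\ell_p$ only for $D\geq \frac{c_1}{p}\log n$ --- and the paper does not prove it at all: it is imported from \cite{matouvsek1997embedding} and used as a black box. What you prove instead is the downstream statement, Claim \ref{claim: what we want to show in section 1}, about the $1/p$-snowflake $d=\rho^{1/p}$ being $p$-strong yet requiring distortion $\Omega((\log n)^{1/p}/q)$ in $\ell_q$; note that your conclusion, your extra parameter $q$, and your verification of the $p$-strong inequality have no counterpart in the theorem's statement. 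Worse, as a proof of the theorem your argument is circular: you start from ``the Matou\v{s}ek metric from Theorem \ref{theorem: expanders embed in lp with distortion omega(logn/p)}'' and propose to ``re-run Matou\v{s}ek's own proof almost verbatim.'' The honest version of your plan --- apply the expander Poincar\'e inequality in $\ell_q$ (with $C_q^{1/q}=O(q)$) to the un-snowflaked metric $\rho$ itself, using that a constant fraction of pairs are at distance $\Omega(\log n)$ --- does yield the theorem, but then it simply \emph{is} Matou\v{s}ek's proof, and you would need to establish (or properly cite) that Poincar\'e inequality rather than defer to the proof of the very theorem being proved.

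One further correction, since it concerns how the paper actually uses the theorem: your ``main obstacle'' paragraph asserts that a black-box reduction from the snowflaked metric back to the original one is blocked. It is not, and the paper's proof of Claim \ref{claim: what we want to show in section 1} is exactly such a reduction: given a $D$-embedding $\Phi$ of $(X,\rho^{1/p})$ into $\ell_q$, one never raises distances to the $p$-th power; one rescales the same map by $\mathrm{diam}(G)^{(p-1)/p}$ and uses the crude bounds $1\leq\rho(x,y)\leq\mathrm{diam}(G)$ to check that the rescaled map is a $\bigl(D\cdot\mathrm{diam}(G)^{(p-1)/p}\bigr)$-embedding of $(X,\rho)$; the theorem together with $\mathrm{diam}(G)=O(\log n)$ then gives $D=\Omega\bigl((\log n)^{1/p}/q\bigr)$. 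So for the claim the theorem really can be used as a black box, which is precisely why the paper states Theorem \ref{theorem: expanders embed in lp with distortion omega(logn/p)} without proof.
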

\begin{proof}(Claim \ref{claim: what we want to show in section 1})
Theorem \ref{theorem: expanders embed in lp with distortion omega(logn/p)}
was proved via expanders - the metric space
that is promised to exist in the theorem arises from this family of graphs.
Let $G$ be an expander on $n$ vertices, and let $(X,d)$
be the metric space that is derived from the shortest-path
metric on $G$.
Let $p\in(1,\infty)$, and define $d':X\times X\rightarrow \R^+$ by 
\[
\forall x,y\in X,\quad d'(x,y)=\prs{d(x,y)}^{1/p}.
\]
We would like to show the following.
\begin{enumerate}
    \item   $d'$ is a metric.
    \item   $d'$ is $p$-strong.
    \item   Embedding $(X,d')$ in $\ell_q$ requires large distortion
            (or at least larger than 1).
\end{enumerate}
First, note that property 2 is  immediate, since
it simply says that
\begin{align*}
    \forall x,y,z\in X,&&\prs{d'(x,y)}^p&\leq 
    \prs{d'(x,z)}^p+\prs{d'(z,y)}^p \\
     &&\iff d(x,y)&\leq 
    d(x,z)+d(z,y).
\end{align*}
where the last line is just the triangle inequality which is
clearly satisfied since $(X,d)$ is a metric space.

\noindent
For showing property 1, we recall Claim
\ref{claim: ap+bp leq (a+b)p}.
 \begin{claim}
 Let $a_1,\dots,a_n\geq 0$ and let $p>0$, then:
 \begin{enumerate}
     \item   if $p\leq 1$:
     \begin{equation*}
         \sum_{i=1}^na_i^p\geq \prs{\sum_{i=1}^na_i}^p.
     \end{equation*}
     \item   if $p\geq 1$:
     \begin{equation*}
         \sum_{i=1}^na_i^p\leq \prs{\sum_{i=1}^na_i}^p.
     \end{equation*}
 \end{enumerate}
 \end{claim}
To prove property 1,
let $x,y,z\in X$, and observe that
\begin{align*}
        d'(x,y) & = \prs{d(x,y)}^{\frac{1}{p}}\\
        & \leq \prs{d(x,z)+d(z,y)}^{\frac{1}{p}}
        && (d\text{ is a metric})\\
        & \leq \prs{d(x,z)}^{\frac{1}{p}}+\prs{d(z,y)}^{\frac{1}{p}}
        && (1\leq p \text{ and Claim }\ref{claim: ap+bp leq (a+b)p})\\
        & = d'(x,z) + d'(z,y).
\end{align*}
We are ready to show that property 3 holds.
Assume that $\Phi:(X,d')\rightarrow \ell_q$ is a $D$-embedding.
Thus, there exists a scaling factor $\alpha>0$ such that,
\[\forall x,y\in X,
\quad d'(x,y)\leq \alpha\cdot\norm{\Phi(x)-\Phi(y)}_q\leq D\cdot d'(x,y).
\]
Hence
\begin{equation}\label{eq: f is D-embedding of d^(1/p)}
    \forall x,y\in X,\quad 
d(x,y)^{1/p}\leq \alpha\cdot\norm{\Phi(x)-\Phi(y)}_q\leq D\cdot d(x,y)^{1/p}
\end{equation}
Next, denote  $\rho = diam(G)$,
and let us  consider an embedding $\Psi$
defined by
\[\Psi(x)=\rho^{\frac{p-1}{p}}\cdot\alpha\cdot \Phi(x).
\]
Fix $x,y\in X$, thus
\[\norm{\Psi(x)-\Psi(y)}_q 
={\rho^{\frac{p-1}{p}}}\cdot\alpha\cdot \norm{\Phi(x)-\Phi(y)}_q.
\]
But now we can see that on the one hand
\begin{equation}\label{eq: lower bound on g}
    \begin{split}
        \norm{\Psi(x)-\Psi(y)}_q 
        & \geq \rho^{\frac{p-1}{p}}\cdot \prs{d(x,y)}^{1/p}\\
        & \geq \prs{d(x,y)^{p-1}\cdot d(x,y)}^{1/p}\\
        & = d(x,y),
    \end{split}
\end{equation}
and on the other hand, since $d(x,y)\geq 1$,
\begin{equation}\label{eq: upper bound on g}
    \begin{split}
        \norm{\Psi(x)-\Psi(y)}_q 
        & \leq \rho^{\frac{p-1}{p}}\cdot \prs{D\cdot \prs{d(x,y)}^{1/p}}\\
        & \leq \prs{D\cdot \rho^{\frac{p-1}{p}}}\cdot d(x,y).
    \end{split}
\end{equation}
Combining equations (\ref{eq: lower bound on g}) and 
(\ref{eq: upper bound on g})
together, we can conclude that $\Psi$ is a $\prs{D\cdot \rho^{\frac{p-1}{p}}}$-embedding of $(X,d)$ 
(the original metric space) into $\ell_q$.
But according to Theorem \ref{theorem: expanders embed in lp with distortion omega(logn/p)},
 ${D\cdot \rho^{\frac{p-1}{p}}}\geq \frac{c_1}{q}\cdot\logn$
 and $\rho=diam(G)=O(\logn)$,
which implies that
$D=\Omega\prs{\prs{\logn}^{1/p}/q}$
as claimed.
\end{proof}

\end{document}